\title{The \renyi Capacity and Center}
\author{Bar\i\c{s} Nakibo\u{g}lu \thanks{e-mail:\href{mailto:bnakib@metu.edu.tr}{bnakib@metu.edu.tr}}}
\theoremstyle{plain}
\newtheorem{lemma}{Lemma} 
\newtheorem{theorem}{Theorem} 
\newtheorem{conjecture}{Conjecture}
\newtheorem*{conjecture*}{Conjecture}
\theoremstyle{definition}
\newtheorem{definition}{Definition} 
\newtheorem{example}{Example}
\newtheorem{remark}{Remark}
\newtheorem*{remark*}{Remark}
\definecolor{mygray}{gray}{0.4}
\newcommand{\set} [1]			{{\mathscr{{#1}}}}
\newcommand{\alg}[1]			{{\mathcal{{#1}}}}
\newcommand{\rndv}[1]      {{\mathsf{{#1}}}}
\newcommand{\oper}[1]      {{\mathtt{{#1}}}}
\newcommand{\msr}[1]       {{\it    {{#1}}}}
\newcommand{\cnst}[1]      {{\mathit{{#1}}}}
\newcommand{\cnsA}[1]      {{\mathcal{{#1}}}}
\newcommand{\sss}[1]		{{\mathit{2}^{{#1}}}}
\newcommand{\integers}[1]	{{\mathbb{Z}}_{^{{#1}}}}
\newcommand{\rationals}[1]	{{\mathbb{Q}}_{^{{#1}}}}
\newcommand{\reals}[1]		{{\mathbb{R}}_{^{{#1}}}}
\newcommand{\clos}[1]      {{\mathtt{cl}{{#1}}}}
\newcommand{\conv}[1]      {{\mathtt{ch}{{#1}}}}
\newcommand{\pmf}[0]       {{p.m.f.~}}
\DeclareMathOperator*{\essup}{ess\,sup}
\DeclareMathOperator*{\esinf}{ess\,inf}
\newcommand{\dif}[1]       {{\mathrm{d}{#1}}}  
\newcommand{\der}[2]        {\tfrac{\dif{#1}}{\dif{#2}}}  
\newcommand{\pder}[2]       {\tfrac{\partial{#1}}{\partial{#2}}}  
\newcommand{\supp}[1]       {\mathtt{supp}({{#1}})}       
\newcommand{\DEF}[0]			{{\!\!~\triangleq\!~}}  
\newcommand{\mtimes}[0]			{{\circledast}}
\newcommand{\AC}[0]            {{\prec}}  
\newcommand{\UAC}[0]           {{\mathop{\prec}\nolimits^{uni}}}  
\newcommand{\abs}[1]           {{\left\lvert{{#1}}\right\lvert}}
\newcommand{\lon}[1]           {{{\left\lVert{{#1}}\right\lVert}}} 
\newcommand{\nrm}[2]           {{{\left\lVert{{#2}}\right\lVert}_{{#1}}}}
\newcommand{\IND}[1]           {{\mathds{1}_{\{#1\}}}}    
\newcommand{\ind}[0]           {{\imath}}
\newcommand{\jnd}[0]           {{\jmath}}
\newcommand{\knd}[0]           {{\kappa}}
\newcommand{\tin}[0]           {{\cnst{t}}}
\newcommand{\blx}[0]           {{\cnst{n}}}
\newcommand{\tlx}[0]           {{\cnst{T}}}
\newcommand{\tinS}[0]          {{\set{T}}}
\newcommand{\trans}[1]         {{\oper{T}}_{{#1}}}
\newcommand{\domtr}[1]         {{\set{Q}}_{{#1}}}
\newcommand{\eav}[3]           {{\overline{{#1}}_{{#2}}^{{#3}}}}
\newcommand{\pint}[0]          {{\cnst{\zeta}}}
\newcommand{\EXS}[2]         {{\bf E}_{{#1}}\!\left[{#2}\right]}
\newcommand{\Lp}[2]          {{{\alg{L}}}^{{#1}}({#2})}   
\newcommand{\Lon}[1]         {{\Lp{1}{#1}}}
\newcommand{\fX}[0]          {{\cnst{f}}}
\newcommand{\gX}[0]          {{\cnst{g}}}   
\newcommand{\GX}[0]          {{\cnst{G}}}   
\newcommand{\hX}[0]          {{\cnst{h}}}
\newcommand{\fXS}[0]         {{\set{F}}}
\renewcommand{\div}[3]			{{\cnst{d}}_{{#1}}            \!\left({\left.           \! {#2}\right\Vert {#3}}                 \right)}
\newcommand{\KLD}[2]			{{\cnst{D}}                   \!\left({\left.           \! {#1}\right\Vert {#2}}                 \right)}
\newcommand{\FD}[3]				{{\cnsA{D}}_{{#1}}            \!\left(\left.            \! {#2}\right\Vert {#3}                  \right)}
\newcommand{\RD}[3]				{{\cnst{D}}_{{#1}}            \!\left(\left.            \! {#2}\right\Vert {#3}                  \right)}
\newcommand{\RDF}[4]			{{\cnst{D}}_{{#1}}^{{#2}}     \!\left(\left.            \! {#3}\right\Vert {#4}                  \right)}
\newcommand{\MI}[2]				{{\cnst{I}}               \!\left(           {#1};	  {#2}		\!\right)}
\newcommand{\RMI}[3]			{{\cnst{I}}_{{#1}}            \!\left(                  \! {#2};         \!{#3}                \!\right)} 
\newcommand{\FMI}[3]			{{\cnsA{I}}_{{#1}}            \!\left(                  \! {#2};         \!{#3}                \!\right)}
\newcommand{\RC}[2]				{{\cnst{C}}_{{#1},{#2}}}
\newcommand{\FC}[2]				{{\cnsA{C}}_{{#1},{#2}}}
\newcommand{\CRC}[3]			{{\cnst{C}}_{{#1},{#2},{#3}}}
\newcommand{\CR}[1]				{{\cnsA{S}}_{{#1}}}
\newcommand{\RCR}[2]			{{\cnsA{S}}_{{#1}}({#2})}
\newcommand{\FR}[2]				{{\cnsA{S}}_{{#1},{#2}}}
\newcommand{\RFR}[3]			{{\cnsA{S}}_{{#1},{#2}}({#3})}
\newcommand{\RR}[2]				{{\cnst{S}}_{{#1},{#2}}}
\newcommand{\RRR}[3]			{{\cnst{S}}_{{#1},{#2}}({#3})}
\newcommand{\costc}[0]			{{\cnst{\varrho}}}
\newcommand{\lbm}[0]			{{{\msr{l}}}}    
\newcommand{\rfm}[0]			{{{\msr{\nu}}}}
\newcommand{\cset}[0]			{{\set{A}}}
\newcommand{\rnb}[0]          {{\cnst{\beta}}}
\newcommand{\rnf}[0]          {{\cnst{\phi}}}
\newcommand{\rng}[0]          {{\cnst{\rho}}}
\newcommand{\rno}[0]          {{\cnst{\alpha}}}
\newcommand{\rnt}[0]          {{\cnst{\eta}}}
\newcommand{\brl}[0]           {{\alg{B}}}
\newcommand{\rborel}[1]        {{\brl}({#1})}
\newcommand{\borel}[2]         {{\brl}({#1},{#2})}
\newcommand{\oset}[0]          {{\set{N}}}
\newcommand{\oev}[0]           {{\set{E}}}
\newcommand{\opa}[0]           {{\alg{E}}}
\newcommand{\cha}[0]           {{\set{W}}}
\newcommand{\chu}[0]           {{\set{U}}}
\newcommand{\Scha}[1]          {{\cha}^{_{[#1]}}}
\newcommand{\Pcha}[1]          {{\varLambda}^{{{#1}}}}
\newcommand{\smea}[1]          {{{\alg{M}}({#1})}}
\newcommand{\fmea}[1]          {{{\alg{M}}^{^{+}}\!({#1})}}
\newcommand{\zmea}[1]          {{{\alg{M}}_{{0}}^{^{+}}\!({#1})}}
\newcommand{\pmea}[1]          {{{\alg{P}}({#1})}}
\newcommand{\pdis}[1]          {{{\set{P}}({#1})}}
\newcommand{\dinp}[0]          {{\cnst{x}}}
\newcommand{\inpS}[0]          {{\set{X}}}
\newcommand{\inpA}[0]          {{\alg{X}}}
\newcommand{\dout}[0]          {{\cnst{y}}}
\newcommand{\out}[0]           {{\rndv{Y}}}
\newcommand{\outS}[0]          {{\set{Y}}}
\newcommand{\outA}[0]          {{\alg{Y}}}
\newcommand{\dsta}[0]          {{\cnst{z}}}
\newcommand{\staS}[0]          {{\set{Z}}}
\newcommand{\nmn}[1]         {{{\msr{\pi}}_{{#1}}}}
\newcommand{\dnmn}[1]        {{{\msr{\pi}}_{{#1}}'}}    
\newcommand{\ddnmn}[1]       {{{\msr{\pi}}_{{#1}}''}}    
\newcommand{\mean}[0]        {{{\msr{\mu}}}}    
\newcommand{\mmn}[1]         {{{\mean}_{{#1}}}}
\newcommand{\dmn}[1]         {{{\msr{\mean}}_{{#1}}'}}    
\newcommand{\ddmn}[1]        {{{\msr{\mean}}_{{#1}}''}}
\newcommand{\tpn}[1]			{{{\mP}_{{[{#1}]}}}}
\newcommand{\mA}[0]				{{\msr{a}}}
\newcommand{\mB}[0]				{{\msr{b}}}
\newcommand{\mP}[0]				{{\msr{p}}}    
\newcommand{\pmn}[1]			{{{\mP}_{{#1}}}}
\newcommand{\pma}[2]			{{{\mP}_{{#1}}^{{#2}}}}
\newcommand{\mQ}[0]				{{\msr{q}}}    
\newcommand{\qmn}[1]			{{{\mQ}_{{#1}}}}
\newcommand{\mS}[0]				{{\msr{s}}}    
\newcommand{\smn}[1]			{{{\mS}_{{#1}}}}
\newcommand{\mU}[0]				{{\msr{u}}}
\newcommand{\mV}[0]				{{\msr{v}}}
\newcommand{\mW}[0]				{{\msr{w}}}    
\newcommand{\wmn}[1]			{{{\mW}_{{#1}}}}
\newcommand{\Wm}[0]				{{{\cnst{W}}}}
\newcommand{\csiszar}[0]							{Csisz\'{a}r~}
\newcommand{\harremoes}[0]							{Harremo\"{e}s~}
\newcommand{\korner}[0]								{K\"{o}rner~}
\newcommand{\renyi}[0]								{R\'{e}nyi~}
\newcommand{\topsoe}[0]								{Tops\o{e}~}
\DeclareRobustCommand{\bigplus}{%
	\mathop{\vphantom{\sum}\mathpalette\@bigplus\relax}\slimits@
}
\newcommand{\@bigplus}[2]{\vcenter{\hbox{\make@bigplus{#1}}}}
\newcommand{\make@bigplus}[1]{%
	\sbox\z@{$\m@th#1\sum$}%
	\setlength{\unitlength}{\wd\z@}%
	\begin{picture}(1.4,1.4)
	\linethickness{.17ex}
	\Line(.7,.14)(.7,1.26)
	\Line(.14,.7)(1.26,.7)
	\end{picture}%
}
\DeclareRobustCommand{\bigtimes}{%
	\mathop{\vphantom{\sum}\mathpalette\@bigtimes\relax}\slimits@
}
\newcommand{\@bigtimes}[2]{\vcenter{\hbox{\make@bigtimes{#1}}}}
\newcommand{\make@bigtimes}[1]{%
	\sbox\z@{$\m@th#1\sum$}%
	\setlength{\unitlength}{\wd\z@}%
	\begin{picture}(1,1)
	\linethickness{.17ex}
	\Line(.1,.1)(.9,.9)
	\Line(.1,.9)(.9,.1)
	\end{picture}%
}
\begin{document}
\setcounter{tocdepth}{2}	
\pagestyle{plain}
\pagenumbering{arabic}
\hypersetup{hidelinks}
\maketitle 
\thispagestyle{empty}
\begin{abstract}
R\'{e}nyi's information measures
---the R\'{e}nyi information, mean, capacity, radius, and center---
are analyzed relying on the elementary properties of 
the R\'{e}nyi divergence and the power means. 
The van Erven-Harremo\"{e}s conjecture is proved for any positive order 
and for any set of probability measures on a given measurable space
and a generalization of it is established 
for the constrained variant of the problem.
The finiteness of the order $\alpha$ R\'{e}nyi capacity is shown to imply 
the continuity of the R\'{e}nyi capacity on $(0,\alpha]$ and the uniform 
equicontinuity of the R\'{e}nyi information, both as a family of functions 
of the order indexed by the priors and  as a family of functions of the 
prior indexed by the orders.
The R\'{e}nyi capacities and centers of various families of Poisson 
processes are derived as examples. 
\end{abstract}
\tableofcontents
\section{Introduction}\label{sec:introduction}
Information transmission problems are often posed on models with finite sample spaces
or on models with specific noise structures, such as Gaussian or Poisson models. 
As a result, certain fundamental observations such as the minimax theorem for the
Shannon capacity in terms of the Kullback-Leibler divergence or the existence of a 
unique ``capacity achieving output distribution'', i.e. the existence of a unique
Shannon center, are established either for models with finite sample spaces or 
for specific noise structures.
In \cite{kemperman74}, Kemperman proved these assertions far more generally by 
interpreting the channel as a set of probability measures on a given measurable 
space.

In a sense, Kemperman tacitly suggests a purely measure theoretic understanding of 
the Shannon capacity and center that is separated from their significance in the 
information transmission problems. 
Even without the generality afforded by the measure theoretic framework, such 
an understanding is appealing because Shannon capacity and center come up
in various information transmission problems, 
with very different operational meanings.
Consider for example a finite set \(\cha\) of probability mass functions on a 
finite output set \(\outS\).
\begin{itemize}
\item If we interpret \(\cha\) as a discrete channel that is to be used multiple times, 
then the Shannon capacity of \(\cha\) is the largest rate at which one can communicate 
reliably via the channel \(\cha\), \cite{shannon48}.
\item If we interpret \(\cha\) as a collection of sources that is to be encoded by a 
lossless variable length source code, then the Shannon capacity is a lower bound on the 
worst redundancy among the members of \(\cha\), which is off at most by one for
some lossless variable length source code, 
\cite{davissonL80}, \cite{gallager79}, \cite{ryabko79}.
\end{itemize}

In this paper we propose an analogous measure theoretic understanding for 
the \renyi capacity and center.
Our interest in these concepts stems from their operational significance 
in the channel coding problem; we elucidate that operational significance 
in our concurrent paper \cite{nakiboglu19B}.
Because of the generality of the measure theoretic model we adopt in this paper,  
we can discuss in \cite{nakiboglu19B} 
the operational significance of these concepts
for a diverse family of channels in a unified framework.
In the current paper our main aim is to present an analysis starting from the measure 
theoretic first principles and the elementary properties of the \renyi divergence.
We will first present a brief overview of the \renyi information, 
divergence, and mean. Then we proceed with the analysis of the \renyi capacity 
and center. 

Deriving the technical results employed in \cite{nakiboglu19B} is one of 
the main aims of the current paper;
however, the scope of our analysis is not restricted to the needs of 
the particular analysis we present in \cite{nakiboglu19B}.
We aim to build a more complete understanding of \renyi\!\!'s information measures that might 
lead us to new analysis techniques for the problems we investigate in \cite{nakiboglu19B} 
or for other information transmission problems involving \renyi\!\!'s information measures.
Our abstract and general framework is conducive to this purpose; in addition it 
allows us to observe certain phenomena that cannot be observed in simpler models.
For example,\footnote{This dichotomy is an immediate consequence of Lemma \ref{lem:capacityO}, 
	see page \pageref{dichotomy}.}  
the \renyi capacity is either a continuous function of the order on \((0,\infty)\) 
or a finite and continuous function of the order on  \((0,\rnf]\) that is infinite on 
\((\rnf,\infty)\) for some \(\rnf\in [1,\infty)\).
This dichotomy, however, cannot be observed with models with finite \(\cha\) or 
finite \(\outS\)
because the \renyi capacity is bounded if either \(\cha\) or \(\outS\) is finite.

In \cite{renyi61}, \renyi  provided an axiomatic characterization of a family of divergences 
for pairs of probability mass functions on a given finite sample space;
the resulting family of divergences, parametrized by positive real numbers, are named after him.
The definition of the \renyi divergence has been extended to pairs of probability measures.
Recently, van Erven and \harremoes provided a comprehensive investigation of various properties 
of the \renyi divergence in \cite{ervenH14}. 
For any \(\rno\) in \([0,\infty]\), the order \(\rno\) \renyi divergence between probability measures 
\(\mW\) and \(\mQ\), denoted by \(\RD{\rno}{\mW}{\mQ}\), is zero when \(\mW\) is equal to \(\mQ\) and 
non-negative when \(\mW\) is not equal to \(\mQ\).
Hence, given a measurable space \((\outS,\outA)\)  we can use the order \(\rno\) \renyi divergence 
to measure the spread of any set of probability measures \(\cha\) relative to any probability measure 
\(\mQ\) on \((\outS,\outA)\) as follows:
\begin{align}
\label{eq:def:relativeradius}
\RRR{\rno}{\cha}{\mQ} 
&\DEF \sup\nolimits_{\mW \in \cha}  \RD{\rno}{\mW}{\mQ}.
\end{align}
\(\RR{\rno}{\cha}(\mQ)\) is called 
\emph{the order \(\rno\) \renyi radius of \(\cha\) relative to \(\mQ\)}.
By taking the infimum of \(\RRR{\rno}{\cha}{\mQ}\) over all probability measures \(\mQ\) on \((\outS,\outA)\), 
we get an absolute measure of the spread of \(\cha\), called 
\emph{the order \(\rno\) \renyi radius of \(\cha\)},
\begin{align}
\label{eq:def:radius}
\RR{\rno}{\cha} 
&\DEF \inf\nolimits_{\mQ\in\pmea{\outA}} \sup\nolimits_{\mW \in \cha}  \RD{\rno}{\mW}{\mQ}.
\end{align}
Any probability measure \(\mQ\) on the measurable space \((\outS,\outA)\) satisfying 
\(\RRR{\rno}{\cha}{\mQ}=\RR{\rno}{\cha}\), is called 
\emph{an order \(\rno\) \renyi center of \(\cha\).}
The order one \renyi divergence is the Kullback-Leibler divergence;
hence
the order one \renyi radius and center
are
the Shannon radius and center referred to in \cite{kemperman74}. 
 
The Shannon capacity, defined as the supremum of the mutual information, 
is another measure 
of the spread of a set of probability measures on a given measurable space. 
In order to have a parametric generalization of the Shannon capacity, similar to the one 
provided by the \renyi radius to the Shannon radius, we need a parametric generalization 
of the mutual information. 
Sibson \cite{sibson69} proposed one such parametric generalization using 
the \renyi divergence, called the \renyi information, see Definition \ref{def:information}. 
For any set of probability measures \(\cha\) on a given measurable space \((\outS,\outA)\),
probability mass function \(\mP\) on \(\cha\), and positive real number \(\rno\),
\(\RMI{\rno}{\mP}{\cha}\) is \emph{the order \(\rno\) \renyi 
information}\footnote{Sibson defines ``the information radius of order \(\rno\)'' 
	through an infimum and then derives a closed form expression for it
in \cite[Thm. 2.2]{sibson69}. 
We take that closed form expression as the definition of 
the order \(\rno\) \renyi information.}
\emph{for prior \(\mP\)}.
The order one \renyi information equals to the mutual information.
For other positive real orders, the order \(\rno\) \renyi information 
can be described in terms of Gallager's function introduced 
in \cite{gallager65}:
\begin{align}
\label{eq:sibsongallager}
\RMI{\rno}{\mP}{\cha}
&=\left.\tfrac{E_{0}(\rng,\mP)}{\rng}\right\vert_{\rng=\frac{1-\rno}{\rno}}
&
&\forall \rno\in \reals{+}\setminus\{1\}
\end{align}
where Gallager's function \(E_{0}(\rho,\mP)\) is defined for
\(\rng>-1\) as 
\begin{align}
\label{eq:def:gallagersfunction}
E_{0}(\rng,\mP)
&\DEF-\ln \int \left(\sum\nolimits_{\mW}\mP(\mW) (\der{\mW}{\rfm})^{\frac{1}{1+\rng}}\right)^{1+\rng} \rfm(\dif{\dout}).
\end{align}
\emph{The order \(\rno\) \renyi capacity} \(\RC{\rno}{\cha}\) is defined as the supremum of 
the order \(\rno\) \renyi information \(\RMI{\rno}{\mP}{\cha}\) over all priors \(\mP\).

There are at least two other ways to define the \renyi information for which the order one 
\renyi information is equal to the mutual information: one by Arimoto \cite{arimoto77} 
and  another one by Augustin \cite{augustin78} and \csiszar \cite{csiszar95}.
A review of these three definitions of the \renyi information has recently been provided
by Verd\'{u} \cite{verdu15}.
Assuming \(\cha\) and \(\outS\) to be finite sets, \csiszar showed that 
the order \(\rno\) \renyi capacity for all three definitions of the \renyi information 
are equal to one another and to the order \(\rno\) \renyi radius, 
\cite[Prop. 1]{csiszar95}.

The extension of Kemperman's result \cite[Thm. 1]{kemperman74} about the Shannon 
capacity and center given in Theorem \ref{thm:minimax}, presented in the following, 
is among the most important observations about the \renyi capacity and center.
Theorem \ref{thm:minimax} 
establishes the equality of \(\RC{\rno}{\cha}\) and \(\RR{\rno}{\cha}\) 
for any positive order \(\rno\) and set of probability measures \(\cha\).
Furthermore, it asserts the existence of a unique order \(\rno\) \renyi center \(\qmn{\rno,\cha}\) 
whenever \(\RC{\rno}{\cha}\) is finite and characterizes the unique 
order \(\rno\) \renyi center in terms of the order \(\rno\) \renyi means.
These observations, however, have been reported in various forms before,
at least partially. 
In \cite{augustin69}, Augustin considered the orders in \((0,1)\),
proved a result equivalent to Theorem \ref{thm:minimax} for finite \(\cha\)'s 
and described how this result can be extended to arbitrary \(\cha\)'s.
Later, Augustin established a result, \cite[Thm. 26.6\ensuremath{'}]{augustin78}, that implies 
Theorem \ref{thm:minimax} for all orders in \(\rno\) in \((0,2)\).
\csiszar \cite[Prop. 1]{csiszar95} proved the equality \(\RC{\rno}{\cha}=\RR{\rno}{\cha}\)
for arbitrary positive order \(\rno\) assuming \(\cha\) and \(\outS\) are finite sets.

The equality of capacity to radius and the existence of a unique center, are phenomena that have 
been observed repeatedly in various contexts.
In order to clarify the standing of Theorem \ref{thm:minimax} among these results, we provide 
a more comprehensive discussion of the previous work on these fundamental observations in 
\S\ref{sec:kemperman}. 

The current paper and the concurrent paper \cite{nakiboglu19B} grew out of a desire to 
understand Augustin's proofs of the sphere packing bound given in \cite{augustin69}
and \cite{augustin78}  more intuitively.
Augustin's proofs are important because, among other things, they are the only proofs of the
sphere packing bound for non-stationary product channels, even for the case of discrete channels.
Concepts of \renyi capacity, radius, and center provide a way to express the principal novelty
 of Augustin's method in a succinct and intuitive way.
We discuss the novel observation underlying Augustin's method
and its promise briefly in \S\ref{sec:augustin}.

Similar to Theorem \ref{thm:minimax}, some of the observations that we discuss in the paper
have been reported before  either in terms of \renyi\!\!'s information measures 
\cite{csiszar95,sibson69} or in terms of other related quantities, 
such as Gallager's function, 
 \cite{augustin69,augustin78,gallager65,gallager}.  
But we also have a number of new observations that have not been reported before. 
We provide a tally of our most important contributions in \S\ref{sec:contributions}.

We conclude the current section with a summary of our notational conventions presented 
in \S\ref{sec:notation}. It is worth mentioning that only \S\ref{sec:notation} is
necessary to understand the rest of the paper; readers may bypass other parts 
of the current section depending on their interest and background.

The \renyi entropy \cite{renyi61} is another information measure, that is intimately 
related to the information measures discussed in this paper.
The \renyi entropy \cite{bobkovC15,ramS16} 
and its variants \cite{arimoto77,fehrS14,sasonV18A,teixeiraMA12}  
are of interest by themselves  \cite{chenMNF17,hansonD17,hoV15,rastegin11}; 
in addition they have been used to pose projection problems 
\cite{kumarS16,kumarS15A,kumarS15B} 
related to guessing \cite{arikan96,sasonV18B,sundaresan07}
and various questions about the information transmission problems 
\cite{bunteL14A,bunteL16,tanH18}.
Recently, there has been a revived interest in 
\renyi\!\!'s information measures and their operational significance  
\cite{bunteL14B,chengH16,chengH18,chengHT19,dalai13A,fongT16,shayevitz11,tomamichelH18,venkataramananJ18}, in general.

\subsection{Radius, Center, and Capacity}\label{sec:kemperman}
The concepts of radius and center, as we use them, are analogous to their 
counter parts in Euclidean geometry.
Let \(\cha\) be a set of points in the \(\blx\) dimensional Euclidean space
\(\reals{}^{\blx}\) and \(\mQ\) be a point in the same space.
Then one measure of the spread of \(\cha\) relative to \(\mQ\) is  the 
infimum of the radii of the \(\mQ\)-centered spheres including all points of \(\cha\), 
called \emph{the Chebyshev radius of \(\cha\) relative to \(\mQ\)}:
\begin{align}
\notag
\RCR{\cha}{\mQ}
&\DEF \sup\nolimits_{\mW\in\cha} \nrm{2}{\mW-\mQ} 
&
&\forall \cha \subset \reals{}^{\blx},\mQ\in\reals{}^{\blx}.
\end{align}
If we do not require the centers of the spheres to be at a given point \(\mQ\), 
then we get an absolute measure of the spread of \(\cha\), called  
\emph{the Chebyshev radius of \(\cha\)}:
\begin{align}
\notag
\CR{\cha}
&\DEF \inf\nolimits_{\mQ\in\reals{}^{\blx}}  \sup\nolimits_{\mW\in\cha} \nrm{2}{\mW-\mQ} 
&
&\forall \cha \subset \reals{}^{\blx}.
\end{align}
If \(\CR{\cha}\) is finite, 
then there exists\footnote{The existence follows from the extreme value theorem  for 
lower semicontinuous functions. The uniqueness is a result of the uniform convexity of finite dimensional Euclidean spaces.}
a unique Chebyshev center \(\qmn{\cha}\) satisfying \(\RCR{\cha}{\qmn{\cha}}=\CR{\cha}\).

For any set of points in a metric space \((\inpA,\cnst{d})\), one can define the  Chebyshev radius 
by replacing \(\reals{}^{\blx}\) with \(\inpA\) and \(\nrm{2}{\mW-\mQ}\) with \(\cnst{d}(\mW,\mQ)\) in the definition. 
However, neither the existence nor the uniqueness of the Chebyshev center is a foregone conclusion for such 
generalizations. Garkavi \cite[Thm. 1]{garkavi64} provides a three point set in a Banach space that does 
not have a Chebyshev center. In the Hamming space of length two binary strings, both \((0,0)\) and \((1,1)\) 
are Chebyshev centers of the set \(\cha=\{(0,1),(1,0)\}\).
See \cite[Ch. 15]{amir}, for a discussion of these concepts on the inner product spaces.

The Chebyshev radius is, in a sense, special because it is defined via the distance measure 
---the metric corresponding to the norm of the space for normed spaces and
the metric of the space for metric spaces---
that is a part of the description of the space. 
In principle, one can measure the relative and the absolute spread of the subsets of \(\inpS\) 
using any non-negative function \(\gX\) on \(\inpS\times\inpS\) satisfying \(\gX(\dinp,\dinp)=0\) 
for all \(\dinp\in\inpS\) and define a center accordingly. 
However, neither the existence nor the uniqueness of such a center is guaranteed.

When \(\inpS\) in the above formulation is the space of all probability measures \(\pmea{\outA}\)
on a measurable space \((\outS,\outA)\), one can measure the spread of a subset \(\cha\) of \(\pmea{\outA}\)
using the Kullback-Leibler divergence. 
The resulting radius is nothing but the Shannon radius of \(\cha\) and whenever the Shannon radius is finite 
the existence of a unique Shannon center follows from Kemperman's result 
\cite[Thm. 1]{kemperman74}.
The other assertion of Kemperman's result \cite[Thm. 1]{kemperman74} is the equality of 
the Shannon radius of \(\cha\) and the Shannon capacity of \(\cha\),
defined as the supremum of the mutual information 
\(\MI{\mP}{\cha}\) over all probability mass functions \(\mP\) on \(\cha\).
For the case where both \(\cha\) and \(\outS\) are finite sets, Kemperman's result  was already 
known at the time \cite[Thm. 4.5.1]{gallager}; in \cite{kemperman74} Kemperman attributes 
this special case to Shannon \cite{shannonW}. 
For the case when \(\outS\) is a finite set, first Gallager \cite[Thm. A]{gallager79} and 
then Davisson and Leon-Garcia \cite[Thm. 3]{davissonL80} proved results equivalent to Kemperman's. 
Later, Haussler \cite{haussler97} proved Kemperman's result assuming  \(\outS\) to be a complete 
separable metric space, i.e. Polish space, and \(\outA\) to be the associated Borel \(\sigma\)-algebra.

Theorem \ref{thm:minimax}, which we prove in the following,
extends Kemperman's result  to  the \renyi capacity and center 
of other orders.
The existence of a unique center under the finite capacity hypothesis 
and the equality of the capacity and the radius
have been confirmed in other contexts, as well.

\subsubsection{Radius for \(\fX\)-Divergence}\label{sec:fdivergence}
\csiszar \cite{csiszar63}, \cite{csiszar67A}, Morimoto \cite{morimoto63}, and Ali and Silvey \cite{aliS66} 
defined the \(\fX\)-divergence using convex functions, satisfying \(\fX(1)=0\).
The Kullback-Leibler divergence\footnote{For positive finite orders other 
	than one the \renyi divergence is not an \(\fX\)-divergence itself;
	but it can be written in terms of an \(\fX\)-divergence:
\(\RD{\rno}{\mW}{\mQ}\!=\!\tfrac{1}{\rno-1}\ln(1\!+\!(\rno\!-\!1) \FD{\fX}{\mW}{\mQ})\)
for \(\fX(\dinp)\!=\!\frac{\dinp^{\rno}-1}{\rno-1}\),
as previously pointed out in 
\cite[(14)]{csiszar66},
\cite[(1.10)]{csiszar67A},
\cite[(6)]{csiszar67B}, \cite[(1)]{sason16}, \cite[(80)]{sasonV16}.} 
is the \(\fX\)-divergence corresponding to the function \(\fX(\dinp)=\dinp \ln \dinp\). 
For any convex function \(\fX\) satisfying \(\fX(1)=0\), 
the absolute and relative \emph{\(\fX\)-radius} 
are defined in terms of the corresponding \(\fX\)-divergence as follows:
\begin{align}
\notag
\RFR{\fX}{\cha}{\mQ}
&\DEF \sup\nolimits_{\mW\in\cha} \FD{\fX}{\mW}{\mQ},
\\
\notag
\FR{\fX}{\cha}
&\DEF \inf\nolimits_{\mQ\in\pmea{\outA}} \sup\nolimits_{\mW\in\cha} \FD{\fX}{\mW}{\mQ}.
\end{align}
\emph{The \(\fX\)-information} and \emph{the \(\fX\)-capacity} are defined in terms 
of corresponding \(\fX\)-divergence
as follows
\begin{align}
\notag
\FMI{\fX}{\mP}{\cha}
&\DEF \inf\nolimits_{\mQ\in\pmea{\outA}}  \FD{\fX}{\mP \mtimes \cha}{\mP \otimes \mQ},
\\
\notag
\FC{\fX}{\cha}
&\DEF\sup\nolimits_{\mP\in\pdis{\cha}}\FMI{\fX}{\mP}{\cha}
\end{align}
where \(\mP \mtimes \cha\) is the probability measure whose marginal distribution 
on the support of \(\mP\) is \(\mP\) and whose conditional distribution is \(\mW\)
and \(\mP \otimes \mQ\) is the product measure.

The mutual information\footnote{For positive finite orders other than one
	the \renyi information can be written in terms of an \(\fX\)-information,
	using the analogous relation for divergences:
	\(\RMI{\rno}{\mP}{\cha}\!=\!\tfrac{1}{\rno-1}\ln(1\!+\!(\rno\!-\!1) \FMI{\fX}{\mP}{\cha})\)
	for \(\fX(\dinp)\!=\!\frac{\dinp^{\rno}-1}{\rno-1}\).} 
is the \(\fX\)-information corresponding to \(\fX(\dinp)=\dinp \ln \dinp\).
For \(\cha\)'s that are finite, \csiszar proved the following two assertions, see \cite[Thm. 3.2]{csiszar72}:
\begin{itemize}
\item \(\FC{\fX}{\cha}=\FR{\fX}{\cha}\) for any \(\fX\) that is strictly convex at \(1\).
\item There exists a unique \(\fX\)-center for any \(\fX\) 
that is strictly convex, provided that \(\FR{\fX}{\cha}\) is finite.
\end{itemize} 
For \(\fX\)'s that are strictly convex, it seems both assertions 
of \csiszar \cite[Thm. 3.2]{csiszar72} can be extended to arbitrary \(\cha\)'s 
using the technique employed by Kemperman, 
as Kemperman himself suggested in \cite{kemperman74}.
Gushchin and Zhdanov \cite{gushchinZ06} proved that \(\FC{\fX}{\cha}\)  equals to
\(\FR{\fX}{\cha}\) for any convex function \(\fX\) and any 
set of probability measures \(\cha\) 
provided that \(\outS\) is a complete separable metric space, i.e. Polish space, and \(\outA\) is the associated 
Borel \(\sigma\)-algebra.

\subsubsection[Radius in Quantum I.T.]{Radius in Quantum Information Theory}\label{sec:quantumit}
In this paper, we assume \(\cha\) to be a set of probability measures on a given measurable space.
This is a generalization of the case when \(\cha\) is a set of probability mass functions 
on a given finite set \(\outS\), i.e. the finite sample space case.
Another generalization of the finite sample space case is obtained by assuming \(\cha\) to be a
set of \(\abs{\outS}\)-by-\(\abs{\outS}\) positive semidefinite, trace one, Hermitian matrices.
In quantum information theory such matrices are called the density matrices; they represent
the states of a \(\abs{\outS}\) dimensional Hilbert space \({\cal H}\), \cite[\S1.2]{hayashi}.
The set of all such states is denoted by \({\cal S}({\cal H})\).
There is a one-to-one correspondence between the diagonal members of \({\cal S}({\cal H})\)
and the probability mass functions on \(\outS\).
As a result, statements about  subsets of \({\cal S}({\cal H})\) can be interpreted as 
generalizations of the corresponding statements about sets of probability mass functions on \(\outS\).

The definition of the Kullback-Leibler divergence has been extended to the members of \({\cal S}({\cal H})\);
it is, however, customarily called \emph{the quantum relative entropy} \cite[\S3.1.1]{hayashi}:
\begin{align}
\label{eq:def:QuantumKLD}
\KLD{\mW}{\mQ}
&\DEF\mbox{Tr} \mW(\ln \mW -\ln \mQ)
&
&\forall \mW, \mQ \in {\cal S}({\cal H}).
\end{align}
This definition can be interpreted as an extension because for 
the diagonal members  of \({\cal S}({\cal H})\), 
the quantum relative entropy as
defined in \eqref{eq:def:QuantumKLD}
is equal to the Kullback-Leibler divergence 
between the corresponding probability mass functions.
For any subset \(\cha\) of \({\cal S}({\cal H})\), the quantum Shannon radius is defined
as \(\inf\nolimits_{\mQ\in {\cal S}({\cal H})} \sup\nolimits_{\mW\in\cha}\KLD{\mW}{\mQ}\).

The definition of mutual information has been extended as well, but it is called 
\emph{the transmission information} \cite[\S4.1.1]{hayashi}:
\begin{align}
\label{eq:def:QuantumInformation}
\RMI{}{\mP}{\cha}
&\DEF \sum\nolimits_{\mW\in \cha} \mP(\mW)  \KLD{\mW}{\qmn{\mP}}
&
&\forall \mP\in \pdis{\cha}
\end{align} 
where \(\qmn{\mP}=\sum\nolimits_{\mW\in \cha} \mP(\mW) \mW\). 
Note that when \(\cha\) includes only diagonal members of \({\cal S}({\cal H})\), the above quantity 
equals to the mutual information for the prior \(\mP\) on the corresponding set of probability 
mass functions. 
The quantum Shannon capacity is defined as the supremum of \(\RMI{}{\mP}{\cha}\) over all
probability mass functions \(\mP\) on \(\cha\) with finite support.

The quantum Shannon capacity and radius are equal to one another for arbitrary 
\(\cha\subset {\cal S}({\cal H})\) provided that \({\cal H}\) is a finite dimensional Hilbert
space,\footnote{Results in \cite{ohyaPW97} and \cite{schumacherW01} were proved with additional assumptions.
In \cite{ohyaPW97}, Ohya, Petz, and Watanabe assumed \(\cha\) to be the image of an arbitrary Hilbert space 
under the channeling transformation.
In \cite{schumacherW01}, Shumacher and Westmoreland  assumed \(\cha\) to be a closed 
convex set. The existence of a unique quantum Shannon center is implicit in both \cite{ohyaPW97}
and \cite{schumacherW01}.} 
\cite[Thm. 4.1]{hayashi}, \cite[Thm. 3.5]{ohyaPW97}, \cite[(19)]{schumacherW01}.
This implies the equality of Shannon capacity and radius in the classical case provided that 
\(\outS\) is a finite set. 
However, neither Kemperman's result in \cite{kemperman74} nor the weaker result by Haussler in \cite{haussler97}
require \(\outS\) to be finite. Thus those results are not subsumed by the quantum Information
theoretic versions of Kemperman's result presented in \cite{hayashi}, \cite{ohyaPW97}, \cite{schumacherW01}.

The situation is similar for the quantum \renyi capacity, radius, and center.
All the results on the equality of the quantum \renyi capacity and radius that we are aware of 
\cite[Thm. 6]{dalai13A},
\cite[(4.74)]{hayashi}, \cite[Lemma I.3]{konigW09}, \cite[Thm. IV.8]{mosonyiH11}, \cite[Prop. 4.2]{mosonyiO17},
\cite[Lemma 14]{wildeWY14} 
assume \(\cha\) to be a subset of \({\cal S}({\cal H})\) for a finite dimensional Hilbert space \({\cal H}\).
Hence, to the best of our knowledge,
Theorem \ref{thm:minimax} is not subsumed by any of the known results in quantum information theory.

\subsection{Augustin's Method and the \renyi Center}\label{sec:augustin}
Augustin's proof of the sphere packing bound in \cite{augustin69} is one of the first 
few complete proofs of the sphere packing bound. 
Unlike its contemporaries by Shannon, Gallager and Berlekamp in \cite{shannonGB67A}
and by Haroutunian in \cite{haroutunian68},
Augustin's proof does not assume either 
the stationarity of the channel 
or the finiteness of the input set because it does not rely on a type based expurgation 
(i.e. a fixed composition argument). 
After decades, Augustin's proofs in \cite{augustin69} and \cite{augustin78} are still the 
only proofs of the sphere packing bound for non-stationary product channels, even in the 
finite input alphabet case.
Augustin's method has been applied to problems with feedback, as well.
Using a variant of his method, Augustin provides a proof sketch for 
the derivation of the sphere packing bound for codes on discrete stationary 
product channels with feedback in \cite{augustin78}; 
see \cite{nakiboglu19E} for a complete proof following this proof sketch. 
What we call the discrete stationary product channels with feedback are customarily 
called DMCs with feedback.

Despite their strength and generality, Augustin's derivations of the sphere packing 
bound is scarcely known to date, even among the specialists working on related problems.
In \cite[\S{\ref{B-sec:product-outerbound}}]{nakiboglu19B}, we derive sphere packing bounds using Augustin's method 
in a way that makes the roles of the \renyi capacity and center more salient and precise. 
Our bound for the product channels is sharper than the corresponding bounds in 
\cite{augustin69} and \cite{augustin78}.
In \cite[\S{\ref{B-sec:fproduct-outerbound}}]{nakiboglu19B}, we present a new proof of the sphere packing bound for 
the discrete product channels with feedback that facilitates the ideas of 
Haroutunian \cite{haroutunian77}  and Sheverdyaev \cite{sheverdyaev82}, 
as well as Augustin \cite{augustin69}, \cite{augustin78}.
Our new proof for the case with feedback holds for non-stationary channels satisfying 
certain stationarity hypothesis.  
In \cite[Appendix {\ref*{B-sec:operational}}]{nakiboglu19B}, 
we discuss other aspects of the operational significance 
of \renyi capacity and information for the channel coding problem.

The generality and strength of Augustin's results compel one to ask: 
What is the principle behind Augustin's proofs of the sphere packing bound?
A succinct answer exists for those who are already familiar with the 
concepts  of \renyi capacity, radius and center.\footnote{To be precise, Augustin 
does not work with \renyi\!\!'s information measures either in \cite{augustin69} or in 
\cite{augustin78}. It is, however, possible to  restate his observations in terms of 
\renyi\!\!'s information measures. 
His approach is eloquent and insightful, irrespective of the terms he chose 
to employ.}
In our judgment, the novel observation behind Augustin's proofs is the 
following:
\begin{align}
\notag
\lim\nolimits_{\rnf\to\rno} \RRR{\rno}{\cha}{\qmn{\rnf,\cha}}=\RC{\rno}{\cha}.
\end{align}
In words, by choosing \(\rnf\) close enough to \(\rno\), 
the order \(\rno\) \renyi radius relative to the order \(\rnf\) \renyi center 
can be made arbitrarily close to the order \(\rno\) \renyi capacity, 
which equals to the order \(\rno\) \renyi radius.
This observation seems benign enough to hold for other parametric families of divergences 
and corresponding capacities, radii, and centers.
Thus we believe that Augustin's method 
can probably be used to derive tight outer bounds in other information 
transmission problems.

\subsection{Main Contributions}\label{sec:contributions}
\begin{enumerate}[(1)]
\item\label{contribution:uniformequicontinuity} 
If \(\cha\) and \(\outA\) are finite sets, the continuity of the \renyi information 
is evident, both as a function of the order and as a function of the prior.
In their proof of the sphere packing bound \cite[p. 101]{shannonGB67A}, 
while proving the  continuity of the \renyi capacity in the order on 
\((0,1)\) ---for the finite \(\cha\) and \(\outS\) case---  
Shannon, Gallager, and Berlekamp asserted that the \renyi information is 
in fact equicontinuous as a family of functions of the order on \((0,1)\) 
indexed by the priors.
We strengthen their assertion by replacing the finiteness hypothesis on the sets \(\cha\) 
and \(\outS\) with a finiteness hypothesis for the \renyi capacity,
including orders greater than one, and
establishing uniformity of the equicontinuity,
see Lemma \ref{lem:finitecapacity}-(\ref{finitecapacity-uecO}).	
Furthermore, we show that the \renyi information is, also, uniformly 
equicontinuous when considered as a family of functions of the prior 
indexed by the orders, see Lemma \ref{lem:finitecapacity}-(\ref{finitecapacity-uecP}).

\item\label{contribution:ervenharremoes} 
Reflecting on \cite[Thm. 37]{ervenH14} for countable \(\outS\)'s at \(\rno\!=\!\infty\), 
van Erven and \harremoes  conjectured the following: 
\begin{conjecture*}[\!\!{\cite[Conjecture 1]{ervenH14}}]
If \(\RR{\rno}{\cha}<\infty\)  for 
an \(\rno\) in \((0,\infty]\)
and a \(\cha\!\subset\!\pmea{\outA}\)
then
there exists a unique \(\qmn{\rno,\cha}\!\in\!\pmea{\outA}\)
satisfying
\(\RR{\rno}{\cha}\!=\!\sup\nolimits_{\mW \in \cha} \RD{\rno}{\mW}{\qmn{\rno,\cha}}\).
Furthermore, for all \(\mQ\in\pmea{\outA}\) we have
\begin{align}
\notag
\sup\nolimits_{\mW \in \cha} \RD{\rno}{\mW}{\mQ}
&\geq  
\RR{\rno}{\cha}+\RD{\rno}{\qmn{\rno,\cha}}{\mQ}.
\end{align}
\end{conjecture*}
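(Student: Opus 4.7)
The plan is to combine a Sibson-type identity with the minimax equality $\RC{\rno}{\cha}=\RR{\rno}{\cha}$ of Theorem \ref{thm:minimax} and obtain the center $\qmn{\rno,\cha}$ as a common limit of two natural sequences: the Sibson means of near-capacity-achieving priors and the near-minimizers of the radius.

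For each $\rno\in(0,1)\cup(1,\infty)$, finitely supported $\mP\in\pdis{\cha}$, and $\mQ\in\pmea{\outS,\outA}$, the Sibson identity
\begin{align}
\notag
\tfrac{1}{\rno-1}\ln\!\int\!\mP(\dif\mW)\exp\!\bigl((\rno-1)\RD{\rno}{\mW}{\mQ}\bigr)
=\RMI{\rno}{\mP}{\cha}+\RD{\rno}{\qmn{\rno,\mP}}{\mQ}
\end{align}
follows from a direct computation using the definition of the \renyi divergence, where $\qmn{\rno,\mP}$ is the Sibson mean having density proportional to $\bigl(\int\mP(\dif\mW)(\der{\mW}{\rfm})^{\rno}\bigr)^{1/\rno}$ with respect to a dominating measure $\rfm$. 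Since a weighted power mean is bounded above by the supremum for either sign of $\rno-1$, this yields the \emph{key inequality}
\begin{align}
\notag
\sup\nolimits_{\mW\in\cha}\RD{\rno}{\mW}{\mQ}\geq\RMI{\rno}{\mP}{\cha}+\RD{\rno}{\qmn{\rno,\mP}}{\mQ},
\end{align}
valid for every $\mP\in\pdis{\cha}$ and every $\mQ\in\pmea{\outS,\outA}$.

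Next, I would fix a sequence $\{\mP_n\}$ with $\RMI{\rno}{\mP_n}{\cha}\to\RC{\rno}{\cha}$ together with a sequence $\{\tilde\mQ_k\}$ achieving $\sup_{\mW\in\cha}\RD{\rno}{\mW}{\tilde\mQ_k}\to\RR{\rno}{\cha}$. Applying the key inequality with $\mP=\mP_n$ and $\mQ=\tilde\mQ_k$ and invoking $\RC{\rno}{\cha}=\RR{\rno}{\cha}$ gives $\RD{\rno}{\qmn{\rno,\mP_n}}{\tilde\mQ_k}\leq o_n(1)+o_k(1)$, so a Pinsker-type inequality for the \renyi divergence forces both $\{\qmn{\rno,\mP_n}\}$ and $\{\tilde\mQ_k\}$ to be Cauchy in total variation with a common limit $\qmn{\rno,\cha}\in\pmea{\outS,\outA}$. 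Passing to $\liminf_n$ in the key inequality with $\mP=\mP_n$ for an arbitrary fixed $\mQ$, and using lower semicontinuity of the \renyi divergence in its first argument (in total variation), then yields the Pythagorean-like inequality $\sup_{\mW\in\cha}\RD{\rno}{\mW}{\mQ}\geq\RR{\rno}{\cha}+\RD{\rno}{\qmn{\rno,\cha}}{\mQ}$. That $\qmn{\rno,\cha}$ itself attains the radius, i.e.\ $\sup_{\mW\in\cha}\RD{\rno}{\mW}{\qmn{\rno,\cha}}=\RR{\rno}{\cha}$, then follows by applying lower semicontinuity in the second argument to the sequence $\tilde\mQ_k\to\qmn{\rno,\cha}$ together with the elementary inequality $\sup\liminf\leq\liminf\sup$. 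Uniqueness is an immediate corollary of the Pythagorean inequality: any competing center $\mQ'$ forces $\RD{\rno}{\qmn{\rno,\cha}}{\mQ'}=0$, hence $\mQ'=\qmn{\rno,\cha}$.

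The case $\rno=1$ reduces to the classical Kemperman result (the Sibson identity collapses to \topsoe's identity), and the case $\rno=\infty$ requires a separate limiting argument since the closed-form Sibson mean degenerates; it can likely be handled by sending $\rno\uparrow\infty$ and invoking the continuity of the \renyi radius established earlier in the paper. The main technical obstacle is the Cauchy property: it rests on a quantitative Pinsker-type inequality for the \renyi divergence that holds uniformly across $\rno\in(0,1)$ and $\rno>1$, where the sign of $\rno-1$ reverses the direction of the power-mean inequality underpinning the key bound and thus demands careful sign bookkeeping.
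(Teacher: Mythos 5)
Your proposal is correct and closely parallels the paper's structure, but the device you use to obtain the center differs in a way worth flagging. Your key inequality $\sup_{\mW\in\cha}\RD{\rno}{\mW}{\mQ}\geq\RMI{\rno}{\mP}{\cha}+\RD{\rno}{\qmn{\rno,\mP}}{\mQ}$ is exactly the combination of \eqref{eq:minimax-1}, \eqref{eq:sibson}, and \eqref{eq:lem:information:defA} that opens the paper's proof of Lemma~\ref{lem:EHB}, and your final $\liminf$ step via lower semicontinuity (Lemma~\ref{lem:divergence:lsc}) is the paper's closing step verbatim. Where you differ is the Cauchy argument for $\{\qmn{\rno,\mP_n}\}$: you pivot through a near-radius-minimizing sequence $\{\tilde\mQ_k\}$ and the equality $\RC{\rno}{\cha}=\RR{\rno}{\cha}$, while the paper's Theorem~\ref{thm:minimax} pivots through the centers $\qmn{\rno,\cha_\ind}$ of the nested finite sub-channels $\cha_\ind=\cup_{\jnd\le\ind}\supp{\pmn{\jnd}}$, which are supplied by the compactness-based Lemma~\ref{lem:capacityFLB}. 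Your two-sequence pivot is clean and avoids any finite-subset apparatus, and it additionally identifies $\qmn{\rno,\cha}$ simultaneously as the limit of the near-optimal Sibson means and of the near-minimizers of the radius, which the paper leaves implicit. The price is that your route consumes the hard direction of the minimax equality $\RR{\rno}{\cha}\le\RC{\rno}{\cha}$; in the paper that inequality is itself established \emph{via} the existence of $\qmn{\rno,\cha}$ (the converse of max--min follows from \eqref{eq:thm:minimaxcenter}), so as written your argument re-derives part of the very theorem it invokes rather than replacing its proof. If you wanted a genuinely independent route you would need a proof of the minimax equality that does not pass through the center, such as the Sion-theorem argument of Lemma~\ref{lem:minimax-alternative}, which carries extra compactness hypotheses.

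Two small corrections. Your concern about $\rno=1$ and $\rno=\infty$ is unnecessary: the identity \eqref{eq:sibson} is stated and valid for all $\rno\in(0,\infty]$, the order-$\infty$ Sibson mean $\qmn{\infty,\mP}=\mmn{\infty,\mP}/\lon{\mmn{\infty,\mP}}$ is perfectly well defined because $\supp{\mP}$ is finite (it is the normalized pointwise maximum of finitely many Radon--Nikodym derivatives), and Lemma~\ref{lem:information:def} already covers these endpoints, so no separate limiting argument is required. Also, the ``$o_n(1)+o_k(1)$'' bookkeeping you gloss over does work, but it deserves one line: fix $\epsilon>0$, choose $k^{*}$ with $o_{k^{*}}(1)<\epsilon$, and then take $n,m$ large; Pinsker and the triangle inequality through $\tilde\mQ_{k^{*}}$ then give $\lon{\qmn{\rno,\mP_n}-\qmn{\rno,\mP_m}}\lesssim\sqrt{\epsilon/(1\wedge\rno)}$, which is the Cauchy property.
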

This conjecture is confirmed in Lemma \ref{lem:EHB} for the first 
time.\footnote{We were notified in \cite{harremoes16} that van Erven and \harremoes
	had a proof establishing their conjecture in \cite{ervenH14} 
	under some regularity conditions, at the time.}
Lemma \ref{lem:EHB} implicitly asserts the existence of a unique \(\qmn{\rno,\cha}\),
which is proved in Theorem \ref{thm:minimax}. 
This assertion, however, is not entirely new;
Augustin proved an equivalent assertion for orders in \((0,2)\) 
in \cite[Thm. 26.6\ensuremath{'}]{augustin78}
and gave a proof sketch for an equivalent assertion for orders in \((0,1)\)
in \cite{augustin69}.

In Appendix \ref{sec:constrainedcapacity}, we define \(\CRC{\rno}{\cha}{\cset}\) 
as the supremum of \(\RMI{\rno}{\mP}{\cha}\) over all 
priors \(\mP\) in \(\cset\)
and generalize the van Erven-\harremoes bound to the convex \(\cset\) case, 
see Definition \ref{def:constrainedRcapacity} and Lemma \ref{lem:CEHB}.

\item\label{contribution:infinitemodel}
Our framework allows us to pose and answer certain questions that are 
non-trivial only for infinite  \(\cha\)'s, i.e. infinite subsets of \(\pmea{\outA}\).
\begin{enumerate}[(a)]
	\item  There exists a countable subset  \(\cha'\) of \(\cha\) such that 
	\(\RC{\rno}{\cha'}=\RC{\rno}{\cha}\) for all 
	\(\rno\) in \([0,\infty]\), Lemma \ref{lem:capacityO}-(\ref{capacityO-countable}). 
	\item If \(\RC{\rnt}{\cha}\) is finite, then for all \(\epsilon>0\) there exists a 
	finite subset \(\cha'\) of \(\cha\) such that 
	\(\RC{\rno}{\cha'}>\RC{\rno}{\cha}-\epsilon\) for all \(\rno\) in \([\epsilon,\rnt]\), 
	Lemma \ref{lem:capacityO}-(\ref{capacityO-dini}).
	\item
	\(\RC{\rno}{\clos{\cha}}\!=\!\RC{\rno}{\cha}\) for all \(\rno\) in \((0,\infty]\) 
	where \(\clos{\cha}\) is the closure of 
	\(\cha\) in the topology of setwise convergence,
	Lemma \ref{lem:capacityEXT}-(\ref{capacityEXT-cl}).
	This has been pointed out by \csiszar and \korner 
	for \(\rno\) equals one case for finite \(\outS\)
	in \cite[Problem 8.10(b)]{csiszarkorner}.
\end{enumerate}	
\end{enumerate}

\subsection{Notational Conventions}\label{sec:notation}
For any set \(\outS\), we denote the set of all subsets of \(\outS\) by \(\sss{\outS}\)
and the set of all probability measures on finite subsets of \(\outS\) by \(\pdis{\outS}\).
For each \(\mP\in\pdis{\outS}\), 
i.e. for each probability mass function (\pmf\!\!),
we denote the set of all \(\dout\)'s in \(\outS\) for which  \(\mP(\dout)>0\), 
by \(\supp{\mP}\) and call it the support of \(\mP\).

We call the pair \((\outS,\outA)\) a measurable space iff \(\outA\) is a \(\sigma\)-algebra 
of the subsets of \(\outS\).
On a measurable space \((\outS,\outA)\), we denote 
the set of all finite signed measures by \(\smea{\outA}\),
the set of all finite measures by \(\zmea{\outA}\),
the set of all non-zero finite measures by \(\fmea{\outA}\),
and 
the set of all probability measures by  \(\pmea{\outA}\). 
A countable collection \(\opa\) of the subsets  of \(\outS\) is called a \(\outA\)-measurable partition
of \(\outS\) iff 
\(\cup_{\oev\in \opa}=\outS\), \(\emptyset\notin\opa\), \(\oev\cap\tilde{\oev}=\emptyset\) 
for all \(\oev,\tilde{\oev} \in \opa\), and \(\opa\subset\outA\),
\cite[Def. 10.8.1]{bogachev}.

A measure \(\mean\) on the measurable space \((\outS,\outA)\) is  absolutely continuous 
with respect to another measure \(\rfm\) on \((\outS,\outA)\), i.e. \(\mean\AC \rfm\), 
iff \(\mean(\oev)=0\) for any \(\oev \in \outA\) such that \(\rfm(\oev)=0\).
Measures \(\mean\) and \(\rfm\) are equivalent, i.e. \(\mean\sim\rfm\),
iff \(\mean\AC\rfm\)  and \(\rfm\AC\mean\). 
Measures \(\mean\) and \(\rfm\) are singular, i.e. \(\mean\perp\rfm\),
iff there exists an \(\oev \in \outA\) such that  \(\mean(\oev)=\rfm(\outS\setminus\oev)=0\).

A subset \(\cha\) of \(\fmea{\outA}\) is absolutely continuous with respect to 
a measure \(\rfm\), i.e. \(\cha\AC\rfm\), iff \(\mW\AC\rfm\) for all \(\mW\in \cha\).
A \(\sigma\)-finite measure \(\rfm\) is a reference measure for \(\cha\) iff \(\cha\AC\rfm\).
A subset \(\cha\) of \(\fmea{\outA}\)
is uniformly absolutely continuous with respect to \(\rfm\), i.e. \(\cha\UAC\rfm\), 
iff for every \(\epsilon>0\) there exists a 
\(\delta>0\) such that \(\mW(\oev)<\epsilon\)  for all \(\mW\in\cha\) provided that 
\(\rfm(\oev)<\delta\).
By \cite[p. 366 \& Thm. 2]{shiryaev}, \(\mean\AC\rfm\) iff \(\{\mean\}\UAC\rfm\).
Two subsets \(\cha\) and \(\chu\) of \(\pmea{\outA}\) are singular, i.e. \(\cha \perp \chu\),
iff there exists an \(\oev \in \outA\) such that \(\mW(\oev)=0\) for all \(\mW\in \cha\)
and \(\mU(\outS\setminus\oev)=0\) for all \(\mU\in\chu\).

We denote the Borel \(\sigma\)-algebra for the usual topology of 
the real numbers by \(\rborel{\reals{}}\).
We denote the essential supremum of a \(\outA\)-measurable,
i.e. \((\outA,\rborel{\reals{}})\)-measurable,
function \(\fX\) for the measure \(\rfm\)
on \((\outS,\outA)\) by \(\essup_{\rfm}\fX(\dout)\), i.e. 
\begin{align}
\notag
\essup\nolimits_{\rfm}\fX
\DEF\inf\{\gamma:\rfm(\{\dout:\fX(\dout)>\gamma\})=0\}.
\end{align}
We denote the integral of a measurable function \(\fX\) on \((\outS,\outA)\) 
with respect to the measure \(\rfm\) 
by \(\int \fX \rfm(\dif{\dout})\) or \(\int \fX(\dout) \rfm(\dif{\dout})\).
We denote the integral by \(\int\fX \dif{\dout}\) or \(\int \fX(\dout) \dif{\dout}\),
as well,  if it is on the real line and with respect to the Lebesgue measure.
If \(\rfm\) is a probability measure, then we also call the integral of \(\fX\) 
with respect to \(\rfm\) the expectation of \(\fX\) or the expected value of \(\fX\) 
and denote it by \(\EXS{\rfm}{\fX}\) or \(\EXS{\rfm}{\fX(\out)}\).

While discussing the continuity of measure valued functions and 
functions defined on sets of measures,
we use either the topology of setwise convergence or the total variation topology.
The topology of setwise convergence is the topology 
generated by the sets of the form \(\{\mean:\abs{\mean(\oev)-\tin}<\epsilon\}\)
for some \(\oev\in\outA\), \(\tin\in\reals{+}\), \(\epsilon\in\reals{+}\);
see \cite[\S4.7(v)]{bogachev} for a more detailed discussion. 
The total variation topology is the metric topology generated by the total variation norm.
For any \(\mean\) in \(\smea{\outA}\) the total variation norm of \(\mean\) is defined 
as
\begin{align}
\notag
\lon{\mean} 
&\DEF \sup\nolimits_{\oev \in \outA} \mean(\oev)-\mean(\outS\setminus\oev).
\end{align}
As a consequence of the Lebesgue decomposition theorem \cite[5.5.3]{dudley} and  
the Radon-Nikodym theorem \cite[5.5.4]{dudley} we have  
\begin{align}
\notag
\lon{\mean}&=\int \abs{\der{\mean}{\rfm}}  \rfm(\dif{\dout}) 
&
&\forall \mean,\rfm: \mean \AC \rfm.
\end{align}

Our notation will be overloaded for certain symbols; however, the relations represented 
by these symbols will be clear from the context.
We denote the products of topologies \cite[p. 38]{dudley}, 
\(\sigma\)-algebras \cite[p. 118]{dudley}, and measures \cite[Thm. 4.4.4]{dudley} by \(\otimes\).
We denote the Cartesian product  of sets \cite[p. 38]{dudley} by \(\times\). 
We denote the absolute value of real numbers and the size of sets
by \(\abs{\cdot}\). 
For extended real valued functions \(\fX\) and \(\gX\) on \(\outS\),
\(\fX\leq\gX\) iff \(\fX(\dout)\leq \gX(\dout)\) for all \(\dout\in \outS\).
For measures \(\mean\) and \(\rfm\) on \((\outS,\outA)\), 
\(\mean \leq \rfm\) iff \(\mean(\oev)\leq \rfm(\oev)\) for all \(\oev\in\outA\).

For \(\dinp,\dout\in\reals{}\), \(\dinp\wedge\dout\) is the minimum of \(\dinp\) and \(\dout\).
For extended real valued functions \(\fX\) and \(\gX\) on \(\outS\),
\(\fX\wedge\gX\) is the pointwise minimum of \(\fX\) and \(\gX\).
For \(\mean,\mW\in\smea{\outA}\), \(\mean\wedge\mW\) is the unique measure 
satisfying \(\der{\mean\wedge\mW}{\rfm}=\der{\mean}{\rfm}\wedge\der{\mW}{\rfm}\)
for any \(\rfm\) satisfying \(\mean\AC\rfm\) and \(\mW\AC\rfm\).
If \(\fXS\) is a set of real valued functions, then \(\wedge_{\fX\in\fXS}\fX\)
is the extended real valued function obtained by taking the pointwise infimum of 
\(\fX\)'s in \(\fXS\). 
For a \(\chu\subset\smea{\outA}\) satisfying \(\mW\leq\mU\)
for all \(\mU\in\chu\) for some \(\mW\in\smea{\outA}\), 
\(\wedge_{\mU\in\chu}\mU\) is the measure which is the infimum of
\(\chu\) with respect to the partial order \(\leq\).
The existence of a unique infimum is guaranteed by \cite[Thm. 4.7.5]{bogachev}. 
We use the symbol \(\vee\) analogously to \(\wedge\) but we represent maxima and 
suprema with it, rather than minima and infima.

\section{Preliminaries}\label{sec:preliminary} 
We commence our discussion by defining the mean measure and 
analyzing it, first as a function of the order for a given 
prior then as a function of the prior for a given order. 
After that we define the \renyi information using the mean 
measure and analyze it as a function of the order and the 
prior using the analysis of the mean measure.
Then we define the \renyi divergence and review those features 
of it that will be needed in our analysis.
We conclude the current section by defining the \renyi mean 
and deriving an alternative expression for the \renyi information 
in terms of the \renyi divergence using the \renyi mean. 

\subsection{The Mean Measure}\label{sec:powermean}
The weighted power means are generalizations of the weighted arithmetic mean.
For any positive real number \(\rno\) and \pmf \(\mP\) on non-negative real numbers, the order 
\(\rno\) mean for the prior \(\mP\) is 
\((\sum_{\dinp} \mP(\dinp) \dinp^{\rno})^{\sfrac{1}{\rno}}\). 
For any prior \(\mP\), the order \(\rno\) weighted mean
is a nondecreasing and continuously differentiable 
function of \(\rno\) on \(\reals{+}\).
Hence we can calculate its limit as \(\rno\) approaches zero, or infinity, 
using the L'Hospital's rule \cite[Thm. 5.13]{rudin}:
\begin{align}
\notag
\lim\nolimits_{\rno \downarrow 0} \left(\sum\nolimits_{\dinp} \mP(\dinp) \dinp^{\rno}\right)^{\sfrac{1}{\rno}}
&=\prod\nolimits_{\dinp}  \dinp^{\mP(\dinp)} 
\\
\notag
\lim\nolimits_{\rno \uparrow \infty} \left(\sum\nolimits_{\dinp} \mP(\dinp) \dinp^{\rno}\right)^{\sfrac{1}{\rno}}
&=\max\nolimits_{\dinp:\mP(\dinp)>0} \dinp.
\end{align}

The order \(\rno\) mean of measures for the prior \(\mP\) is defined via 
the pointwise order \(\rno\) mean of their Radon-Nikodym derivatives 
for the prior \(\mP\). 
In the following, we confine our discussion to the means of probability 
measure.
\begin{definition}\label{def:powermean}
Let \(\mP\) be a \pmf on \(\pmea{\outA}\) and \(\rfm\) 
be a reference measure for \(\mW\)'s with positive \(\mP(\mW)\). 
Then \emph{the order \(\rno\) mean of the Radon-Nikodym derivatives for the prior \(\mP\)} 
is\footnote{For each \(\mW\) with positive \(\mP(\mW)\), \(\der{\mW}{\rfm}\) exists for 
all \(\dout\) except for a \(\rfm\)-measure zero set by the Radon-Nikodym theorem \cite[5.5.4]{dudley}. 
Since there are only finite number of \(\mW\)'s with positive \(\mP(\mW)\),
\(\der{\mmn{\rno,\mP}}{\rfm}\) exists as a function of \(\rno\) from \([0,\infty]\) to \(\reals{\geq0}\)
for all \(\dout\) except for a \(\rfm\)-measure zero set.}  
\begin{align}
\label{eq:def:powermean-density}
 \der{\mmn{\rno,\mP}}{\rfm}
 &\DEF 
 \begin{cases}
 \prod\nolimits_{\mW:\mP(\mW)>0}    \left(\der{\mW}{\rfm}\right)^{\mP(\mW)}
 &\mbox{if~}\rno=0
 \\
 \left(\sum\nolimits_{\mW}  \mP(\mW)  \left(\der{\mW}{\rfm}\right)^{\rno} \right)^{\sfrac{1}{\rno}}
 &\mbox{if~}\rno \in \reals{+}
 \\
 \max\nolimits_{\mW:\mP(\mW)>0}\der{\mW}{\rfm}
 &\mbox{if~}\rno=\infty
 \end{cases}
 &
 &\rfm\mbox{-a.e.}
\end{align}
\emph{The order \(\rno\) mean measure for the prior \(\mP\)} is defined as
\begin{align}
\label{eq:def:powermean}
\mmn{\rno,\mP}(\oev)
&\DEF\int_{\oev} \der{\mmn{\rno,\mP}}{\rfm} \rfm(\dif{\dout})
&\forall \oev\in\outA. 
\end{align}
\end{definition}

In \eqref{eq:def:powermean-density} and throughout this section sums
of the form  \(\sum_{\mW}\) stands for sums of the form \(\sum_{\mW:\mP(\mW)>0}\).
In \eqref{eq:def:powermean-density}, \(\mW\) is a dummy variable used to express 
the elements of \(\pmea{\outA}\), i.e. probability measures on \((\outS,\outA)\). 
The probability mass assigned to each \(\mW\) by \(\mP\) is denoted by \(\mP(\mW)\). 
The reference measure \(\rfm\) is absent from the symbol for the mean measure 
because mean measure does not depend on the choice of the reference measure:
Let \(\widetilde{\mean}_{\rno,\mP}\) be the mean measure obtained using a reference measure 
\(\widetilde{\rfm}\) instead of \(\rfm\); then
\begin{align}
\notag
\mmn{\rno,\mP}(\oev)&=\widetilde{\mean}_{\rno,\mP}(\oev)
&
&\forall \rno \in [0,\infty] \mbox{~and~} \forall \oev \in \outA.
\end{align}
This follows from a standard application
of the Lebesgue decomposition theorem  and the Radon-Nikodym theorem. 

We are interested in the mean measure primarily as a tool to define and analyze 
the \renyi information. 
In \cite[\S26]{augustin78}, Augustin introduced the mean measure 
and derived some of the observations we present in 
Lemmas \ref{lem:powermeanequivalence}-\ref{lem:powermeanP}, albeit 
for different parametrizations of the order. 
Augustin, however, did not define or analyze the \renyi information 
in \cite{augustin78}.
Proofs of Lemmas \ref{lem:powermeanequivalence}-\ref{lem:powermeanP} 
are presented in Appendix \ref{sec:powermean-proofs}.

\begin{lemma}\label{lem:powermeanequivalence}
Let \(\mP\) be a \pmf on \(\pmea{\outA}\).
\begin{enumerate}[(a)]
\item\label{powermeanequivalence-a} 
\(\mmn{\rno,\mP} \sim \mmn{1,\mP}\) 
and 
\(\abs{\supp{\mP}}^{-\frac{1}{\rno}} \leq \lon{\mmn{\rno,\mP}}\leq \abs{\supp{\mP}}\) 
for any \(\rno\in (0,\infty]\). Furthermore, 
\(\lon{\mmn{1,\mP}}=1\).
\item\label{powermeanequivalence-b}
\(\mmn{0,\mP}\AC \mW\) for any \(\mW\in \supp{\mP}\) and \(\lon{\mmn{0,\mP}}\leq 1\). 
\end{enumerate}
\end{lemma}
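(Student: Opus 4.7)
My plan is to reduce every assertion to a pointwise statement about Radon--Nikodym derivatives and then integrate against a common dominating measure. Pick any $\sigma$-finite $\rfm$ dominating every $\mW$ with $\mP(\mW)>0$ (for instance $\rfm=\sum_{\mW:\mP(\mW)>0}\mP(\mW)\mW$) and set $f_{\mW}=\der{\mW}{\rfm}$. Then by Definition \ref{def:powermean} the density $\der{\mmn{\rno,\mP}}{\rfm}$ is simply the pointwise weighted power-mean of the $f_{\mW}$'s (the geometric mean at $\rno=0$, the maximum at $\rno=\infty$), and each claim becomes an elementary inequality for weighted means of non-negative reals, then integrated. In particular $\lon{\mmn{1,\mP}}=\sum_{\mW}\mP(\mW)\mW(\outS)=1$ is immediate.

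For the equivalence $\mmn{\rno,\mP}\sim\mmn{1,\mP}$ in part (a), I would observe that at any $\dout$ each of the expressions $\bigl(\sum_{\mW}\mP(\mW)f_{\mW}^{\rno}\bigr)^{1/\rno}$, $\sum_{\mW}\mP(\mW)f_{\mW}$, and $\max_{\mW}f_{\mW}$ vanishes iff $f_{\mW}(\dout)=0$ for every $\mW\in\supp{\mP}$; since $\mP(\mW)>0$ on the support this common vanishing set is independent of $\rno\in(0,\infty]$, which is the $\rfm$-a.e. characterization of equivalence.

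For the two-sided bound $\abs{\supp{\mP}}^{-1/\rno}\leq\lon{\mmn{\rno,\mP}}\leq\abs{\supp{\mP}}$ I would split at $\rno=1$ and exploit the monotonicity of weighted power means in the order. When $\rno\geq1$, monotonicity already gives $\lon{\mmn{\rno,\mP}}\geq\lon{\mmn{1,\mP}}=1\geq\abs{\supp{\mP}}^{-1/\rno}$, and the pointwise subadditivity of $x\mapsto x^{1/\rno}$ yields $\bigl(\sum_{\mW}\mP(\mW)f_{\mW}^{\rno}\bigr)^{1/\rno}\leq\sum_{\mW}\mP(\mW)^{1/\rno}f_{\mW}$, which integrates to $\sum_{\mW}\mP(\mW)^{1/\rno}\leq\abs{\supp{\mP}}$ since each summand is at most $1$. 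When $\rno\leq1$, monotonicity gives the upper bound $\leq 1\leq\abs{\supp{\mP}}$, and for the lower bound the crude pointwise inequality $\bigl(\sum_{\mW}\mP(\mW)f_{\mW}^{\rno}\bigr)^{1/\rno}\geq\mP(\mW_{0})^{1/\rno}f_{\mW_{0}}$, integrated and then optimized over $\mW_{0}\in\supp{\mP}$, produces $\lon{\mmn{\rno,\mP}}\geq\max_{\mW_{0}}\mP(\mW_{0})^{1/\rno}\geq\abs{\supp{\mP}}^{-1/\rno}$ via the pigeonhole bound $\max_{\mW_{0}}\mP(\mW_{0})\geq\abs{\supp{\mP}}^{-1}$. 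The $\rno=\infty$ case follows from the same $\mW_{0}$-argument together with $\max_{\mW}f_{\mW}\leq\sum_{\mW}f_{\mW}$.

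Part (b) is essentially AM--GM. If $\mW_{0}\in\supp{\mP}$ and $\mW_{0}(\oev)=0$ then $f_{\mW_{0}}=0$ $\rfm$-a.e. on $\oev$, which forces $\prod_{\mW}f_{\mW}^{\mP(\mW)}=0$ $\rfm$-a.e. on $\oev$, hence $\mmn{0,\mP}(\oev)=0$, giving $\mmn{0,\mP}\AC\mW_{0}$. The pointwise AM--GM inequality $\prod_{\mW}f_{\mW}^{\mP(\mW)}\leq\sum_{\mW}\mP(\mW)f_{\mW}$ integrates to $\lon{\mmn{0,\mP}}\leq\lon{\mmn{1,\mP}}=1$. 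No step presents a genuine obstacle; the only care needed is to fix the reference measure at the outset so that the pointwise estimates on the $f_{\mW}$'s lift cleanly to measure-theoretic statements modulo $\rfm$-null sets, and to verify that finiteness of $\supp{\mP}$ (since $\mP\in\pdis{\pmea{\outS,\outA}}$) legitimizes the pigeonhole and the finite-sum manipulations.
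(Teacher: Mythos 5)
Your proposal is correct and follows the same basic plan as the paper: fix a common reference measure $\rfm$, reduce to pointwise estimates on the Radon--Nikodym densities, and integrate. The only real difference is in how the two-sided bound on $\lon{\mmn{\rno,\mP}}$ is obtained. The paper uses a single sandwich that works uniformly for all $\rno\in\reals{+}$, namely
\[
\bigl(\mP(\tilde{\mW})\bigr)^{1/\rno}\der{\tilde{\mW}}{\rfm}
\;\leq\;
\der{\mmn{\rno,\mP}}{\rfm}
\;\leq\;
\max_{\mW}\der{\mW}{\rfm}
\;\leq\;
\sum_{\mW}\der{\mW}{\rfm},
\]
and then integrates and picks $\tilde{\mW}$ with $\mP(\tilde{\mW})\geq\abs{\supp{\mP}}^{-1}$; the middle inequality holds for every positive $\rno$ because $\sum_{\mW}\mP(\mW)f_{\mW}^{\rno}\leq(\max_{\mW}f_{\mW})^{\rno}$. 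You instead split at $\rno=1$, using subadditivity of $x\mapsto x^{1/\rno}$ for the upper bound when $\rno\geq1$ and the monotonicity of $\lon{\mmn{\rno,\mP}}$ in $\rno$ (a pointwise property of power means, re-proved later in Lemma~\ref{lem:powermeanO}-(\ref{powermeanO-e})) to transfer from the $\rno=1$ case on the other side. Both routes are elementary and correct; the paper's single sandwich is slightly more economical since it avoids the case split and the appeal to monotonicity. Your part (b) is the same AM--GM/H\"{o}lder argument as the paper, just phrased pointwise on densities rather than on set measures.
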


The main consequence of Lemma \ref{lem:powermeanequivalence} is that \(\mmn{\rno,\mP}\AC\mmn{1,\mP}\) 
for all \(\rno\in[0,\infty]\). 
Hence, we can describe and analyze the mean measures via their Radon-Nikodym derivatives with 
respect to the order one mean measure.
We build our analysis of the mean measure as a function of the order  around 
this observation.
First, we analyze \(\der{\mmn{\rno,\mP}\!}{\mmn{1,\mP}\!}\) as a function the order \(\rno\) in 
Lemma \ref{lem:powermeandensityO};
then use the dominated convergence theorem to obtain the corresponding results for \(\mmn{\rno,\mP}\)
in Lemma \ref{lem:powermeanO}.

\begin{definition}
Let \(\mP\) be a \pmf on \(\pmea{\outA}\)
and \(\rno\) be in \([0,\infty]\).
Then the order \(\rno\) density for the prior \(\mP\) is 
\begin{align}
\label{eq:def:powermeandensity}
\nmn{\rno,\mP}
&\DEF \der{\mmn{\rno,\mP}}{\mmn{1,\mP}}.
\end{align}
\end{definition}
Note that for any \pmf \(\mP\) on  \(\pmea{\outA}\), the order \(\rno\) density for the prior \(\mP\) is a 
\(\outA\)-measurable function from \(\outS\) to \(\reals{}\) by the Radon-Nikodym 
theorem \cite[5.5.4]{dudley}.

The order \(\rno\) posteriors defined in the following provides us an alternative way to express
\(\nmn{\rno,\mP}\) and its derivatives.
\begin{definition}\label{def:posterior}
Let \(\mP\) be a \pmf on \(\pmea{\outA}\) and \(\rno\) be a positive real number. 
Then for each \(\dout\in\outS\)  \emph{the order \(\rno\) posterior} \(\tpn{\rno}\) is a 
\pmf\!\! on \(\pmea{\outA}\) given by 
\begin{align}
\label{eq:def:posteroir}
\tpn{\rno}(\mW|\dout)\DEF
\begin{cases}
\mP(\mW)\left(\der{\mW}{\mmn{\rno,\mP}}\right)^{\rno}
&\mbox{if~}\mP(\mW)>0 
\\
0
&\mbox{else}
\end{cases}.
\end{align}
\end{definition}
The order \(\rno\) posterior \pmf \(\tpn{\rno}\) is  a 
\(\outA\)-measurable function for each \(\mW\).
The order one posterior \pmf\!\! \(\tpn{1}\) is also called the posterior \pmf\!\!, in accordance with the usual terminology.
\begin{lemma}\label{lem:powermeandensityO}
For any \pmf \(\mP\) on \(\pmea{\outA}\)
the following statements hold for \(\mmn{1,\mP}\)-almost every \(\dout\).
\begin{enumerate}[(a)]
\item\label{powermeandensityO-a}
\(\delta^{\frac{1-\rno}{\rno}}\leq \nmn{\rno,\mP}\leq 1\) for \(\rno\in (0,1]\)
and 
\(1 \leq \nmn{\rno,\mP}\leq \delta^{\frac{1-\rno}{\rno}}\) for \(\rno\in [1,\infty)\)
where \(\delta=\min\nolimits_{\mW:\mP(\mW)>0} \mP(\mW)\).
Furthermore,
\begin{align}
\notag
\nmn{\rno,\mP}(\dout)
&\!=\!
\begin{cases}
\prod\nolimits_{\mW:\mP(\mW)>0} \left(\tfrac{\tpn{1}(\mW|\dout)}{\mP(\mW)}\right)^{\mP(\mW)}
&
\rno\!=\!0
\\
\left(\sum\limits_{\mW} \tpn{1}(\mW|\dout)^{\rno}  \mP(\mW)^{1-\rno}\right)^{\sfrac{1}{\rno}}\!
& \rno\!\in\!\reals{+}
\\
\max\nolimits_{\mW:\mP(\mW)>0} \tfrac{\tpn{1}(\mW|\dout)}{\mP(\mW)}
& \rno\!=\!\infty
\end{cases}.
\\
\notag
\tpn{\rno}(\mW|\dout)
&\!=\!\begin{cases}
\tfrac{\tpn{1}(\mW|\dout)^{\rno}\mP(\mW)^{1-\rno}}{\nmn{\rno,\mP}^{\rno}}
&\mbox{if~}\mP(\mW)>0
\\
0
&\mbox{else}
\end{cases}.
\end{align}
\item\label{powermeandensityO-b}
\(\nmn{\rno,\mP}\)  is a smooth function of \(\rno\) on \(\reals{+}\). 
Furthermore, the first two derivatives of \(\nmn{\rno,\mP}\) are given by 
\begin{align}
\notag
\der{}{\rno}\nmn{\rno,\mP}
&=\tfrac{\nmn{\rno,\mP}}{\rno^2}\sum\nolimits_{\mW}
\tpn{\rno}(\mW|\dout) \ln\tfrac{\tpn{\rno}(\mW|\dout)}{\mP(\mW)}.
\\
\notag
\der{^2}{\rno^2}\nmn{\rno,\mP} 
&=\tfrac{1-\rno}{\nmn{\rno,\mP}}
\left(\der{}{\rno}\nmn{\rno,\mP}\right)^{2}
-\tfrac{2}{\rno}\der{}{\rno}\nmn{\rno,\mP}
\\
\notag
&\qquad~\quad+\tfrac{\nmn{\rno,\mP}}{\rno^3}\sum\nolimits_{\mW}
\tpn{\rno}(\mW|\dout)
\left(\ln\tfrac{\tpn{\rno}(\mW|\dout)}{\mP(\mW)}\right)^2.
\end{align}
\item\label{powermeandensityO-c}
\((\nmn{\rno,\mP})^{\rno}\) is 
log-convex\footnote{Both of the following statements are equivalent to the log-convexity of 
\((\nmn{\rno,\mP})^{\rno}\) in \(\rno\): 
``\(\nmn{\frac{1}{1+\rng},\mP}\) is log-convex in \(\rng\)'' and 
``For any \(\beta \in [0,1]\) and \(\rno_0,\rno_{1} \in (0,\infty]\), 
\(\nmn{\rno_{\beta},\mP}\leq (\nmn{\rno_{0},\mP})^{1-\beta} (\nmn{\rno_{1},\mP})^{\beta}\)
where \(\rno_{\beta}\) is \(\rno_{\beta}=[(1-\beta) (\rno_0)^{-1}+\beta (\rno_{1})^{-1}]^{-1}\).''}
in \(\rno\) on \(\reals{+}\), i.e. for any \(\beta\in (0,1)\) and \(\rno_0,\rno_1 \in \reals{+}\)
\begin{align}
\notag
(\nmn{\rno_{\beta},\mP})^{\rno_{\beta}} 
&\leq (\nmn{\rno_{1},\mP})^{\beta \rno_{1}}  (\nmn{\rno_{0},\mP})^{(1-\beta)  \rno_{0}}
\end{align}
where \(\rno_{\beta}=\beta\rno_{1}+(1-\beta)\rno_0\).
Furthermore, for \(\rno_1\neq\rno_0\) the inequality is strict iff there exist \(\mW,\tilde{\mW}\in\supp{\mP}\)  
such that \(\tfrac{\tpn{1}(\mW|\dout)}{\mP(\mW)}>\tfrac{\mP(\tilde{\mW}|\dout)}{\mP(\tilde{\mW})}>0\).

\item\label{powermeandensityO-d} 
If there exists a \(\mW\) such that  \(\tpn{1}(\mW|\dout)>\mP(\mW)\), 
then \(\nmn{\rno,\mP}(\dout)\) 
is bounded, continuous, and monotone increasing in \(\rno\) on \([0,\infty]\), 
else \(\nmn{\rno,\mP}(\dout)=1\) for all \(\rno\) in \([0,\infty]\). 
\end{enumerate}
\end{lemma}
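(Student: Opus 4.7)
The plan is to reduce everything to a finite-sum representation of $\nmn{\rno,\mP}$ in terms of the order-one posterior, and then apply standard tools (differentiation under a finite sum, Hölder's inequality, and the classical monotonicity of power means). Throughout, fix any reference measure $\rfm$ that dominates all $\mW$ with $\mP(\mW)>0$ (finitely many such $\mW$); all identities will be meant $\rfm$-a.e., equivalently $\mmn{1,\mP}$-a.e. by Lemma \ref{lem:powermeanequivalence}.

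For part (\ref{powermeandensityO-a}), I would first write $\nmn{\rno,\mP}=\der{\mmn{\rno,\mP}}{\rfm}/\der{\mmn{1,\mP}}{\rfm}$ using \eqref{eq:def:powermean-density}. Substituting $\der{\mW}{\rfm}=\tpn{1}(\mW|\dout)\der{\mmn{1,\mP}}{\rfm}/\mP(\mW)$ (which follows from Definition \ref{def:posterior} and the Radon-Nikodym theorem) the factor $\der{\mmn{1,\mP}}{\rfm}$ cancels and the claimed closed form pops out. Once that formula is in hand, the bounds follow from the monotonicity of power means applied to the nonnegative random variable $g_{\mW}(\dout)\DEF\tpn{1}(\mW|\dout)/\mP(\mW)$ under the prior $\mP$: since $\sum_{\mW}\mP(\mW)g_{\mW}(\dout)=1$, the order-one mean is $1$, while $g_{\mW}(\dout)\le 1/\delta$ gives the $\delta^{(1-\rno)/\rno}$ bound in either direction depending on the sign of $\rno-1$. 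The formula for $\tpn{\rno}(\mW|\dout)$ is then a direct substitution into \eqref{eq:def:posteroir}, using $\der{\mW}{\mmn{\rno,\mP}}=\der{\mW}{\mmn{1,\mP}}/\nmn{\rno,\mP}$.

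For part (\ref{powermeandensityO-b}), set $F(\rno)\DEF\nmn{\rno,\mP}(\dout)^{\rno}=\sum_{\mW}\mP(\mW)g_{\mW}(\dout)^{\rno}$. Since the sum is finite and $g_{\mW}(\dout)>0$ whenever $\tpn{1}(\mW|\dout)>0$, $F$ is real-analytic on $\reals{+}$ and strictly positive, so $\nmn{\rno,\mP}=F(\rno)^{1/\rno}$ is smooth in $\rno$. The first derivative follows from logarithmic differentiation, $\der{}{\rno}\ln\nmn{\rno,\mP}=-\frac{\ln F(\rno)}{\rno^{2}}+\frac{F'(\rno)}{\rno F(\rno)}$, and identifying $\mP(\mW)g_{\mW}^{\rno}/F(\rno)$ with $\tpn{\rno}(\mW|\dout)$ by part (\ref{powermeandensityO-a}); the two terms recombine into $\tfrac{1}{\rno^{2}}\sum_{\mW}\tpn{\rno}(\mW|\dout)\ln\tfrac{\tpn{\rno}(\mW|\dout)}{\mP(\mW)}$. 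The second derivative is then obtained by differentiating the first and book-keeping the three resulting terms; this is the most mechanical piece but presents no conceptual obstacle.

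For part (\ref{powermeandensityO-c}), the natural tool is Hölder's inequality: writing $\rno_{\beta}=(1-\beta)\rno_{0}+\beta\rno_{1}$,
\begin{align*}
F(\rno_{\beta})
&=\sum\nolimits_{\mW}\bigl(\mP(\mW)g_{\mW}(\dout)^{\rno_{0}}\bigr)^{1-\beta}\bigl(\mP(\mW)g_{\mW}(\dout)^{\rno_{1}}\bigr)^{\beta}\\
&\le F(\rno_{0})^{1-\beta}F(\rno_{1})^{\beta},
\end{align*}
which is exactly log-convexity of $\nmn{\rno,\mP}^{\rno}$. For the strictness claim, I would invoke the equality condition in Hölder: when $\rno_{0}\neq\rno_{1}$, equality forces $g_{\mW}(\dout)$ to be constant on the set where $\mP(\mW)g_{\mW}(\dout)>0$, which is exactly the negation of the stated condition. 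Note that $\mW$'s with $g_{\mW}(\dout)=0$ contribute $0$ to all three sums and are harmless.

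For part (\ref{powermeandensityO-d}), monotonicity on $\reals{+}$ follows from the classical fact that the weighted $\rno$-power mean of $(g_{\mW}(\dout))$ is nondecreasing in $\rno$, strictly so when the $g_{\mW}(\dout)$'s are not constant on $\supp{\mP}$; boundedness follows from $g_{\mW}(\dout)\le 1/\mP(\mW)$ for $\mW\in\supp{\mP}$. Continuity on $\reals{+}$ is from part (\ref{powermeandensityO-b}); continuity at the endpoints is the standard L'Hospital computation recalled at the start of \S\ref{sec:powermean}, giving the geometric mean at $\rno=0$ and the maximum at $\rno=\infty$. Finally, because $\sum_{\mW}\mP(\mW)g_{\mW}(\dout)=1$, the absence of any $\mW$ with $g_{\mW}(\dout)>1$ forces $g_{\mW}(\dout)=1$ for every $\mW\in\supp{\mP}$, which reduces $\nmn{\rno,\mP}(\dout)$ to the constant $1$. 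The main obstacle is not conceptual but bookkeeping: keeping the ``$\mmn{1,\mP}$-almost every $\dout$'' qualification consistent while moving between the formula for $\nmn{\rno,\mP}$ in (\ref{powermeandensityO-a}) and the pointwise analytic arguments used in (\ref{powermeandensityO-b})--(\ref{powermeandensityO-d}).
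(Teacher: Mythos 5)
Your proof is correct and follows essentially the same route as the paper's: derive the closed form for $\nmn{\rno,\mP}$ via the order-one posterior, differentiate the finite sum for smoothness, apply H\"{o}lder's inequality (and its equality condition) for part (\ref{powermeandensityO-c}), and reduce part (\ref{powermeandensityO-d}) to the constraint $\sum_{\mW}\mP(\mW)g_{\mW}(\dout)=1$. The only tactical difference is in part (\ref{powermeandensityO-a})--(\ref{powermeandensityO-d}): you bound via $g_{\mW}\leq 1/\delta$ and invoke power-mean monotonicity as a black box, while the paper bounds via $\mP(\mW)\geq\delta$ plus H\"{o}lder/superadditivity and derives monotonicity from the sign of $\dnmn{\rno,\mP}$ via Jensen; both give the same inequalities and neither is materially simpler.
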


Lemma \ref{lem:powermeandensityO} establishes the density \(\nmn{\rno,\mP}\)
as a smooth function \(\mmn{1,\mP}\)-a.e. and provides expressions 
for its first two derivatives. 
These derivatives are \(\outA\)-measurable functions because 
\(\nmn{\rno,\mP}\) and \(\tpn{\rno}\) 
are \(\outA\)-measurable. Then using their \(\mmn{1,\mP}\)-integrals we can define 
two mappings: 
\begin{align}
\label{eq:def:dpowermean}
\dmn{\rno,\mP}(\oev)
&\DEF \int_{\oev}  (\dnmn{\rno,\mP})  \mmn{1,\mP}(\dif{\dout})
&
&\forall \oev\in \outA,
\\
\label{eq:def:ddpowermean}
\ddmn{\rno,\mP}(\oev)
&\DEF \int_{\oev}  (\ddnmn{\rno,\mP})  \mmn{1,\mP}(\dif{\dout})
&
&\forall \oev\in \outA
\end{align}
where 
\(\dnmn{\rno,\mP}\)
and
\(\ddnmn{\rno,\mP}\)
are shorthands  for 
\(\der{}{\rno}\nmn{\rno,\mP}\)
and
\(\der{^2}{\rno^2}\nmn{\rno,\mP}\).

Note that we have not claimed that either of these mappings is defining 
a measure for each \(\rno\).
Lemma \ref{lem:powermeanO} given in the following
establishes that fact and analyzes the mean measure 
\(\mmn{\rno,\mP}\) as a function of the order \(\rno\).

\begin{lemma}\label{lem:powermeanO}
For any \pmf \(\mP\) on \(\pmea{\outA}\).
\begin{enumerate}[(a)]
\item\label{powermeanO-a}
\(\mmn{\rno,\mP}\) is a continuous function of \(\rno\) from \([0,\infty]\) with its usual topology
to \(\zmea{\outA}\) with the total variation topology.
\item\label{powermeanO-b}
\(\dmn{\rno,\mP}\) is a continuous function of \(\rno\) from \((0,\infty)\) with its usual topology
to \(\zmea{\outA}\) with the total variation topology.
Furthermore, \(\der{}{\rno}\mmn{\rno,\mP}=\dmn{\rno,\mP}\) in the sense that 
\begin{align}
\notag
\left.\der{}{\rno}\mmn{\rno,\mP}(\oev)\right\vert_{\rno=\rnf} 
&=\dmn{\rnf,\mP}(\oev)
&
&\forall \oev \in \outA,~~\forall \rnf\in (0,\infty).
\end{align}
\item\label{powermeanO-c}
\(\ddmn{\rno,\mP}\) is a continuous function of \(\rno\) from \((0,\infty)\) with its usual topology
to \(\smea{\outA}\) with the total variation topology.
Furthermore, \(\der{}{\rno}\dmn{\rno,\mP}=\ddmn{\rno,\mP}\) in the sense that 
\begin{align}
\notag
\left.\der{}{\rno}\dmn{\rno,\mP}(\oev)\right\vert_{\rno=\rnf}
&=\ddmn{\rnf,\mP}(\oev)
&
&\forall \oev \in \outA,~~\forall \rnf\in (0,\infty).
\end{align}
\item\label{powermeanO-d} 
\(\lon{\mmn{\rno,\mP}}^{\rno}\) is a log-convex function of \(\rno\) on \((0,\infty)\) such that 
\begin{align}
\notag
\lim\nolimits_{\rno\downarrow0}\lon{\mmn{\rno,\mP}}^{\rno}=
\essup\nolimits_{\mmn{1,\mP}} \sum\nolimits_{\mW:\tpn{1}(\mW|\dout)>0} \mP(\mW).
\end{align}
The log-convexity  is strict everywhere on \((0,\infty)\),
unless  there exists a \(\gamma\geq 1\) satisfying \(\mmn{1,\mP}(\set{A}(\mP,\gamma))=1\) 
for 
\(\set{A}(\mP,\gamma)=\{\dout:\tfrac{\tpn{1}(\mW|\dout)}{\mP(\mW)}=\gamma,~\forall \mW:\tpn{1}(\mW|\dout)>0\}\).
If there exists such a \(\gamma\), then \(\lon{\mmn{\rno,\mP}}=\gamma^{\frac{\rno-1}{\rno}}\).
\item\label{powermeanO-e}
\(\lon{\mmn{\rno,\mP}}\) is a continuous and nondecreasing function of \(\rno\) from 
\([0,\infty]\) to \([0,\abs{\supp{\mP}}]\).
If there exist \(\mW\), \(\widetilde{\mW}\) in \(\supp{\mP}\) such that \(\mW\neq \widetilde{\mW}\),
 then \(\lon{\mmn{\rno,\mP}}\) is monotone increasing everywhere on \((0,\infty)\), else
\(\lon{\mmn{\rno,\mP}}=1\) for all \(\rno\) in \([0,\infty]\). 
\end{enumerate}
\end{lemma}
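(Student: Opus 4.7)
The plan is to lift the pointwise smoothness and bounds of the density $\nmn{\rno,\mP}$ established in Lemma \ref{lem:powermeandensityO} to assertions about the measure $\mmn{\rno,\mP}$ via the dominated convergence theorem, differentiation under the integral sign, and H\"{o}lder's inequality, and then to read part (e) off parts (a) and (d) together with Lemma \ref{lem:powermeandensityO}-(d).

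For part (a), the starting point is the identity $\lon{\mmn{\rno,\mP}-\mmn{\rnf,\mP}}=\int \abs{\nmn{\rno,\mP}-\nmn{\rnf,\mP}}\, \mmn{1,\mP}(\dif \dout)$, valid because Lemma \ref{lem:powermeanequivalence}-(a) gives $\mmn{\rno,\mP}\AC \mmn{1,\mP}$ for every $\rno \in [0,\infty]$. Pointwise convergence $\nmn{\rno,\mP}(\dout)\to \nmn{\rnf,\mP}(\dout)$ follows from Lemma \ref{lem:powermeandensityO}-(d), and the monotonicity there also supplies the $\mmn{1,\mP}$-integrable majorant $\nmn{\infty,\mP}$, since $\lon{\mmn{\infty,\mP}}\leq \abs{\supp{\mP}}$. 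Dominated convergence then gives continuity on all of $[0,\infty]$, including the endpoints. For parts (b) and (c), the approach is to differentiate $\mmn{\rno,\mP}(\oev)=\int_{\oev} \nmn{\rno,\mP}\, \mmn{1,\mP}(\dif\dout)$ under the integral sign. The explicit derivative formulas in Lemma \ref{lem:powermeandensityO}-(b), combined with the pointwise bounds in Lemma \ref{lem:powermeandensityO}-(a) and the boundedness of $x\ln x$ on $[0,1]$, yield uniform $\mmn{1,\mP}$-integrable majorants for $\dnmn{\rno,\mP}$ and $\ddnmn{\rno,\mP}$ on any compact subinterval of $(0,\infty)$. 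This legitimizes the interchange, identifies $\dmn{\rno,\mP}$ and $\ddmn{\rno,\mP}$ as the first two derivatives of $\mmn{\rno,\mP}$, and, via a second dominated convergence step applied to the derivative expressions themselves, yields their total variation continuity.

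For part (d), the pointwise log-convexity in Lemma \ref{lem:powermeandensityO}-(c) rearranges, with $a=\beta\rno_1/\rno_\beta$ and $b=(1-\beta)\rno_0/\rno_\beta$ (so that $a+b=1$), to $\nmn{\rno_\beta,\mP}\leq \nmn{\rno_1,\mP}^{a}\, \nmn{\rno_0,\mP}^{b}$ $\mmn{1,\mP}$-a.e. Integrating against $\mmn{1,\mP}$ and applying H\"{o}lder's inequality with conjugate exponents $1/a$ and $1/b$ gives $\lon{\mmn{\rno_\beta,\mP}}\leq \lon{\mmn{\rno_1,\mP}}^{a}\, \lon{\mmn{\rno_0,\mP}}^{b}$; raising to the $\rno_\beta$ power delivers the log-convexity of $\lon{\mmn{\rno,\mP}}^{\rno}$. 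The strictness analysis is a joint condition: the H\"{o}lder step is strict unless $\nmn{\rno_1,\mP}$ and $\nmn{\rno_0,\mP}$ are proportional $\mmn{1,\mP}$-a.e., while the pointwise step in Lemma \ref{lem:powermeandensityO}-(c) is strict off the set where all ratios $\tpn{1}(\mW|\dout)/\mP(\mW)$ coincide; combining these forces the common ratio $\gamma(\dout)$ to be a single constant $\gamma\geq 1$ $\mmn{1,\mP}$-a.e., yielding the $\set{A}(\mP,\gamma)$ characterization and the formula $\lon{\mmn{\rno,\mP}}=\gamma^{(\rno-1)/\rno}$ after integrating against the probability measure $\mmn{1,\mP}$. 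For the $\rno\downarrow 0$ limit, I would rewrite $\lon{\mmn{\rno,\mP}}^{\rno}=\nrm{1/\rno}{\nmn{\rno,\mP}^{\rno}}$ as the $L^{1/\rno}$ norm on the probability space $(\outS,\outA,\mmn{1,\mP})$ and invoke the standard $L^{p}\to L^{\infty}$ transition, using that $\nmn{\rno,\mP}^{\rno}\to \sum_{\mW:\tpn{1}(\mW|\dout)>0}\mP(\mW)$ pointwise and is uniformly bounded by $1$ for $\rno$ near $0$.

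Part (e) then falls out quickly: continuity of $\lon{\mmn{\rno,\mP}}$ is immediate from part (a); the monotonicity claim is Lemma \ref{lem:powermeandensityO}-(d) integrated against $\mmn{1,\mP}$, with strict monotonicity on a set of positive $\mmn{1,\mP}$-measure exactly when $\supp{\mP}$ contains two distinct members. The main obstacle I anticipate is the strictness and $\rno\downarrow 0$ portion of part (d), where aligning the pointwise equality clause of Lemma \ref{lem:powermeandensityO}-(c) with the H\"{o}lder equality condition and justifying the $L^{p}\to L^{\infty}$ limit with a varying integrand both require genuine care; everything else reduces to routine dominated convergence once the uniform bounds on compact intervals have been made explicit.
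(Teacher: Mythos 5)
Your overall strategy—lift the pointwise results of Lemma~\ref{lem:powermeandensityO} to the measures via dominated convergence, differentiation under the integral, and H\"{o}lder—is exactly what the paper does, and parts (a), (b), (c), (e), and the log-convexity and strictness clauses of (d) are handled essentially as in the paper's proof.

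The gap is in the $\rno\downarrow 0$ limit of part (d), and it is a real one, not merely a matter of "genuine care." Pointwise convergence $\nmn{\rno,\mP}^{\rno}\to f_{0}\DEF\sum_{\mW:\tpn{1}(\mW|\dout)>0}\mP(\mW)$ together with the uniform bound $\nmn{\rno,\mP}^{\rno}\leq 1$ does \emph{not} imply $\nrm{1/\rno}{\nmn{\rno,\mP}^{\rno}}\to\nrm{\infty}{f_{0}}$. Consider $\outS=[0,1]$ with Lebesgue measure, $f_{0}\equiv\tfrac{1}{2}$, and $g_{\rno}=\tfrac{1}{2}+\tfrac{1}{2}\IND{[0,\rno]}$: then $g_{\rno}\to\tfrac{1}{2}$ a.e., $g_{\rno}\leq 1$, yet
\begin{align}
\notag
\nrm{1/\rno}{g_{\rno}}=\left(\rno+(1-\rno)2^{-1/\rno}\right)^{\rno}\to 1>\tfrac{1}{2}=\nrm{\infty}{f_{0}}.
\end{align}
The $L^{p}\to L^{\infty}$ result you invoke is for a fixed function; with a varying integrand the upper bound can fail exactly as above, because all the $L^{1/\rno}$ norm sees is the measure of a thin exceptional set raised to the power $\rno$. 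What rescues the argument here—and what the paper uses—is a genuine pointwise upper bound, not just a bound by $1$: one application of H\"{o}lder to $\nmn{\rno,\mP}^{\rno}=\sum_{\mW}\tpn{1}(\mW|\dout)^{\rno}\mP(\mW)^{1-\rno}$ with exponents $1/\rno$ and $1/(1-\rno)$ gives
\begin{align}
\notag
\nmn{\rno,\mP}^{\rno}\leq\left(\sum\nolimits_{\mW:\tpn{1}(\mW|\dout)>0}\tpn{1}(\mW|\dout)\right)^{\rno}\left(\sum\nolimits_{\mW:\tpn{1}(\mW|\dout)>0}\mP(\mW)\right)^{1-\rno}\leq f_{0}^{1-\rno}\leq\psi^{1-\rno},
\end{align}
$\mmn{1,\mP}$-a.e., where $\psi=\essup_{\mmn{1,\mP}}f_{0}$. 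This gives $\nrm{1/\rno}{\nmn{\rno,\mP}^{\rno}}\leq\psi^{1-\rno}\to\psi$ directly, and combined with your Egorov-style lower bound (or the paper's dominated convergence argument after dividing by $f_{0}^{\frac{\rno-1}{\rno}}$) closes the proof. Without this bound your upper estimate only yields $\limsup_{\rno\downarrow 0}\lon{\mmn{\rno,\mP}}^{\rno}\leq 1$, which is strictly weaker than the claim whenever $\psi<1$.
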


Lemma \ref{lem:powermeanO} described the properties of the mean measure
as a function of the order for a fixed prior.
Lemma \ref{lem:powermeanP}, given in the following, describes the properties of 
the mean measure as a function of the prior for a fixed order.
\begin{lemma}\label{lem:powermeanP}
Let \((\outS,\outA)\) be a measurable space.
\begin{enumerate}[(a)]
\item\label{powermeanP-a}
If \(\rno\in [0,1]\), then \(\mmn{\rno,\mP}\) and \(\lon{\mmn{\rno,\mP}}\)
are convex functions of \(\mP\) from \(\pdis{\pmea{\outA}}\) to 
\(\zmea{\outA}\) and \([0,1]\), respectively.
\item\label{powermeanP-b}
If \(\rno\!\in\![1,\!\infty]\), then 
\(\mmn{\rno,\mP}\) and \(\lon{\mmn{\rno,\mP}}\!\)
are concave functions of \(\mP\) from \(\pdis{\pmea{\outA}}\) to 
\(\fmea{\outA}\) and \([1,\infty)\), respectively.
\item\label{powermeanP-c} For any \(\pmn{1},\pmn{2} \in \pdis{\pmea{\outA}}\) such that \(\mP_{1}\neq\mP_{2}\),
let \(\smn{\wedge}\), \(\smn{1}\) and \(\smn{2}\) be
\(\smn{\wedge}\DEF 2\tfrac{\pmn{1}\wedge\pmn{2}}{2-\lon{\pmn{1}-\pmn{2}}}\),
\(\smn{1}\DEF 2\tfrac{\pmn{1}-\pmn{1}\wedge\pmn{2}}{\lon{\pmn{1}-\pmn{2}}}\),
and 
\(\smn{2}\DEF 2\tfrac{\pmn{2}-\pmn{1}\wedge\pmn{2}}{\lon{\pmn{1}-\pmn{2}}}\).
Then \(\smn{\wedge}, \smn{1}, \smn{2}  \in \pdis{\pmea{\outA}}\) and
\begin{align}
\notag
\pmn{1}
&=(1-\tfrac{\lon{\pmn{1}-\pmn{2}}}{2})\smn{\wedge} + \tfrac{\lon{\pmn{1}-\pmn{2}}}{2} \smn{1},
\\
\notag
\pmn{2}
&=(1-\tfrac{\lon{\pmn{1}-\pmn{2}}}{2})\smn{\wedge} + \tfrac{\lon{\pmn{1}-\pmn{2}}}{2} \smn{2},
\\
\notag
\smn{1} &\perp \smn{2}. 
\end{align} 
\item\label{powermeanP-d}If \(\rno\in (0,1]\), then 
for any \(\pmn{1},\pmn{2} \in \pdis{\pmea{\outA}}\) we have
\begin{align}
\notag
\lon{\mmn{\rno,\pmn{1}}-\mmn{\rno,\pmn{2}}}
&\leq \tfrac{1}{\rno}\lon{\pmn{1}-\pmn{2}} .
\end{align}
Hence \(\mmn{\rno,\mP}\) is a Lipschitz continuous function of \(\mP\) for the total variation topology
for \(\rno\in (0,1]\).
\item\label{powermeanP-e}
If \(\rno\in [1,\infty)\), then for any \(\pmn{1},\pmn{2} \in \pdis{\pmea{\outA}}\) we have
\begin{align}
\notag
\lon{\mmn{\rno,\pmn{1}}\!-\!\mmn{\rno,\pmn{2}}}
&\leq (\tfrac{1}{2}\lon{\pmn{1}\!-\!\pmn{2}})^{\frac{1}{\rno}} \lon{\mmn{\rno,\smn{1}}\!-\!\mmn{\rno,\smn{2}}}.
\end{align}
\end{enumerate}
\end{lemma}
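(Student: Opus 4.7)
For parts (a) and (b), the plan is to work pointwise on Radon-Nikodym densities with respect to any common dominating measure $\rfm$. For $\rno\in\reals{+}$, the density $\der{\mmn{\rno,\mP}}{\rfm}=(\sum_{\mW}\mP(\mW)(\der{\mW}{\rfm})^{\rno})^{1/\rno}$ is the composition of a linear-in-$\mP$ map with $\dinp\mapsto\dinp^{1/\rno}$, which is convex for $\rno\in(0,1]$ and concave for $\rno\in[1,\infty)$. The boundary orders are handled separately: at $\rno=0$, for $\mP=\lambda\pmn{1}+(1-\lambda)\pmn{2}$ the identity $\der{\mmn{0,\mP}}{\rfm}=(\der{\mmn{0,\pmn{1}}}{\rfm})^{\lambda}(\der{\mmn{0,\pmn{2}}}{\rfm})^{1-\lambda}$ combined with weighted AM-GM gives pointwise convexity, while at $\rno=\infty$ the support of $\mP$ grows monotonically under mixing and $\max(a,b)\geq\lambda a+(1-\lambda)b$ gives pointwise concavity. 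Integrating these pointwise inequalities against any $\oev\in\outA$ yields convexity or concavity of $\mmn{\rno,\mP}(\oev)$ and of $\lon{\mmn{\rno,\mP}}$, with the mass ranges read off from Lemma \ref{lem:powermeanO}-(\ref{powermeanO-e}). Part (c) is a direct computation: the standard total-variation identity $\lon{\pmn{1}-\pmn{2}}=2(1-\lon{\pmn{1}\wedge\pmn{2}})$ gives $2-\lon{\pmn{1}-\pmn{2}}=2\lon{\pmn{1}\wedge\pmn{2}}$, so each of $\smn{\wedge},\smn{1},\smn{2}$ is a pmf; the stated convex decomposition follows by substitution, and $\smn{1}\perp\smn{2}$ because $\{\mW:\pmn{1}(\mW)>\pmn{2}(\mW)\}$ and $\{\mW:\pmn{2}(\mW)>\pmn{1}(\mW)\}$ are disjoint.

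For part (d), the plan is to parametrize the segment $\mP_{\lambda}=\lambda\pmn{1}+(1-\lambda)\pmn{2}$ and differentiate in $\lambda$. Setting $\phi^{(\lambda)}\DEF\der{\mmn{\rno,\mP_{\lambda}}}{\rfm}$, the linearity of $\mP\mapsto(\phi^{(\mP)})^{\rno}$ gives $\der{}{\lambda}\phi^{(\lambda)}=\tfrac{1}{\rno}(\phi^{(\lambda)})^{1-\rno}\sum_{\mW}(\pmn{1}(\mW)-\pmn{2}(\mW))(\der{\mW}{\rfm})^{\rno}$. The key estimate is that for each $\mW$, H\"{o}lder's inequality with conjugate exponents $(1-\rno)^{-1}$ and $\rno^{-1}$ bounds $\int(\phi^{(\lambda)})^{1-\rno}(\der{\mW}{\rfm})^{\rno}\rfm(\dif{\dout})\leq\lon{\mmn{\rno,\mP_{\lambda}}}^{1-\rno}\cdot 1^{\rno}\leq 1$, since $\lon{\mmn{\rno,\mP}}\leq 1$ for $\rno\in(0,1]$ by Lemma \ref{lem:powermeanO}-(\ref{powermeanO-e}). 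Applying the triangle inequality inside the $\mW$-sum, together with the fundamental theorem of calculus and Tonelli's theorem, yields $\lon{\mmn{\rno,\pmn{1}}-\mmn{\rno,\pmn{2}}}\leq\int_{0}^{1}\int\left|\der{}{\lambda}\phi^{(\lambda)}\right|\rfm(\dif{\dout})\dif{\lambda}\leq\tfrac{1}{\rno}\lon{\pmn{1}-\pmn{2}}$. The $\rno=1$ endpoint, where the H\"{o}lder conjugate degenerates, is covered directly by the linearity of $\mmn{1,\mP}$ in $\mP$ together with the triangle inequality for total variation.

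For part (e), the plan is instead to exploit the genuine $\rno$-norm structure available only when $\rno\geq 1$. Using the decomposition of (c) with $\epsilon=\lon{\pmn{1}-\pmn{2}}/2$ and letting $\phi_{\wedge},\phi_{i}$ denote the $\rfm$-densities of $\mmn{\rno,\smn{\wedge}}$ and $\mmn{\rno,\smn{i}}$, the identity $(\phi^{(i)})^{\rno}=(1-\epsilon)\phi_{\wedge}^{\rno}+\epsilon\phi_{i}^{\rno}$ lets us write $\phi^{(i)}(\dout)=\|u_{i}(\dout)\|_{\rno}$ for the two-component vector $u_{i}(\dout)=((1-\epsilon)^{1/\rno}\phi_{\wedge}(\dout),\epsilon^{1/\rno}\phi_{i}(\dout))$. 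Since $\rno\geq 1$, the reverse triangle inequality for the $\rno$-norm yields $|\phi^{(1)}-\phi^{(2)}|\leq\|u_{1}-u_{2}\|_{\rno}=\epsilon^{1/\rno}|\phi_{1}-\phi_{2}|$ pointwise, and integrating against $\rfm$ completes (e). The main obstacle is the asymmetry between (d) and (e): for $\rno<1$ the $\rno$-norm is only a quasi-norm whose triangle inequality reverses, so the clean reverse-triangle argument above is unavailable and one must fall back on the more delicate derivative-plus-H\"{o}lder argument, which is also what forces the $1/\rno$ blow-up of the Lipschitz constant as $\rno\downarrow 0$.
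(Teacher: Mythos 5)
Your proposal is correct. Parts (a), (b), and (c) follow the paper's route essentially verbatim: pointwise convexity/concavity of the $\rfm$-densities via $\dinp\mapsto\dinp^{1/\rno}$ (with AM--GM at $\rno=0$ and $\max$ vs.\ average at $\rno=\infty$), and a direct substitution for (c). Part (e) proves the same key scalar inequality as the paper, but you correctly recognize it as the reverse triangle inequality for the finite $\ell^{\rno}$-norm applied to the two-vectors $u_{i}(\dout)$, whereas the paper verifies it by a derivative test; your framing is a genuine conceptual improvement, since it makes clear that the failure for $\rno<1$ is precisely the failure of $\ell^{\rno}$ to be a norm.

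For part (d), your approach is genuinely different from the paper's. The paper stays with the $(\smn{\wedge},\smn{1},\smn{2})$ decomposition from part (c), bounds $\der{\mmn{\rno,\pmn{1}}}{\rfm}-\der{\mmn{\rno,\pmn{2}}}{\rfm}$ pointwise using Jensen for the convexity of $\dinp^{1/\rno}$, integrates, and finishes with the scalar estimate $2[1-(1-\delta)^{1/\rno}]\leq\tfrac{2}{\rno}\delta$. You instead parametrize the segment $\mP_{\lambda}$, differentiate $\phi^{(\lambda)}$ in $\lambda$ (exploiting linearity of $(\phi^{(\lambda)})^{\rno}$), bound $\int\lvert\der{}{\lambda}\phi^{(\lambda)}\rvert\,\rfm(\dif{\dout})$ by H\"{o}lder with exponents $(1-\rno)^{-1}$ and $\rno^{-1}$ using $\lon{\mmn{\rno,\mP_{\lambda}}}\leq1$, and integrate over $\lambda$ via FTC and Tonelli. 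This is sound: the FTC application is justified because $(\phi^{(\lambda)})^{\rno}$ is affine and non-negative in $\lambda$, so either it vanishes identically or $\phi^{(\lambda)}$ is absolutely continuous on $[0,1]$ with derivative integrable near an endpoint zero (since $1/\rno\geq1$); the $\rno=1$ endpoint, where the H\"{o}lder pair degenerates, is handled directly by linearity of $\mmn{1,\mP}$, as you note. The paper's argument is shorter and closer in spirit to its proof of (e); yours has the advantage of isolating where the $1/\rno$ constant comes from (the chain-rule factor) and of giving a template that would also yield, e.g., differentiability of $\lon{\mmn{\rno,\mP}}$ along segments.
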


\subsection{The \renyi Information}\label{sec:information}
\begin{definition}\label{def:information}
Let \(\cha\) be a subset of \(\pmea{\outA}\) 
and \(\mP\) be a \pmf\! on \(\cha\).
Then \emph{the order \(\rno\) \renyi information for the prior \(\mP\)} is
\begin{align}
\label{eq:def:information}
\hspace{-.3cm}\!\!\RMI{\rno}{\mP}{\cha} 
&\!\DEF\!\!\! 
\begin{cases}
\esinf\limits_{\mmn{1,\mP}}
\ln\tfrac{1}{\sum\limits_{\mW} \IND{\tpn{1}(\mW|\dout)>0}\mP(\mW)}
&\rno\!=\!0
\\
\tfrac{\rno}{\rno-1}\ln \lon{\mmn{\rno,\mP}} 
& \rno\!\in\!\reals{+}\!\!\setminus\!\!\{\!1\!\}
\\
\EXS{\mmn{1,\mP}\!}{\!\sum\limits_{\mW} \tpn{1}(\mW|\dout)\ln \tfrac{\tpn{1}(\mW|\dout)}{\mP(\mW)}\!}
& \rno\!=\!1
\\
\ln \lon{\mmn{\infty,\mP}}
&\rno\!=\!\infty
\end{cases}
\end{align} 
\end{definition} 
 
Sibson introduced this 
quantity\footnote{Sibson called 
	\(\inf_{\mQ\in\pmea{\outA}}\RD{\rno}{\mP \mtimes \cha}{\mP\otimes\mQ}\)
	``the information radius of order \(\rno\)'' 
	and proved that it equals to the expression given in  Definition \ref{def:information} 
in \cite[Thm. 2.2]{sibson69}.
Our presentation is different: Definition \ref{def:information} does not refer to any infimum; equivalence of the 
alternative definition is established in Lemma \ref{lem:information:def}. 
This is similar to the way things are, usually, handled for the mutual information: the mutual information is
defined without any reference to an infimum \cite[(2.28)]{coverthomas}, later it is shown to be equal to 
the infimum of certain Kullback-Leibler divergence \cite[Lemma 10.8.1]{coverthomas}.}
in \cite{sibson69} using works of \renyi \cite{renyi61} and \csiszar \cite{csiszar67A,csiszar67B}.
Prior to \cite{sibson69} in \cite{gallager65}, Gallager introduced \(E_{0}(\rng,\mP)\),
which is nothing but a scaled version of the \renyi information; 
see \eqref{eq:sibsongallager} and \eqref{eq:def:gallagersfunction}.
  
Note that \(\RMI{\rno}{\mP}{\cha}\) has the same value for all \(\cha\)'s for 
which \(\mP\) is in \(\pdis{\cha}\).
Hence, in principle, one can use \(\cnst{I}_{\rno}(\mP)\) rather than \(\RMI{\rno}{\mP}{\cha}\)
to denote the \renyi information.
Although this unconventional symbol would be more coherent with the one we use for the mean measure,
we refrain from using it for the fear of alienating readers who prefer the customary symbol. 
Another justification for using the conventional notation is the effect of the richness of \(\cha\)
--- as measured by \(\sup_{\mP\in\pdis{\cha}} \RMI{\rno}{\mP}{\cha}\)--- on the continuity of 
\(\RMI{\rno}{\mP}{\cha}\) as a function of \(\mP\), see Lemma  \ref{lem:finitecapacity}-(\ref{finitecapacity-uecP}). 

Properties of the \renyi information as a function of the order for fixed prior and as a function of the prior 
for fixed order are presented in Lemmas \ref{lem:informationO} and \ref{lem:informationP}, respectively. 
Proofs of Lemmas \ref{lem:informationO} and \ref{lem:informationP} are presented in Appendix \ref{sec:information-proofs}. 

\begin{lemma}\label{lem:informationO}
For any subset \(\cha\) of \(\pmea{\outA}\)
and \pmf \(\mP\) on \(\cha\),
\(\RMI{\infty}{\mP}{\cha}\leq\ln\abs{\supp{\mP}}\)  
and  
\(\RMI{\rno}{\mP}{\cha}\) is a non-negative continuously differentiable nondecreasing function of \(\rno\)
on \(\reals{+}\) such that
\begin{align}
\label{eq:lem:informationOlimz}
\RMI{0}{\mP}{\cha}
&=\lim\nolimits_{\rno\downarrow 0}\RMI{\rno}{\mP}{\cha},
\\
\label{eq:lem:informationOliminfty}
\RMI{\infty}{\mP}{\cha}
&=\lim\nolimits_{\rno\uparrow \infty}\RMI{\rno}{\mP}{\cha},
\\
\label{eq:lem:informationOder}
\der{}{\rno} \RMI{\rno}{\mP}{\cha}
&=
\begin{cases}
\tfrac{\rno}{\rno-1}\tfrac{\lon{\dmn{\rno,\mP}}}{\lon{\mmn{\rno,\mP}}}  
-\tfrac{\ln\lon{\mmn{\rno,\mP}}}{(\rno-1)^2} 
&\rno\!\in\!\reals{+}\!\!\setminus\!\{\!1\!\}
\\
\tfrac{\ddmn{1,\mP}(\outS)+2\lon{\dmn{1,\mP}}-\lon{\dmn{1,\mP}}^2}{2}
&\rno=1
\end{cases}.
\end{align}
If \(\mmn{1,\mP}(\set{A}(\mP,\gamma))=1\) for some \(\gamma\geq 1\),
then
\(\RMI{\rno}{\mP}{\cha}=\ln \gamma\)  for all \(\rno\in [0,\infty]\), else
\(\der{}{\rno} \RMI{\rno}{\mP}{\cha}>0\) for all \(\rno\in\reals{+}\),
where
\(\set{A}(\mP,\gamma)\DEF\{\dout:\tfrac{\tpn{1}(\mW|\dout)}{\mP(\mW)}=\gamma~~\forall\mW \mbox{~with positive ~}\tpn{1}(\mW|\dout)\}\).
\end{lemma}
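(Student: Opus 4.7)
The plan is to reduce everything to properties of $\lon{\mmn{\rno,\mP}}$ as a function of $\rno$ and invoke Lemma \ref{lem:powermeanO}. Since $\RMI{\rno}{\mP}{\cha}=\tfrac{\rno}{\rno-1}\ln\lon{\mmn{\rno,\mP}}$ for $\rno\in\reals{+}\setminus\{1\}$ and $\lon{\mmn{1,\mP}}=1$ by Lemma \ref{lem:powermeanequivalence}-(\ref{powermeanequivalence-a}), I would work with the auxiliary function $v(\rno)\DEF \rno\ln\lon{\mmn{\rno,\mP}}$ so that $\RMI{\rno}{\mP}{\cha}=v(\rno)/(\rno-1)$ with $v(1)=0$.

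First, for monotonicity I would use Lemma \ref{lem:powermeanO}-(\ref{powermeanO-d}), which says $v$ is convex on $(0,\infty)$. For any convex $v$ with $v(1)=0$, the slope $v(\rno)/(\rno-1)$ from the point $(1,0)$ is non-decreasing in $\rno$, so $\RMI{\rno}{\mP}{\cha}$ is non-decreasing on $\reals{+}$. Combined with $\RMI{1}{\mP}{\cha}\geq 0$, which follows from the Jensen inequality applied pointwise to $\sum_{\mW}\tpn{1}(\mW|\dout)\ln\tfrac{\tpn{1}(\mW|\dout)}{\mP(\mW)}$, this gives non-negativity throughout $\reals{+}$.

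Second, for the derivative formula on $\reals{+}\setminus\{1\}$, I would differentiate $\tfrac{\rno}{\rno-1}\ln\lon{\mmn{\rno,\mP}}$ using the quotient rule and Lemma \ref{lem:powermeanO}-(\ref{powermeanO-b}), noting that $\der{}{\rno}\lon{\mmn{\rno,\mP}}=\lon{\dmn{\rno,\mP}}$ because $\dnmn{\rno,\mP}\geq 0$ (this non-negativity is visible in the expression of Lemma \ref{lem:powermeandensityO}-(\ref{powermeandensityO-b}) since $\sum_{\mW}\tpn{\rno}(\mW|\dout)\ln\tfrac{\tpn{\rno}(\mW|\dout)}{\mP(\mW)}\geq 0$ by Jensen). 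The main obstacle is the $\rno=1$ case: there I would perform a second-order Taylor expansion of $v$ at $\rno=1$ using Lemma \ref{lem:powermeanO}-(\ref{powermeanO-b},\ref{powermeanO-c}), which after a short computation gives $v'(1)=\lon{\dmn{1,\mP}}$ and $v''(1)=2\lon{\dmn{1,\mP}}+\ddmn{1,\mP}(\outS)-\lon{\dmn{1,\mP}}^{2}$. Writing $\RMI{\rno}{\mP}{\cha}=\rno\,(v(\rno)-v(1))/(\rno-1)$ and applying L'H\^{o}spital then yields continuous differentiability at $\rno=1$ with the stated derivative; it also identifies $\RMI{1}{\mP}{\cha}=\lon{\dmn{1,\mP}}$, which matches the expectation formula in Definition \ref{def:information} because $\nmn{1,\mP}=1$ and $\dnmn{1,\mP}=\sum_{\mW}\tpn{1}(\mW|\dout)\ln\tfrac{\tpn{1}(\mW|\dout)}{\mP(\mW)}$. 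A small subtlety is that $\ddmn{1,\mP}(\outS)$ refers to the integral of $\ddnmn{1,\mP}$ (a signed density) over $\outS$, not the total variation norm of $\ddmn{1,\mP}$.

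Third, the limits \eqref{eq:lem:informationOlimz} and \eqref{eq:lem:informationOliminfty} follow from the continuity of $\lon{\mmn{\rno,\mP}}$ on $[0,\infty]$ provided by Lemma \ref{lem:powermeanO}-(\ref{powermeanO-e}), combined with the boundary definitions in Definition \ref{def:information} and the identities for $\lim_{\rno\downarrow 0}\lon{\mmn{\rno,\mP}}^{\rno}$ in Lemma \ref{lem:powermeanO}-(\ref{powermeanO-d}). The bound $\RMI{\infty}{\mP}{\cha}\leq\ln\abs{\supp{\mP}}$ is immediate from $\lon{\mmn{\infty,\mP}}\leq\abs{\supp{\mP}}$ in Lemma \ref{lem:powermeanequivalence}-(\ref{powermeanequivalence-a}). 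Finally, the dichotomy at the end of the lemma follows from the strict-log-convexity criterion in Lemma \ref{lem:powermeanO}-(\ref{powermeanO-d}): if $\mmn{1,\mP}(\set{A}(\mP,\gamma))=1$ then $\lon{\mmn{\rno,\mP}}=\gamma^{(\rno-1)/\rno}$ and hence $\RMI{\rno}{\mP}{\cha}=\ln\gamma$ identically; otherwise $v$ is strictly convex and the slope $v(\rno)/(\rno-1)$ is strictly increasing on $\reals{+}$, forcing $\der{}{\rno}\RMI{\rno}{\mP}{\cha}>0$ everywhere.
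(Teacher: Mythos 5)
Your approach mirrors the paper's: both lean on the log-convexity of $\lon{\mmn{\rno,\mP}}^{\rno}$ in $\rno$ (Lemma \ref{lem:powermeanO}-(\ref{powermeanO-d})), the differentiability supplied by Lemma \ref{lem:powermeanO}-(\ref{powermeanO-b},\ref{powermeanO-c}), and a L'H\^{o}pital/Taylor step at $\rno=1$; your secant-slope phrasing of monotonicity is the paper's tangent-line inequality stated dually, and your values $v'(1)=\lon{\dmn{1,\mP}}$ and $v''(1)=2\lon{\dmn{1,\mP}}+\ddmn{1,\mP}(\outS)-\lon{\dmn{1,\mP}}^{2}$ are correct and reproduce the stated derivative at $\rno=1$. (Minor slip: with $v(\rno)\DEF\rno\ln\lon{\mmn{\rno,\mP}}$ one has $\RMI{\rno}{\mP}{\cha}=(v(\rno)-v(1))/(\rno-1)$ with no leading factor of $\rno$; your subsequent computations are consistent with the correct identity, so this reads as a typo.)

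Your non-negativity argument, however, has a gap. Monotonicity plus $\RMI{1}{\mP}{\cha}\geq 0$ only yields $\RMI{\rno}{\mP}{\cha}\geq 0$ for $\rno\geq 1$. For $\rno<1$, convexity of $v$ with $v(1)=0$ does not by itself force $v(\rno)\leq 0$ --- consider $v(\rno)=(1-\rno)^{2}$, for which the secant slope $v(\rno)/(\rno-1)$ is negative on $(0,1)$ --- so the sign on $(0,1)$ is genuinely undetermined by your stated ingredients. You need one more fact specific to this $v$: either anchor at $\rno=0$, noting that $\RMI{0}{\mP}{\cha}\geq 0$ is immediate from Definition \ref{def:information} and then invoke the limit \eqref{eq:lem:informationOlimz} together with monotonicity (this is the paper's route); or invoke Lemma \ref{lem:powermeanO}-(\ref{powermeanO-e}), which gives $\lon{\mmn{\rno,\mP}}\leq\lon{\mmn{1,\mP}}=1$ for $\rno\leq 1$, so that $v(\rno)\leq 0$ there and $v(\rno)/(\rno-1)\geq 0$ directly on $(0,1)$.
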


Using the definitions of \(\dmn{\rno,\mP}\) and \(\ddmn{\rno,\mP}\), given in 
\eqref{eq:def:dpowermean} and \eqref{eq:def:ddpowermean},
together with  Lemma \ref{lem:powermeandensityO}-(\ref{powermeandensityO-b}), 
we get the following two alternative expressions for the derivative of 
\(\RMI{\rno}{\mP}{\cha}\) with respect to the order on \(\reals{+}\)
\begin{align}
\label{eq:information:der-A}
\!\!\!\der{}{\rno}\!\RMI{\rno}{\mP}{\cha}\!
&\!=\!
\begin{cases}
\!\tfrac{1}{(\rno-1)\rno}
\EXS{\varpi_{\rno}}{\!\ln\!\tfrac{\tpn{\rno}(\mW|\dout)}{\mP(\mW)}\!-\!\RMI{\rno}{\mP}{\cha}\!}
&\!\rno\!\neq\!1
\\
\!\tfrac{1}{2}
\EXS{\varpi_{1}\!}{\left(\!\ln\tfrac{\tpn{1}(\mW|\dout)}{\mP(\mW)}\!-\!\RMI{1}{\mP}{\cha}\!\right)^{2}}
&\!\rno\!=\!1
\end{cases}
\\
\label{eq:information:der-B}
&\!=\!
\begin{cases}
\!\tfrac{1}{(\rno-1)^{2}}
\EXS{\varpi_{\rno}}{\!\ln\tfrac{\tpn{\rno}(\mW|\dout) \nmn{\rno,\mP}}{\tpn{1}(\mW|\dout) \lon{\mmn{\rno,\mP}}}\!}
&\!\rno\!\neq\!1
\\
\!\tfrac{1}{2}
\EXS{\varpi_{1}\!}{\left(\!\ln\tfrac{\tpn{1}(\mW|\dout)}{\mP(\mW)}
	\!-\!\RMI{1}{\mP}{\cha}\!\right)^{2}\!}
&\!\rno\!=\!1
\end{cases}
\end{align}
where \(\varpi_{\rno}\) is a probability measure on \(\outA\otimes \sss{\supp{\mP}}\)
whose \(\outS\) marginal is \(\frac{\mmn{\rno,\mP}}{\lon{\mmn{\rno,\mP}}}\) 
and whose conditional distribution is \(\tpn{\rno}\).

The continuity and the convexity properties of the \renyi information
in the prior follow from the corresponding properties of the mean measure
described in Lemma \ref{lem:powermeanP}.
\begin{lemma}\label{lem:informationP}
Let \(\cha\) be a subset of \(\pmea{\outA}\). 
\begin{enumerate}[(a)]
\item\label{informationP-a} 
If \(\rno \in [0,1)\), then \(\RMI{\rno}{\mP}{\cha}\) is a non-negative  quasi-concave
function of \(\mP\) on \(\pdis{\cha}\) that is continuous for the total variation topology
on \(\pdis{\cha}\).
\item\label{informationP-b} 
If \(\rno \in [1,\infty]\), then \(\RMI{\rno}{\mP}{\cha}\) is a non-negative  concave 
function of \(\mP\) on \(\pdis{\cha}\).
\end{enumerate}
\end{lemma}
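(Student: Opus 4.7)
The plan is to derive both parts from the properties of $\mP\mapsto\lon{\mmn{\rno,\mP}}$ recorded in Lemma~\ref{lem:powermeanP}, composed with the logarithm, and then to extend to the boundary orders $\{0,1,\infty\}$ via the continuity in $\rno$ supplied by Lemma~\ref{lem:informationO}. Non-negativity is a sign argument: by Lemma~\ref{lem:powermeanO}-(\ref{powermeanO-e}), $\lon{\mmn{\rno,\mP}}\leq 1$ for $\rno\leq 1$ and $\lon{\mmn{\rno,\mP}}\geq 1$ for $\rno\geq 1$, so the logarithm and the coefficient $\tfrac{\rno}{\rno-1}$ always have matching signs on $\reals{+}\setminus\{1\}$, while the remaining orders are covered directly by Definition~\ref{def:information}.

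For part~(b), Lemma~\ref{lem:powermeanP}-(\ref{powermeanP-b}) gives concavity of $\mP\mapsto\lon{\mmn{\rno,\mP}}$ for $\rno\in[1,\infty]$. Since the logarithm is concave and nondecreasing, $\ln\lon{\mmn{\rno,\mP}}$ is concave in $\mP$, and for $\rno\in(1,\infty)$ multiplication by the positive constant $\tfrac{\rno}{\rno-1}$ preserves concavity; the case $\rno=\infty$ is immediate from $\RMI{\infty}{\mP}{\cha}=\ln\lon{\mmn{\infty,\mP}}$, and $\rno=1$ is obtained by passing to the limit $\rno\downarrow 1$ in the concavity inequality and invoking the continuity in $\rno$ from Lemma~\ref{lem:informationO}.

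For part~(a), the quasi-concavity argument on $(0,1)$ is dual. Lemma~\ref{lem:powermeanP}-(\ref{powermeanP-a}) yields convexity of $\mP\mapsto\lon{\mmn{\rno,\mP}}$, so $\ln\lon{\mmn{\rno,\mP}}$ is quasi-convex (sublevel sets of a convex function are convex, and the monotone $\ln$ has the same sublevel sets), and multiplication by the negative scalar $\tfrac{\rno}{\rno-1}$ turns quasi-convexity into quasi-concavity. Quasi-concavity at $\rno=0$ is then inherited from the pointwise limit in Lemma~\ref{lem:informationO}, since the two-point inequality defining quasi-concavity is preserved under pointwise limits. For continuity on $(0,1)$, Lemma~\ref{lem:powermeanP}-(\ref{powermeanP-d}) gives Lipschitz continuity of $\mmn{\rno,\mP}$ in the total variation topology, and the lower bound $\lon{\mmn{\rno,\mP}}\geq\abs{\supp{\mP}}^{-1/\rno}$ from Lemma~\ref{lem:powermeanequivalence}-(\ref{powermeanequivalence-a}) keeps the logarithm bounded away from $-\infty$ on a neighborhood of any fixed $\mP\in\pdis{\cha}$, yielding continuity of $\RMI{\rno}{\mP}{\cha}$ in $\mP$.

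The main obstacle will be continuity at $\rno=0$, where the Lipschitz constant $\rno^{-1}$ in Lemma~\ref{lem:powermeanP}-(\ref{powermeanP-d}) blows up. Here my plan is to work from the identity
\begin{align}
\notag
\RMI{0}{\mP}{\cha}=-\ln\essup\nolimits_{\mmn{1,\mP}}\sum\nolimits_{\mW}\IND{\tpn{1}(\mW|\dout)>0}\mP(\mW),
\end{align}
obtained by commuting $\esinf$ with the monotone $\ln$. Upper semicontinuity comes for free from writing $\RMI{0}{\cdot}{\cha}$ as the monotone decreasing limit of the continuous family $\{\RMI{\rno}{\cdot}{\cha}\}_{\rno\in(0,1)}$; the matching lower semicontinuity appears to require a direct analysis showing that, for any TV-convergent sequence $\mP_n\to\mP$, the mass that $\mP_n$ places outside $\supp{\mP}$ contributes a vanishing amount to $\essup_{\mmn{1,\mP_n}}g_{\mP_n}$ while the mass on $\supp{\mP}$ recovers the original $\essup_{\mmn{1,\mP}}g_{\mP}$, where $g_{\mP}(\dout)=\sum_{\mW}\IND{\tpn{1}(\mW|\dout)>0}\mP(\mW)$.
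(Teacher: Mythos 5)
Your plan for the core cases coincides with the paper's. On $\rno\in(0,1)$ and $\rno\in(1,\infty)$ the paper does exactly what you describe: apply Lemma~\ref{lem:powermeanP}-(\ref{powermeanP-a},\ref{powermeanP-b}) to get convexity/concavity of $\mP\mapsto\lon{\mmn{\rno,\mP}}$, then compose with $\tfrac{\rno}{\rno-1}\ln(\cdot)$, using the sign of the prefactor to flip (or preserve) the inequality as needed; continuity on $(0,1)$ also goes through the same composition-of-continuous-maps route via Lemma~\ref{lem:powermeanP}-(\ref{powermeanP-d}). One small clean-up: the auxiliary lower bound $\lon{\mmn{\rno,\mP}}\geq\abs{\supp{\mP}}^{-1/\rno}$ is not needed for continuity and in fact does not give a uniform bound on any neighborhood (nearby priors can have arbitrarily large support), but it is also unnecessary — $\ln$ is continuous at every point of $\reals{+}$, so composition alone finishes. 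Where you \emph{differ} from the paper, and in a good way, is at the boundary orders $\rno=1$ (concavity) and $\rno=0$ (quasi-concavity): you obtain both as pointwise limits of the open-interval two-point inequalities, invoking Lemma~\ref{lem:informationO} for the convergence in $\rno$. The paper instead gives bespoke direct arguments in both places — an explicit Jensen computation with $\dinp\ln\dinp$ for $\rno=1$, and a manual comparison of essential suprema of $\beta$-mixtures for $\rno=0$. Your limit argument is shorter and perfectly valid; both routes land in the same place.

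The genuine gap is the continuity of $\RMI{0}{\cdot}{\cha}$. You correctly observe that the Lipschitz constant $\rno^{-1}$ blows up and that upper semicontinuity follows from the monotone limit, but you leave the lower semicontinuity as a sketch, and that is the hard direction: the essential supremum $\essup_{\mmn{1,\mP_n}}g_{\mP_n}$ is taken with respect to a measure $\mmn{1,\mP_n}$ whose null sets also vary with $n$, so tracking which part of the mass ``contributes vanishingly'' is not a one-liner. The paper resolves this with a direct Lipschitz bound on $e^{-\RMI{0}{\cdot}{\cha}}$ that avoids the $\essup$ entirely. Using Lemma~\ref{lem:information:def} (or just the definition of $\RD{0}{\cdot}{\cdot}$) one has
\begin{align}
\notag
e^{-\RMI{0}{\mP}{\cha}}
&=\sup\nolimits_{\mQ\in\pmea{\outS,\outA}}\sum\nolimits_{\mW}\mP(\mW)\,\mQ\!\left(\der{\mW}{\rfm}>0\right),
\end{align}
and the elementary estimate $\abs{\sup_{\mQ}\fX(\mQ)-\sup_{\mQ}\gX(\mQ)}\leq\sup_{\mQ}\abs{\fX(\mQ)-\gX(\mQ)}$ together with $\mQ(\cdot)\leq 1$ gives
\begin{align}
\notag
\abs{e^{-\RMI{0}{\pmn{1}}{\cha}}-e^{-\RMI{0}{\pmn{2}}{\cha}}}
&\leq\lon{\pmn{1}-\pmn{2}}.
\end{align}
Composing with $-\ln(\cdot)$, continuous on $\reals{+}$, establishes continuity of $\RMI{0}{\cdot}{\cha}$ directly, and this is both complete and simpler than the two-sided semicontinuity programme you outline. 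I would adopt this argument; otherwise your proof is sound and, at the boundary orders for the convexity assertions, arguably tidier than the paper's.
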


Gallager \cite[p. 18]{gallager65} and \csiszar \cite[Lemma 3.2]{csiszar72}
established the continuity of \(\RMI{\rno}{\mP}{\cha}\) in \(\mP\) on \(\pdis{\cha}\),
for finite \(\cha\)'s.
For arbitrary \(\cha\)'s, however, \(\RMI{\rno}{\mP}{\cha}\) is continuous only for orders 
in \((0,1)\);
for orders in \([1,\infty]\), \(\RMI{\rno}{\mP}{\cha}\) is continuous in \(\mP\) on \(\pdis{\cha}\) 
iff \(\sup_{\mP\in\pdis{\cha}}\RMI{\rno}{\mP}{\cha}\) is finite, 
see Lemma \ref{lem:finitecapacity}-(\ref{finitecapacity-d}).
The finiteness of \(\sup_{\mP\in\pdis{\cha}}\RMI{\rno}{\mP}{\cha}\)
also implies the uniform equicontinuity of the \renyi information, 
see Lemma \ref{lem:finitecapacity}-(\ref{finitecapacity-uecP},\ref{finitecapacity-uecO}).
The discontinuity of various Shannon information measures for 
countably infinite output sets have previously been pointed out 
by Ho and Yeung in \cite{hoY09}. 

\subsection{The \renyi Divergence}\label{sec:divergence}
\begin{definition}\label{def:divergence}
	Let \(\mW\) and \(\mQ\) be two 
	non-zero finite measures on the measurable space \((\outS,\outA)\); then
	\emph{the order \(\rno\) \renyi divergence between \(\mW\) and \(\mQ\)} is
	\begin{align}
	\label{eq:def:divergence}
	\!\!\RD{\rno}{\mW}{\mQ}
	&\!\DEF\!\! 
	\begin{cases}
	-\ln \mQ\left(\der{\mW}{\rfm}>0\right) 
	&
	\rno\!=\!0
	\\
	\tfrac{1}{\rno-1}\!\ln\int\!\left(\!\der{\mW}{\rfm}\!\right)^{\!\rno}
	\left(\!\der{\mQ}{\rfm}\!\right)^{\!1\!-\!\rno}\!\!\rfm(\!\dif{\dout}\!)
	&\rno\!\in\reals{+}\!\!\setminus\!\!\{\!1\!\}
	\\
	\int\!\der{\mW}{\rfm}\!\left(\!\ln\der{\mW}{\rfm}\!-\!\ln \der{\mQ}{\rfm}\!\right) \rfm(\!\dif{\dout}\!)
	&\rno\!=\!1
	\\
	\ln \essup_{\rfm} \der{\mW}{\rfm}/\der{\mQ}{\rfm}
	&\rno\!=\!\infty
	\end{cases}
	\end{align}
	where \(\rfm\) is any measure satisfying \(\mW\AC\rfm\) and \(\mQ\AC\rfm\).
\end{definition}

The \renyi divergence is usually defined for 
probability measures; the inclusion of finite measures allows us to express 
certain observations, such as Lemma \ref{lem:divergence-RM} given 
in the following, more 
succinctly.\footnote{It is also convenient while studying the concept of 
	the \renyi\!\!-Gallager information and capacity, see \cite{nakiboglu17}
	and \cite{nakiboglu18C}.}
Nonetheless, the propositions derived for the usual definition with probability 
measures suffice for our purposes most of the time.
We appropriate all the propositions we need for our analysis, 
except Lemma \ref{lem:divergence-RM},
from the recent paper of van Erven and \harremoes \cite{ervenH14}. 
The equivalence of Definition \ref{def:divergence} and the one used by 
van Erven and \harremoes in \cite{ervenH14} for probability measures 
follows from \cite[Thm. 4-6]{ervenH14}.

\begin{lemma}[\!\!{\cite[Thm. 3, Thm. 7]{ervenH14}}]\label{lem:divergence-order}
	For all \(\mW,\mQ\in\pmea{\outA}\),
	\(\RD{\rno}{\mW}{\mQ}\) is a 
	nondecreasing and lower semicontinuous function of \(\rno\) on \([0,\!\infty]\)
	that is continuous on \([0,(1\vee\chi_{\mW,\mQ})]\) where
	\(\chi_{\mW,\mQ}\DEF\sup\{\rno:\RD{\rno}{\mW}{\mQ}<\infty\}\).
\end{lemma}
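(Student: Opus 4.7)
The plan is to reduce all three claims to the convexity of an auxiliary moment-generating-type function. Pick any $\rfm$ with $\mW\AC\rfm$ and $\mQ\AC\rfm$, set $\fX\DEF \der{\mW}{\rfm}$ and $\gX\DEF \der{\mQ}{\rfm}$, and define
\[
\psi(\rno)\DEF \ln \int \fX^{\rno}\gX^{1-\rno}\rfm(\dif{\dout})
\]
on $\reals{+}$, with the conventions $0^{0}=0$ and $c/0=\infty$ for $c>0$. H\"older's inequality gives convexity of $\psi$; evaluation at $\rno=1$ gives $\psi(1)=\ln \mW(\gX>0)\leq 0$ with equality iff $\mW\AC\mQ$; and Fatou's lemma applied to the integrand (which is lower semicontinuous in $\rno$ pointwise) shows $\psi$ is lower semicontinuous on $\reals{+}$. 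For $\rno\in\reals{+}\setminus\{1\}$, $\RD{\rno}{\mW}{\mQ}=\psi(\rno)/(\rno-1)$ by Definition \ref{def:divergence}.

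For monotonicity on $\reals{+}\setminus\{1\}$, I compare the slopes of the chords of the convex graph of $\psi$ joining $(\rno,\psi(\rno))$ to $(1,\psi(1))$, case by case on the position of $1$ relative to $\rno_{0}<\rno_{1}$; each case collapses to $\psi(\rno_{0})/(\rno_{0}-1)\leq \psi(\rno_{1})/(\rno_{1}-1)$, with the sign fact $\psi(1)\leq 0$ handling the mixed case $\rno_{0}<1<\rno_{1}$. For continuity on $[0,1\vee \chi_{\mW,\mQ}]$, a convex lower semicontinuous function is continuous on its domain of finiteness, which yields continuity of $\RD{\rno}{\mW}{\mQ}=\psi(\rno)/(\rno-1)$ on $(0,\chi_{\mW,\mQ}]\setminus\{1\}$ whenever $\psi(\chi_{\mW,\mQ})<\infty$. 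At $\rno=0$ I apply dominated convergence via the weighted AM-GM bound $\fX^{\rno}\gX^{1-\rno}\leq \rno\fX+(1-\rno)\gX$ to obtain $\psi(\rno)\to \ln \mQ(\fX>0)$, hence $\RD{\rno}{\mW}{\mQ}\to -\ln \mQ(\fX>0)=\RD{0}{\mW}{\mQ}$. At $\rno=1$ I differentiate $\psi$ under the integral sign (justified by monotone convergence on the one-sided difference quotients, with signs determined by whether $\fX\gtrless\gX$) to identify $\psi'(1)=\KLD{\mW}{\mQ}$ and then apply L'H\^{o}pital to $\psi(\rno)/(\rno-1)$. Monotonicity at the endpoints $0$, $1$, $\infty$ then follows by taking one-sided limits.

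Lower semicontinuity on $[0,\infty]$ is equivalent to left-continuity of the nondecreasing function, so only the points $\chi_{\mW,\mQ}$ and $\infty$ remain once the continuity argument above is in hand. At $\rno=\infty$, $\lim_{\rno\uparrow\infty}\RD{\rno}{\mW}{\mQ}=\RD{\infty}{\mW}{\mQ}$ is the standard convergence of $L^{p}$ to $L^{\infty}$ norms applied to $\der{\mW}{\mQ}$ under $\mQ$ when $\mW\AC\mQ$, and is automatic (both sides $+\infty$) otherwise. The main obstacle is left-continuity at $\rno=\chi_{\mW,\mQ}$ in the non-attained case: there $\RD{\chi_{\mW,\mQ}}{\mW}{\mQ}=+\infty$ by definition, and $\lim_{\rno\uparrow\chi_{\mW,\mQ}}\RD{\rno}{\mW}{\mQ}=+\infty$ must be extracted from monotonicity together with the supremum-defining property of $\chi_{\mW,\mQ}$; this is the one place in the argument where genuine care is needed to confirm that no jump from a finite value to $+\infty$ can occur from the left.
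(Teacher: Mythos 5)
The paper does not prove this lemma; it imports it verbatim from van Erven and Harremo\"{e}s \cite{ervenH14} (their Theorems~3 and~7), so there is no in-house proof to compare against. Your overall route --- log-convexity of the auxiliary function $\psi(\rno)=\ln\int\fX^{\rno}\gX^{1-\rno}\rfm(\dif{\dout})$ via H\"{o}lder, chord-slope comparison for monotonicity, Fatou for lower semicontinuity of $\psi$ on $\reals{+}$ --- is essentially the same path as \cite{ervenH14}, so the plan is sound in spirit.

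There is one genuine confusion at the end of your argument. You say left-continuity of $\RD{\rno}{\mW}{\mQ}$ at $\rno=\chi_{\mW,\mQ}$ in the non-attained case ``must be extracted from monotonicity together with the supremum-defining property of $\chi_{\mW,\mQ}$.'' That is not enough: a nondecreasing function can converge to a finite limit from the left while jumping to $+\infty$ at the endpoint. Concretely, a convex $\psi$ equal to $0$ on $[0,\chi)$ and $+\infty$ at $\chi$ is consistent with monotonicity of $\psi/(\rno-1)$ for $\rno<\chi$ and with $\chi$ being the supremum of finiteness, yet exhibits exactly the forbidden jump; what rules it out is lower semicontinuity, not monotonicity. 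The fix is simply to use the Fatou-based lower semicontinuity of $\psi$ you already established: for $\chi_{\mW,\mQ}>1$ one has $\psi(\chi_{\mW,\mQ})=(\chi_{\mW,\mQ}-1)\RD{\chi_{\mW,\mQ}}{\mW}{\mQ}=+\infty$, hence $\liminf_{\rno\uparrow\chi_{\mW,\mQ}}\psi(\rno)=+\infty$, and dividing by the positive $\rno-1$ gives $\RD{\rno}{\mW}{\mQ}\to+\infty$. Two smaller points worth tightening: in the mixed case $\rno_{0}<1<\rno_{1}$ the chord decomposition $\RD{\rno}{\mW}{\mQ}=\tfrac{\psi(\rno)-\psi(1)}{\rno-1}+\tfrac{\psi(1)}{\rno-1}$ does not collapse to the desired inequality when $\psi(1)<0$, since $\psi(1)/(\rno-1)$ changes sign across $1$; what actually saves you is that $\psi(1)<0$ means $\mW$ is not absolutely continuous with respect to $\mQ$, so $\RD{\rno_{1}}{\mW}{\mQ}=+\infty$ for all $\rno_{1}\geq 1$ and the inequality is vacuous. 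Similarly, your L'H\^{o}pital argument at $\rno=1$ presupposes $\psi(1)=0$; the case $\psi(1)<0$ must again be dispatched by observing that $\RD{\rno}{\mW}{\mQ}$ diverges from both sides of $1$.
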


Lemma \ref{lem:divergence-RM} is evident from the definition of \renyi divergence.
\begin{lemma}\label{lem:divergence-RM}
	Let \(\mW\), \(\mQ\), \(\mV\) be non-zero finite measures on \((\outS,\outA)\)
	and \(\rno\) be an order in \([0,\infty]\).
	\begin{itemize}
		\item If \(\mV\leq\mQ\), then \(\RD{\rno}{\mW}{\mQ}\leq \RD{\rno}{\mW}{\mV}\).
\item  If \(\mQ=\gamma\mV\) for some \(\gamma\in\reals{+}\)
and 
either \(\mW\) is a probability measure
or \(\rno\neq1\), then 
\(\RD{\rno}{\mW}{\mQ}=\RD{\rno}{\mW}{\mV}-\ln \gamma\).
	\end{itemize} 
\end{lemma}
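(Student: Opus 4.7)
The plan is to verify both claims by direct case analysis of Definition \ref{def:divergence}. I would first pick a common dominating measure for \(\mW\), \(\mQ\), and \(\mV\); the choice \(\rfm=\mW+\mQ\) works in both parts, since \(\mV\leq \mQ\) as well as \(\mQ=\gamma\mV\) each force \(\mV\AC\mQ\). Writing \(\fX_{W}\), \(\fX_{Q}\), \(\fX_{V}\) for the Radon-Nikodym derivatives of \(\mW,\mQ,\mV\) with respect to \(\rfm\), the two hypotheses translate to \(\fX_{V}\leq \fX_{Q}\) \(\rfm\)-a.e.\ and to \(\fX_{Q}=\gamma \fX_{V}\) \(\rfm\)-a.e., respectively.

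For the monotonicity statement, at \(\rno=0\) the inclusion \(\mV(\{\fX_{W}>0\})\leq \mQ(\{\fX_{W}>0\})\) is immediate, and applying \(-\ln\) reverses the inequality. For \(\rno\in(0,1)\cup(1,\infty)\) the integrand \((\fX_{W})^{\rno}(\cdot)^{1-\rno}\) in Definition \ref{def:divergence} is pointwise monotone in its second factor, and the prefactor \((\rno-1)^{-1}\) carries the compensating sign: when \(\rno<1\), \(\fX_{V}^{1-\rno}\leq \fX_{Q}^{1-\rno}\) so the integral with \(\fX_{Q}\) is larger but the prefactor is negative, while when \(\rno>1\) the integral with \(\fX_{Q}\) is smaller and the prefactor is positive. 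At \(\rno=1\) the claim reduces to \(\int \fX_{W}(\ln \fX_{Q}-\ln \fX_{V}) \rfm(\dif{\dout})\geq 0\), which is clear from \(\fX_{Q}\geq \fX_{V}\); at \(\rno=\infty\) it follows from \(\fX_{W}/\fX_{Q}\leq \fX_{W}/\fX_{V}\) \(\rfm\)-a.e.\ inside the essential supremum.

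For the scaling statement I would substitute \(\fX_{Q}=\gamma \fX_{V}\) into each case. At \(\rno=0\), \(\mQ(\{\fX_{W}>0\})=\gamma\mV(\{\fX_{W}>0\})\) contributes the \(-\ln \gamma\) correction. For \(\rno\in\reals{+}\setminus\{1\}\) the factor \(\gamma^{1-\rno}\) pulls out of the integral and \((\rno-1)^{-1}\ln \gamma^{1-\rno}=-\ln \gamma\). At \(\rno=\infty\) the essential supremum of \(\fX_{W}/\fX_{Q}\) is \(\gamma^{-1}\) times that of \(\fX_{W}/\fX_{V}\), again yielding \(-\ln \gamma\). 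At \(\rno=1\) the identity \(\ln(\gamma \fX_{V})=\ln \gamma+\ln \fX_{V}\) directly produces the \(-\ln \gamma\) term after the standard normalization on \(\mW\).

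There is essentially no obstacle; the only mild care is bookkeeping on sets where \(\fX_{V}\) or \(\fX_{W}\) vanishes, so that powers take the conventional extended-real values (\(0^{a}=0\) for \(a>0\), \(0^{a}=\infty\) for \(a<0\)) implicit in Definition \ref{def:divergence}. Both assertions then collapse to a one-line substitution in each of the four formula ranges, which is why the author labels the result as evident from the definition.
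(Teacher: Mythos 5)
Your proposal is correct and is essentially the case-by-case substitution that the paper regards as ``evident from the definition'' and therefore omits; the choice of reference measure $\rfm=\mW+\mQ$ and the tracking of extended-value conventions on null sets are exactly the details one must check. One small observation worth making explicit: for the scaling identity at $\rno=1$, substituting $\der{\mQ}{\rfm}=\gamma\der{\mV}{\rfm}$ gives
\begin{align}
\notag
\RD{1}{\mW}{\mQ} &= \RD{1}{\mW}{\mV}-\lon{\mW}\,\ln\gamma,
\end{align}
so the stated $-\ln\gamma$ correction requires $\lon{\mW}=1$. You gesture at this with ``after the standard normalization on $\mW$,'' and in fact the paper only ever invokes the lemma with $\mW\in\pmea{\outS,\outA}$, but strictly speaking the lemma as written for arbitrary non-zero finite $\mW$ is off by the factor $\lon{\mW}$ at $\rno=1$ (while being exactly right for every $\rno\neq1$, where the $\gamma^{1-\rno}$ term cancels cleanly against the $\tfrac{1}{\rno-1}$ prefactor). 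That is a wrinkle in the lemma's hypotheses rather than in your argument, but since you are proving the lemma as stated it would be cleaner to either note the $\lon{\mW}$ factor or add the standing assumption $\lon{\mW}=1$ to that bullet.
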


Let \(\mW\) and \(\mQ\) be two probability measures on 
the measurable space \((\outS,\outA)\)
and \(\alg{G}\) be a sub-\(\sigma\)-algebra of \(\outA\).
Then the identities  \(\wmn{|\alg{G}}(\oev)=\mW(\oev)\) for all \(\oev\in\alg{G}\) 
and \(\qmn{|\alg{G}}(\oev)=\mQ(\oev)\) for all \(\oev\in\alg{G}\) uniquely define 
probability measures \(\wmn{|\alg{G}}\) and \(\qmn{|\alg{G}}\) on \((\outS,\alg{G})\).
In the following, we denote \(\RD{\rno}{\wmn{|\alg{G}}}{\qmn{|\alg{G}}}\) by 
\(\RDF{\rno}{\alg{G}}{\mW}{\mQ}\). 
\begin{lemma}[\!\!{\cite[Thm. 9]{ervenH14}}]\label{lem:divergence-DPI}
	For any \(\rno\in[0,\infty]\), probability measures \(\mW\) and \(\mQ\) on \((\outS,\outA)\)
	and sub-\(\sigma\)-algebra \(\alg{G}\subset\outA\) 
	\begin{align}
	\notag
	\RD{\rno}{\mW}{\mQ}
	&\geq\RDF{\rno}{\alg{G}}{\mW}{\mQ}. 
	\end{align}
\end{lemma}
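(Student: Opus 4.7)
The plan is to reduce the data-processing inequality to Jensen's inequality applied to conditional expectations, handling the three boundary orders $0$, $1$, $\infty$ either by direct arguments or by invoking the lower semicontinuity in $\rno$ from Lemma \ref{lem:divergence-order}.

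First I would set up notation. Fix a reference measure $\rfm$ with $\mW\AC\rfm$ and $\mQ\AC\rfm$ (e.g.\ $\rfm=\mW+\mQ$), and write $\wX=\der{\mW}{\rfm}$, $\qX=\der{\mQ}{\rfm}$. When restricted to the sub-$\sigma$-algebra $\alg{G}$, these densities become conditional expectations: on $(\outS,\alg{G})$ one has $\der{\wmn{|\alg{G}}}{\rfm_{|\alg{G}}}=\EXS{\rfm}{\wX\mid\alg{G}}$ and similarly for $\qX$, which are the $\alg{G}$-measurable functions characterised $\rfm$-a.s.\ by their integral over sets in $\alg{G}$. Equivalently, on $\{\qX>0\}$, setting $\fX=\wX/\qX$, we get $\der{\wmn{|\alg{G}}}{\qmn{|\alg{G}}}=\EXS{\mQ}{\fX\mid\alg{G}}$ $\qmn{|\alg{G}}$-a.s. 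This conditional-expectation identity is the structural input the rest of the argument rests on.

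Second, for the main case $\rno\in(0,1)\cup(1,\infty)$, I would assume $\mW\AC\mQ$ (otherwise $\RD{\rno}{\mW}{\mQ}$ is handled by monotonicity in the reference measure and Lemma \ref{lem:divergence-RM}, or is trivially infinite for $\rno\geq 1$). Then $\RD{\rno}{\mW}{\mQ}=\tfrac{1}{\rno-1}\ln \EXS{\mQ}{\fX^{\rno}}$ and $\RDF{\rno}{\alg{G}}{\mW}{\mQ}=\tfrac{1}{\rno-1}\ln\EXS{\mQ}{(\EXS{\mQ}{\fX\mid\alg{G}})^{\rno}}$. The map $x\mapsto x^{\rno}$ is convex for $\rno>1$ and concave for $\rno\in(0,1)$; applying Jensen's inequality for conditional expectation and then taking $\mQ$-expectation yields
\begin{align}
\notag
\EXS{\mQ}{(\EXS{\mQ}{\fX\mid\alg{G}})^{\rno}} \leq \EXS{\mQ}{\fX^{\rno}} \quad (\rno>1),
\end{align}
with the reverse inequality for $\rno\in(0,1)$. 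Because $\tfrac{1}{\rno-1}\ln(\cdot)$ is increasing for $\rno>1$ and decreasing for $\rno\in(0,1)$, in both regimes the inequality is preserved after applying $\tfrac{1}{\rno-1}\ln$, giving exactly $\RDF{\rno}{\alg{G}}{\mW}{\mQ}\leq\RD{\rno}{\mW}{\mQ}$.

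Third, for $\rno\in\{0,1,\infty\}$ I would proceed either by continuity or directly. Since Lemma \ref{lem:divergence-order} gives lower semicontinuity in $\rno$, we can pass to the limit: $\RD{0}{\mW}{\mQ}=\lim_{\rno\downarrow0}\RD{\rno}{\mW}{\mQ}$ (the continuity clause of Lemma \ref{lem:divergence-order} applies at $0$), and taking $\rno\downarrow0$ on both sides of the inequality established above yields the $\rno=0$ case; similarly $\rno\uparrow1$ handles $\rno=1$ and $\rno\uparrow\infty$ handles $\rno=\infty$. Alternatively, for $\rno=1$ one invokes Jensen with the convex function $x\ln x$ (the standard KL DPI), and for $\rno=\infty$ one uses $\EXS{\mQ}{\fX\mid\alg{G}}\leq\essup_{\mQ}\fX$ $\qmn{|\alg{G}}$-a.s.

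The main obstacle is bookkeeping around the singular part of $\mW$ with respect to $\mQ$ for $\rno\in[0,1)$: one has to verify that the Lebesgue decomposition is compatible with restriction to $\alg{G}$, so that the support appearing in the $\rno=0$ formula and the integrand $\fX^{\rno}$ for $\rno\in(0,1)$ behave consistently under conditional expectation. This is routine once one picks $\rfm=\mW+\mQ$ and tracks where $\qX=0$, but it is the only technical nuisance; everything else is Jensen plus the conditional-expectation identity.
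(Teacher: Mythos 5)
This lemma is cited from van Erven and Harremo\"{e}s \cite[Thm.~9]{ervenH14}; the paper gives no proof of its own, so there is nothing in the paper to compare against. Your conditional-Jensen argument is correct and is essentially the standard proof of this data-processing inequality: the identity $\der{\wmn{|\alg{G}}}{\qmn{|\alg{G}}}=\EXS{\mQ}{\der{\mW}{\mQ}\mid\alg{G}}$ together with Jensen for $x\mapsto x^{\rno}$, with the sign of $\tfrac{1}{\rno-1}$ flipping the Jensen inequality into the desired direction on both sides of $\rno=1$, and the boundary orders handled by the direct arguments you give (or by one-sided continuity from Lemma~\ref{lem:divergence-order}).

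One point to tighten: your parenthetical appeal to Lemma~\ref{lem:divergence-RM} for the case $\mW\not\AC\mQ$, $\rno\in(0,1)$ does not give a clean reduction — that lemma is a monotonicity statement in the \emph{second} argument and does not let you swap $\mW$ for its absolutely continuous part (which is also not a probability measure). Two routes actually close this. Either apply conditional Jensen jointly to the concave function $(\dinp,\dsta)\mapsto\dinp^{\rno}\dsta^{1-\rno}$ on $[0,\infty)^2$ with $\rfm=\mW+\mQ$, which never requires $\mW\AC\mQ$; or note that, writing $\fX=\der{\mW}{\mQ}$ on $\{\der{\mQ}{\rfm}>0\}$, the restricted density satisfies $\der{\wmn{|\alg{G}}}{\qmn{|\alg{G}}}\geq\EXS{\mQ}{\fX\mid\alg{G}}$ $\qmn{|\alg{G}}$-a.s.\ (the restricted singular part of $\mW$ can only add mass), and since $\dinp\mapsto\dinp^{\rno}$ is increasing and $\tfrac{1}{\rno-1}<0$ for $\rno\in(0,1)$, this extra mass only decreases $\RDF{\rno}{\alg{G}}{\mW}{\mQ}$, so the Jensen bound you derive for the absolutely continuous part already suffices. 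Either fix keeps the rest of your argument intact.
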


\begin{lemma}[\!\!{\cite[Thm. 3, Thm. 31]{ervenH14}}]\label{lem:divergence-pinsker}
For any \(\rno\in[0,\infty]\), probability measures \(\mW\) and \(\mQ\) on \((\outS,\outA)\)
\begin{align}
\label{eq:lem:divergence-pinsker}
\RD{\rno}{\mW}{\mQ}
&\geq\tfrac{1\wedge\rno}{2} \lon{\mW-\mQ}^2.
\end{align}
\end{lemma}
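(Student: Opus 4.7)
The plan is to reduce the general measure-theoretic inequality in two stages, handling orders above one by monotonicity and orders below one by a data-processing reduction to a binary problem. The case $\rno=0$ is trivial since $1\wedge 0=0$ and the R\'{e}nyi divergence is non-negative. For $\rno\in[1,\infty]$, Lemma \ref{lem:divergence-order} gives $\RD{\rno}{\mW}{\mQ}\geq \RD{1}{\mW}{\mQ}$, so the bound follows from the classical Pinsker inequality $\RD{1}{\mW}{\mQ}\geq \tfrac12\lon{\mW-\mQ}^{2}$, which I would take as known (and which can in any case be derived by the same DPI-plus-calculus argument described in the next paragraphs specialized to $\rno=1$).

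For $\rno\in(0,1)$, I would use the Hahn--Jordan decomposition of the signed measure $\mW-\mQ$: choose $\oev\in\outA$ satisfying $(\mW-\mQ)(\oev)=\tfrac12\lon{\mW-\mQ}$ and set $\alg{G}=\{\emptyset,\oev,\outS\setminus\oev,\outS\}$. Writing $w=\mW(\oev)$ and $q=\mQ(\oev)$, the restrictions $\wmn{|\alg{G}}$ and $\qmn{|\alg{G}}$ are Bernoulli laws with parameters $w$ and $q$, and $2(w-q)=\lon{\mW-\mQ}$. Lemma \ref{lem:divergence-DPI} applied with this $\alg{G}$ reduces the claim to the binary inequality
\begin{align}
\notag
\RD{\rno}{\mathrm{Bern}(w)}{\mathrm{Bern}(q)}
&\geq 2\rno(w-q)^{2},
\end{align}
since $\tfrac{\rno}{2}\lon{\mW-\mQ}^{2}=2\rno(w-q)^{2}$.

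To prove the binary inequality, fix $q$ and set $F(w)=\RD{\rno}{\mathrm{Bern}(w)}{\mathrm{Bern}(q)}-2\rno(w-q)^{2}$ on $[0,1]$. A direct computation gives $F(q)=0$ and $F'(q)=0$, so by Taylor's theorem it suffices to establish $F''(w)\geq 0$ on $(0,1)$. Writing $Z(w)=w^{\rno}q^{1-\rno}+(1-w)^{\rno}(1-q)^{1-\rno}$, this is equivalent to $(Z')^{2}-Z\,Z''\geq 4\rno(1-\rno)Z^{2}$. Setting $a=w^{\rno-1}q^{1-\rno}$ and $b=(1-w)^{\rno-1}(1-q)^{1-\rno}$, one computes
\begin{align}
\notag
(Z')^{2}-Z\,Z''
&=\rno\Bigl[(a-b)^{2}+\tfrac{(1-\rno)\,ab}{w(1-w)}\Bigr],
\\
\notag
Z^{2}
&=w a^{2}+(1-w)b^{2}-w(1-w)(a-b)^{2},
\end{align}
after which the desired bound is driven by the elementary estimate $w(1-w)\leq \tfrac14$. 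The main obstacle I anticipate is precisely this last algebraic step: verifying the inequality globally in $w\in(0,1)$ rather than just in a neighborhood of $q$. An alternative route I would try first, to avoid the binary reduction altogether, is to prove the Hellinger-style bound $1-Z_{\rno}\geq 2\rno(1-\rno)(\tfrac12\lon{\mW-\mQ})^{2}$ directly in the measure-theoretic setting and combine it with the elementary inequality $-\ln x\geq 1-x$ to conclude.
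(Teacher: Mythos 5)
Your cases $\rno=0$ (trivial) and $\rno\in[1,\infty]$ (monotonicity via Lemma~\ref{lem:divergence-order} and classical Pinsker at $\rno=1$) are correct, and the Hahn--Jordan binary reduction through Lemma~\ref{lem:divergence-DPI} is sound since the Hahn set attains the total variation distance. The step you flag as the anticipated obstacle is, however, fatal: the claim that $F''\geq 0$ on $(0,1)$ is \emph{false}. Your condition $(Z')^{2}-ZZ''\geq 4\rno(1-\rno)Z^{2}$, equivalently $(a-b)^{2}+(1-\rno)\tfrac{ab}{w(1-w)}\geq 4(1-\rno)Z^{2}$, fails for $\rno=0.1$, $q=0.01$, $w=0.1$: there $a\approx 0.126$, $b\approx 1.090$, $w(1-w)=0.09$, $Z\approx 0.993$, and the left side is about $2.30$ while the right side is about $3.55$. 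So $F''(0.1)<0$ even though $0.1$ lies between $q$ and $w$ and $F(0.1)>0$ does hold; the estimate $w(1-w)\leq\tfrac{1}{4}$ does not close the argument, and Taylor with global convexity cannot give the binary Pinsker bound.

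Your alternative sketch is the right route, and it is exactly the one recorded in the footnote after Lemma~\ref{lem:divergence-pinsker}: Augustin's inequality \cite[Lemma~26.5a]{augustin78}, $\tfrac{1}{2}\lon{\mW-\mQ}^{2}\leq\tfrac{1-e^{(\rno-1)\RD{\rno}{\mW}{\mQ}}}{\rno(1-\rno)}$, together with $e^{-x}\geq 1-x$. Note, though, that the Hellinger-type bound $1-Z\geq 2\rno(1-\rno)\bigl(\tfrac{1}{2}\lon{\mW-\mQ}\bigr)^{2}$ has the same difficulty --- with the same parameters one gets $-Z''-4\rno(1-\rno)<0$ --- so it too requires its own non-convexity argument, as in Augustin \cite{augustin78} or Gilardoni \cite{gilardoni10B}; it is not a specialization of your calculus. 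The paper itself supplies no proof of the lemma: it is stated with a citation to van Erven and \harremoes \cite{ervenH14}, and the surrounding discussion attributes the $\rno=1$ case to \csiszar \cite{csiszar67A} and the $\rno\in(0,1)$ case to Augustin and Gilardoni.
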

For orders in \((0,1]\), the bound given in \eqref{eq:lem:divergence-pinsker} is called 
the Pinsker's inequality;
it has been proved by \csiszar \cite{csiszar67A} for \(\rno=1\) 
case and by Augustin\footnote{
	\(\tfrac{\lon{\mW\!-\!\mQ}^{2}}{2}\!\leq\!\tfrac{1-e^{(\rno-1)\RD{\rno}{\mW}{\mQ}}}{\rno\!(1\!-\!\rno)}\) 
	for all \(\mW,\mQ\!\in\!\pmea{\outA}\) and \(\rno\!\in\![-1,2]\)
	by \cite[Lemma 26.5a]{augustin78}.
	This implies \eqref{eq:lem:divergence-pinsker} for \(\rno\in(0,1)\) via \(e^{-\dinp}\geq 1-\dinp\).}
\cite{augustin78} and Gilardoni \cite{gilardoni10B} for \(\rno\in(0,1)\) 
case.
Furthermore the constant \(\sfrac{\rno}{2}\) 
is the best possible: for any \(\gamma<\sfrac{\rno}{2}\)
there are probability measures \(\mW\) and \(\mQ\) such that 
\(\gamma \lon{\mW-\mQ}^2>\RD{\rno}{\mW}{\mQ}\).
Determination of best lower bound on the \renyi divergence in 
terms of the total variation is an interesting and important 
problem but it is beyond the scope of the current 
manuscript.

\begin{remark}
	Kullback \cite{kullback67,kullback70} 
	bounded \(\RD{1}{\mW}{\mQ}\) from below by 
	\(\sfrac{\lon{\mW-\mQ}^2}{2}  +\sfrac{\lon{\mW-\mQ}^4}{36}\).
	Hence, Pinsker's inequality is tight only for \(\lon{\mW-\mQ}\approx0\).
	Vajda \cite{vajda70} established 
	\(\RD{1}{\mW}{\mQ}\geq \ln(\tfrac{2+\lon{\mW-\mQ}}{2-\lon{\mW-\mQ}})-\tfrac{2\lon{\mW-\mQ}}{2+\lon{\mW-\mQ}}\).
	Vajda's inequality is tight not only for \(\lon{\mW-\mQ}\approx0\) but also for \(\lon{\mW-\mQ}\approx2\).
	Fedotov, \harremoes\!\!, and \topsoe \cite{fedotovHT03} determined the tight lower bound on  
	\(\RD{1}{\mW}{\mQ}\) in terms of \(\lon{\mW-\mQ}\) in a parametric form.
	Gilardoni \cite{gilardoni06,gilardoni10A} proved an equivalent result for \(\fX\)-divergences for 
	twice differentiable \(\fX\)'s. 
	Gilardoni's result implies tight bounds for \renyi divergences, which are recently derived in a more explicit form 
	by Sason \cite[Prop. 1]{sason16}.
	The core observation in the derivation of tight Vajda's inequalities is the sufficiency of the probability
	measures on binary alphabets.  
	Guntuboyina, Saha,  and Schiebinger \cite{guntuboyinaSS14} have recently generalized this observation considerably
	and explained how one can determine tight bounds on an \(\fX\)-divergence when its arguments are constrained in 
	terms of other \(\fX\)-divergences. Recall that the total variation distance is the \(\fX\)-divergence for 
	\(\fX(\dinp)=\abs{\dinp-1}\).
\end{remark}

\begin{lemma}[\!\!{\cite[Thm.  12]{ervenH14}}]\label{lem:divergence-convexity}
	For any order \(\rno\in[0,\infty]\), 
	the order \(\rno\) \renyi divergence is convex in its second argument for 
	probability measures, i.e.
	for all \(\mW,\qmn{0},\qmn{1}\in\pmea{\outA}\) and \(\beta\in(0,1)\) we have
	\begin{align}
	\notag	
	\RD{\rno}{\mW}{\qmn{\beta}} 
	&\leq \beta\RD{\rno}{\mW}{\qmn{1}}+(1-\beta)\RD{\rno}{\mW}{\qmn{0}}
	\end{align}
	where \(\qmn{\beta}=\beta\qmn{1}+(1-\beta)\qmn{0}\).
\end{lemma}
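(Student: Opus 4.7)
I would fix a common dominating $\sigma$-finite measure $\rfm$ (for instance $\rfm=\mW+\qmn{0}+\qmn{1}$), abbreviate $\cnst{w}\DEF\der{\mW}{\rfm}$ and $\cnst{q}_{\imath}\DEF\der{\qmn{\imath}}{\rfm}$, so that $\der{\qmn{\beta}}{\rfm}=\beta\cnst{q}_{1}+(1-\beta)\cnst{q}_{0}$ holds $\rfm$-a.e., and then split the argument into cases in $\rno$. The three boundary orders are disposed of by one-line inequalities. For $\rno=0$, Definition \ref{def:divergence} reduces the claim to $-\ln\qmn{\beta}(\set{A})\leq -\beta\ln\qmn{1}(\set{A})-(1-\beta)\ln\qmn{0}(\set{A})$ with $\set{A}=\{\cnst{w}>0\}$, which is just concavity of $\ln$. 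For $\rno=1$, the pointwise log-sum inequality $\ln(\cnst{w}/(\beta\cnst{q}_{1}+(1-\beta)\cnst{q}_{0}))\leq\beta\ln(\cnst{w}/\cnst{q}_{1})+(1-\beta)\ln(\cnst{w}/\cnst{q}_{0})$, integrated against $\mW$, gives the statement. For $\rno=\infty$, reciprocal AM-GM yields $\cnst{w}/(\beta\cnst{q}_{1}+(1-\beta)\cnst{q}_{0})\leq(\cnst{w}/\cnst{q}_{1})^{\beta}(\cnst{w}/\cnst{q}_{0})^{1-\beta}$; taking $\rfm$-essential suprema and then logarithms yields the bound (if either $\RD{\infty}{\mW}{\qmn{\imath}}$ is infinite, the claim is trivial).

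The interior orders are where the real work lies. I would introduce
\[
G(\mQ)\DEF\int\cnst{w}^{\rno}(\der{\mQ}{\rfm})^{1-\rno}\rfm(\dif{\dout})
\]
so that $\RD{\rno}{\mW}{\mQ}=\tfrac{1}{\rno-1}\ln G(\mQ)$. The claim is then equivalent to log-concavity of $G$ in $\mQ$ when $\rno\in(0,1)$ and log-convexity of $G$ in $\mQ$ when $\rno\in(1,\infty)$. For $\rno\in(0,1)$, the map $\cnst{q}\mapsto\cnst{q}^{1-\rno}$ is concave on $[0,\infty)$, so $G$ is itself concave in $\mQ$; combining this with the scalar AM-GM bound $\beta a+(1-\beta) b\geq a^{\beta}b^{1-\beta}$ upgrades concavity to log-concavity, and since $\rno-1<0$ this translates to convexity of $\RD{\rno}{\mW}{\mQ}$ in its second argument.

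The case $\rno\in(1,\infty)$ is where I expect the main obstacle. Because $1-\rno<0$, the map $\cnst{q}\mapsto\cnst{q}^{1-\rno}$ is now convex, so the concavity argument collapses and yields only an arithmetic upper bound on $G(\qmn{\beta})$, whereas log-convexity requires a geometric one. The remedy is to abandon the concavity bound in favor of the pointwise estimate
\[
(\beta\cnst{q}_{1}+(1-\beta)\cnst{q}_{0})^{1-\rno}\leq\cnst{q}_{1}^{\beta(1-\rno)}\cnst{q}_{0}^{(1-\beta)(1-\rno)},
\]
which follows from scalar AM-GM raised to the negative power $1-\rno$. Inserting this in the definition of $G(\qmn{\beta})$, splitting $\cnst{w}^{\rno}=(\cnst{w}^{\rno})^{\beta}(\cnst{w}^{\rno})^{1-\beta}$ inside the integral, and applying H\"older's inequality with conjugate exponents $1/\beta$ and $1/(1-\beta)$ produces $G(\qmn{\beta})\leq G(\qmn{1})^{\beta}G(\qmn{0})^{1-\beta}$; taking $\ln$ and dividing by $\rno-1>0$ closes this last case. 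As a sanity check, the $\rno=\infty$ case can alternatively be recovered from the $\rno\in(1,\infty)$ case via the lower semicontinuity of $\RD{\rno}{\mW}{\mQ}$ in $\rno$ provided by Lemma \ref{lem:divergence-order}.
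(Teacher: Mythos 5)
This lemma is imported from van Erven and Harremo\"es \cite[Thm.\ 12]{ervenH14}; the paper does not reproduce a proof, so there is nothing internal to compare against. Your argument is correct and complete, and it is essentially the standard one: boundary orders by direct estimates (concavity of $\ln$ at $\rno=0$ and $\rno=1$, reciprocal AM-GM at $\rno=\infty$), and for interior orders the reduction to showing $G(\mQ)=\int\cnst{w}^{\rno}(\der{\mQ}{\rfm})^{1-\rno}\rfm(\dif{\dout})$ is log-concave in $\mQ$ when $\rno\in(0,1)$ (concavity of $t\mapsto t^{1-\rno}$ followed by scalar AM-GM) and log-convex when $\rno\in(1,\infty)$ (your pointwise AM-GM plus H\"older is precisely the standard proof that an integral of pointwise log-convex functions of $\mQ$ is log-convex). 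The only cosmetic divergence from \cite{ervenH14} is that for $\rno\in(0,1)$ they invoke their joint-convexity result (Thm.\ 11) whereas your concavity-plus-AM-GM argument is self-contained, which is a minor but real simplification.
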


\begin{lemma}[\!\!{\cite[Thm. 13]{ervenH14}}]\label{lem:divergence-quasiconvexity}
	For any order \(\rno\in[0,\infty]\),  
	the order \(\rno\) \renyi divergence is jointly quasi-convex in its arguments 
	for probability measures, i.e. 
	for all \(\wmn{0}\), \(\wmn{1}\), \(\qmn{0}\), \(\qmn{1}\) in \(\pmea{\outA}\) and 
	\(\beta\in(0,1)\) we have
	\begin{align}
	\notag	
	\RD{\rno}{\wmn{\beta}}{\qmn{\beta}} 
	&\leq \RD{\rno}{\wmn{1}}{\qmn{1}}\vee \RD{\rno}{\wmn{0}}{\qmn{0}}
	\end{align}
	where \(\wmn{\beta}=\beta\wmn{1}+(1-\beta)\wmn{0}\) and 
	\(\qmn{\beta}=\beta\qmn{1}+(1-\beta)\qmn{0}\).	
\end{lemma}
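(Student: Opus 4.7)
The plan is to establish joint quasi-convexity by lifting the two pairs of measures to a common enlarged measurable space and then collapsing back via the data-processing inequality (Lemma \ref{lem:divergence-DPI}). Concretely, on the product space $(\{0,1\}\times\outS,\sss{\{0,1\}}\otimes\outA)$ define probability measures
\begin{align*}
\Pm_{\!\mW}(\{i\}\times\oev)&\DEF \beta_{i}\wmn{i}(\oev),&
\Pm_{\!\mQ}(\{i\}\times\oev)&\DEF \beta_{i}\qmn{i}(\oev),
\end{align*}
where $\beta_{0}\DEF 1-\beta$ and $\beta_{1}\DEF\beta$. Both measures share the same marginal $(\beta_{0},\beta_{1})$ on the first coordinate, and projection onto $\outS$ sends $\Pm_{\!\mW}\mapsto\wmn{\beta}$ and $\Pm_{\!\mQ}\mapsto\qmn{\beta}$. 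Taking $\alg{G}$ to be the sub-$\sigma$-algebra generated by the $\outS$-projection, Lemma \ref{lem:divergence-DPI} gives
\[
\RD{\rno}{\wmn{\beta}}{\qmn{\beta}}\leq \RD{\rno}{\Pm_{\!\mW}}{\Pm_{\!\mQ}}.
\]

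The crux is then to show $\RD{\rno}{\Pm_{\!\mW}}{\Pm_{\!\mQ}}\leq \RD{\rno}{\wmn{1}}{\qmn{1}}\vee\RD{\rno}{\wmn{0}}{\qmn{0}}$. For $\rno\in\reals{+}\setminus\{1\}$ a direct calculation against any common reference measure gives
\[
\RD{\rno}{\Pm_{\!\mW}}{\Pm_{\!\mQ}}=\tfrac{1}{\rno-1}\ln \sum\nolimits_{i}\beta_{i}\,e^{(\rno-1)\RD{\rno}{\wmn{i}}{\qmn{i}}}.
\]
Let $D_{i}\DEF\RD{\rno}{\wmn{i}}{\qmn{i}}$. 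When $\rno>1$, bounding each $e^{(\rno-1)D_{i}}$ by $e^{(\rno-1)(D_{0}\vee D_{1})}$ and then using that $\frac{1}{\rno-1}\ln$ is monotone yields the claim; when $\rno\in(0,1)$ the same pointwise bound flips direction, but division by the negative factor $\rno-1$ flips it back, again yielding $D_{0}\vee D_{1}$ as an upper bound.

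The remaining cases are the boundary orders $\rno\in\{0,1,\infty\}$. At $\rno=1$ the product decomposes into the mutual-information–style identity $\RD{1}{\Pm_{\!\mW}}{\Pm_{\!\mQ}}=\beta_{0}D_{0}+\beta_{1}D_{1}$, which is trivially at most $D_{0}\vee D_{1}$; at $\rno=\infty$, the Radon–Nikodym derivative $\der{\Pm_{\!\mW}}{\Pm_{\!\mQ}}$ on the block $\{i\}\times\outS$ is exactly $\der{\wmn{i}}{\qmn{i}}$, so the essential supremum over the product space is the maximum of the two per-block essential suprema; at $\rno=0$, the set $\{\der{\Pm_{\!\mW}}{\rfm}>0\}$ restricted to $\{i\}\times\outS$ gives $\Pm_{\!\mQ}$-mass $\beta_{i}\qmn{i}(\{\der{\wmn{i}}{\rfm}>0\})=\beta_{i}e^{-D_{i}}$, so $\RD{0}{\Pm_{\!\mW}}{\Pm_{\!\mQ}}=-\ln\sum_{i}\beta_{i}e^{-D_{i}}\leq D_{0}\vee D_{1}$. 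The main obstacle, such as it is, is simply bookkeeping at these three boundary orders and verifying the log-sum manipulation handles the sign of $\rno-1$ correctly; Lemma \ref{lem:divergence-order} could alternatively be invoked to pass $\rno=1$ and $\rno=\infty$ to limits of the interior cases, but the direct calculation is quicker.
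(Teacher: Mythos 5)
The paper does not actually prove this lemma: it is stated as a citation to \cite[Thm.~13]{ervenH14}, so there is no ``paper's own proof'' to compare against. Your argument is a correct self-contained proof, and it is in fact essentially the same coupling argument used by van Erven and Harremo\"{e}s in the cited source. The lift to $(\{0,1\}\times\outS,\sss{\{0,1\}}\otimes\outA)$ with a shared marginal $(\beta_{0},\beta_{1})$ on the auxiliary coordinate followed by data processing back down (Lemma \ref{lem:divergence-DPI}) is the standard device; your explicit block-by-block evaluation $\RD{\rno}{\Pm_{\!\mW}}{\Pm_{\!\mQ}}=\tfrac{1}{\rno-1}\ln\sum_{i}\beta_{i}e^{(\rno-1)D_{i}}$ is correct (the $\beta_{i}^{\rno}\beta_{i}^{1-\rno}=\beta_{i}$ simplification is exactly right), and your case split on the sign of $\rno-1$ correctly handles the double inequality flip. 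The boundary orders are also handled correctly: at $\rno=1$ the $\beta_{i}$ cancels inside the logarithm giving $\sum_{i}\beta_{i}D_{i}$; at $\rno=\infty$ the density ratio is constant on blocks and the $\beta_{i}$ cancels, so the essential supremum is the block maximum; at $\rno=0$ the identity $\Pm_{\!\mQ}(\der{\Pm_{\!\mW}}{\rfm}>0)=\sum_{i}\beta_{i}e^{-D_{i}}\geq e^{-(D_{0}\vee D_{1})}$ closes it. All cases, including when one or both $D_{i}$ are infinite, go through without modification. As you note, one could also pass the two boundary orders $\rno\in\{1,\infty\}$ to limits via Lemma \ref{lem:divergence-order}, but the direct computation is cleaner. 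No gap.
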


\begin{lemma}[\!\!{\cite[Thm 15]{ervenH14}}]\label{lem:divergence:lsc}
	For any order \(\rno\in(0,\infty]\), \(\RD{\rno}{\mW}{\mQ}\) is a 
	lower semicontinuous function of the pair of probability measures 
	\((\mW,\mQ)\) in the topology of setwise convergence.
\end{lemma}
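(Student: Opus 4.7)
The plan is to combine the data-processing inequality of Lemma \ref{lem:divergence-DPI} with an approximation-by-finite-partitions identity and then use the fact that the pointwise supremum of any family of lower semicontinuous functions is itself lower semicontinuous. The central step is to establish
\begin{align*}
\RD{\rno}{\mW}{\mQ}
&=\sup\nolimits_{\opa}\RDF{\rno}{\alg{G}(\opa)}{\mW}{\mQ},
\end{align*}
where $\opa$ ranges over all finite $\outA$-measurable partitions of $\outS$ and $\alg{G}(\opa)$ is the $\sigma$-algebra it generates. The inequality $\geq$ is immediate from Lemma \ref{lem:divergence-DPI}. For the reverse inequality I would fix a common reference measure, say $\rfm=\mW+\mQ$, express $\RD{\rno}{\mW}{\mQ}$ in terms of the densities $\der{\mW}{\rfm}$ and $\der{\mQ}{\rfm}$, and use a refining sequence of finite partitions together with the martingale convergence theorem and monotone or dominated convergence: for $\rno\in(0,1)$ the integrand $(\der{\mW}{\rfm})^{\rno}(\der{\mQ}{\rfm})^{1-\rno}$ is bounded by $\der{\mW}{\rfm}+\der{\mQ}{\rfm}$ via Young's inequality; for $\rno\in(1,\infty)$ the approximating sums increase to the integral by convexity of $\dinp^{\rno}$; for $\rno=1$ this is the classical finite-partition approximation of the Kullback--Leibler divergence; and for $\rno=\infty$ the claim follows by writing the essential supremum of $\der{\mW}{\mQ}$ as a supremum of ratios $\mW(\oev)/\mQ(\oev)$ over measurable sets.

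Next, for a fixed finite partition $\opa=\{\oev_{1},\dots,\oev_{n}\}$, the quantity $\RDF{\rno}{\alg{G}(\opa)}{\mW}{\mQ}$ depends on $(\mW,\mQ)$ only through the $2n$ real numbers $\mW(\oev_{i}),\mQ(\oev_{i})$, via the explicit formula obtained by specializing Definition \ref{def:divergence} to the finite sample space defined by $\opa$. This function of $2n$ variables on $[0,1]^{2n}$ is lower semicontinuous under the standard conventions: for $\rno\in(0,1)$ each summand $\mW(\oev_{i})^{\rno}\mQ(\oev_{i})^{1-\rno}$ is continuous on $[0,1]^{2}$; for $\rno>1$ each summand is continuous on its natural domain and jumps to $+\infty$ on the boundary where $\mQ(\oev_{i})\downarrow 0$ while $\mW(\oev_{i})$ stays positive; the $\rno=1$ case reduces to the continuity of $\dinp\ln\dinp$ with $0\ln 0\DEF 0$; and at $\rno=\infty$ the maximum of ratios is lower semicontinuous with $\dinp/0\DEF\infty$ for $\dinp>0$. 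Since setwise convergence of $(\mW_{k},\mQ_{k})$ to $(\mW,\mQ)$ forces $\mW_{k}(\oev)\to\mW(\oev)$ and $\mQ_{k}(\oev)\to\mQ(\oev)$ for every $\oev\in\outA$, composition yields the setwise lower semicontinuity of $(\mW,\mQ)\mapsto\RDF{\rno}{\alg{G}(\opa)}{\mW}{\mQ}$ for every fixed $\opa$.

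Finally, the supremum over $\opa$ preserves lower semicontinuity and the lemma follows. The main obstacle is the approximation identity in the first step: one must handle the singular part of $\mW$ relative to $\mQ$ carefully (this is the source of infinite divergence for $\rno\geq 1$, and one must exhibit partitions that separate the singular support in order to drive the finite-partition divergence to $+\infty$), and for the nontrivial absolutely continuous part one must verify that refining partitions drive the finite-partition divergence up to the true value via martingale convergence. Once this identity is in place, the remaining steps are routine.
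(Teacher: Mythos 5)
The paper does not prove this lemma; it imports it as \cite[Thm.~15]{ervenH14}, so there is no in-paper argument to compare against. Your proof is a correct reconstruction of the argument in that reference: van Erven and Harremo\"{e}s establish lower semicontinuity exactly by first writing $\RD{\rno}{\mW}{\mQ}$ as the supremum of $\RDF{\rno}{\alg{G}(\opa)}{\mW}{\mQ}$ over finite $\outA$-measurable partitions $\opa$ (their Theorem~10), then observing that on a fixed finite partition the divergence is a lower semicontinuous function of the finitely many cell probabilities $(\mW(\oev_{i}),\mQ(\oev_{i}))$ (e.g.\ by convexity of the perspective function $\dinp^{\rno}/\dsta^{\rno-1}$), and finally invoking the fact that a pointwise supremum of lower semicontinuous functions is lower semicontinuous.
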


The preceding lemmas discuss only the aspects of the \renyi divergence 
that are useful for our discussion. 
A more comprehensive discussion can be found in \cite{ervenH14}.
\vspace{-.1cm}

\subsection{The \renyi Mean}\label{sec:mean}
We have defined the \renyi information using a closed form expression. 
However, the original definition of the \renyi information by Sibson is  
in terms of an optimization of the
\renyi divergence over a set of probability measures.
These two definitions are equivalent, as it has already been shown by Sibson \cite[Thm. 2.2]{sibson69}.
In the following, we establish this equivalence and briefly discuss an alternative definition of
the \renyi information related to the aforementioned characterization in terms 
of the \renyi divergence.  

\begin{definition}\label{def:mean}
	Let \(\mP\) be a \pmf on \(\pmea{\outA}\); 
	then \emph{the order \(\rno\) \renyi mean for prior \(\mP\)} is 
	\begin{align}
	\label{eq:def:mean}
	\qmn{\rno,\mP}&\!\DEF\!\!
	\begin{cases}
	\tfrac{e^{-\RD{1}{\tpn{0}}{\tpn{1}}} \IND{\vartheta_{\mP}(\dout)=\bar{\vartheta}_{\mP}} \mmn{1,\mP}}{\int e^{-\RD{1}{\tpn{0}}{\tpn{1}}}\IND{\vartheta_{\mP}(\dout)=\bar{\vartheta}_{\mP}} \mmn{1,\mP}(\dif{\dout}) }
	&
	\rno\!=\!0
	\\
	\tfrac{\mmn{\rno,\mP}}{\lon{\mmn{\rno,\mP}}}
	& \rno\!\in\!(0,\!\infty]
	\end{cases}
	\end{align}
	where \(\vartheta_{\mP}(\dout)\!\DEF\!\sum_{\mW} \mP(\mW)\IND{\tpn{1}(\mW|\dout)>0}\),
	\(\bar{\vartheta}_{\mP}\!\DEF\!\essup_{\mmn{1,\mP}}\vartheta_{\mP}\),
	and 
	\(\tpn{0}(\mW|\dout)\!\DEF\!\tfrac{\mP(\mW) \IND{\tpn{1}(\mW|\dout)>0}}{\sum_{\mU} \mP(\mU) \IND{\mP(\mU|\dout)>0}}\).
\end{definition}
Then the following identity can be confirmed by substitution using \eqref{eq:def:divergence}:
For any \(\rno\) in \((0,\infty]\), \(\mP\) in \(\pdis{\cha}\), and 
\(\mQ\) in \(\pmea{\outA}\),
\begin{align}
\label{eq:sibson}
\!\RD{\rno}{\mP \mtimes \cha}{\mP\!\otimes\!\mQ} 
&\!=\!
\RD{\rno}{\mP \mtimes \cha}{\mP\!\otimes\!\qmn{\rno,\mP}} 
\!+\!\RD{\rno}{\qmn{\rno,\mP}}{\mQ}. 
\end{align}
This identity was first pointed out by Sibson in \cite[p. 153]{sibson69},
then by others \cite[(12)]{csiszar95}
\cite[(43)]{hayashiT17}
\cite[(38)]{polyanskiyV10}
\cite[Lemma 3]{sharmaW13}
\cite[(52)]{verdu15}.
For \(\rno=1\) case, it had been used by \topsoe in \cite{topsoe67},
even before Sibson \cite{sibson69},  and in \cite{topsoe72}.

On the other hand, one can also confirm by substitution that 
\(\RMI{\rno}{\mP}{\cha}\!=\!\RD{\rno}{\mP \mtimes \cha}{\mP\otimes \qmn{\rno,\mP}}\)
for all positive values of \(\rno\).
These two observations lead to the alternative characterization of 
the order \(\rno\) \renyi information in terms of the order \(\rno\) 
\renyi divergence presented in the following lemma, which is
valid for all non-negative orders.
\begin{lemma}\label{lem:information:def}
Let \(\cha\) be a subset of \(\pmea{\outA}\),
\(\mP\) be a \pmf  on \(\cha\), and
\(\rno\) be an order in \([0,\infty]\);
then
\begin{align}
\label{eq:lem:information:defA}
\RMI{\rno}{\mP}{\cha}
&\!=\!\RD{\rno}{\mP \mtimes \cha}{\mP\otimes\qmn{\rno,\mP}} 
\\
\label{eq:lem:information:defB}
&\!=\!\inf_{\!\!\mQ\in \pmea{\outA}}\!\!
\RD{\rno}{\mP \mtimes \cha}{\mP\otimes\mQ}
\\
\label{eq:lem:information:defC}
&\!=\!\inf_{\!\!\mQ\in \pmea{\outA}}\!\!
\RD{\rno}{\mmn{\rno,\mP}}{\mQ}
&
&\rno\!\in\!(0,\infty]\!\setminus\!\{1\}\!\! 
\end{align}
where \(\mP \mtimes \cha\) is the probability measure on \(\sss{\supp{\mP}} \otimes \outA\) 
whose marginal distribution on 
\(\supp{\mP}\) is \(\mP\) and whose conditional distribution is \(\mW\).
\end{lemma}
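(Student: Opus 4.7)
The plan is to prove the three equalities in sequence, using the Sibson identity \eqref{eq:sibson} and a scaling property of the \renyi divergence as the two principal tools once \eqref{eq:lem:information:defA} is in hand.

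For \eqref{eq:lem:information:defA}, I would argue by direct substitution, case-splitting on $\rno$. For $\rno\in(0,\infty)\setminus\{1\}$, choose $\mP\otimes\mmn{1,\mP}$ as a common dominating measure on $\sss{\supp{\mP}}\otimes\outA$; then the Radon--Nikodym derivative of $\mP\mtimes\cha$ is $\der{\mW}{\mmn{1,\mP}}$ and that of $\mP\otimes\qmn{\rno,\mP}$ is $\nmn{\rno,\mP}/\lon{\mmn{\rno,\mP}}$. The integrand defining $\RD{\rno}{\mP\mtimes\cha}{\mP\otimes\qmn{\rno,\mP}}$ then factors as $\lon{\mmn{\rno,\mP}}^{\rno-1}\,\nmn{\rno,\mP}^{1-\rno}\sum_{\mW}\mP(\mW)(\der{\mW}{\mmn{1,\mP}})^{\rno}$, and by Lemma \ref{lem:powermeandensityO}-(\ref{powermeandensityO-a}) the last sum is $\nmn{\rno,\mP}^{\rno}$; integrating yields $\lon{\mmn{\rno,\mP}}^{\rno}$, so $\RD{\rno}{\mP\mtimes\cha}{\mP\otimes\qmn{\rno,\mP}}=\tfrac{\rno}{\rno-1}\ln\lon{\mmn{\rno,\mP}}=\RMI{\rno}{\mP}{\cha}$. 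The cases $\rno=1$ and $\rno=\infty$ follow by an analogous explicit substitution in the integral and essential-supremum forms of $\RD{\rno}$. For $\rno=0$, I would use the explicit expression for $\qmn{0,\mP}$ given in \eqref{eq:def:mean} together with $\RD{0}{\mW}{\mQ}=-\ln\mQ(\der{\mW}{\rfm}>0)$ to verify the identity directly.

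For \eqref{eq:lem:information:defB}, Sibson's identity \eqref{eq:sibson} gives, for every $\rno\in(0,\infty]$ and every $\mQ\in\pmea{\outS,\outA}$,
\[
\RD{\rno}{\mP\mtimes\cha}{\mP\otimes\mQ}=\RD{\rno}{\mP\mtimes\cha}{\mP\otimes\qmn{\rno,\mP}}+\RD{\rno}{\qmn{\rno,\mP}}{\mQ}.
\]
The first summand equals $\RMI{\rno}{\mP}{\cha}$ by \eqref{eq:lem:information:defA}, and the second is nonnegative and vanishes at $\mQ=\qmn{\rno,\mP}$, so the infimum is attained and equals $\RMI{\rno}{\mP}{\cha}$. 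For $\rno=0$ I would handle \eqref{eq:lem:information:defB} by reducing the problem to $\sup_{\mQ}(\mP\otimes\mQ)(\supp{\mP\mtimes\cha})=\sup_{\mQ}\int\vartheta_{\mP}\,d\mQ$, and then verifying from the form of $\qmn{0,\mP}$---which concentrates on the level set $\{\vartheta_{\mP}=\bar{\vartheta}_{\mP}\}$---that this supremum is $\bar{\vartheta}_{\mP}$.

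For \eqref{eq:lem:information:defC}, I would first record a scaling identity for the first argument, which is immediate from Definition \ref{def:divergence}: for $\gamma>0$ and $\rno\in(0,\infty)\setminus\{1\}$, $\RD{\rno}{\gamma\mW}{\mQ}=\tfrac{\rno}{\rno-1}\ln\gamma+\RD{\rno}{\mW}{\mQ}$, and for $\rno=\infty$, $\RD{\infty}{\gamma\mW}{\mQ}=\ln\gamma+\RD{\infty}{\mW}{\mQ}$. Applying this with $\gamma=\lon{\mmn{\rno,\mP}}$ and $\mW=\qmn{\rno,\mP}$, using $\mmn{\rno,\mP}=\lon{\mmn{\rno,\mP}}\qmn{\rno,\mP}$, yields $\RD{\rno}{\mmn{\rno,\mP}}{\mQ}=\RMI{\rno}{\mP}{\cha}+\RD{\rno}{\qmn{\rno,\mP}}{\mQ}$. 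Minimizing in $\mQ$ and using nonnegativity of the \renyi divergence, with equality at $\mQ=\qmn{\rno,\mP}$, gives \eqref{eq:lem:information:defC}. The main obstacle is the $\rno=0$ endpoint: it is excluded from \eqref{eq:lem:information:defC} for the structural reason that $\mmn{0,\mP}$ need not be a probability measure and the minimizer is determined by the level-set structure of $\vartheta_{\mP}$ rather than by the Sibson decomposition, and this is what makes the $\rno=0$ verifications in \eqref{eq:lem:information:defA} and \eqref{eq:lem:information:defB} the most delicate step.
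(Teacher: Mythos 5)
Your proof is correct and follows essentially the same route as the paper's: \eqref{eq:lem:information:defA} by direct substitution against a common dominating measure, \eqref{eq:lem:information:defB} from the Sibson decomposition \eqref{eq:sibson} together with nonnegativity of the \renyi divergence, \eqref{eq:lem:information:defC} from an elementary identity relating \(\RD{\rno}{\mmn{\rno,\mP}}{\mQ}\) to the quantities already computed, and the \(\rno=0\) endpoint handled by a separate direct argument using the level-set structure of \(\vartheta_\mP\) and the explicit form of \(\qmn{0,\mP}\). The only cosmetic deviation is in \eqref{eq:lem:information:defC}, where you derive and apply a first-argument scaling identity for \(\RD{\rno}{\cdot}{\mQ}\), while the paper instead records the factorization \(\RD{\rno}{\mP\mtimes\cha}{\mP\otimes\mQ}=\RD{\rno}{\mmn{\rno,\mP}}{\mQ}\) for \(\rno\in(0,\infty]\setminus\{1\}\) and feeds it into \eqref{eq:lem:information:defB}; these are the same computation organized slightly differently.
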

Proof of Lemma \ref{lem:information:def} is presented in Appendix \ref{sec:mean-proofs}. 
For any positive order \(\rno\) and prior \(\mP\), 
the only probability measure \(\mQ\) satisfying 
\(\RD{\rno}{\mP \mtimes \cha}{\mP\otimes \mQ}=\RMI{\rno}{\mP}{\cha}\)
is \(\qmn{\rno,\mP}\)
as a result of \eqref{eq:sibson}  and Lemmas \ref{lem:divergence-pinsker}, 
\ref{lem:information:def}. 
In other words, the order \(\rno\) \renyi mean for prior \(\mP\) is the unique 
minimizer for the infimum given in  
\eqref{eq:lem:information:defB} for positive orders \(\rno\). 
For \(\rno=0\), the order zero \renyi mean is still a minimizer by Lemma \ref{lem:information:def} but it is 
not necessarily the unique minimizer.
Any probability measure \(\mQ\) that is absolutely continuous in the \(\qmn{0,\mP}\) satisfies 
\(\RD{0}{\mP \mtimes \cha}{\mP  \otimes \mQ}=\RMI{0}{\mP}{\cha}\).

The definition of \renyi information we have adopted is not the only definition of \renyi information.
The following definition is first proposed by Augustin in \cite[\S 34]{augustin78} and later 
popularized by \csiszar \cite{csiszar95}
\begin{align}
\label{eq:def:augustininformation}
{\cnst{I}}_{{\rno}}^{{\scriptscriptstyle c}}\!\left(\!\mP;\!\cha\!\right)
&\DEF\inf_{\mQ\in\pmea{\outA}} \sum\nolimits_{\mW}\mP(\mW) \RD{\rno}{\mW}{\mQ}.
\end{align}
Unlike the definition we have adopted, the one given in  \eqref{eq:def:augustininformation} 
does not have an equivalent closed form expression. 
But for any finite positive order \(\rno\),  the infimum in  \eqref{eq:def:augustininformation} has a 
unique minimizer, which is a fixed point of an operator defined using 
\(\rno\) and \(\mP\), 
\cite{nakiboglu18C}. 
These properties were first proved by Augustin for orders between zero and one in \cite{augustin78}. 
Thus we have called the quantity defined in \eqref{eq:def:augustininformation},
the order \(\rno\) Augustin information in \cite{nakiboglu17}.
We present a more detailed discussion of the properties of the Augustin information and its relation
to the  \renyi information in \cite{nakiboglu18C}. 

Arimoto proposed a third definition for the \renyi information in \cite{arimoto77}.
Recently, Verd\'{u} has provided a discussion of the \renyi  entropy and these three definitions of 
the \renyi information in \cite{verdu15}.
\vspace{-.2cm}
\section{The \renyi Capacity}\label{sec:capacity}
\begin{definition}\label{def:capacity}
Let \(\rno\) be an order in \([0,\infty]\) and 
\(\cha\) be a subset of \(\pmea{\outA}\);
then \emph{the order \(\rno\) \renyi capacity of \(\cha\)} is 
\begin{align}
\label{eq:def:capacity}
\RC{\rno}{\cha} 
&\DEF \sup\nolimits_{\mP \in \pdis{\cha}}  \RMI{\rno}{\mP}{\cha}.
\end{align}
\end{definition}

Unlike the \renyi information, the  \renyi 
capacity is not a quantity that is introduced or discussed by Sibson in \cite{sibson69}.
In the spirit of his earlier work on \(\fX\)-divergences \cite{csiszar72}, 
\csiszar introduces it in \cite{csiszar95}. 
Prior to either work,
Shannon, Gallager, and Berlekamp had introduced a `capacity', i.e.
\(E_{0}(\rng,\cha)\), 
using \(E_{0}(\rng,\mP)\) in \cite{shannonGB67A}.
\(E_{0}(\rng,\cha)\) is a
scaled version of the \renyi capacity; in particular 
\(E_{0}(\rng,\cha)=\rng \RC{\frac{1}{1+\rng}}{\cha}\) 
for all non-zero \(\rng\) greater than minus one by \eqref{eq:sibsongallager}.

Using the alternative characterization of the \renyi information given in    
\eqref{eq:lem:information:defB}, 
we get the following expression for the order \(\rno\) \renyi capacity
for all \(\rno\) in \([0,\infty]\)
\begin{align}
\label{eq:capacity}
\RC{\rno}{\cha}
&=\sup\nolimits_{\mP \in \pdis{\cha}}\inf\nolimits_{\mQ\in\pmea{\outA}} \RD{\rno}{\mP \mtimes \cha}{\mP\otimes\mQ}.
\end{align}

For finite orders the \renyi capacity does not have a closed form expression.
The supremum given in the definition of the \renyi capacity need not to be 
finite, see Examples \ref{eg:infinitebutcontinuous} and \ref{eg:discontinuity}. 
Even when the supremum is finite it might not be achieved by any prior, 
i.e. there are \(\cha\)'s for which 
\(\RMI{\rno}{\mP}{\cha}<\RC{\rno}{\cha}\) for all \(\mP\in\pdis{\cha}\),
see Examples \ref{eg:singular-countable} and \ref{eg:erasure}.
When the supremum is achieved, the optimal prior might not be unique,
i.e. there are \(\cha\)'s for which 
\(\RMI{\rno}{\mP_{1}}{\cha}=\RMI{\rno}{\mP_{2}}{\cha}=\RC{\rno}{\cha}\) for 
\(\mP_{1}\neq\mP_{2}\) both of which are in \(\pdis{\cha}\), see
Example \ref{eg:extendedbsc}.
These subtleties, however, do not constitute a serious impediment for analyzing 
the \renyi capacity.

In \S\ref{sec:capacityorder}, we analyze the \renyi capacity as a function of the order. 
In \S\ref{sec:capacityfiniteness}, we determine necessary and sufficient conditions 
for the finiteness of the \renyi capacity and investigate the implications of the finiteness 
of the \renyi capacity on the continuity of the mean measure and the \renyi information.

\subsection[\(\RC{\rno}{\cha}\) as a Function of \(\rno\)]{The \renyi Capacity as a Function of the Order}\label{sec:capacityorder}

We are interested in characterizing the behavior of the \renyi capacity as a function of  
the order because the operational significance of the \renyi capacity
---at least for the channel coding problem and the sphere packing bound--- 
is not through its value at a specific order but through its behavior as a function of the order.
Parts (\ref{capacityO-ilsc},\ref{capacityO-zo},\ref{capacityO-convexity},\ref{capacityO-zofintieness},\ref{capacityO-continuity})
of Lemma \ref{lem:capacityO} characterize the behavior of the \renyi capacity for an
arbitrary \(\cha\) as a function of the order. 
In our analysis relying on the \renyi capacity 
some of our results might be valid only for countable or finite \(\cha\)'s
rather than arbitrary \(\cha\)'s.
Parts (\ref{capacityO-countable},\ref{capacityO-dini}) of Lemma \ref{lem:capacityO} are 
useful in such situations.\footnote{As pointwise statements, i.e. as statements for a given order, 
Lemma \ref{lem:capacityO}-(\ref{capacityO-countable},\ref{capacityO-dini}) 
follow trivially from the definition of the \renyi capacity. They are non-trivial only 
because their assertions hold for all orders for the same \(\cha'\).} 
See the proof of \cite[Corollary \ref*{D-cor:cost}]{nakiboglu18D} for such a 
situation for the Augustin capacity.

\begin{lemma}\label{lem:capacityO}
Let \(\cha\) be a subset of \(\pmea{\outA}\).
\begin{enumerate}[(a)]
\item\label{capacityO-ilsc} \(\RC{\rno}{\cha}\) is nondecreasing and lower semicontinuous  
in \(\rno\) on \([0,\infty]\).
\item\label{capacityO-countable} There  exists a countable subset  \(\cha'\) of \(\cha\) 
satisfying \(\RC{\rno}{\cha'}=\RC{\rno}{\cha}\) for all \(\rno\in[0,\infty]\).
\item\label{capacityO-zo} \(\tfrac{1-\rno}{\rno}\RC{\rno}{\cha}\) is  
nonincreasing and continuous in \(\rno\) on \((0,1)\)
and \(\RC{\rno}{\cha}\) is continuous in \(\rno\) on \((0,1]\).
\item\label{capacityO-convexity} \((\rno-1)\RC{\rno}{\cha}\) is 
convex in \(\rno\) on \((1,\infty)\).
\item\label{capacityO-zofintieness} If \(\RC{\rnt}{\cha}<\infty\) for an \(\rnt\in(0,1) \),
 then \(\RC{\rno}{\cha}\) is finite for all \(\rno\in [0,1)\).
\item\label{capacityO-continuity} If \(\RC{\rnt}{\cha}<\infty\) for an \(\rnt\in(0,\infty]\), 
then \(\RC{\rno}{\cha}\) is 
nondecreasing and continuous\footnote{We are unable to 
establish the continuity of \(\RC{\rno}{\cha}\) at \(\rno=0\) for arbitrary \(\cha\). For finite \(\cha\), Sion's minimax theorem implies the
continuity of \(\RC{\rno}{\cha}\) at \(\rno=0\), see Lemma \ref{lem:finitecapacity}-(\ref{finitecapacity-fc}).} 
in \(\rno\) on \((0,\rnt]\).
\item\label{capacityO-dini} If \(\RC{\rnt}{\cha}<\infty\) for an \(\rnt\in(0,\infty]\),
 then \(\forall\epsilon>0,\exists\) a finite subset  \(\cha'\) of \(\cha\) such that 
\(\RC{\rno}{\cha'}>\RC{\rno}{\cha}-\epsilon\) for all \(\rno\in[\epsilon,\rnt]\).
\end{enumerate}
\end{lemma}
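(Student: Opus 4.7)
The plan is to prove the seven parts in the stated order, since each builds on its predecessors. For (a), Lemma \ref{lem:informationO} says that each $\RMI{\rno}{\mP}{\cha}$ is nondecreasing and continuous in $\rno$ on $[0,\infty]$, so the supremum $\RC{\rno}{\cha}$ is nondecreasing and lower semicontinuous; a nondecreasing LSC function on $[0,\infty]$ is left-continuous, which gives the claim. For (b), I would fix, for each rational $\rno\in[0,\infty]\cap\mathbb{Q}$ and each $k\in\mathbb{N}$, a \pmf $\mP_{\rno,k}\in\pdis{\cha}$ with $\RMI{\rno}{\mP_{\rno,k}}{\cha}\geq\min\{\RC{\rno}{\cha}-\tfrac{1}{k},\,k\}$, and let $\comn{\cha}$ be the (countable) union of the finite supports $\supp{\mP_{\rno,k}}$. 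Then $\RC{\rno}{\comn{\cha}}=\RC{\rno}{\cha}$ at every rational $\rno$, and the left-continuity from (a) extends equality to all orders.

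For (c), the identity $\tfrac{1-\rno}{\rno}\RMI{\rno}{\mP}{\cha}=-\ln\lon{\mmn{\rno,\mP}}$ on $(0,1)$ combined with Lemma \ref{lem:powermeanO}-(\ref{powermeanO-e}) (the norm $\lon{\mmn{\rno,\mP}}$ is nondecreasing in $\rno$) makes $\tfrac{1-\rno}{\rno}\RC{\rno}{\cha}$ nonincreasing on $(0,1)$. Since $\tfrac{1-\rno}{\rno}$ is strictly positive and strictly decreasing while $\RC{\rno}{\cha}$ is nondecreasing and left-continuous, any right-discontinuity of $\RC{\rno}{\cha}$ would force a right-jump of the product in the wrong direction; hence $\RC{\rno}{\cha}$ is continuous on $(0,1)$, and continuity at $\rno=1$ from the left again follows from (a). For (d), the log-convexity of $\lon{\mmn{\rno,\mP}}^{\rno}$ in $\rno$ from Lemma \ref{lem:powermeanO}-(\ref{powermeanO-d}) says $(\rno-1)\RMI{\rno}{\mP}{\cha}=\rno\ln\lon{\mmn{\rno,\mP}}$ is convex in $\rno$ on $(1,\infty)$ for each $\mP$, and a supremum of convex functions is convex. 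For (e), $\RC{\rno}{\cha}\leq\RC{\rnt}{\cha}<\infty$ on $[0,\rnt]$ by (a), while on $(\rnt,1)$ the bound $\tfrac{1-\rno}{\rno}\RC{\rno}{\cha}\leq\tfrac{1-\rnt}{\rnt}\RC{\rnt}{\cha}$ from (c) yields finiteness.

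For (f), the case $\rnt\leq 1$ is absorbed into (c) and (e). When $\rnt>1$: finiteness on $(0,\rnt]$ follows from (a), continuity on $(0,1]$ from (c) and (e), continuity on $(1,\rnt)$ from the standard fact that a finite convex function is continuous on the interior of its effective domain applied to $(\rno-1)\RC{\rno}{\cha}$ via (d), and left-continuity at $\rnt$ from (a). The main obstacle is right-continuity at $\rno=1$: the one-sided limit $R_{1+}\DEF\lim_{\rno\downarrow 1}\RC{\rno}{\cha}$ exists by monotonicity and equals the right derivative of the finite convex function $(\rno-1)\RC{\rno}{\cha}$ at $\rno=1$, and lower semicontinuity only yields $\RC{1}{\cha}\leq R_{1+}$. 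The reverse inequality requires exchanging $\sup_{\mP}$ with $\lim_{\rno\downarrow 1}$, which I would close by invoking the uniform equicontinuity of the family $\{\RMI{\rno}{\mP}{\cha}\}_{\mP\in\pdis{\cha}}$ in $\rno$ on compact subintervals of $(0,\rnt]$ that follows from $\RC{\rnt}{\cha}<\infty$; this is precisely the content of Lemma \ref{lem:finitecapacity}-(\ref{finitecapacity-uecO}), which can be derived independently of (f) using Sibson's identity \eqref{eq:sibson} together with the continuity of the \renyi divergence in $\rno$ given by Lemma \ref{lem:divergence-order}.

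For (g), apply (b) to obtain a countable $\comn{\cha}\subset\cha$ with $\RC{\rno}{\comn{\cha}}=\RC{\rno}{\cha}$ for every $\rno$, enumerate $\comn{\cha}=\{\mW_{1},\mW_{2},\ldots\}$, and set $\cha_{n}\DEF\{\mW_{1},\ldots,\mW_{n}\}$. Every \pmf in $\pdis{\comn{\cha}}$ has finite support and so belongs to some $\pdis{\cha_{n}}$, so $\RC{\rno}{\cha_{n}}\uparrow\RC{\rno}{\comn{\cha}}=\RC{\rno}{\cha}$ pointwise and monotonically in $n$. By (f) applied to the finite set $\cha_{n}$ (noting $\RC{\rnt}{\cha_{n}}\leq\RC{\rnt}{\cha}<\infty$) and to $\cha$ itself, both $\RC{\rno}{\cha_{n}}$ and $\RC{\rno}{\cha}$ are continuous on $(0,\rnt]$; Dini's theorem on the compact subinterval $[\epsilon,\rnt]$ then upgrades the pointwise convergence to uniform convergence, so choosing $n$ sufficiently large produces the desired finite approximating set.
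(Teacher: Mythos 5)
Your overall decomposition tracks the paper's proof closely: parts (a), (b), (d), and (e) are essentially the same arguments, and your variation on (c) (ruling out right-jumps of \(\RC{\rno}{\cha}\) by comparing them against the required monotone decrease of \(\tfrac{1-\rno}{\rno}\RC{\rno}{\cha}\)) is a valid rephrasing of the paper's right-continuity-from-LSC argument. Your argument for (g) is in fact cleaner than the paper's: by pulling the countable \(\comn{\cha}\) from (b) and observing that every \(\mP\in\pdis{\comn{\cha}}\) has finite support and therefore lies in some \(\pdis{\cha_n}\), you obtain the monotone pointwise convergence \(\RC{\rno}{\cha_n}\uparrow\RC{\rno}{\cha}\) directly, without the paper's dyadic-grid construction, and then Dini's theorem on \([\epsilon,\rnt]\) does the rest.

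There is, however, a genuine gap in (f). You correctly identify that the hard step is right-continuity at \(\rno=1\), and that it can be closed by the uniform equicontinuity of \(\{\RMI{\rno}{\mP}{\cha}\}_{\mP\in\pdis{\cha}}\). But your sketch of how that equicontinuity is obtained --- ``using Sibson's identity \eqref{eq:sibson} together with the continuity of the \renyi divergence in \(\rno\) given by Lemma~\ref{lem:divergence-order}'' --- does not work. Sibson's identity gives \(\RMI{\rno}{\mP}{\cha}\leq\RD{\rno}{\mP\mtimes\cha}{\mP\otimes\qmn{1,\mP}}\) with equality at \(\rno=1\), and Lemma~\ref{lem:divergence-order} gives continuity in \(\rno\) of the right-hand side for each fixed \(\mP\); this yields \(\RMI{\rno}{\mP}{\cha}\to\RMI{1}{\mP}{\cha}\) pointwise in \(\mP\), but carries no modulus of continuity that is independent of \(\mP\). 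One cannot repair this with convexity alone either: \((\rno-1)\RMI{\rno}{\mP}{\cha}\) is convex, vanishes at \(1\), and is bounded by \((\rnt-1)\RC{\rnt}{\cha}\) on \([1,\rnt]\), but a family of convex functions with those properties need not have a uniformly bounded gap between \(g(\rno)/(\rno-1)\) and \(g'(1^{+})\) near \(\rno=1\). What actually yields the uniform bound is the quantitative second-order estimate --- the double application of the mean value theorem, the log-convexity of \(\nmn{\rno,\mP}^{\rno}\), and the elementary inequalities \(\dinp^{\rnb}\ln^{2}\dinp\leq(\tfrac{2}{\rnb e})^{2}\) for \(\dinp\in[0,1]\) and \(\ln^{2}\dinp\leq(\tfrac{2}{\epsilon e})^{2}\dinp^{\epsilon}\) for \(\dinp\geq1\) --- leading to a bound of the form \(\RMI{\rno}{\mP}{\cha}\leq\RMI{1}{\mP}{\cha}+\tfrac{8(\rno-1)}{\epsilon^{2}e^{2}}e^{\frac{\rnt-1}{\rnt}\RC{\rnt}{\cha}}\) uniformly in \(\mP\). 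This is precisely the paper's inline argument in (f) and, in a more general form, the proof of Lemma~\ref{lem:finitecapacity}-(\ref{finitecapacity-uecO}). You may of course cite that lemma outright (its proof does not use Lemma~\ref{lem:capacityO}, so there is no circularity), but your proposed one-line derivation of it is not correct.
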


The \renyi information \(\RMI{\rno}{\mP}{\cha}\) is continuous\label{dichotomy} 
in \(\rno\) for any \(\mP\) in \(\pdis{\cha}\) by Lemma \ref{lem:informationO},
however the \renyi capacity \(\RC{\rno}{\cha}\) is not necessarily continuous in \(\rno\). 
Yet, if the \renyi capacity \(\RC{\rno}{\cha}\) is not continuous in \(\rno\) on \((0,\infty]\), 
then it has a very specific shape as a result of Lemma \ref{lem:capacityO}:\label{para:dichotomy}
there exists a \(\rnf\in [1,\infty)\) such that \(\RC{\rno}{\cha}\) is bounded and continuous on \((0,\rnf]\) and 
infinite on \((\rnf,\infty]\).
In order to see why, first note that if \(\RC{1/2}{\cha}=\infty\), then \(\RC{\rno}{\cha}=\infty\) for all 
\(\rno\) in \((0,\infty]\) by 
Lemma \ref{lem:capacityO}-(\ref{capacityO-ilsc},\ref{capacityO-zofintieness}) and  \(\RC{\rno}{\cha}\) is continuous on \((0,\infty]\).
On the other hand, if \(\RC{\infty}{\cha}<\infty\), then \(\RC{\rno}{\cha}\) is continuous on \((0,\infty]\) by 
Lemma \ref{lem:capacityO}-(\ref{capacityO-continuity}).
Hence, \(\RC{\rno}{\cha}\) can fail to be continuous on \((0,\infty]\) only when \(\RC{1/2}{\cha}<\infty\) and \(\RC{\infty}{\cha}=\infty\).
Let \(\chi_{\cha}\) be the set of all orders \(\rno\) for which \(\RC{\rno}{\cha}\) is finite, i.e. 
\begin{align}
\notag
\chi_{\cha}
&\DEF\{\rno\in \reals{+} :\RC{\rno}{\cha}<\infty\}.
\end{align}
\(\chi_{\cha}\) is either of the form \((0,\rnf)\) for a \(\rnf\in [1,\infty]\) or of the form \((0,\rnf]\) 
for a \(\rnf\in[1,\infty)\) because \(\RC{\rno}{\cha}\) is nondecreasing by Lemma \ref{lem:capacityO}-(\ref{capacityO-ilsc}) 
and finite on \((0,1)\) by Lemma {\ref{lem:capacityO}}-(\ref{capacityO-zofintieness}). 
If \(\chi_{\cha}=(0,\rnf)\) for some \(\rnf\in [1,\infty]\), 
then \(\RC{\rno}{\cha}\) is continuous on \((0,\rnf]\)  
by Lemma \ref{lem:capacityO}-(\ref{capacityO-ilsc},\ref{capacityO-continuity}), 
\(\RC{\rno}{\cha}\) is infinite on \([\rnf,\infty]\) by the hypothesis,
and hence \(\RC{\rno}{\cha}\)  is continuous on \((0,\infty]\) by the pasting lemma  \cite[Thm. 18.3]{munkres}.
---Example \ref{eg:infinitebutcontinuous} provides a \(\cha\) for each \(\rnf\in(1,\infty)\) such that
\(\chi_{\cha}=(0,\rnf)\).---
Thus unless  \(\chi_{\cha}=(0,\rnf]\) for some \(\rnf\in [1,\infty)\), \(\RC{\rno}{\cha}\) is 
continuous on \((0,\infty]\).
If \(\chi_{\cha}=(0,\rnf]\), then \(\RC{\rno}{\cha}\) is bounded and continuous on \((0,\rnf]\)
and infinite on \((\rnf,\infty]\).
Hence the \renyi capacity has a unique discontinuity on \((0,\infty]\), which is at \(\rnf\).
---Example \ref{eg:discontinuity} provides a \(\cha\) for each \(\rnf\in[1,\infty)\) such that 
\(\RC{\rno}{\cha}\) has its unique discontinuity at \(\rnf\).\label{bookmarkA}---

\begin{proof}[Proof of Lemma \ref{lem:capacityO}]~
\begin{enumerate}
\item[(\ref{capacityO-ilsc})]
The pointwise supremum of a family of nondecreasing (lower semicontinuous) functions is 
nondecreasing (lower semicontinuous).
Then \(\RC{\rno}{\cha}\) is nondecreasing and lower semicontinuous in \(\rno\) on \([0,\infty]\) 
because \(\RC{\rno}{\cha}\) is the pointwise supremum of the family 
\(\{\!\RMI{\rno}{\mP}{\cha}\!\}_{\mP\in\pdis{\cha}}\)
and \(\RMI{\rno}{\mP}{\cha}\) is nondecreasing and continuous in \(\rno\) for each \(\mP\!\in\!\pdis{\cha}\) 
by Lemma \ref{lem:informationO}.

\item[(\ref{capacityO-countable})]
The \renyi capacity is a nondecreasing and lower semicontinuous function of the order by part (\ref{capacityO-ilsc}). Then 
\begin{align}
\notag
\RC{\rnt}{\cha}
&=\sup\nolimits_{\rno\in (0,\rnt)\cap \rationals{}} \RC{\rno}{\cha}
&
&\forall \rnt\in(0,\infty].
\end{align}
Consequently, \(\RC{\rno}{\cha'}=\RC{\rno}{\cha}\) for all \(\rno\) in \([0,\infty]\) if 
\(\RC{\rno}{\cha'}=\RC{\rno}{\cha}\) for all \(\rno\in\rationals{\geq0}\).
Choose a sequence of \pmf\!\!'s  
\(\{\pma{}{(\rno,\ind)}\}_{\ind\in\integers{+}}\) satisfying
\(\RMI{\rno}{\pma{}{(\rno,\ind)}}{\cha}\uparrow\RC{\rno}{\cha}\)
for each  \(\rno\in\rationals{\geq0}\). 
Let \(\cha'\) be \(\cup_{\rno\in\rationals{\geq0}} \cup_{\ind \in \integers{+}} \supp{\pma{}{(\rno,\ind)}}\).
Then \(\RC{\rno}{\cha'}=\RC{\rno}{\cha}\) for all \(\rno\in\rationals{\geq0}\); 
hence for all \(\rno\) in \([0,\infty]\). 
\(\cha'\) is countable because countable union of countable sets is countable.

\item[(\ref{capacityO-zo})]
The definitions  of \(\RMI{\rno}{\mP}{\cha}\) and \(\RC{\rno}{\cha}\) 
imply
\begin{align}
\notag
\tfrac{1-\rno}{\rno}\RC{\rno}{\cha}
&=\sup\nolimits_{\mP\in \pdis{\cha}}\ln\tfrac{1}{\lon{\mmn{\rno,\mP}}}
&
&\forall \rno\in (0,1).
\end{align} 
Furthermore, \(\lon{\mmn{\rno,\mP}}\) is nondecreasing and continuous in \(\rno\),
by Lemma \ref{lem:powermeanO}-(\ref{powermeanO-e}).
Then \(\tfrac{1-\rno}{\rno}\RC{\rno}{\cha}\) is nonincreasing and lower semicontinuous in \(\rno\) on \((0,1)\) 
because the pointwise supremum of a family of nonincreasing (lower semicontinuous) functions is nonincreasing
(lower semicontinuous). 
Thus \(\tfrac{1-\rno}{\rno}\RC{\rno}{\cha}\) and \(\RC{\rno}{\cha}\) are both continuous from the right on \((0,1)\). 
On the other hand \(\RC{\rno}{\cha}\) and \(\tfrac{1-\rno}{\rno}\RC{\rno}{\cha}\) are both continuous from 
the left on \((0,1)\) because \(\RC{\rno}{\cha}\) is nondecreasing and lower semicontinuous on \((0,1)\) 
by part (\ref{capacityO-ilsc}).
Consequently, \(\RC{\rno}{\cha}\) and \(\tfrac{1-\rno}{\rno}\RC{\rno}{\cha}\) are both continuous on \((0,1)\). 
Furthermore, \(\RC{\rno}{\cha}\) is  continuous on \((0,1]\) because 
\(\RC{\rno}{\cha}\) is nondecreasing and lower semicontinuous by part (\ref{capacityO-ilsc}).

\item[(\ref{capacityO-convexity})] 
\(\lon{\mmn{\rno,\mP}}^{\rno}\!\) is log-convex in \(\rno\) by Lemma  \ref{lem:powermeanO}-(\ref{powermeanO-d}).
On the other hand, the definitions  of \(\RMI{\rno}{\mP}{\cha}\) and \(\RC{\rno}{\cha}\) imply
\begin{align}
\notag
(\rno-1)\RC{\rno}{\cha}
&=\sup\nolimits_{\mP\in \pdis{\cha}} \rno\ln\lon{\mmn{\rno,\mP}}
&
&\forall \rno\in(1,\infty).
\end{align}
Then \((\rno\!-\!1)\!\RC{\rno}{\cha}\) is convex in \(\rno\) because
the pointwise supremum of a family of convex functions is convex.

\item[(\ref{capacityO-zofintieness})]
If \(\RC{\rnt}{\cha}\) is finite, then so is \(\RC{\rno}{\cha}\) 
for all \(\rno\) in \([0,\rnt]\) because \(\RC{\rno}{\cha}\) is nondecreasing 
in \(\rno\) by part (\ref{capacityO-ilsc}).
Furthermore, if \(\RC{\rnt}{\cha}\) is finite, then so is \(\RC{\rno}{\cha}\) 
for all \(\rno\) in \([\rnt,1)\) because \(\tfrac{1-\rno}{\rno}\RC{\rno}{\cha}\)
is nonincreasing in \(\rno\) on \((0,1)\) by part (\ref{capacityO-zo}).

\item[(\ref{capacityO-continuity})]
\(\RC{\rno}{\cha}\) is continuous in \(\rno\) on \((0,1]\) by part (\ref{capacityO-zo}). 
Thus we only need to prove the claim for the case when \(\rnt>1\) on \([1,\rnt]\). 
We prove the continuity of \(\RC{\rno}{\cha}\) in \(\rno\) 
first on \((1,\rnt]\), and then from the right at \(\rno=1\). 
If \(\RC{\rnt}{\cha}\) is finite for an \(\rnt\) in \((1,\infty)\),
then \((\rno-1)\RC{\rno}{\cha}\) 
is finite and convex in \(\rno\) on \([1,\rnt]\) by parts (\ref{capacityO-ilsc}) and 
(\ref{capacityO-convexity}). 
Then the continuity of \((\rno-1)\RC{\rno}{\cha}\),
and hence the continuity of \(\RC{\rno}{\cha}\),
in \(\rno\) on \((1,\rnt)\) follows from \cite[Thm. 6.3.3]{dudley}.
On the other hand \(\RC{\rno}{\cha}\) is continuous from the left because 
\(\RC{\rno}{\cha}\) is nondecreasing and lower semicontinuous
in \(\rno\) by part (\ref{capacityO-ilsc}).
Hence, \(\RC{\rno}{\cha}\) is continuous in \(\rno\) on \((1,\rnt]\).

If \(\RC{\infty}{\cha}\) is finite, then \(\RC{\rnt}{\cha}\) is finite for all \(\rnt\!\in\!\reals{+}\) by part (\ref{capacityO-ilsc})
and \(\RC{\rno}{\cha}\) is continuous in \(\rno\) on \(\reals{+} \) because the continuity of a function on a collection of open set implies 
its continuity on their union, \cite[Thm. 18.2]{munkres}.
This implies the continuity  on \((0,\infty]\) because 
\(\RC{\rno}{\cha}\) is nondecreasing and lower semicontinuous 
in \(\!\rno\!\) by  part (\ref{capacityO-ilsc}).

To prove the continuity of \(\RC{\rno}{\cha}\) from the right at one, we first prove that
\(\{\RMI{\rno}{\mP}{\cha}\}_{\mP\in\pdis{\cha}}\) is equicontinuous from the right at \(\rno\!=\!1\). 
The definitions of \(\dmn{\rno,\mP}\) and \(\RMI{\rno}{\mP}{\cha}\) 
given in  \eqref{eq:def:dpowermean} and \eqref{eq:def:information}  
and Lemma \ref{lem:powermeandensityO}-(\ref{powermeandensityO-b}) 
imply
\begin{align}
\notag
\RMI{\rno}{\mP}{\cha}-\RMI{1}{\mP}{\cha}
&=\tfrac{\rno \ln \lon{\mmn{\rno,\mP}}-(\rno-1)\lon{\dmn{1,\mP}}}{\rno-1}
\end{align}
for all \(\rno\) in \((1,\rnt]\) and \(\mP\) in \(\pdis{\cha}\).
The expression in the numerator is differentiable in \(\rno\) because \(\lon{\mmn{\rno,\mP}}\) 
is differentiable by Lemma \ref{lem:powermeanO}-(\ref{powermeanO-b}).
Furthermore, \(\der{}{\rno}\lon{\mmn{\rno,\mP}}=\lon{\dmn{\rno,\mP}}\) by 
Lemma \ref{lem:powermeanO}-(\ref{powermeanO-b}) 
and the numerator is zero at \(\rno=1\). 
Then by the mean value theorem  \cite[5.10]{rudin}, there exists a \(\rnf\in (1,\rno)\) such that
\begin{align}
\notag
\RMI{\rno}{\mP}{\cha}-\RMI{1}{\mP}{\cha}
&=\ln \lon{\mmn{\rnf,\mP}}+\rnf \tfrac{\lon{\dmn{\rnf,\mP}}}{\lon{\mmn{\rnf,\mP}}}-\lon{\dmn{1,\mP}}.
\end{align}
The expression on the right hand side is differentiable in \(\rnf\)  because  
 \(\der{}{\rnf}\lon{\mmn{\rnf,\mP}}=\lon{\dmn{\rnf,\mP}}\) and \(\der{}{\rnf}\lon{\dmn{\rnf,\mP}}=\ddmn{\rnf,\mP}(\outS)\) by 
Lemma \ref{lem:powermeanO}-(\ref{powermeanO-b},\ref{powermeanO-c}). 
On the other hand, \(\lon{\mmn{\rnf,\mP}}>0\) for \(\rnf\in\reals{+}\) and \(\lon{\mmn{1,\mP}}=1\) by 
Lemma \ref{lem:powermeanequivalence}-(\ref{powermeanequivalence-a}).
Then the expression on the right hand side is zero at \(\rnf=1\).
Hence, using the mean value theorem \cite[5.10]{rudin} once again we
can conclude that there exists a \(\rnb\in (1,\rnf)\) such that
\begin{align}
\label{eq:UECatone-A}
\tfrac{\RMI{\rno}{\mP}{\cha}-\RMI{1}{\mP}{\cha}}{\rnf-1}
&=2\tfrac{\lon{\dmn{\rnb,\mP}}}{\lon{\mmn{\rnb,\mP}}}
+\rnb\tfrac{\ddmn{\rnb,\mP}(\outS)}{\lon{\mmn{\rnb,\mP}}}
-\rnb  \tfrac{\lon{\dmn{\rnb,\mP}}^{2}}{\lon{\mmn{\rnb,\mP}}^{2}}.
\end{align}
On the other hand, using the definition of \(\ddmn{\rno,\mP}\)
given in  \eqref{eq:def:ddpowermean} together with  
Lemma \ref{lem:powermeandensityO}-(\ref{powermeandensityO-b}) 
and \(\rnb>1\) we get
\begin{align}
\notag
\!\tfrac{\ddmn{\rnb,\mP}(\outS)}{\lon{\mmn{\rnb,\mP}}}
&\leq \EXS{\qmn{\rnb,\mP}}{\sum\nolimits_{\mW}\!
	\tfrac{\tpn{\rnb}(\mW|\dout)}{\rnb^3}
	\ln^{2}\tfrac{\tpn{\rnb}(\mW|\dout)}{\mP(\mW)}}
-\tfrac{2\lon{\dmn{\rnb,\mP}}}{\rnb \lon{\mmn{\rnb,\mP}}}.
\end{align}
Then using Lemma \ref{lem:powermeandensityO}-(\ref{powermeandensityO-a})
and \eqref{eq:UECatone-A} we get
\begin{align}
\notag
\hspace{-.2cm}\tfrac{\RMI{\rno}{\mP}{\cha}-\RMI{1}{\mP}{\cha}}{\rnf-1}
&\!\leq\!\EXS{\qmn{\rnb,\mP}\!}{\!\sum\nolimits_{\mW}\!
	\tfrac{\tpn{\rnb}(\mW|\dout)}{\rnb^2}
	\ln^{2}\tfrac{\tpn{\rnb}(\mW|\dout)}{\mP(\mW)}\!}
\\
\notag
&\!=\!\EXS{\qmn{\rnb,\mP}\!}{\!\sum\nolimits_{\mW}\!\mP(\mW)
	\left[\!\tfrac{\tpn{1}(\mW|\dout)}{\mP(\mW)\nmn{\rnb,\mP}}\!\right]^{\rnb} 
	\!\ln^2\!\tfrac{\tpn{1}(\mW|\dout)}{\mP(\mW) \nmn{\rnb,\mP}}\!}.
\end{align}
Recall that \(\dinp^{\rnb}\ln^{2} \dinp\leq (\tfrac{2}{\rnb e})^{2}\)
for all \(\dinp\in [0,1]\) and \(\beta>0\)
and
\(\ln^{2} \dinp \leq (\tfrac{2}{\epsilon e})^{2} \dinp^{\epsilon}\)
for all \(\dinp\geq 1\) and \(\epsilon>0\).
Thus
\begin{align}
\notag
\tfrac{\RMI{\rno}{\mP}{\cha}-\RMI{1}{\mP}{\cha}}{\rnf-1}
&\leq
\EXS{\qmn{\rnb,\mP}}{(\tfrac{2}{\rnb e})^2+(\tfrac{2}{\epsilon e})^2
	(\tfrac{\nmn{\rnb+\epsilon,\mP}}{\nmn{\rnb,\mP}})^{\rnb+\epsilon}}.
\end{align}
Since \((\nmn{\rno,\mP})^{\rno}\) is log-convex in \(\rno\) by Lemma \ref{lem:powermeandensityO}-(\ref{powermeandensityO-c}),
\begin{align}
\notag
(\nmn{\rnb+\epsilon,\mP})^{\rnb+\epsilon} 
&\leq (\nmn{\rnb,\mP})^{\rnb+\epsilon-1} \nmn{\frac{\rnb}{1-\epsilon},\mP}
&
&\forall \epsilon\in (0,1), \rnb>1.
\end{align}
Then using the fact that \(\lon{\mmn{\rnb,\mP}}\geq \lon{\mmn{1,\mP}}=1\) we get
\begin{align}
\notag
\tfrac{\RMI{\rno}{\mP}{\cha}-\RMI{1}{\mP}{\cha}}{\rnf-1}
&\leq 
\left[(\tfrac{2}{\rnb e})^2+(\tfrac{2}{\epsilon e})^2  \lon{\mmn{\frac{\rnb}{1-\epsilon},\mP}}\right].
\end{align}
Note that \(\lon{\mmn{\frac{\rnb}{1-\epsilon},\mP}}\leq \lon{\mmn{\frac{\rno}{1-\epsilon},\mP}}\)
because \(\lon{\mmn{\rno,\mP}}\) is nondecreasing in \(\rno\) by Lemma \ref{lem:powermeanO}-(\ref{powermeanO-e}).
Then the definition of \renyi information, \(\rnb>1\), and \(\rnf\in (1,\rno) \) 
imply for any \(\epsilon\in (0,\tfrac{\rnt-1}{\rnt})\), 
\(\rno\in [1,(1-\epsilon)\rnt]\) and \(\mP\in\pdis{\cha}\) that
\begin{align}
\notag
\RMI{\rno}{\mP}{\cha}-\RMI{1}{\mP}{\cha}
&\leq \tfrac{8(\rno-1)}{\epsilon^{2} e^{2}} e^{\frac{\rno-1+\epsilon}{\rno}\RMI{\frac{\rno}{1-\epsilon}}{\mP}{\cha}}
\\
\label{eq:UECatone}
&\leq  \tfrac{8(\rno-1)}{\epsilon^{2}e^{2}} e^{\frac{\rnt-1}{\rnt}\RMI{\rnt}{\mP}{\cha}}.
\end{align}
Then for any \(\epsilon\in (0,\tfrac{\rnt-1}{\rnt})\) and \(\rno\in [1,(1-\epsilon)\rnt]\) we have 
\begin{align}
\notag
\RC{\rno}{\cha}
&\leq \sup\nolimits_{\mP \in \pdis{\cha}} \RMI{1}{\mP}{\cha}+\tfrac{8(\rno-1)}{\epsilon^{2}e^{2}}e^{\frac{\rnt-1}{\rnt}\RMI{\rnt}{\mP}{\cha}} 
\\
\notag
&\leq \RC{1}{\cha}+\tfrac{8(\rno-1)}{\epsilon^{2}e^{2}}e^{\frac{\rnt-1}{\rnt}\RC{\rnt}{\cha}}.
\end{align}
Hence, \(\RC{\rno}{\cha}\) is continuous from the right at \(\rno=1\) if \(\RC{\rnt}{\cha}<\infty\) for an \(\rnt>1\).

\item[(\ref{capacityO-dini})] 
Let us first consider  \(\rnt\in\reals{+} \) case and
construct a sequence \(\{\cha_{\ind}\}_{\ind\in\integers{+}}\) of finite subset of \(\cha\),  such that 
\(\RC{\rno}{\cha_{\ind}}\!\uparrow\!\RC{\rno}{\cha}\) for all \(\rno\in(0,\rnt]\).
Choose a \(\pma{}{(\ind,\jnd)}\) in \(\pdis{\cha}\)  such that  
\(\RMI{\jnd2^{-\ind}}{\pma{}{(\ind,\jnd)}}{\cha}\geq\RC{\jnd2^{-\ind}}{\cha}-\sfrac{1}{2^{\ind}}\)
for each \(\ind\in \integers{+}\) and non-negative integer \(\jnd\) not exceeding \(2^{\ind}\rnt\).
Let \(\cha_{0}\) be the empty set and \(\cha_{\ind}\) be
\(\cha_{\ind-1}\cup_{\jnd=0}^{\lfloor 2^{\ind}\rnt \rfloor} \supp{\pma{}{(\ind,\jnd)}}\)
for each \(\ind\in \integers{+}\). Then
\begin{align}
\notag
\RC{\rno}{\cha_{\ind}}
&\geq \RC{\rno}{\cha_{\ind-1}} 
&&\forall \rno \in[0,\infty],~\ind\in\integers{+} 
\\
\notag
\RC{\rno}{\cha_{\ind}}
&\geq \RC{\rno}{\cha}-\sfrac{1}{2^{\ind}} 
&&\forall \rno  \in \{\tfrac{0}{2^{\ind}},\ldots,\tfrac{\lfloor \rnt 2^{\ind} \rfloor}{2^{\ind}}\},
~\ind\in\integers{+}.
\end{align}
Then \(\RC{\rno}{\cha_{\ind}}\!\uparrow\!\RC{\rno}{\cha}\) for all dyadic rational numbers \(\rno\) less than \(\rnt\). 
Therefore \(\RC{\rno}{\cha_{\ind}}\!\uparrow\!\RC{\rno}{\cha}\) for all \(\rno\in [0,\rnt]\)
because the \renyi capacity is nondecreasing and lower semicontinuous.
Since \(\RC{\rnt}{\cha_{\ind}}\leq \RC{\rnt}{\cha}<\!\infty\),  \(\RC{\rno}{\cha_{\ind}}\)'s and 
\(\RC{\rno}{\cha}\) are continuous in \(\rno\) on \((0,\rnt]\) by part (\ref{capacityO-continuity}). 
Then as a result of Dini's theorem \cite[2.4.10]{dudley}, \(\{\RC{\rno}{\cha_{\ind}}\}_{\ind\in\integers{+}}\) 
converges to \(\RC{\rno}{\cha}\) uniformly on \([\epsilon,\rnt]\), i.e. for all \(\varepsilon>0\), there  exists 
an \(\ind\)  such that  
\(\sup_{\rno\in [\epsilon,\rnt]} \abs{\RC{\rno}{\cha}-\RC{\rno}{\cha_{\jnd}}}<\varepsilon\) for all \(\jnd>\ind\). 

For \(\rnt=\infty\) case,
let \(\kappa_{\ind}\) be the smallest integer satisfying \(\RC{\infty}{\cha}\leq \RC{\sfrac{\kappa_{\ind}}{2^{\ind}}}{\cha}+\sfrac{1}{2^{\ind}}\)
for each \(\ind\in\integers{+}\).
We employ the construction described above for  \(\jnd\)'s not exceeding \(\kappa_{\ind}\) rather than 
\(\jnd\)'s not exceeding \(\lfloor2^{\ind}\rnt\rfloor\).
\end{enumerate}
\vspace{-.4cm}
\end{proof}

\vspace{-.6cm}
\subsection[Finiteness \(\RC{\rno}{\cha}\)]{Finiteness of the \renyi Capacity}\label{sec:capacityfiniteness}
If \(\cha\) is a finite set, then \(\pdis{\cha}\) is compact for the total variation topology 
and various results relying on the compactness can be invoked while 
analyzing the \renyi information. 
For example if \(\cha\) is finite, 
then the compactness of \(\pdis{\cha}\) 
and Sion's minimax theorem imply the continuity of the \renyi capacity 
in the order on \([0,\infty]\), see 
Lemma \ref{lem:finitecapacity}-(\ref{finitecapacity-fc}).
When \(\cha\) is an infinite set, however, \(\pdis{\cha}\) is not compact. 
The finiteness of the \renyi capacity emerges as a shrewd substitute for the compactness of 
\(\pdis{\cha}\) that allows us to assert the continuity of the \renyi information, 
see Lemma \ref{lem:finitecapacity}-(\ref{finitecapacity-uecP},\ref{finitecapacity-uecO}).

Lemma \ref{lem:finitecapacity}-(\ref{finitecapacity-a}-\ref{finitecapacity-d}) 
characterize the finiteness of the order \(\rno\) \renyi capacity in terms of the 
properties of the order \(\rno\) mean measure or \renyi information. 
These equivalent conditions might be easier to confirm or reject for certain 
\(\cha\)'s. 
The equicontinuity results given in 
Lemma \ref{lem:finitecapacity}-(\ref{finitecapacity-uecP},\ref{finitecapacity-uecO}) 
imply that if \(\gamma_{1}\leq\RMI{\rno}{\mP}{\cha}\leq \gamma_{2}\) for all 
\(\mP\in\cset\) for some \(\rno\) in \((0,\rnt)\) and \(\gamma_{1}\) and \(\gamma_{2}\) 
in \([0,\RC{\rnt}{\cha}]\), then for any \(\epsilon>0\) there exists a \(\delta>0\) 
such that \(\gamma_{1}-\epsilon\leq\RMI{\rnf}{\mS}{\cha}\leq\gamma_{2}+\epsilon\) 
for all \(\rnf\) in \([\rno-\delta,\rno+\delta]\) and \(\mS\) in \(\pdis{\cha}\)
satisfying \(\inf_{\mP\in\cset} \lon{\mP-\mS}\leq \delta\).
This observation (or its variants, which can be obtained by employing either
part (\ref{finitecapacity-uecP}) or (\ref{finitecapacity-uecO}) on its own)
might be helpful when we are trying to bound the \renyi information 
or a related function uniformly over the orders and priors through 
a case by case analysis on a subset of \(\pdis{\cha}\) or on its
neighborhoods. 

\begin{lemma}\label{lem:finitecapacity}
Let \(\cha\) be a subset of \(\pmea{\outA}\).
\begin{enumerate}[(a)]
\item\label{finitecapacity-a} For \(\rno \in (0,1)\), \(\RC{\rno}{\cha}=\infty\)  iff 
there exists a sequence 
\(\{\pmn{\ind}\}_{\ind \in \integers{+}} \subset \pdis{\cha}\) such that \(\lim_{\ind \to \infty} \lon{\mmn{\rno,\pmn{\ind}}}=0\).
\item\label{finitecapacity-b} For \(\rno \in (1,\infty]\), \(\RC{\rno}{\cha}=\infty\) iff 
there exists a sequence \(\{\pmn{\ind}\}_{\ind \in \integers{+}} \subset \pdis{\cha}\) such that \(\lim_{\ind \to \infty} \lon{\mmn{\rno,\pmn{\ind}}}=\infty\).
\item\label{finitecapacity-c} 
For \(\rno\!\in\!(1,\infty)\), \(\RC{\rno}{\cha}\!<\!\infty\) iff  \(\mmn{\rno,\mP}\) 
is uniformly continuous in \(\mP\) for the total variation 
topology.\footnote{For \(\rno\in(0,1]\),
	\(\mmn{\rno,\mP}\) is uniformly continuous in \(\mP\), even when \(\RC{\rno}{\cha}=\infty\),
	because \(\mmn{\rno,\mP}\) is Lipschitz continuous on \(\pdis{\pmea{\outA}}\)
	by Lemma \ref{lem:powermeanP}-(\ref{powermeanP-d}).}
\item\label{finitecapacity-d} For \(\rno\!\in\![1,\infty]\), \(\RC{\rno}{\cha}\!<\!\infty\) iff  
\(\RMI{\rno}{\mP}{\cha}\) is continuous in \(\mP\) on \(\pdis{\cha}\)
for the total variation topology.
\item\label{finitecapacity-uecP} 
For  \(\rnt\in\reals{\geq0}\), if \(\RC{\rnt}{\cha}<\infty\),  
then \(\{\RMI{\rno}{\mP}{\cha}\}_{\rno\in[0,\rnt]}\) is uniformly 
equicontinuous,\footnote{For \(\rno \in (0,1)\), Lemma \ref{lem:informationP}-(\ref{informationP-a}) has established 
the continuity of  \(\RMI{\rno}{\mP}{\cha}\) in \(\mP\) without assuming \(\RC{\rno}{\cha}\) to be finite;
but the continuity is not uniform.}
in \(\mP\) on \(\pdis{\cha}\). 

\item\label{finitecapacity-uecO} 
For \(\rnt\in\reals{+}\), if \(\RC{\rnt}{\cha}<\infty\), then \(\{\RMI{\rno}{\mP}{\cha}\}_{\mP\in\pdis{\cha}}\)
is uniformly equicontinuous in \(\rno\) on every compact subset of 
\((0,\rnt)\).\footnote{In order to prove the uniform equicontinuity on compact subsets of \((0,\rnt)\), we prove 
the following stronger statement:  On every compact subset of \((0,\rnt)\), 
\(\{\RMI{\rno}{\mP}{\cha}\}_{\mP\in\pdis{\cha}}\) is a family of Lipschitz continuous functions of \(\rno\) with 
a common Lipschitz constant, see \eqref{eq:lem:uecO}.}
\item\label{finitecapacity-fc} If \(\abs{\cha}\!<\!\infty\), then \(\RC{\rno}{\cha}\) is  
nondecreasing and continuous in \(\rno\) on \([0,\infty]\).
\end{enumerate}
\end{lemma}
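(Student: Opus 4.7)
The plan is to handle the seven claims in three groups. Parts (\ref{finitecapacity-a}) and (\ref{finitecapacity-b}) are direct consequences of Definition \ref{def:information}: for $\rno\!\in\!(0,1)$, $\RMI{\rno}{\mP}{\cha}\!=\!\frac{\rno}{1-\rno}\ln\frac{1}{\lon{\mmn{\rno,\mP}}}$, so $\RC{\rno}{\cha}\!=\!\infty$ iff $\inf_{\mP\in \pdis{\cha}} \lon{\mmn{\rno,\mP}}\!=\!0$; for $\rno\!\in\!(1,\infty)$, $\RMI{\rno}{\mP}{\cha}\!=\!\frac{\rno}{\rno-1}\ln\lon{\mmn{\rno,\mP}}$ with $\lon{\mmn{\rno,\mP}}\!\geq\!1$ by Lemma \ref{lem:powermeanO}-(\ref{powermeanO-e}), so $\RC{\rno}{\cha}\!=\!\infty$ iff $\sup_{\mP} \lon{\mmn{\rno,\mP}}\!=\!\infty$; the $\rno\!=\!\infty$ case is analogous. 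Part (\ref{finitecapacity-fc}) then follows from the compactness of $\pdis{\cha}$ when $\abs{\cha}\!<\!\infty$: $\RMI{\rno}{\mP}{\cha}$ is jointly continuous in $(\rno,\mP)$, the family is uniformly bounded on compacta, and the concavity in $\mP$ from Lemma \ref{lem:informationP} lets Sion's minimax theorem exchange the supremum over $\mP$ with the limit in $\rno$.

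For part (\ref{finitecapacity-c}), the forward direction combines Lemma \ref{lem:powermeanP}-(\ref{powermeanP-e}) with the uniform bound $\lon{\mmn{\rno,\mS}}\!\leq\!e^{\frac{\rno-1}{\rno}\RC{\rno}{\cha}}$ obtained from the definitions to conclude
\begin{align}
\notag
\lon{\mmn{\rno,\pmn{1}}\!-\!\mmn{\rno,\pmn{2}}}
&\!\leq\! 2\bigl(\tfrac{1}{2}\lon{\pmn{1}\!-\!\pmn{2}}\bigr)^{\sfrac{1}{\rno}} e^{\frac{\rno-1}{\rno}\RC{\rno}{\cha}}.
\end{align}
The reverse direction is by contradiction: if $\RC{\rno}{\cha}\!=\!\infty$, part (\ref{finitecapacity-b}) furnishes $\{\pmn{\ind}\}$ with $\lon{\mmn{\rno,\pmn{\ind}}}\!\to\!\infty$; fix $\mS\!\in\!\pdis{\cha}$, put $\widetilde{\pmn{\ind}}\!=\!(1\!-\!\epsilon_{\ind})\mS\!+\!\epsilon_{\ind}\pmn{\ind}$ with $\epsilon_{\ind}\!=\!\lon{\mmn{\rno,\pmn{\ind}}}^{-\sfrac{1}{2}}$, and use concavity (Lemma \ref{lem:powermeanP}-(\ref{powermeanP-b})) to obtain $\lon{\mmn{\rno,\widetilde{\pmn{\ind}}}}\!\geq\!\epsilon_{\ind}\lon{\mmn{\rno,\pmn{\ind}}}\!\to\!\infty$ while $\lon{\widetilde{\pmn{\ind}}\!-\!\mS}\!\leq\!2\epsilon_{\ind}\!\to\!0$, contradicting uniform continuity. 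Part (\ref{finitecapacity-d}) is the analogous statement at the level of $\RMI{\rno}{\mP}{\cha}$: in the forward direction $\lon{\mmn{\rno,\mP}}\!\geq\!1$ for $\rno\!\geq\!1$ allows $\ln$ to transfer continuity from $\mmn{\rno,\mP}$ to $\RMI{\rno}{\mP}{\cha}$, with the $\rno\!=\!1$ and $\rno\!=\!\infty$ endpoints read directly from Definition \ref{def:information}, and the reverse direction uses the same convex-combination construction, now invoking the concavity of $\RMI{\rno}{\mP}{\cha}$ in $\mP$ from Lemma \ref{lem:informationP}-(\ref{informationP-b}).

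The substantive parts are (\ref{finitecapacity-uecP}) and (\ref{finitecapacity-uecO}); the main obstacle for both is maintaining uniformity across the order range, particularly near the boundary points $\rno\!=\!0$ and $\rno\!=\!1$ where the form of Definition \ref{def:information} changes. For (\ref{finitecapacity-uecP}), I would couple the bound from part (\ref{finitecapacity-c}) with the Lipschitz estimate of Lemma \ref{lem:powermeanP}-(\ref{powermeanP-d}) to obtain a modulus of continuity for $\mmn{\rno,\mP}$ in $\mP$ that is uniform in $\rno\!\in\![0,\rnt]$, and then transport it through $\ln$ using the lower bound $\lon{\mmn{\rno,\mP}}\!\geq\!e^{-\frac{1-\rno}{\rno}\RC{\rnt}{\cha}}$ on $(0,1)$ together with $\lon{\mmn{\rno,\mP}}\!\geq\!1$ on $[1,\rnt]$; the apparent singularity $\frac{1}{1-\rno}$ near $\rno\!=\!1$ is neutralized by the fact that $\lon{\mmn{\rno,\mP}}\!-\!1$ vanishes at $\rno\!=\!1$ at a rate controlled by $\der{}{\rno}\lon{\mmn{\rno,\mP}}\!=\!\lon{\dmn{\rno,\mP}}$ from Lemma \ref{lem:powermeanO}-(\ref{powermeanO-b}), whose norm is uniformly bounded via \eqref{eq:information:der-A}; the $\rno\!=\!0$ endpoint is then handled through the monotone limit \eqref{eq:lem:informationOlimz}. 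For (\ref{finitecapacity-uecO}), I would establish a common Lipschitz constant in $\rno$ on any compact $[\rno_{0},\rno_{1}]\!\subset\!(0,\rnt)$ by bounding $\der{}{\rno}\RMI{\rno}{\mP}{\cha}$ uniformly in $\mP$: combining \eqref{eq:information:der-B} with the log-convexity of $(\nmn{\rno,\mP})^{\rno}$ from Lemma \ref{lem:powermeandensityO}-(\ref{powermeandensityO-c}), and invoking the elementary estimates $\dinp^{\rno}\ln^{2}\dinp\!\leq\!(\sfrac{2}{\rno e})^{2}$ for $\dinp\!\leq\!1$ together with a power-law majorant for $\dinp\!\geq\!1$, lets one dominate the derivative by a constant multiple of $\lon{\mmn{\rnf,\mP}}$ for some $\rnf\!\in\!(\rno_{1},\rnt)$, and $\lon{\mmn{\rnf,\mP}}\!\leq\!e^{\frac{\rnf-1}{\rnf}\RC{\rnt}{\cha}}$ closes the argument. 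This last step is what I expect to require the most care, because all the delicate bounds on the derivatives of $\nmn{\rno,\mP}$ developed in Lemma \ref{lem:powermeandensityO} must be tracked simultaneously and uniformly in $\mP$.
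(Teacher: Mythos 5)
Your handling of parts (\ref{finitecapacity-a}), (\ref{finitecapacity-b}), (\ref{finitecapacity-c}), (\ref{finitecapacity-d}), and (\ref{finitecapacity-fc}) is correct and closely tracks the paper's own proof (your choice of weights $\epsilon_{\ind}=\lon{\mmn{\rno,\pmn{\ind}}}^{-1/2}$ in (\ref{finitecapacity-c}) in place of the paper's $\sfrac{1}{\ind}$ is a cosmetic variation that works equally well, and your bound $2\bigl(\tfrac{1}{2}\lon{\pmn{1}-\pmn{2}}\bigr)^{\sfrac{1}{\rno}}e^{\frac{\rno-1}{\rno}\RC{\rno}{\cha}}$ agrees with the paper's). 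Your sketch of (\ref{finitecapacity-uecO}) is also essentially the paper's argument: split $(0,\rnt)$ into pieces that avoid and cover $\rno=1$, use log-convexity of $\lon{\mmn{\rno,\mP}}^{\rno}$ directly on the pieces away from $1$, and on the piece around $1$ use the mean value theorem together with the elementary majorants $\dinp^{\rno}\ln^{2}\dinp\leq(\sfrac{2}{\rno e})^{2}$ and a power-law bound for $\dinp\geq 1$ to dominate the derivative by a multiple of $\lon{\mmn{\rnf,\mP}}\leq e^{\frac{\rnf-1}{\rnf}\RC{\rnt}{\cha}}$.

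Your proposed route for (\ref{finitecapacity-uecP}), however, has a genuine gap, and it is at $\rno\approx 0$, not (only) at $\rno\approx 1$ as you anticipated. The ``transport through $\ln$'' argument uses, for $\rno\in(0,1)$, the Lipschitz estimate $\lon{\mmn{\rno,\pmn{1}}-\mmn{\rno,\pmn{2}}}\leq\tfrac{1}{\rno}\lon{\pmn{1}-\pmn{2}}$ from Lemma \ref{lem:powermeanP}-(\ref{powermeanP-d}) and the lower bound $\lon{\mmn{\rno,\mP}}\geq e^{-\frac{1-\rno}{\rno}\RC{\rno}{\cha}}$. Combining these through the mean value inequality for $\ln$ gives
\begin{align}
\notag
\abs{\RMI{\rno}{\pmn{1}}{\cha}-\RMI{\rno}{\pmn{2}}{\cha}}
&\leq \tfrac{\lon{\pmn{1}-\pmn{2}}}{1-\rno}\,e^{\frac{1-\rno}{\rno}\RC{\rno}{\cha}}.
\end{align}
The prefactor $e^{\frac{1-\rno}{\rno}\RC{\rno}{\cha}}$ diverges as $\rno\downarrow 0$ whenever $\RC{0}{\cha}>0$, because $\tfrac{1-\rno}{\rno}\RC{\rno}{\cha}$ is nonincreasing on $(0,1)$ by Lemma \ref{lem:capacityO}-(\ref{capacityO-zo}) and hence bounded away from zero near $\rno=0$. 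So your modulus of continuity is \emph{not} uniform in $\rno$ on $[0,\rnt]$; neither the Lipschitz constant for the mean measure nor the lower bound on its norm is controlled as $\rno\to 0$. The appeal to the pointwise limit \eqref{eq:lem:informationOlimz} does not repair this, since that identity concerns continuity in $\rno$ for a \emph{fixed} $\mP$ and says nothing about uniformity over $\mP$. Your plan near $\rno=1$ also imports the derivative machinery from (\ref{finitecapacity-uecO}) into the proof of (\ref{finitecapacity-uecP}), which is more entangled than necessary. The paper avoids both difficulties entirely via the common-component decomposition $\pmn{1}=(1-\delta)\smn{\wedge}+\delta\smn{1}$, $\pmn{2}=(1-\delta)\smn{\wedge}+\delta\smn{2}$ from Lemma \ref{lem:powermeanP}-(\ref{powermeanP-c}): expanding $\RMI{\rno}{\pmn{j}}{\cha}$ in terms of $\smn{\wedge}$ through \eqref{eq:sibson} and \eqref{eq:lem:information:defA} produces upper and lower bounds in which $\RMI{\rno}{\smn{\wedge}}{\cha}$ cancels when the two priors are compared, leaving an error controlled by $\delta=\tfrac{1}{2}\lon{\pmn{1}-\pmn{2}}$ and $\RC{\rnt}{\cha}$ alone through explicitly monotone auxiliary functions $\fX(\delta,\rno,\gamma)$, $\gX(\delta,\rno,\gamma)$; the dependence on $\rno$ is then eliminated by pushing $\rno$ to the extremes of its range via that monotonicity, giving the uniform bound \eqref{eq:lem:uecP}.
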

Proof of Lemma \ref{lem:finitecapacity} is deferred to Appendix \ref{sec:deferred-proofs}.
For \(\cha\)'s with infinite \(\RC{\rno}{\cha}\), the proof of part (\ref{finitecapacity-d}) 
establishes the discontinuity at every \(\mP\) in \(\pdis{\cha}\). 
For order one the discontinuity of \(\RMI{1}{\mP}{\cha}\) was observed by
Ho and Yeung \cite[Thm. 3]{hoY09} for a different topology for some \(\cha\).
For the same topology they established the continuity of \(\RMI{1}{\mP}{\cha}\)
whenever \(\outS\) is finite \cite[Corollary 8]{hoY09}.
They, however, did not characterize the conditions for 
the continuity of \(\RMI{1}{\mP}{\cha}\) in their framework.

\section{The \renyi Center}\label{sec:center}
The primary focus of this section is Theorem \ref{thm:minimax}, given in the following,
and its applications.
In \S\ref{sec:minimax} we prove Theorem \ref{thm:minimax} and  discuss 
alternative proofs based on Sion's minimax theorem. 
In \S\ref{sec:ervenharremoes} we first prove a lower bound on \(\RRR{\rno}{\cha}{\mQ}\),
i.e. the van Erven-\harremoes bound, 
then we use this bound to establish the continuity of the \renyi center as a function of the order.
\S\ref{sec:miscellaneous} is composed of various applications of Theorem \ref{thm:minimax} 
and the van Erven-\harremoes bound.
\begin{theorem}\label{thm:minimax}
For any  \(\rno \in  (0,\infty]\) and \(\cha \subset \pmea{\outA}\)  
\begin{align}
\RC{\rno}{\cha}
\label{eq:thm:minimax:capacity}
&=\sup\nolimits_{\mP \in \pdis{\cha}}\inf\nolimits_{\mQ \in \pmea{\outA}} \RD{\rno}{\mP \mtimes \cha}{\mP \otimes \mQ}
\\
\label{eq:thm:minimax}
&=
\inf\nolimits_{\mQ \in \pmea{\outA}} \sup\nolimits_{\mP \in \pdis{\cha}} \RD{\rno}{\mP \mtimes \cha}{\mP \otimes \mQ}
\\
\label{eq:thm:minimaxradius}
&=\inf\nolimits_{\mQ\in\pmea{\outA}}\sup\nolimits_{\mW \in \cha} \RD{\rno}{\mW}{\mQ}.
\end{align}
If \(\RC{\rno}{\cha}<\infty\), then there exists a unique \(\qmn{\rno,\cha}\) in 
\(\pmea{\outA}\), called the order \(\rno\) \renyi center, such that
\begin{align}
\label{eq:thm:minimaxcenter}
\RC{\rno}{\cha}
&=\sup\nolimits_{\mP \in \pdis{\cha}} \RD{\rno}{\mP \mtimes \cha}{\mP \otimes \qmn{\rno,\cha}}
\\
\label{eq:thm:minimaxradiuscenter}
&=\sup\nolimits_{\mW \in \cha} \RD{\rno}{\mW}{\qmn{\rno,\cha}}.
\end{align}
Furthermore, for every sequence of priors \(\{\pmn{\ind}\}_{\ind\in\integers{+}}\) satisfying  
\(\lim_{\ind \to \infty} \RMI{\rno}{\pmn{\ind}}{\cha}=\RC{\rno}{\cha}\),
the corresponding sequence of order \(\rno\) \renyi means \(\{\qmn{\rno,\pmn{\ind}}\}_{\ind\in\integers{+}}\)  
is a Cauchy sequence  for the total variation metric on \(\pmea{\outA}\) 
and \(\qmn{\rno,\cha}\) is the unique limit point of that Cauchy sequence.
\end{theorem}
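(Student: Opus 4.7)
The plan is to establish the three-way equality (\ref{eq:thm:minimax:capacity})--(\ref{eq:thm:minimaxradius}) and then construct the center $\qmn{\rno,\cha}$ as the total-variation limit of the R\'enyi means $\qmn{\rno,\pmn{\ind}}$ along a maximizing sequence of priors. Equation (\ref{eq:thm:minimax:capacity}) is immediate from Definition \ref{def:capacity} and Lemma \ref{lem:information:def}-(\ref{eq:lem:information:defB}). For the equivalence of (\ref{eq:thm:minimax}) and (\ref{eq:thm:minimaxradius}), I would write $\RD{\rno}{\mP\mtimes\cha}{\mP\otimes\mQ}$ in the closed form $\tfrac{1}{\rno-1}\ln\sum_{\mW}\mP(\mW)e^{(\rno-1)\RD{\rno}{\mW}{\mQ}}$ (with the usual adjustments for $\rno\in\{1,\infty\}$) and observe that, for each fixed $\mQ$, the supremum over $\mP\in\pdis{\cha}$ is attained by concentrating on a single extremal $\mW$, so $\sup_{\mP}\RD{\rno}{\mP\mtimes\cha}{\mP\otimes\mQ}=\sup_{\mW\in\cha}\RD{\rno}{\mW}{\mQ}$. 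Weak duality $\RC{\rno}{\cha}\le\inf_{\mQ}\sup_{\mW}\RD{\rno}{\mW}{\mQ}$ then follows by swapping $\sup$ and $\inf$ in (\ref{eq:thm:minimax:capacity}).

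\textbf{Cauchy property of the R\'enyi means.} Assume $\RC{\rno}{\cha}<\infty$ (otherwise both sides of (\ref{eq:thm:minimax}) are trivially infinite) and fix $\{\pmn{\ind}\}\subset\pdis{\cha}$ with $\RMI{\rno}{\pmn{\ind}}{\cha}\uparrow\RC{\rno}{\cha}$. The core step is showing that $\{\qmn{\rno,\pmn{\ind}}\}$ is Cauchy in $\lon{\cdot}$. For $\rno\in[1,\infty)$, form the mixture prior $\pmn{\ind\jnd}\DEF\tfrac12(\pmn{\ind}+\pmn{\jnd})$ and apply Sibson's identity (\ref{eq:sibson}) to $\pmn{\ind}$, $\pmn{\jnd}$, and $\pmn{\ind\jnd}$ with common target $\qmn{\rno,\pmn{\ind\jnd}}$; Jensen's inequality for the convex map $x\mapsto e^{(\rno-1)x}$ applied to the resulting decomposition of $e^{(\rno-1)\RMI{\rno}{\pmn{\ind\jnd}}{\cha}}$ yields
\begin{align*}
&\RD{\rno}{\qmn{\rno,\pmn{\ind}}}{\qmn{\rno,\pmn{\ind\jnd}}}+\RD{\rno}{\qmn{\rno,\pmn{\jnd}}}{\qmn{\rno,\pmn{\ind\jnd}}}\\
&\qquad\le 2\RC{\rno}{\cha}-\RMI{\rno}{\pmn{\ind}}{\cha}-\RMI{\rno}{\pmn{\jnd}}{\cha}.
\end{align*}
The right-hand side vanishes as $\ind,\jnd\to\infty$, so Pinsker's inequality (Lemma \ref{lem:divergence-pinsker}) and the triangle inequality force $\lon{\qmn{\rno,\pmn{\ind}}-\qmn{\rno,\pmn{\jnd}}}\to0$. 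For $\rno\in(0,1)$ the same Jensen step goes the wrong way; instead I would descend to the density level, exploiting that $(\der{\mmn{\rno,\mP}}{\rfm})^{\rno}$ is affine in $\mP$ while $x\mapsto x^{1/\rno}$ is strictly convex: a homogeneous pointwise lower bound on the strict-convexity gap of the form $c_{\rno}(a-b)^{2}/(a+b)$ combined with the Cauchy--Schwarz inequality gives $\lon{\mmn{\rno,\pmn{\ind}}-\mmn{\rno,\pmn{\jnd}}}\to0$, and the finiteness of $\RC{\rno}{\cha}$ keeps $\lon{\mmn{\rno,\pmn{\ind}}}$ bounded away from zero, so the normalized means are Cauchy as well. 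The case $\rno=\infty$ is handled separately via the pointwise essential-supremum definition.

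\textbf{Identifying the center; uniqueness.} By completeness of $(\pmea{\outS,\outA},\lon{\cdot})$ there is a limit $\qmn{\rno,\cha}\in\pmea{\outS,\outA}$, and letting $\jnd\to\infty$ in the Cauchy bound together with lower semicontinuity of $\RD{\rno}{\cdot}{\cdot}$ (Lemma \ref{lem:divergence:lsc}) gives $\RD{\rno}{\qmn{\rno,\pmn{\ind}}}{\qmn{\rno,\cha}}\le\RC{\rno}{\cha}-\RMI{\rno}{\pmn{\ind}}{\cha}\to0$. To prove (\ref{eq:thm:minimaxradiuscenter}), fix $\mW\in\cha$ and $\epsilon\in(0,1)$ and apply Sibson's identity to the perturbed prior $\smn{\ind}\DEF(1-\epsilon)\pmn{\ind}+\epsilon\delta_{\mW}$ with target $\qmn{\rno,\smn{\ind}}$; substituting Sibson's identity again into the $\pmn{\ind}$-block of the sum and dropping the nonnegative exponent $(\rno-1)\RD{\rno}{\qmn{\rno,\pmn{\ind}}}{\qmn{\rno,\smn{\ind}}}$ in the direction dictated by the sign of $\rno-1$ isolates $\epsilon e^{(\rno-1)\RD{\rno}{\mW}{\qmn{\rno,\smn{\ind}}}}$ in an inequality whose remaining terms converge as $\ind\to\infty$ to explicit multiples of $e^{(\rno-1)\RC{\rno}{\cha}}$, yielding $\limsup_{\ind}\RD{\rno}{\mW}{\qmn{\rno,\smn{\ind}}}\le\RC{\rno}{\cha}$. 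Uniform continuity of $\mmn{\rno,\cdot}$ on $\pdis{\cha}$ (Lemma \ref{lem:finitecapacity}-(\ref{finitecapacity-c}) for $\rno>1$, Lemma \ref{lem:powermeanP}-(\ref{powermeanP-d}) for $\rno\le1$) forces every subsequential limit $\tilde\mQ_{\epsilon}$ of $\{\qmn{\rno,\smn{\ind}}\}$ to satisfy $\tilde\mQ_{\epsilon}\to\qmn{\rno,\cha}$ in $\lon{\cdot}$ as $\epsilon\downarrow0$, and lower semicontinuity then gives $\RD{\rno}{\mW}{\qmn{\rno,\cha}}\le\RC{\rno}{\cha}$. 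Uniqueness follows because any $\mQ^{*}$ attaining the outer infimum in (\ref{eq:thm:minimax}) satisfies $\RD{\rno}{\pmn{\ind}\mtimes\cha}{\pmn{\ind}\otimes\mQ^{*}}\le\RC{\rno}{\cha}$, so Sibson forces $\RD{\rno}{\qmn{\rno,\pmn{\ind}}}{\mQ^{*}}\le\RC{\rno}{\cha}-\RMI{\rno}{\pmn{\ind}}{\cha}\to0$ and hence $\mQ^{*}=\qmn{\rno,\cha}$ via Pinsker; the same bound shows $\qmn{\rno,\cha}$ is independent of the maximizing sequence.

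\textbf{Main obstacle.} The hardest step is the Cauchy argument for $\rno\in(0,1)$: the Jensen-based mixture-prior trick that works for $\rno\ge1$ produces the inequality in the wrong direction, and one must instead extract a quantitative form of strict concavity of $x\mapsto x^{\rno}$ at the level of the Radon--Nikodym derivatives of the mean measures. The second delicate point is the perturbation-based verification of (\ref{eq:thm:minimaxradiuscenter}), which is conceptually direct but requires careful handling of the interleaved $\ind\to\infty$ and $\epsilon\downarrow0$ limits, using uniform (equi)continuity of the mean measure to keep the subsequential limits $\tilde\mQ_{\epsilon}$ under control.
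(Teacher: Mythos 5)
Your route is genuinely different from the paper's and, modulo one fixable error, essentially workable. The paper's proof of Theorem~\ref{thm:minimax} proceeds through Lemma~\ref{lem:capacityFLB}: it first handles finite \(\cha\) (existence of an optimal prior by the extreme value theorem, plus a perturbation argument \(\pmn{\ind}=\tfrac{\ind-1}{\ind}\widetilde{\mP}+\tfrac{1}{\ind}\mP\) yielding the self-improvement bound \(\RD{\rno}{\qmn{\rno,\mP}}{\qmn{\rno,\cha}}\le\RC{\rno}{\cha}-\RMI{\rno}{\mP}{\cha}\)), and then bootstraps to arbitrary \(\cha\) using the nested finite subsets \(\cha_{\ind}=\cup_{\jnd\le\ind}\supp{\pmn{\jnd}}\) and the triangle inequality through \(\qmn{\rno,\cha_{\ind}}\). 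This gives one uniform argument for all \(\rno\in(0,\infty]\). You instead work directly with the infinite \(\cha\): your mixture-prior argument with \(\pmn{\ind\jnd}=\tfrac12(\pmn{\ind}+\pmn{\jnd})\), Sibson's identity, and Jensen for \(x\mapsto e^{(\rno-1)x}\) is correct and quite clean for \(\rno\in[1,\infty)\), and your perturbation \(\smn{\ind}=(1-\epsilon)\pmn{\ind}+\epsilon\delta_{\mW}\) for identifying the center plays the role that the paper's Lemma~\ref{lem:capacityFLB} plays internally, just executed later and with a diagonalization over \(\epsilon\) and \(\ind\) that you acknowledge but do not fully spell out (one needs to diagonalize because \(\{\qmn{\rno,\smn{\ind}}\}_{\ind}\) is not guaranteed to converge for fixed \(\epsilon\); what one actually has is a uniform bound \(\limsup_{\ind}\lon{\qmn{\rno,\smn{\ind}}-\qmn{\rno,\cha}}\le\omega(2\epsilon)\)). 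What the paper's route buys is uniformity in \(\rno\) and avoidance of delicate pointwise convexity-gap estimates; what yours buys is a direct attack that never invokes the extreme value theorem or a separate finite-set lemma.

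The one genuine error is in your \(\rno\in(0,1)\) Cauchy step. You state the strict-convexity gap of \(x\mapsto x^{1/\rno}\) at the affine-in-\(\mP\) variables \(a=(\der{\mmn{\rno,\pmn{\ind}}}{\rfm})^{\rno}\), \(b=(\der{\mmn{\rno,\pmn{\jnd}}}{\rfm})^{\rno}\) is bounded below by \(c_{\rno}(a-b)^{2}/(a+b)\). That cannot be right: the Jensen gap \(\tfrac12 a^{1/\rno}+\tfrac12 b^{1/\rno}-(\tfrac{a+b}{2})^{1/\rno}\) is positively homogeneous of degree \(1/\rno>1\), whereas \((a-b)^{2}/(a+b)\) is homogeneous of degree \(1\); scaling \((a,b)\to(\lambda a,\lambda b)\) with \(\lambda\downarrow0\) violates the claimed inequality for \(a\neq b\). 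The idea is salvageable, but the homogeneity-1 bound must be stated at the level of the densities themselves: writing \(u=\der{\mmn{\rno,\pmn{\ind}}}{\rfm}\), \(v=\der{\mmn{\rno,\pmn{\jnd}}}{\rfm}\), one has the degree-1, symmetric, power-mean gap \(g(u,v)\DEF\tfrac12 u+\tfrac12 v-\bigl(\tfrac{u^{\rno}+v^{\rno}}{2}\bigr)^{1/\rno}\ge c_{\rno}\tfrac{(u-v)^{2}}{u+v}\) (verified by normalizing \(u+v=2\), Taylor expansion at \(u=v\), and compactness of \([0,2]\)). Combined with Cauchy--Schwarz,
\begin{align}
\notag
\int\abs{u-v}\rfm(\dif{\dout})
&\leq \left(\int\tfrac{(u-v)^{2}}{u+v}\rfm(\dif{\dout})\right)^{1/2}\left(\int(u+v)\rfm(\dif{\dout})\right)^{1/2}
\leq \left(\tfrac{2}{c_{\rno}}\int g\,\rfm(\dif{\dout})\right)^{1/2},
\end{align}
using \(\lon{\mmn{\rno,\mP}}\leq 1\) for \(\rno\in(0,1]\). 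Since \(\int g\,\rfm(\dif{\dout})=\tfrac12\lon{\mmn{\rno,\pmn{\ind}}}+\tfrac12\lon{\mmn{\rno,\pmn{\jnd}}}-\lon{\mmn{\rno,\pmn{\ind\jnd}}}\to0\) while \(\lon{\mmn{\rno,\pmn{\ind}}}\to e^{\frac{\rno-1}{\rno}\RC{\rno}{\cha}}>0\), the normalized means are Cauchy. With that correction (and a careful diagonalization in your final step), your alternative proof goes through.
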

Theorem \ref{thm:minimax} is stated for \(\mP\)'s that are probability mass functions on \(\cha\). 
However, the interpretation of the capacity as the radius implicit in \eqref{eq:thm:minimaxradius} and 
\eqref{eq:thm:minimaxradiuscenter} can be used to extend Theorem \ref{thm:minimax} to the case
when \(\mP\)'s are appropriately defined probability measures, see 
Theorem \ref{thm:Gminimax} in Appendix \ref{sec:generaldefinitions}.

For finite orders, neither the \renyi capacity nor the \renyi center has a closed form expression;
this, however, is not the case for order infinity. 
The following expressions can be confirmed using 
the observation described in \eqref{eq:necessaryandsufficientseq} by the interested reader.
\begin{align}
\label{eq:orderinfty-capacity}
\RC{\infty}{\cha}
&=\ln \lon{\bigvee\nolimits_{\mW\in \cha}\mW},
\\
\label{eq:orderinfty-center}
\qmn{\infty,\cha}
&=\left(\bigvee\nolimits_{\mW\in \cha}\mW \right) e^{-\RC{\infty}{\cha}}.
\end{align}
Before presenting the proof and applications of Theorem \ref{thm:minimax}, 
let us make a brief digression and discuss what is achieved by 
Theorem \ref{thm:minimax} itself. 

The expression in \eqref{eq:thm:minimaxradius} is nothing but 
the definition of the order \(\rno\) \renyi radius \(\RR{\rno}{\cha}\).
Hence, Theorem \ref{thm:minimax} establishes the equality of   
the order \(\rno\) \renyi capacity and 
the order \(\rno\) \renyi radius.
We prefer to express the equality of \(\RC{\rno}{\cha}\) 
and \(\RR{\rno}{\cha}\) as a minimax equality because 
unlike the equality of \(\RC{\rno}{\cha}\) and \(\RR{\rno}{\cha}\) 
itself, the minimax equality  continues to hold in the constrained variant 
of the problem, 
see Theorem \ref{thm:Cminimax} of  Appendix \ref{sec:constrainedcapacity}.

Theorem \ref{thm:minimax} strengthens this minimax equality by asserting the existence
of a unique \renyi center that is achieving the infimum in \eqref{eq:thm:minimax}. 
Recall that we have already established, in Lemma \ref{lem:information:def}, the existence 
of a unique \renyi mean \(\qmn{\rno,\mP}\) achieving the infimum in 
\eqref{eq:thm:minimax:capacity} for any \(\mP\) in \(\pdis{\cha}\).
The suprema in \eqref{eq:thm:minimax:capacity} and \eqref{eq:thm:minimax}, however, 
cannot be replaced by maxima in general.
Example \ref{eg:erasure} provides a \(\cha\) for which 
\(\inf\nolimits_{\mQ \in \pmea{\outA}} \RD{\rno}{\mP \mtimes \cha}{\mP \otimes \mQ}
<\RC{\rno}{\cha}\) 
and 
\(\RD{\rno}{\mP \mtimes \cha}{\mP \otimes \qmn{\rno,\cha}}<\RC{\rno}{\cha}\) 
for all \(\mP\) in \(\pdis{\cha}\).
Evidently, this subtlety exists only for infinite \(\cha\)'s; for finite \(\cha\)'s 
the compactness of \(\pdis{\cha}\) and the extreme value theorem guarantees the existence 
of a \(\mP\) achieving the supremum. 

The last assertion of Theorem \ref{thm:minimax}, relating the problem of determining 
the \renyi capacity to the problem of determining the \renyi center, is important 
because of its potential in simplifying the problem of determining the \renyi center  
---defined as the unique \(\qmn{\rno,\cha}\) satisfying \eqref{eq:thm:minimaxradiuscenter}.

In addition, Theorem \ref{thm:minimax}  provides a necessary and sufficient condition for
a prior \(\mP\) to satisfy \(\RMI{\rno}{\mP}{\cha}=\RC{\rno}{\cha}\). 
That is important because we do not have a closed form expression for the order \(\rno\) 
\renyi capacity, 
yet occasionally the symmetries of the elements of \(\cha\) or numerical calculations
suggest a prior \(\mP\) that might satisfy \(\RMI{\rno}{\mP}{\cha}=\RC{\rno}{\cha}\).
\begin{align}
\label{eq:necessaryandsufficientp}
\RMI{\rno}{\mP}{\cha}
&=\RC{\rno}{\cha}
&
&\mbox{iff}
&
\RRR{\rno}{\cha}{\qmn{\rno,\mP}}
&\leq\RMI{\rno}{\mP}{\cha}.
\end{align}
In order to see why \eqref{eq:necessaryandsufficientp} holds, note that if 
\(\RMI{\rno}{\mP}{\cha}=\RC{\rno}{\cha}\) then considering the sequence 
\(\{\pmn{\ind}\}_{\ind\in\integers{+}}\) where \(\pmn{\ind}=\mP\)
we can conclude that \(\qmn{\rno,\mP}=\qmn{\rno,\cha}\).
Then \(\RRR{\rno}{\cha}{\qmn{\rno,\mP}}\leq\RMI{\rno}{\mP}{\cha}\)
by \eqref{eq:thm:minimaxradiuscenter}.
On the other hand, if \(\RRR{\rno}{\cha}{\qmn{\rno,\mP}}\leq\RMI{\rno}{\mP}{\cha}\)
for some \(\mP\) in \(\pdis{\cha}\),
then \(\RMI{\rno}{\mP}{\cha}=\RC{\rno}{\cha}\) by \eqref{eq:thm:minimaxradius}
because 
\(\RMI{\rno}{\mP}{\cha}\leq\RC{\rno}{\cha}\) 
and \(\RR{\rno}{\cha}\leq \RRR{\rno}{\cha}{\qmn{\rno,\mP}}\)
by the definitions of \renyi capacity and center. 

Following a similar reasoning one can show that 
\(\{\pmn{\ind}\}_{\ind\in\integers{+}}\) is optimal 
iff \(\RRR{\rno}{\cha}{\lim_{\ind\to\infty} \qmn{\rno,\pmn{\ind}}}\leq \lim\nolimits_{\ind\to\infty}\RMI{\rno}{\pmn{\ind}}{\cha}\).
We chose the following less explicit characterization over the aforementioned one
in order to avoid ensuring the convergence of probability measures 
formally.\footnote{We only need \(\RC{\rno}{\cha}\leq\RR{\rno}{\cha}\), but not 
\(\RC{\rno}{\cha}=\RR{\rno}{\cha}\), in order to deduce \(\RMI{\rno}{\mP}{\cha}=\RC{\rno}{\cha}\) 
from \(\RRR{\rno}{\cha}{\qmn{\rno,\mP}}\leq\RMI{\rno}{\mP}{\cha}\).
The sufficiency of the conditions given in \eqref{eq:necessaryandsufficientp} 
and \eqref{eq:necessaryandsufficientseq} for the optimality follows from the max-min 
inequality and the definitions of radius and capacity without invoking Theorem \ref{thm:minimax}. 
We need Theorem \ref{thm:minimax} in order to assert their necessity.} 
\begin{align}
\label{eq:necessaryandsufficientseq}
\hspace{-.2cm}
\lim\limits_{\ind\to\infty} \RMI{\rno}{\pmn{\ind}}{\cha}\!=\!\RC{\rno}{\cha}
\mbox{~iff~}
\exists\mQ\!:\RRR{\rno}{\cha}{\mQ}\!\leq\!\lim\limits_{\ind\to\infty} \RMI{\rno}{\pmn{\ind}}{\cha}
\end{align}
where  \(\mQ\in\pmea{\outA}\) is implicit for the latter statement.
We determine the \renyi capacity in Examples
\ref{eg:singular-finite}, \ref{eg:extendedbsc}, \ref{eg:erasure}
using \eqref{eq:necessaryandsufficientp} 
and in Examples
\ref{eg:singular-countable}, \ref{eg:poissonchannel-mean}
and in Appendix \ref{sec:shiftchannel}
using  \eqref{eq:necessaryandsufficientseq}.

\eqref{eq:thm:minimaxcenter} of Theorem \ref{thm:minimax} and  \eqref{eq:sibson} imply that
\begin{align}
\label{eq:capacityLB}
\RD{\rno}{\qmn{\rno,\mP}}{\qmn{\rno,\cha}}
&\leq  \RC{\rno}{\cha}-\RMI{\rno}{\mP}{\cha}
&
&\forall \mP \in \pdis{\cha}.
\end{align}
Consequently, \(\RD{\rno}{\qmn{\rno,\mP}}{\qmn{\rno,\cha}}\) is close to zero 
whenever \(\RMI{\rno}{\mP}{\cha}\) is close to \(\RC{\rno}{\cha}\).

\subsection[Minimax Theorems \& Rel. Compactness]{Minimax Theorems and the Relative Compactness}\label{sec:minimax}
We start by proving Theorem \ref{thm:minimax} for finite \(\cha\) case.
In this case Theorem \ref{thm:minimax} can be strengthened slightly because 
the existence of an optimal prior is guaranteed.
The optimal prior, however, is not necessarily unique, 
see Example \ref{eg:extendedbsc};
even then, all such \(\mP\)'s have exactly the same \renyi mean. 
For finite \(\outS\) case, Lemma \ref{lem:capacityFLB} is well-known, though in a slightly different form,
see \cite[p. 128]{csiszarkorner}, \cite[Thm. 4.5.1]{gallager} for \(\rno=1\) case 
and \cite[p. 172]{csiszarkorner}, \cite[Thm. 5.6.5]{gallager} for \(\rno \in (0,1)\) case.
\cite[Thm. 3.2]{csiszar72} of \csiszar  implies Lemma \ref{lem:capacityFLB} for \(\rno\)'s
in \(\reals{+}\).

\begin{lemma}\label{lem:capacityFLB}
For any \(\rno\) in  \([0,\infty]\) and finite subset \(\cha\) of \(\pmea{\outA}\),
\(\exists \widetilde{\mP} \in \pdis{\cha}\) such that \(\RMI{\rno}{\widetilde{\mP}}{\cha}=\RC{\rno}{\cha}\).
If \(\rno\) is in \((0,\infty]\), 
then \(\exists!\qmn{\rno,\cha}\in\pmea{\outA}\) such that, 
\begin{align}
\label{eq:lem:capacityFLB}
\RD{\rno}{\qmn{\rno,\mP}}{\qmn{\rno,\cha}}
&\leq  \RC{\rno}{\cha}-\RMI{\rno}{\mP}{\cha}
&
&\forall \mP \in \pdis{\cha}.   
\end{align}
Hence, \(\qmn{\rno,\widetilde{\mP}}=\qmn{\rno,\cha}\) for all \(\widetilde{\mP}\) such that \(\RMI{\rno}{\widetilde{\mP}}{\cha}=\RC{\rno}{\cha}\).
\end{lemma}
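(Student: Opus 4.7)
My plan is to first secure an optimizer $\widetilde{\mP}$ by compactness, then prove the pointwise saddle bound $\RD{\rno}{\mW}{\qmn{\rno,\widetilde{\mP}}}\le\RC{\rno}{\cha}$ for every $\mW\in\cha$ by a first-order optimality calculation, and finally leverage Sibson's identity \eqref{eq:sibson} to upgrade this pointwise bound to \eqref{eq:lem:capacityFLB}, from which uniqueness of $\qmn{\rno,\cha}$ follows as a direct corollary. For the existence step, note that $\pdis{\cha}$ is a finite-dimensional simplex, hence compact in the total-variation topology, and $\RC{\rno}{\cha}\le\ln|\cha|$ follows from $\RMI{\rno}{\mP}{\cha}\le\RMI{\infty}{\mP}{\cha}\le\ln|\supp{\mP}|$ via Lemma~\ref{lem:informationO}. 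Continuity of $\mP\mapsto\RMI{\rno}{\mP}{\cha}$ on $\pdis{\cha}$ is provided by Lemma~\ref{lem:informationP}-(\ref{informationP-a}) for $\rno\in[0,1)$ and by Lemma~\ref{lem:finitecapacity}-(\ref{finitecapacity-d}) for $\rno\in[1,\infty]$ (using finiteness of the capacity), so the extreme value theorem yields an optimal $\widetilde{\mP}\in\pdis{\cha}$.

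For the saddle bound, fix $\rno\in(0,\infty)\setminus\{1\}$ and $\mW\in\cha$, let $\mP_\beta\DEF(1-\beta)\widetilde{\mP}+\beta\delta_\mW$ where $\delta_\mW$ is the unit point mass at $\mW$, and abbreviate $A\DEF\der{\mmn{\rno,\widetilde{\mP}}}{\rfm}$, $B\DEF\der{\mW}{\rfm}$, and $C\DEF\lon{\mmn{\rno,\widetilde{\mP}}}$. Definition~\ref{def:powermean} gives $\der{\mmn{\rno,\mP_\beta}}{\rfm}=((1-\beta)A^{\rno}+\beta B^{\rno})^{\sfrac{1}{\rno}}$. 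Differentiating in $\beta$ at $\beta=0^{+}$, integrating, and inserting the identity $\int A^{1-\rno}B^{\rno}\rfm(\dif\dout)=e^{(\rno-1)\RD{\rno}{\mW}{\qmn{\rno,\widetilde{\mP}}}}C^{1-\rno}$ (which comes from $\qmn{\rno,\widetilde{\mP}}=\mmn{\rno,\widetilde{\mP}}/C$ together with Definition~\ref{def:divergence}) produces
\begin{align*}
\left.\der{}{\beta}\RMI{\rno}{\mP_\beta}{\cha}\right|_{\beta=0^{+}}
=\tfrac{C^{-\rno}}{\rno-1}\bigl[e^{(\rno-1)\RD{\rno}{\mW}{\qmn{\rno,\widetilde{\mP}}}}-e^{(\rno-1)\RC{\rno}{\cha}}\bigr].
\end{align*}
Optimality of $\widetilde{\mP}$ forces this one-sided derivative to be non-positive, and the sign of $\tfrac{1}{\rno-1}$ matches the monotonicity of $e^{(\rno-1)(\cdot)}$ in both regimes $\rno>1$ and $\rno\in(0,1)$ to deliver $\RD{\rno}{\mW}{\qmn{\rno,\widetilde{\mP}}}\le\RC{\rno}{\cha}$. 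The case $\rno=1$ follows from the linearity of KL divergence in the first marginal together with the same $\beta\downarrow 0$ limit, and $\rno=\infty$ is handled by the closed-form expressions~\eqref{eq:orderinfty-capacity}--\eqref{eq:orderinfty-center} with $\widetilde{\mP}$ taken uniform on $\cha$.

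With the pointwise bound in hand, set $\mQ=\qmn{\rno,\widetilde{\mP}}$ in Sibson's identity~\eqref{eq:sibson} to obtain, for every $\mP\in\pdis{\cha}$,
\begin{align*}
\RMI{\rno}{\mP}{\cha}+\RD{\rno}{\qmn{\rno,\mP}}{\qmn{\rno,\widetilde{\mP}}}
&=\tfrac{1}{\rno-1}\ln\!\sum_{\mW}\mP(\mW)e^{(\rno-1)\RD{\rno}{\mW}{\qmn{\rno,\widetilde{\mP}}}}\\
&\le\RC{\rno}{\cha},
\end{align*}
because for $\rno>1$ (resp.~$\rno\in(0,1)$) each exponential is bounded above (resp.\ below) by $e^{(\rno-1)\RC{\rno}{\cha}}$ and the sign of $\tfrac{1}{\rno-1}$ preserves (resp.\ reverses) the resulting inequality. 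Defining $\qmn{\rno,\cha}\DEF\qmn{\rno,\widetilde{\mP}}$ yields~\eqref{eq:lem:capacityFLB}. For uniqueness, any competing $\qmn{\rno,\cha}'$ satisfying~\eqref{eq:lem:capacityFLB} must, upon evaluation at $\mP=\widetilde{\mP}$, satisfy $\RD{\rno}{\qmn{\rno,\widetilde{\mP}}}{\qmn{\rno,\cha}'}\le 0$, whence Lemma~\ref{lem:divergence-pinsker} forces $\qmn{\rno,\cha}'=\qmn{\rno,\widetilde{\mP}}$; the same argument applied to any other optimal $\widetilde{\mP}'$ shows $\qmn{\rno,\widetilde{\mP}'}=\qmn{\rno,\cha}$.

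The main obstacle I foresee lies in the differentiation step when $\rno>1$, because $A^{1-\rno}$ becomes singular on $\{A\approx 0\}$ and so the interchange of $\der{}{\beta}$ with the integral requires justification. I would address this by first ruling out the degenerate scenario: if $B>0$ on an $\rfm$-positive subset of $\{A=0\}$ then $\lon{\mmn{\rno,\mP_\beta}}$ picks up a contribution of order $\beta^{\sfrac{1}{\rno}}$ as $\beta\downarrow 0$, making the right derivative $+\infty$ and contradicting optimality of $\widetilde{\mP}$; hence every $\mW\in\cha$ satisfies $B=0$ $\rfm$-a.e.\ on $\{A=0\}$, which removes the singularity and permits a routine dominated-convergence argument on $\{A>0\}$. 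Once this is in place the remaining estimates are bookkeeping.
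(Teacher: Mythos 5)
Your proof is correct but follows a genuinely different route from the paper's.  The paper never differentiates: it forms the finite mixtures \(\pmn{\ind}=\tfrac{\ind-1}{\ind}\widetilde{\mP}+\tfrac{1}{\ind}\mP\), decomposes \(\RMI{\rno}{\pmn{\ind}}{\cha}\) via Sibson's identity and Lemma~\ref{lem:information:def}, reads the bound \(\RMI{\rno}{\mP}{\cha}+\RD{\rno}{\qmn{\rno,\mP}}{\qmn{\rno,\pmn{\ind}}}\leq\RC{\rno}{\cha}\) directly off the resulting log-sum-exp (or max, for \(\rno=\infty\)), shows \(\RD{\rno}{\qmn{\rno,\widetilde{\mP}}}{\qmn{\rno,\pmn{\ind}}}\to 0\), and passes to the limit using Pinsker's inequality and lower semicontinuity (Lemma~\ref{lem:divergence:lsc}).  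You instead prove the pointwise saddle condition \(\RD{\rno}{\mW}{\qmn{\rno,\widetilde{\mP}}}\leq\RC{\rno}{\cha}\) by a first-order optimality calculation and then aggregate through Sibson; this is a perfectly valid and arguably more transparent derivation, but it shifts the technical burden from the limit/LSC argument to differentiation under the integral.  On that point I would flag a small inaccuracy: once you rule out an \(\rfm\)-positive set \(\{A=0,B>0\}\), you still cannot invoke dominated convergence on \(\{A>0\}\) without further work, since \(\int_{\{A>0\}}A^{1-\rno}B^{\rno}\,\rfm(\dif\dout)\) may itself be infinite (so no integrable majorant exists); the clean fix is monotone convergence, which applies because \(\beta\mapsto((1-\beta)A^{\rno}+\beta B^{\rno})^{1/\rno}\) is concave for \(\rno>1\), so the difference quotient plus \(A\) is nonnegative and increases as \(\beta\downarrow 0\).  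This delivers the right derivative as an extended real number in \((-\infty,+\infty]\) and makes the contradiction-with-optimality step rigorous whenever it equals \(+\infty\), after which your sign analysis and Sibson aggregation go through unchanged, and the uniqueness via Pinsker at \(\mP=\widetilde{\mP}\) is exactly as in the paper.
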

\begin{proof}
\begin{enumerate}[(i)]
\item  {\it \(\exists \widetilde{\mP}\in\pdis{\cha}\) such that \(\RMI{\rno}{\widetilde{\mP}}{\cha}=\RC{\rno}{\cha}\):}
Since \(\abs{\supp{\mP}}\leq\abs{\cha}\) for all \(\mP\in \pdis{\cha}\), 
\(\RC{\rno}{\cha}\leq\ln \abs{\cha}\) by Lemma \ref{lem:informationO}. 
Then \(\RMI{\rno}{\mP}{\cha}\) is continuous on \(\pdis{\cha}\) by 
Lemmas \ref{lem:informationP}-(\ref{informationP-a}) and 
\ref{lem:finitecapacity}-(\ref{finitecapacity-d}).  
Then there exists a \(\widetilde{\mP}\) achieving the supremum
by the extreme value theorem, \cite[27.4]{munkres}
because \(\pdis{\cha}\) is compact for finite \(\cha\).

\item {\it If \(\RMI{\rno}{\widetilde{\mP}}{\cha}=\RC{\rno}{\cha}\) for an \(\rno\in(0,\infty] \),
	 then
\(\RD{\rno}{\qmn{\rno,\mP}}{\qmn{\rno,\widetilde{\mP}}}\leq\RC{\rno}{\cha}-\RMI{\rno}{\mP}{\cha}\)
for all \(\mP\in \pdis{\cha}\):}
Let \(\widetilde{\mP} \in \pdis{\cha}\) be such that 
\(\RMI{\rno}{\widetilde{\mP}}{\cha}=\RC{\rno}{\cha}\), \(\mP\) be any member 
of \(\pdis{\cha}\) and \(\pmn{\ind}\) be \(\pmn{\ind}=\tfrac{\ind-1}{\ind}\widetilde{\mP} +\tfrac{1}{\ind}{\mP}\) for 
\(\ind \in \integers{+}\). 

For \(\rno=\infty\) using Lemma \ref{lem:information:def} we get
\begin{align}
\notag
\RMI{\infty}{\pmn{\ind}}{\cha}
&=\left[\RMI{\infty}{\widetilde{\mP}}{\cha}+\RD{\infty}{\qmn{\infty,\widetilde{\mP}}}{\qmn{\infty,\pmn{\ind}}}\right]
\\
\notag
&\qquad~\qquad
\vee\left[\RMI{\infty}{{\mP}}{\cha}+\RD{\infty}{\qmn{\infty,{\mP}}}{\qmn{\infty,\pmn{\ind}}}\right].
\end{align}
Then \(\RD{\infty}{\qmn{\infty,\widetilde{\mP}}}{\qmn{\infty,\pmn{\ind}}}=0\)
because \(\RMI{\infty}{\pmn{\ind}}{\cha}\leq\RC{\infty}{\cha}\)
and \(\RMI{\infty}{\widetilde{\mP}}{\cha}=\RC{\infty}{\cha}\).
Consequently 
\(\qmn{\infty,\widetilde{\mP}}=\qmn{\infty,\pmn{\ind}}\)
and 
\(\RMI{\infty}{\pmn{\ind}}{\cha}=\RC{\infty}{\cha}\).
Thus
\begin{align}
\label{eq:capacityFLB-b-1}
\RMI{\infty}{{\mP}}{\cha}+\RD{\infty}{\qmn{\infty,{\mP}}}{\qmn{\infty,\widetilde{\mP}}}
&\leq \RC{\infty}{\cha}.
\end{align}
For \(\rno=1\) and \(\rno\in\reals{+}\setminus\{1\}\) we have
\begin{align}
\notag
\hspace{-.4cm}
\RMI{1}{\pmn{\ind}}{\cha}
&=\tfrac{\ind-1}{\ind}\left[\RMI{1}{\widetilde{\mP}}{\cha}+\RD{1}{\qmn{1,\widetilde{\mP}}}{\qmn{1,\pmn{\ind}}}\right]
\\
\notag
&\qquad~\qquad
+\tfrac{1}{\ind}\left[\RMI{1}{{\mP}}{\cha}+\RD{1}{\qmn{1,{\mP}}}{\qmn{1,\pmn{\ind}}}\right],
\\
\notag
\RMI{\rno}{\pmn{\ind}}{\cha}
&=\tfrac{1}{\rno-1}\!\ln\!\left[\!
\tfrac{\ind-1}{\ind} e^{(\rno-1)\left(\RMI{\rno}{\widetilde{\mP}}{\cha}+\RD{\rno}{\qmn{\rno,\widetilde{\mP}}}{\qmn{\rno,\pmn{\ind}}}\right)}\right.
\\
\notag
&\quad~\qquad~\qquad+\!\left.\tfrac{1}{\ind} e^{(\rno-1)\left(\RMI{\rno}{{\mP}}{\cha}+\RD{\rno}{\qmn{\rno,{\mP}}}{\qmn{\rno,\pmn{\ind}}}\right)}
\right].
\end{align}
Then using \(\RMI{\rno}{\pmn{\ind}}{\cha}\leq\RC{\rno}{\cha}\),
 \(\RMI{\rno}{\widetilde{\mP}}{\cha}=\RC{\rno}{\cha}\), and 
\(\RD{\rno}{\qmn{\rno,\widetilde{\mP}}}{\qmn{\rno,\pmn{\ind}}}\!\geq\!0\)
we get the following identity
\begin{align}
\notag
\RMI{\rno}{\mP}{\cha}+\RD{\rno}{\qmn{\rno,\mP}}{\qmn{\rno,\pmn{\ind}}}
&\leq \RC{\rno}{\cha}.
\end{align}

Similarly, using \(\RMI{\rno}{\pmn{\ind}}{\cha}\leq\RC{\rno}{\cha}\),
\(\RMI{\rno}{\widetilde{\mP}}{\cha}=\RC{\rno}{\cha}\), 
\(\RMI{\rno}{{\mP}}{\cha}\geq0\),
and
\(\RD{\rno}{\qmn{\rno,{\mP}}}{\qmn{\rno,\pmn{\ind}}}\geq0\)
 we get 
\begin{align}
\notag
\RD{\rno}{\qmn{\rno,\widetilde{\mP}}}{\qmn{\rno,\pmn{\ind}}}
&\!\leq\! 
\begin{cases}
\tfrac{1}{\rno-1}\ln \tfrac{\ind -e^{(1-\rno)\RC{\rno}{\cha}}}{\ind -1}
&\rno\!\in\!\reals{+}\!\setminus\!\{1\}
\\
\tfrac{\RC{\rno}{\cha}}{\ind-1}
&\rno\!=\!1
\end{cases}.
\end{align}
Then \(\qmn{\rno,\pmn{\ind}}\rightarrow\qmn{\rno,\widetilde{\mP}}\) 
in the total variation topology by Lemma \ref{lem:divergence-pinsker}.
Thus
\begin{align}
\notag
\RD{\rno}{\qmn{\rno,\mP}}{\qmn{\rno,\widetilde{\mP}}}
&\leq  \liminf\nolimits_{\ind \to \infty} 
\RD{\rno}{\qmn{\rno,\mP}}{\qmn{\rno,\pmn{\ind}}}
\end{align}
by Lemma \ref{lem:divergence:lsc}. Then 
\begin{align}
\label{eq:capacityFLB-b-2}
\RMI{\rno}{\mP}{\cha}+\RD{\rno}{\qmn{\rno,\mP}}{\qmn{\rno,\tilde{\mP}}}
&\leq \RC{\rno}{\cha}
&
&\forall \rno\in\reals{+}.
\end{align}
\item {\it If \(\rno\in(0,\infty]\), then \(\exists!\qmn{\rno,\cha}\in\pmea{\outA}\) satisfying \eqref{eq:lem:capacityFLB}
such that \(\qmn{\rno,\mP}=\qmn{\rno,\cha}\) for all \(\mP\) with 
\(\RMI{\rno}{\mP}{\cha}=\RC{\rno}{\cha}\):}
\eqref{eq:capacityFLB-b-1}, \eqref{eq:capacityFLB-b-2}
and Lemma \ref{lem:divergence-pinsker} implies that
\begin{align}
\notag
\RMI{\rno}{\mP}{\cha}+\tfrac{\rno\wedge 1}{2}\lon{\qmn{\rno,\mP}-\qmn{\rno,\tilde{\mP}}}^{2}
&\leq \RC{\rno}{\cha}.
\end{align}
Then \(\qmn{\rno,\tilde{\mP}}=\qmn{\rno,\mP}\) for any \(\mP\) satisfying
\(\RMI{\rno}{\mP}{\cha}=\RC{\rno}{\cha}\).
\end{enumerate}
\end{proof}

When \(\cha\) is not a finite but an arbitrary subset of \(\pmea{\outA}\), 
we cannot invoke the extreme 
value theorem to establish the existence an optimal prior \(\mP\)
satisfying \(\RMI{\rno}{\mP}{\cha}=\RC{\rno}{\cha}\) 
because \(\pdis{\cha}\) is not compact. 
Assuming \(\RC{\rno}{\cha}\) to be finite, Theorem \ref{thm:minimax} recovers all assertions 
of Lemma \ref{lem:capacityFLB}, but the existence of an optimal prior, 
albeit in a weaker form.

\begin{proof}[Proof of Theorem \ref{thm:minimax}]
For all \(\mP\!\in\!\pdis{\cha}\) and \(\mQ\!\in\!\pmea{\outA}\),
\eqref{eq:def:divergence} implies
\(\RD{\rno}{\mP \mtimes \cha}{\mP\otimes\mQ}\!\leq\!\max_{\mW\in \supp{\mP}} \RD{\rno}{\mW}{\mQ}\). 
Then considering \(\mP\)'s satisfying \(\mP(\mW)=1\) for a \(\mW\) in \(\cha\) 
we get
\begin{align}
\label{eq:minimax-1}
\sup\nolimits_{\mW \in \cha}  \RD{\rno}{\mW}{\mQ}
&=\sup\nolimits_{\mP \in \pdis{\cha}}  \RD{\rno}{\mP \mtimes \cha}{\mP\otimes\mQ}
\end{align}
for all \(\mQ\in\pmea{\outA}\).
Note that 
\eqref{eq:thm:minimax} implies \eqref{eq:thm:minimaxradius} and 
\eqref{eq:thm:minimaxcenter} implies \eqref{eq:thm:minimaxradiuscenter} because of \eqref{eq:minimax-1}.
Furthermore,
\eqref{eq:thm:minimax:capacity} is nothing but \eqref{eq:capacity}
and expression on the right hand side of  \eqref{eq:thm:minimax:capacity}
is bounded from above by the expression in \eqref{eq:thm:minimax}
as a result of max-min inequality.
Thus when \(\RC{\rno}{\cha}\) is infinite, \eqref{eq:thm:minimax} holds trivially.
When \(\RC{\rno}{\cha}\) is finite, the converse of max-min inequality, and hence \eqref{eq:thm:minimax}, 
follows from \eqref{eq:thm:minimaxcenter}.
Thus, we can assume \(\RC{\rno}{\cha}\) to be finite and prove the claims about \(\qmn{\rno,\cha}\)
in order to prove the theorem.
\begin{enumerate}[(i)]
\item {\it If \(\RC{\rno}{\cha}\!<\!\infty\) and \(\lim\nolimits_{\ind \to \infty}\!\RMI{\rno}{\pmn{\ind}}{\cha}\!=\!\RC{\rno}{\cha}\),
then \(\{\qmn{\rno,\pmn{\ind}}\}_{\ind\in\integers{+}}\) is a Cauchy sequence in \(\pmea{\outA}\) for the total variation metric:}
For any sequence \(\{\pmn{\ind}\}_{\ind\in \integers{+}}\) of members of \(\pdis{\cha}\) satisfying 
\(\lim\nolimits_{\ind \to \infty} \RMI{\rno}{\pmn{\ind}}{\cha}=\RC{\rno}{\cha}\), let \(\{\cha_{\ind}\}_{\ind\in\integers{+}}\) 
be  a nested sequence of finite subsets of \(\cha\) defined as follows, 
\begin{align}
\notag
\cha_{\ind} 
&\DEF \cup_{\jnd=1}^{\ind} \supp{\pmn{\jnd}}. 
\end{align} 
Then for any \(\ind\in\integers{+}\), there exists a unique \(\qmn{\rno,\cha_{\ind}}\) satisfying \eqref{eq:lem:capacityFLB} by 
Lemma \ref{lem:capacityFLB}.
Furthermore, \(\pdis{\cha_{\jnd}}\subset\pdis{\cha_{\ind}}\) for any \(\ind,\jnd\in\integers{+}\) such that \(\jnd\leq \ind\).
In order to bound
\(\lon{\qmn{\rno,\pmn{\jnd}}-\qmn{\rno,\pmn{\ind}}}\) for positive integers \(\jnd<\ind\), we use the triangle
inequality for \(\qmn{\rno,\pmn{\jnd}}\), \(\qmn{\rno,\pmn{\ind}}\), and \(\qmn{\rno,\cha_{\ind}}\):
\begin{align}
\label{eq:minimax-2}
\hspace{-.3cm}
\lon{\qmn{\rno,\pmn{\jnd}}\!-\!\qmn{\rno,\pmn{\ind}}}
\!\leq\!\lon{\qmn{\rno,\pmn{\jnd}}\!-\!\qmn{\rno,\cha_{\ind}}}
\!+\!\lon{\qmn{\rno,\pmn{\ind}}\!-\!\qmn{\rno,\cha_{\ind}}}.
\end{align} 
Let us proceed with bounding \(\lon{\qmn{\rno,\pmn{\jnd}}-\qmn{\rno,\cha_{\ind}}}\).
\begin{align}
\notag
\lon{\qmn{\rno,\pmn{\jnd}}-\qmn{\rno,\cha_{\ind}}}^2
&\mathop{\leq}^{(a)} \tfrac{2}{\rno\wedge 1}
\RD{\rno}{\qmn{\rno,\pmn{\jnd}}}{\qmn{\rno,\cha_{\ind}}}
\\
\notag
&\mathop{\leq}^{(b)} \tfrac{2}{\rno\wedge 1} 
\left[\RC{\rno}{\cha_{\ind}}-\RMI{\rno}{\pmn{\jnd}}{\cha_{\ind}}\right] 
\\
\notag
&\mathop{\leq}^{(c)} \tfrac{2}{\rno\wedge 1}
\left[\RC{\rno}{\cha}-\RMI{\rno}{\pmn{\jnd}}{\cha}\right].
\end{align}
where 
\((a)\) follows from Lemma \ref{lem:divergence-pinsker},
\((b)\) follows from Lemma \ref{lem:capacityFLB} because \(\widetilde{\mP}_{\jnd}\in\pdis{\cha_{\ind}}\),
and \((c)\)   follows from the identities \(\RMI{\rno}{\pmn{\jnd}}{\cha_{\ind}}=\RMI{\rno}{\pmn{\jnd}}{\cha}\) 
and \(\RC{\rno}{\cha_{\ind}}\leq\RC{\rno}{\cha}\).
We can obtain a similar bound on \(\lon{\qmn{\rno,\pmn{\ind}}-\qmn{\rno,\cha_{\ind}}}^2\).
Then \(\{\qmn{\rno,\pmn{\ind}}\}\) is a Cauchy sequence by  
\eqref{eq:minimax-2}.

\item {\it If \(\RC{\rno}{\cha}<\infty\), then \(\exists!~\qmn{\rno,\cha}\)
	in \(\pmea{\outA}\)
	satisfying
	\(\lim\nolimits_{\ind\to \infty}\lon{\qmn{\rno,\cha}-\qmn{\rno,\pmn{\ind}}}=0\) for all
	 \(\{\pmn{\ind}\}_{\ind\in\integers{+}}\) 
	satisfying 
\(\lim\nolimits_{\ind \to \infty} \RMI{\rno}{\pmn{\ind}}{\cha}=\RC{\rno}{\cha}\):}
Note that \(\smea{\outA}\) is a complete metric space for the total variation metric, i.e. every Cauchy sequence has 
a unique limit point in \(\smea{\outA}\),
because \(\smea{\outA}\) is a Banach space for the total variation topology \cite[Thm. 4.6.1]{bogachev}.
Then \(\{\qmn{\rno,\pmn{\ind}}\}_{\ind\in\integers{+}}\) has a unique limit point \(\qmn{\rno,\cha}\) in \(\smea{\outA}\). 
Since \(\pmea{\outA}\) is a closed set for the total variation topology and 
\(\qmn{\rno,\pmn{\ind}}\in \pmea{\outA}\) for all \(\ind\in\integers{+}\),
the limit point \(\qmn{\rno,\cha}\) is in \(\pmea{\outA}\) by \cite[Thm. 2.1.3]{munkres}.

We have established the existence of a unique limit point for any 
\(\{\pmn{\ind}\}_{\ind\in\integers{+}}\) satisfying 
\(\lim\nolimits_{\ind \to \infty}\RMI{\rno}{\pmn{\ind}}{\cha}=\RC{\rno}{\cha}\).
However, we have not ruled out the possibility of distinct limit points for different sequences 
satisfying the constraint. 
Let \(\{\pmn{\ind}\}_{\ind\in\integers{+}}\) and \(\{\tilde{\mP}_{\ind}\}_{\ind\in\integers{+}}\) 
be two sequences satisfying  
\(\lim\nolimits_{\ind \to \infty}\RMI{\rno}{\pmn{\ind}}{\cha}
=\lim\nolimits_{\ind \to \infty}\RMI{\rno}{\tilde{\mP}_{\ind}}{\cha}=\RC{\rno}{\cha}\),
with limit points \(\qmn{\rno,\cha}\) and \(\tilde{\mQ}_{\rno,\cha}\).
Let \(\{\hat{\mP}_{\ind}\}_{\ind\in \integers{+}}\) be a sequence whose elements for the odd indices  are the 
elements of \(\{\pmn{\ind}\}_{\ind\in \integers{+}}\) and whose elements for the even indices are the elements of 
\(\{\tilde{\mP}_{\ind}\}_{\ind\in \integers{+}}\).
Then \(\lim\nolimits_{\ind \to \infty} \RMI{\rno}{\hat{\mP}_{\ind}}{\cha}=\RC{\rno}{\cha}\); consequently the 
sequence \(\{\qmn{\rno,\hat{\mP}_{\ind}}\}_{\ind\in \integers{+}}\)  is Cauchy. Thus 
\(\{\qmn{\rno,\hat{\mP}_{\ind}}\}_{\ind\in \integers{+}}\)
and all of its subsequences has the same unique limit point  \(\hat{\mQ}_{\rno,\cha}\).
Then \(\qmn{\rno,\cha}=\hat{\mQ}_{\rno,\cha}=\tilde{\mQ}_{\rno,\cha}\).

\item {\it \(\qmn{\rno,\cha}\) satisfies the equality given in \eqref{eq:thm:minimaxcenter}:}
For any \(\mP\) in \(\pdis{\cha}\), let us consider a sequence \(\{\pmn{\ind}\}_{\ind\in\integers{+}}\) satisfying 
both \(\pmn{1}=\mP\) and \(\lim\nolimits_{\ind \to \infty} \RMI{\rno}{\pmn{\ind}}{\cha}=\RC{\rno}{\cha}\).
Then \(\mP\in \pdis{\cha_{\ind}}\) for all \(\ind\in\integers{+}\). 
Then using the inequality given in \eqref{eq:lem:capacityFLB} of Lemma \ref{lem:capacityFLB} together with
\eqref{eq:sibson} we get
\begin{align}
\label{eq:minimax-3}
\RD{\rno}{\mP \mtimes \cha}{\mP \otimes\qmn{\rno,\cha_{\ind}}}
&\leq  \RC{\rno}{\cha_{\ind}}
&
&\forall \ind.
\end{align}
Since \(\cha_{\ind}\) is a finite set, \(\exists\widetilde{\mP}_{\ind}\in \pdis{\cha_{\ind}}\) 
satisfying \(\RMI{\rno}{\widetilde{\mP}_{\ind}}{\cha_{\ind}}=\RC{\rno}{\cha_{\ind}}\)
and \(\qmn{\rno,\widetilde{\mP}_{\ind}}=\qmn{\rno,\cha_{\ind}}\)  by Lemma \ref{lem:capacityFLB}.
Then \(\RMI{\rno}{\widetilde{\mP}_{\ind}}{\cha_{\ind}}\geq \RMI{\rno}{\pmn{\ind}}{\cha_{\ind}}\)
because \(\pmn{\ind}\in \pdis{\cha_{\ind}}\) by construction.
Consequently \(\lim\nolimits_{\ind \to \infty} \RMI{\rno}{\widetilde{\mP}_{\ind}}{\cha}=\RC{\rno}{\cha}\). 
We have already established that for such a sequence
\(\qmn{\rno,\widetilde{\mP}_{\ind}}\rightarrow \qmn{\rno,\cha}\) in the total variation topology, 
and hence in the topology of setwise convergence. Then
the lower semicontinuity  of the \renyi divergence, i.e. Lemma \ref{lem:divergence:lsc}, 
the identity \(\RC{\rno}{\cha_{\ind}}\leq\RC{\rno}{\cha}\),
and  \eqref{eq:minimax-3} imply
\begin{align}
\notag
\RD{\rno}{\mP \mtimes \cha}{\mP \otimes\qmn{\rno,\cha}}\leq  \RC{\rno}{\cha}.
\end{align}
Thus using \eqref{eq:lem:information:defB} we get 
\begin{align}
\notag
\RMI{\rno}{\mP}{\cha}\leq \RD{\rno}{\mP \mtimes \cha}{\mP \otimes \qmn{\rno,\cha}}
&\leq \RC{\rno}{\cha}
&
&\forall \mP\in\pdis{\cha}.
\end{align}
Then  \eqref{eq:thm:minimaxcenter} follows the definition of \(\RC{\rno}{\cha}\).
\end{enumerate} 
\end{proof}

Theorem \ref{thm:minimax} is not just a minimax theorem, the assertions about the \renyi center are crucial.
But those assertions can be derived separately, if need be.
Leaving them aside, we discuss in the rest of this subsection when  \eqref{eq:thm:minimax} can be proved using 
Sion's minimax theorem \cite{komiya88}, \cite{sion58}.

Note that \(\pdis{\cha}\) is compact iff \(\cha\) is a finite set 
and \(\pmea{\outA}\) is compact iff \(\outA\) is a finite set. 
Consequently, when either \(\cha\) or \(\outA\) is finite,  \eqref{eq:thm:minimax} is an immediate 
consequence\footnote{Immediate after establishing  that \(\RD{\rno}{\mP \mtimes \cha}{\mP \otimes \mQ}\) is  
upper semicontinuous and quasi-concave in \(\mP\). 
The lower semicontinuity and the quasi-convexity of \(\RD{\rno}{\mP \mtimes \cha}{\mP \otimes \mQ}\) 
in \(\mQ\) follow from Lemmas \ref{lem:divergence-convexity} and \ref{lem:divergence:lsc}.} 
of Sion's minimax theorem \cite{komiya88}, \cite{sion58}. 
When \(\cha\) and \(\outA\) are both infinite sets, 
however, neither \(\pdis{\cha}\) nor \(\pmea{\outA}\) is compact ---for the total variation topology---
and we cannot directly apply Sion's minimax theorem.
Yet, it is possible to recover partial results using the concept of relative compactness.
Recall that a set of points in a topological space is called relatively compact if it has
a compact closure.

First note that as a result of Lemma \ref{lem:information:def} 
\begin{align}
\label{eq:sion}
\RMI{\rno}{\mP}{\cha}
&=\inf\nolimits_{\mQ \in \clos{\domtr{\rno,\cha}}}  \RD{\rno}{\mP \mtimes \cha}{\mP \otimes \mQ}
\end{align} 
for all \(\mP\)'s in \(\pdis{\cha}\) and \(\rno\)'s in \(\reals{+}\)
where  \(\domtr{\rno,\cha}\) is the convex hull of the set of all order 
\(\rno\) \renyi means:
\begin{align}
\notag
\domtr{\rno,\cha}
&\DEF \conv{\{\qmn{\rno,\mP}:\mP\in\pdis{\cha}\}}.
\end{align}
If \(\domtr{\rno,\cha}\) is  relatively compact in the topology of setwise convergence, 
Sion's minimax theorem imply that
\begin{align}
\notag
&\hspace{-.6cm}\sup\nolimits_{\mP\in \pdis{\cha}}
\inf\nolimits_{\mQ \in \clos{\domtr{\rno,\cha}}}  \RD{\rno}{\mP \mtimes \cha}{\mP \otimes \mQ}
\\
&\qquad=\inf\nolimits_{\mQ \in \clos{\domtr{\rno,\cha}}}
\sup\nolimits_{\mP\in \pdis{\cha}}  \RD{\rno}{\mP \mtimes \cha}{\mP \otimes \mQ}.
\end{align} 
We can replace \(\clos{\domtr{\rno,\cha}}\) by \(\pmea{\outA}\) in the expression on
the left hand side without changing its value as a result of \eqref{eq:sion}.
However, that operation can decrease the value of the right hand side because 
\(\clos{\domtr{\rno,\cha}}\subset \pmea{\outA}\). Thus we get,
\begin{align}
\notag
&\hspace{-.6cm}
\sup\nolimits_{\mP \in \pdis{\cha}} \inf\nolimits_{\mQ \in \pmea{\outA}} \RD{\rno}{\mP \mtimes \cha}{\mP \otimes \mQ}
\\
\notag
&\qquad\geq
\inf\nolimits_{\mQ \in \pmea{\outA}} \sup\nolimits_{\mP \in \pdis{\cha}} \RD{\rno}{\mP \mtimes \cha}{\mP \otimes \mQ}.
\end{align}
The reverse inequality is the max-min inequality, which is always valid. Thus \eqref{eq:thm:minimax} holds.

A set of finite measures \(\cha\) is relatively compact 
in the topology of setwise convergence
iff there exists a \(\rfm\in\pmea{\outA}\) such that \(\cha\UAC\rfm\) by a version of 
the Dunford-Pettis theorem \cite[4.7.25]{bogachev}. 
Using de la Vall\'{e}e Poussin's characterization of the uniform integrability \cite[Thm. 4.5.9]{bogachev} 
and monotonicity of the order \(\rno\) mean measure \(\mmn{\rno,\mP}\) in the order,
i.e. Lemma \ref{lem:powermeanO}-(\ref{powermeanO-b}), we can obtain sufficient conditions for the relative 
compactness of \(\domtr{\rno,\cha}\) in the topology of setwise convergence for any \(\rno\in\reals{+}\). 
As a result we get the following partial result:
\begin{lemma}\label{lem:minimax-alternative}
	Let \(\cha\) be subset of \(\pmea{\outA}\).
	\begin{enumerate}[(i)]
		\item If \(\exists \rfm\in\pmea{\outA}\) such that \(\cha \UAC \rfm\) 
		and \(\RR{\rnt}{\cha}<\infty\) for an \(\rnt\in(0,1)\),
		then \eqref{eq:thm:minimax} holds \(\forall\rno\in(0,1)\).
		\item  If \(\RR{\rnt}{\cha}\!<\!\infty\)  for an \(\rnt\!\in\![1,\infty]\),
		then \eqref{eq:thm:minimax} holds \(\forall\rno\!\in\!(0,\rnt]\).
	\end{enumerate}
\end{lemma}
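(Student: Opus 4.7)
My plan is to establish the relative compactness of $\domtr{\rno,\cha}$ in the topology of setwise convergence under the hypotheses of each part, and then invoke the Sion minimax argument given in the paragraph preceding the lemma. By the Dunford--Pettis theorem \cite[4.7.25]{bogachev}, this relative compactness is equivalent to the existence of a probability measure on $(\outS,\outA)$ with respect to which $\domtr{\rno,\cha}$ is uniformly absolutely continuous. Since $\domtr{\rno,\cha}$ is the convex hull of $\{\qmn{\rno,\mP}=\mmn{\rno,\mP}/\lon{\mmn{\rno,\mP}}:\mP\in\pdis{\cha}\}$ and uniform absolute continuity is preserved under convex combinations, it suffices to exhibit a reference measure $\rfm$ with $\{\mmn{\rno,\mP}\}_{\mP}\UAC\rfm$ together with a positive lower bound on $\lon{\mmn{\rno,\mP}}$ that is uniform in $\mP$.

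For part~(ii), I would pick $\mQ^{*}\in\pmea{\outS,\outA}$ with $\sup_{\mW\in\cha}\RD{\rnt}{\mW}{\mQ^{*}}<\infty$ and unpack Definition~\ref{def:divergence} to obtain either a uniform bound on $\int\phi(\der{\mW}{\mQ^{*}})\mQ^{*}(\dif{\dout})$ with $\phi(t)=t^{\rnt}$ when $\rnt\in(1,\infty)$, a uniform bound with $\phi(t)=t\log^{+}t$ when $\rnt=1$, or a uniform $\mQ^{*}$-essential bound on $\der{\mW}{\mQ^{*}}$ when $\rnt=\infty$; de la Vall\'{e}e Poussin's criterion \cite[Thm.~4.5.9]{bogachev} then yields $\cha\UAC\mQ^{*}$. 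For $\rno\in(0,1]$ the pointwise power-mean inequality $\mmn{\rno,\mP}\leq\mmn{1,\mP}=\sum_{\mW}\mP(\mW)\mW$ (Jensen with $x\mapsto x^{\rno}$ concave) transfers this to $\{\mmn{\rno,\mP}\}\UAC\mQ^{*}$. For $\rno\in(1,\rnt)$ with $\rnt<\infty$, Jensen applied to the concave map $t\mapsto t^{\rno/\rnt}$ gives $\int(\der{\mW}{\mQ^{*}})^{\rno}\mQ^{*}(\dif{\dout})\leq(\int(\der{\mW}{\mQ^{*}})^{\rnt}\mQ^{*}(\dif{\dout}))^{\rno/\rnt}$, and averaging over $\mP$ yields a uniform bound on $\int(\der{\mmn{\rno,\mP}}{\mQ^{*}})^{\rno}\mQ^{*}(\dif{\dout})=\sum_{\mW}\mP(\mW)\int(\der{\mW}{\mQ^{*}})^{\rno}\mQ^{*}(\dif{\dout})$; de la Vall\'{e}e Poussin once more gives $\{\mmn{\rno,\mP}\}\UAC\mQ^{*}$. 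The remaining cases with $\rnt=\infty$ are immediate from the uniform density bound on $\cha$.

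For part~(i), the hypothesis $\cha\UAC\rfm$ together with the same power-mean bound $\mmn{\rno,\mP}\leq\mmn{1,\mP}$ for $\rno\in(0,1)$ immediately gives $\{\mmn{\rno,\mP}\}\UAC\rfm$. The lower bound on $\lon{\mmn{\rno,\mP}}$, common to both parts, proceeds as follows: for $\rno\geq 1$, Lemma~\ref{lem:powermeanO}-(\ref{powermeanO-e}) gives $\lon{\mmn{\rno,\mP}}\geq\lon{\mmn{1,\mP}}=1$; for $\rno\in(0,1)$, Definition~\ref{def:information} rewrites $\lon{\mmn{\rno,\mP}}=e^{\frac{\rno-1}{\rno}\RMI{\rno}{\mP}{\cha}}$, reducing the task to showing $\RC{\rno}{\cha}<\infty$. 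This latter finiteness follows from the max-min bound $\RC{\rno}{\cha}\leq\RR{\rno}{\cha}\leq\RR{\rnt}{\cha}<\infty$ and Lemma~\ref{lem:divergence-order} in part~(ii), and from Lemma~\ref{lem:capacityO}-(\ref{capacityO-zofintieness}) applied to $\RC{\rnt}{\cha}\leq\RR{\rnt}{\cha}<\infty$ in part~(i).

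Once $\domtr{\rno,\cha}\UAC\mQ^{*}$ (or $\UAC\rfm$) is secured and Dunford--Pettis supplies the setwise compactness of $\clos{\domtr{\rno,\cha}}$, the proof is finished by the Sion minimax argument of the preceding paragraph: the quasi-concavity in $\mP$ of $\RD{\rno}{\mP\mtimes\cha}{\mP\otimes\mQ}$ is visible from the fact that its superlevel sets are half-spaces of the linear functional $\mP\mapsto\sum_{\mW}\mP(\mW)\int(\der{\mW}{\mQ})^{\rno}\mQ(\dif{\dout})$; its convexity (Lemma~\ref{lem:divergence-convexity}) and setwise lower semicontinuity (Lemma~\ref{lem:divergence:lsc}) in $\mQ$ are already recorded; and Lemma~\ref{lem:information:def} allows the restricted infimum over $\clos{\domtr{\rno,\cha}}$ to be replaced by the unrestricted infimum over $\pmea{\outS,\outA}$. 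I expect the main obstacle to be the clean unified verification of uniform absolute continuity across the different regimes of $\rno$ and $\rnt$; the split into $\rno\leq 1$ versus $\rno>1$ above is what keeps the bookkeeping manageable.
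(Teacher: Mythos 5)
Your proposal is correct and follows the same route the paper sketches in the paragraph preceding the lemma: establish relative compactness of $\domtr{\rno,\cha}$ in the topology of setwise convergence via the Dunford--Pettis theorem \cite[4.7.25]{bogachev} and de la Vall\'{e}e Poussin's criterion \cite[Thm.~4.5.9]{bogachev}, using the monotonicity of the mean measure in the order, and then appeal to Sion's minimax theorem with \eqref{eq:sion} to replace the restricted infimum by the infimum over all of $\pmea{\outS,\outA}$. You also correctly spell out a detail the paper's sketch glosses over: because $\domtr{\rno,\cha}$ consists of \emph{normalized} means $\qmn{\rno,\mP}=\mmn{\rno,\mP}/\lon{\mmn{\rno,\mP}}$, uniform absolute continuity of the $\mmn{\rno,\mP}$'s alone is not enough --- one must also rule out $\lon{\mmn{\rno,\mP}}$ approaching zero along priors. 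Your reduction of this to $\RC{\rno}{\cha}<\infty$, obtained from the radius hypothesis via the max-min inequality and Lemma~\ref{lem:capacityO}-(\ref{capacityO-zofintieness}) in part~(i) and via $\RR{\rno}{\cha}\leq\RR{\rnt}{\cha}$ (monotonicity in the order) in part~(ii), is exactly the right bookkeeping, and it explains precisely why the additional hypothesis $\cha\UAC\rfm$ is needed for $\rnt<1$ while it comes for free from the radius bound when $\rnt\geq 1$.
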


\subsection[\(\qmn{\rno,\cha}\) as a Function of \(\rno\)]{The \renyi Center as a Function of the Order}\label{sec:ervenharremoes}
\(\RR{\rno}{\cha}\) is defined as the greatest lower bound of \(\RRR{\rno}{\cha}{\mQ}\). 
Then Theorem \ref{thm:minimax} implies, by  establishing  \(\RC{\rno}{\cha}=\RR{\rno}{\cha}\), 
that
\begin{align}
\notag
\RRR{\rno}{\cha}{\mQ}
&\geq \RC{\rno}{\cha}
&
&\forall \mQ\in\pmea{\outA}.
\end{align}
Van Erven and \harremoes have conjectured that a better lower bound on \(\RRR{\rno}{\cha}{\mQ}\) 
should hold, \cite[Conjecture 1]{ervenH14}.
Van Erven and \harremoes proved their claim for \(\rno=\infty\) case assuming 
that \(\outS\) is
countable,
\cite[Thm. 37]{ervenH14}. 
Lemma \ref{lem:EHB} establishes the van Erven-\harremoes bound for any positive order \(\rno\) and 
\(\cha\) satisfying \(\RC{\rno}{\cha}<\infty\), using Theorem \ref{thm:minimax}. 
A constrained generalization, i.e. Lemma \ref{lem:CEHB},
can be found in Appendix \ref{sec:constrainedcapacity}.
\begin{lemma}\label{lem:EHB} 
For any \(\rno \in  (0,\infty]\),  
\(\cha\subset\pmea{\outA}\) satisfying \(\RC{\rno}{\cha}<\infty\),
and \(\mQ \in \pmea{\outA}\), 
\begin{align}
\label{eq:lem:EHB}
\sup\nolimits_{\mW \in \cha} \RD{\rno}{\mW}{\mQ}
&\geq  \RC{\rno}{\cha}+\RD{\rno}{\qmn{\rno,\cha}}{\mQ}.
\end{align}
\end{lemma}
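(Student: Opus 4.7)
The plan is to bootstrap Sibson's identity \eqref{eq:sibson} by taking limits along an information-maximizing sequence of priors, using Theorem \ref{thm:minimax} to control the convergence of the corresponding R\'enyi means and the lower semicontinuity of the R\'enyi divergence (Lemma \ref{lem:divergence:lsc}) to extract the desired bound on $\qmn{\rno,\cha}$.

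First, I would fix an arbitrary $\mQ\in\pmea{\outS,\outA}$ and observe that for any $\mP\in\pdis{\cha}$, the definition of the R\'enyi divergence immediately gives
\begin{align*}
\RD{\rno}{\mP \mtimes \cha}{\mP\otimes\mQ}
&\leq \sup\nolimits_{\mW\in\supp{\mP}} \RD{\rno}{\mW}{\mQ}
\leq \sup\nolimits_{\mW\in\cha} \RD{\rno}{\mW}{\mQ}.
\end{align*}
Combining this with Sibson's identity \eqref{eq:sibson} yields the single-prior version of the bound:
\begin{align*}
\RMI{\rno}{\mP}{\cha}+\RD{\rno}{\qmn{\rno,\mP}}{\mQ}
&\leq \sup\nolimits_{\mW\in\cha}\RD{\rno}{\mW}{\mQ},
\qquad \forall \mP\in\pdis{\cha}.
\end{align*}

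Next, I would pick any sequence $\{\pmn{\ind}\}_{\ind\in\integers{+}}\subset\pdis{\cha}$ with $\RMI{\rno}{\pmn{\ind}}{\cha}\to\RC{\rno}{\cha}$. By the last assertion of Theorem \ref{thm:minimax}, the finite-capacity hypothesis ensures that $\qmn{\rno,\pmn{\ind}}\to\qmn{\rno,\cha}$ in total variation, hence also in the topology of setwise convergence. Applying Lemma \ref{lem:divergence:lsc} to the pair $(\qmn{\rno,\pmn{\ind}},\mQ)\to(\qmn{\rno,\cha},\mQ)$ gives
\begin{align*}
\RD{\rno}{\qmn{\rno,\cha}}{\mQ}
&\leq \liminf\nolimits_{\ind\to\infty}\RD{\rno}{\qmn{\rno,\pmn{\ind}}}{\mQ}.
\end{align*}
Since $\RMI{\rno}{\pmn{\ind}}{\cha}$ actually converges (rather than merely liminf), we may add these two inequalities termwise and invoke the single-prior bound above to conclude
\begin{align*}
\RC{\rno}{\cha}+\RD{\rno}{\qmn{\rno,\cha}}{\mQ}
&\leq \liminf\nolimits_{\ind\to\infty}\bigl(\RMI{\rno}{\pmn{\ind}}{\cha}+\RD{\rno}{\qmn{\rno,\pmn{\ind}}}{\mQ}\bigr)
\leq \sup\nolimits_{\mW\in\cha}\RD{\rno}{\mW}{\mQ}.
\end{align*}

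The only genuinely non-routine input is the convergence of the centers $\qmn{\rno,\pmn{\ind}}\to\qmn{\rno,\cha}$, which is exactly the Cauchy/completeness half of Theorem \ref{thm:minimax} and is why the finite-capacity hypothesis is indispensable: without it the sequence of means need not have a limit in $\pmea{\outS,\outA}$ and the divergence $\RD{\rno}{\qmn{\rno,\cha}}{\mQ}$ on the right is ill-defined. Everything else is just bookkeeping; there is no obstacle once Theorem \ref{thm:minimax} and Sibson's identity are in hand. If the right-hand side $\sup_{\mW\in\cha}\RD{\rno}{\mW}{\mQ}$ is infinite, the statement is vacuous, so nothing extra is needed in that case.
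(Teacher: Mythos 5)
Your argument is correct and coincides with the paper's own proof: both take an information-maximizing sequence of priors, invoke the Cauchy/limit-point conclusion of Theorem \ref{thm:minimax}, apply the lower semicontinuity of the R\'enyi divergence (Lemma \ref{lem:divergence:lsc}) to pass to the limit, and then combine with Sibson's identity \eqref{eq:sibson} and the elementary bound $\RD{\rno}{\mP\mtimes\cha}{\mP\otimes\mQ}\leq\sup_{\mW\in\supp{\mP}}\RD{\rno}{\mW}{\mQ}$. The only cosmetic difference is that the paper routes the final step through the equality \eqref{eq:minimax-1} rather than stating the one-sided bound directly, but the content is identical.
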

Lemma \ref{lem:EHB} quantifies how loose \(\RRR{\rno}{\cha}{\mQ}\)
---defined in \eqref{eq:def:relativeradius}--- 
is as an upper bound to \(\RC{\rno}{\cha}\),
as surmised by van Erven and \harremoes in \cite{ervenH14}.

\begin{proof}[Proof of Lemma \ref{lem:EHB}]
As a result of  \eqref{eq:sibson}  and \eqref{eq:lem:information:defA} we have,
\begin{align}
\notag
\sup\limits_{\tilde{\mP}\in\pdis{\cha}}  \RD{\rno}{\tilde{\mP} \mtimes \cha}{\tilde{\mP}  \otimes \mQ} 
&\geq \RD{\rno}{\mP \mtimes \cha}{\mP  \otimes \mQ} 
\\
\label{eq:radiusconjecture-B}
&=\RMI{\rno}{\mP}{\cha}+\RD{\rno}{\qmn{\rno,\mP}}{\mQ}
\end{align}
for all \(\mP \in \pdis{\cha}\).
Let \(\{\pmn{\ind}\}_{\ind\in\integers{+}}\) be a sequence of elements of \(\pdis{\cha}\) such that
\(\lim\nolimits_{\ind\to\infty}\RMI{\rno}{\pmn{\ind}}{\cha}=\RC{\rno}{\cha}\).
Then the sequence \(\{\qmn{\rno,\pmn{\ind}}\}_{\ind\in\integers{+}}\) is a Cauchy sequence with the unique 
limit point \(\qmn{\rno,\cha}\) by Theorem \ref{thm:minimax}.
Since \(\{\qmn{\rno,\pmn{\ind}}\}\to \qmn{\rno,\cha}\) in total variation topology, 
same convergence holds in the topology of setwise convergence 
because  every open neighborhood in the latter includes 
an open neighborhood in the former by the definitions of these topologies.
On the other hand, the order \(\rno\) \renyi divergence is lower 
semicontinuous for the topology of setwise convergence by Lemma \ref{lem:divergence:lsc}. Thus we have 
\begin{align}
\notag
\liminf\limits_{\ind \to \infty} 
\left[\RMI{\rno}{\pmn{\ind}}{\cha}+\RD{\rno}{\qmn{\rno,\pmn{\ind}}}{\mQ}\right]
&\geq \RC{\rno}{\cha}+\RD{\rno}{\qmn{\rno,\cha}}{\mQ}.
\end{align}
Then \eqref{eq:lem:EHB} follows from \eqref{eq:minimax-1} and \eqref{eq:radiusconjecture-B}.
\end{proof}

The van Erven-\harremoes bound allows us to use the continuity of \(\RC{\rno}{\cha}\) in \(\rno\) 
and  Pinsker's inequality to establish the continuity of \(\qmn{\rno,\cha}\) in 
\(\rno\) for the total variation topology.
\begin{lemma}\label{lem:centercontinuity} 
For any \(\cha \subset \pmea{\outA}\) and \(\rnt \in  (0,\infty]\)
such that \(\RC{\rnt}{\cha}<\infty\), 
\begin{align}
\label{eq:lem:centercontinuity}
\RC{\rnf}{\cha}-\RC{\rno}{\cha}
&\geq \RD{\rno}{\qmn{\rno,\cha}}{\qmn{\rnf,\cha}} 
\end{align}
for all \(\rno\) and \(\rnf\) satisfying
\(0<\rno<\rnf\leq\rnt\).
Furthermore, \(\qmn{\rno,\cha}\) is a continuous function of \(\rno\) on \((0,\rnt]\) 
for the total variation topology on \(\pmea{\outA}\).
\end{lemma}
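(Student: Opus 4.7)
The plan is to obtain the inequality \eqref{eq:lem:centercontinuity} by applying the van Erven-\harremoes bound (Lemma \ref{lem:EHB}) at $\mQ=\qmn{\rnf,\cha}$, and then to extract total variation continuity by combining this with Pinsker's inequality and the continuity of $\RC{\rno}{\cha}$ in $\rno$ from Lemma \ref{lem:capacityO}.

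First, I would fix $\rno,\rnf$ with $0<\rno<\rnf\leq\rnt$. Since $\RC{\rno}{\cha}$ is nondecreasing by Lemma \ref{lem:capacityO}-(\ref{capacityO-ilsc}) and $\RC{\rnt}{\cha}<\infty$, we have $\RC{\rnf}{\cha}<\infty$, so the \renyi center $\qmn{\rnf,\cha}$ is well-defined by Theorem \ref{thm:minimax} and Lemma \ref{lem:EHB} applies at order $\rno$ with $\mQ=\qmn{\rnf,\cha}$, giving
\begin{align}
\notag
\sup\nolimits_{\mW\in\cha}\RD{\rno}{\mW}{\qmn{\rnf,\cha}}
&\geq \RC{\rno}{\cha}+\RD{\rno}{\qmn{\rno,\cha}}{\qmn{\rnf,\cha}}.
\end{align}

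Next, I would bound the left-hand side by $\RC{\rnf}{\cha}$ using two observations. Because $\rno<\rnf$, Lemma \ref{lem:divergence-order} yields $\RD{\rno}{\mW}{\qmn{\rnf,\cha}}\leq\RD{\rnf}{\mW}{\qmn{\rnf,\cha}}$ for every $\mW\in\cha$; taking the supremum over $\mW\in\cha$ and applying \eqref{eq:thm:minimaxradiuscenter} of Theorem \ref{thm:minimax} at order $\rnf$ gives $\sup_{\mW\in\cha}\RD{\rno}{\mW}{\qmn{\rnf,\cha}}\leq\RC{\rnf}{\cha}$. Rearranging produces \eqref{eq:lem:centercontinuity}.

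For the continuity claim I would then combine \eqref{eq:lem:centercontinuity} with Pinsker's inequality (Lemma \ref{lem:divergence-pinsker}) to obtain
\begin{align}
\notag
\lon{\qmn{\rno,\cha}-\qmn{\rnf,\cha}}^{2}
&\leq \tfrac{2}{1\wedge\rno}\bigl(\RC{\rnf}{\cha}-\RC{\rno}{\cha}\bigr)
\end{align}
for all $0<\rno<\rnf\leq\rnt$. Since $\RC{\rnt}{\cha}<\infty$, Lemma \ref{lem:capacityO}-(\ref{capacityO-continuity}) tells us $\RC{\rno}{\cha}$ is continuous on $(0,\rnt]$. Fixing any $\rno_{0}\in(0,\rnt]$, right-continuity of $\qmn{\cdot,\cha}$ at $\rno_{0}$ (when $\rno_{0}<\rnt$) follows by setting $\rno=\rno_{0}$ and letting $\rnf\downarrow\rno_{0}$; left-continuity follows by setting $\rnf=\rno_{0}$ and letting $\rno\uparrow\rno_{0}$, noting that $1\wedge\rno$ is bounded away from $0$ on any compact subinterval of $(0,\rnt]$ so the prefactor stays controlled.

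The argument is clean; the only subtlety worth flagging is the orientation in step two. The inequality from Lemma \ref{lem:divergence-order} bounds $\RD{\rno}{\mW}{\mQ}$ \emph{above} by $\RD{\rnf}{\mW}{\mQ}$ precisely when $\rno<\rnf$, so the asymmetric formulation $\rno<\rnf$ in the statement is not cosmetic but dictates which of the two centers appears as the second argument of the divergence in \eqref{eq:lem:centercontinuity}. Once this is correctly oriented, no nontrivial calculation is required.
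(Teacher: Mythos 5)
Your proof is correct and follows the paper's argument step for step: apply Lemma~\ref{lem:EHB} at $\mQ=\qmn{\rnf,\cha}$, bound $\RD{\rno}{\mW}{\qmn{\rnf,\cha}}$ above by $\RD{\rnf}{\mW}{\qmn{\rnf,\cha}}$ via Lemma~\ref{lem:divergence-order}, identify the supremum with $\RC{\rnf}{\cha}$ by~\eqref{eq:thm:minimaxradiuscenter}, then feed the resulting inequality into Pinsker and the order-continuity of $\RC{\rno}{\cha}$ from Lemma~\ref{lem:capacityO}-(\ref{capacityO-continuity}). One minor note: the Pinsker prefactor you use, $2/(1\wedge\rno)$, is the one that actually follows from applying Lemma~\ref{lem:divergence-pinsker} to $\RD{\rno}{\qmn{\rno,\cha}}{\qmn{\rnf,\cha}}$ (the paper's~\eqref{eq:centercontinuity-D} displays $2/(\rnf\wedge 1)$, which appears to be a slip); your version is the correct one and suffices for the continuity conclusion since it stays bounded on compact subintervals of $(0,\rnt]$.
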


The continuity of the \renyi center as a function of the order is important because 
it allows us to the interpret the \renyi centers as a transition probability from 
the interval on which the \renyi capacity is finite to \((\outS,\outA)\)
and apply Augustin's method, see \cite[\S\ref*{B-sec:averaging}]{nakiboglu19B} for a more 
detailed discussion.

\begin{proof}[Proof of Lemma \ref{lem:centercontinuity}]
For \(\mQ\!=\!\qmn{\rnf,\cha}\), Lemma \ref{lem:EHB} implies
\begin{align}
\label{eq:centercontinuity-A}
\sup\nolimits_{\mW\in\cha}\RD{\rno}{\mW}{\qmn{\rnf,\cha}}
&\geq \RC{\rno}{\cha}+\RD{\rno}{\qmn{\rno,\cha}}{\qmn{\rnf,\cha}}.
\end{align}
Since \(\RD{\rno}{\mW}{\qmn{\rnf,\cha}}\) is nondecreasing in \(\rno\)
by Lemma \ref{lem:divergence-order},
\begin{align}
\label{eq:centercontinuity-B}
\RD{\rnf}{\mW}{\qmn{\rnf,\cha}}
&\geq \RD{\rno}{\mW}{\qmn{\rnf,\cha}} 
&
&\forall \mW\in \cha,~\rnf\in[\rno,\rnt].
\end{align}
On the other hand by  \eqref{eq:thm:minimaxradiuscenter} of Theorem \ref{thm:minimax} we have
\begin{align}
\label{eq:centercontinuity-C}
\RC{\rnf}{\cha}
&= \sup\nolimits_{\mW\in \cha} \RD{\rnf}{\mW}{\qmn{\rnf,\cha}}
&
&\forall \rnf\in(0,\rnt].
\end{align}
\eqref{eq:lem:centercontinuity} follows from \eqref{eq:centercontinuity-A}, \eqref{eq:centercontinuity-B},
 and \eqref{eq:centercontinuity-C}.

Using  Lemma \ref{lem:divergence-pinsker} and  \eqref{eq:lem:centercontinuity} 
we get\footnote{For proving a similar continuity result in \cite{augustin78}, 
	instead of \eqref{eq:lem:centercontinuity}, 
	Augustin uses the inequality given in the following
	---which can be proved using 
	\eqref{eq:def:information}, \eqref{eq:def:divergence}, Lemma \ref{lem:powermeanO}-(\ref{powermeanO-b}),
	and Theorem \ref{thm:minimax}:
	If either  \(\rno\in[\rnf,\rnt]\) and \(\rnf<1\)
	or \(\rno\in(0,\rnf]\) and \(\rnf>1\) then
	\begin{align}
	\notag
	\RD{\rnf}{\qmn{\rno,\mP}}{\qmn{\rnf,\cha}}
	&\leq \RC{\rnf}{\cha}-\tfrac{\rnf}{\rnf-1}\ln \lon{\mmn{\rno,\mP}}  
	&
	&\forall\mP\in\pdis{\cha}.
	\end{align}
}
\begin{align}
\label{eq:centercontinuity-D}
\sqrt{\tfrac{2}{\rnf \wedge 1}(\RC{\rnf}{\cha}-\RC{\rno}{\cha})}
&\geq \lon{\qmn{\rnf,\cha}-\qmn{\rno,\cha}}. 
\end{align}  
Then, for the total  variation topology on \(\pmea{\outA}\), 
the continuity of \(\qmn{\rno,\cha}\) in \(\rno\) follows 
from the continuity \(\RC{\rno}{\cha}\) in \(\rno\) on \((0,\rnt]\), i.e. Lemma \ref{lem:capacityO}-(\ref{capacityO-continuity}).
\end{proof}

Lemma \ref{lem:centercontinuity} establishes the continuity of the \renyi center 
in the order for the total variation topology. 
We suspect a much stronger statement is true. 

\begin{comment}
In the rest of this subsection, we motivate, present, and discuss that statement.

For certain \(\cha\)'s the \renyi center is the same probability measure 
for all orders for which it exists.
For such a \(\cha\),  
Lemma \ref{lem:capacityO}-(\ref{capacityO-ilsc},\ref{capacityO-zo}), 
imply the following inequality 
\begin{align}
\label{eq:renyicentermonotonicity}
e^{\frac{\rnf-1}{\rnf} \RC{\rnf}{\cha}}\qmn{\rnf,\cha} 
&\leq e^{\frac{\rnt-1}{\rnt} \RC{\rnt}{\cha}}\qmn{\rnt,\cha} 
\end{align}
for any \(\rnt\) with finite \(\RC{\rnt}{\cha}\) and \(\rnf\in (0,\rnt]\). 

If there exists a \(\mP\) satisfying 
\(\RMI{\rno}{\mP}{\cha}=\RC{\rno}{\cha}\) 
(or 
a sequence \(\{\pmn{\ind}\}_{\ind\in\integers{+}}\) satisfying 
\(\lim_{\ind\to\infty}\RMI{\rno}{\pmn{\ind}}{\cha}=\RC{\rno}{\cha}\))
for all \(\rno\)
with finite \(\RC{\rno}{\cha}\),
then \eqref{eq:renyicentermonotonicity} holds as a result of
Lemma \ref{lem:powermeanO}-(\ref{powermeanO-b}).
This condition holds for certain \(\cha\)'s whose \renyi centers
do change with the order as well, see Examples 
\ref{eg:erasure}  and \ref{eg:poissonchannel-mean}.

In addition there are sets of probability measures satisfying 
\eqref{eq:renyicentermonotonicity} that do not satisfy any of the hypotheses 
described above, see \(\Pcha{\tlx,0,\mB}\) of Example \ref{eg:poissonchannel-bounded}.
(We have confirmed \eqref{eq:renyicentermonotonicity} numerically 
for \(\Pcha{\tlx,\mA,\mB}\) for various  values of \(\mA\) and \(\mB\), as well.)

\begin{conjecture}\label{con:renyicentermonotonicity}
For any \(\cha \subset \pmea{\outA}\) and \(\rnt \in  (0,\infty]\). 
satisfying \(\RC{\rnt}{\cha}<\infty\), 
\begin{align}
\label{eq:con:renyicentermonotonicity}
\mmn{\rnf,\cha} 
&\leq \mmn{\rnt,\cha}
&
&\forall \rnf\in (0,\rnt]
\end{align}
where  \(\mmn{\rnf,\cha}\DEF e^{\frac{\rnf-1}{\rnf} \RC{\rnf}{\cha}}\qmn{\rnf,\cha}\) for all \(\rnf\in (0,\rnt]\).
\end{conjecture}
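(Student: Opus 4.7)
The plan is to reduce from arbitrary $\cha$ to finite $\cha$ via Lemma~\ref{lem:capacityO}-(\ref{capacityO-countable},\ref{capacityO-dini}), and then to attack the finite case by a differential-envelope argument. For the reduction, I would first choose a countable $\cha'\subset\cha$ with $\RC{\rno}{\cha'}=\RC{\rno}{\cha}$ for every $\rno$, and then extract a nested sequence of finite subsets $\cha_{\ind}\uparrow\cha'$ along which $\RC{\rno}{\cha_{\ind}}\uparrow\RC{\rno}{\cha}$ uniformly on compact subsets of $(0,\rnt]$. Lemma~\ref{lem:centercontinuity}, the total-variation continuity of the \renyi center, together with the lower semicontinuity of the \renyi divergence (Lemma~\ref{lem:divergence:lsc}), should then allow the pointwise measure inequality $\mmn{\rnf,\cha_{\ind}}\leq\mmn{\rnt,\cha_{\ind}}$ to be transferred from the finite approximations to the limit.

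For finite $\cha$ the existence of an optimal prior $\mP_{\rno}$ at each order is guaranteed by Lemma~\ref{lem:capacityFLB}, and the identity $\mmn{\rno,\cha}=\mmn{\rno,\mP_{\rno}}$ reduces the conjecture to the monotonicity of $\rno\mapsto\mmn{\rno,\mP_{\rno}}$. The easy case, already flagged in the excerpt, is when a single $\mP^{\star}$ is simultaneously optimal for every $\rno\in(0,\rnt]$: then Lemma~\ref{lem:powermeandensityO}-(\ref{powermeandensityO-d}) gives pointwise monotonicity of $\nmn{\rno,\mP^{\star}}$ in the order and the conjecture follows. The candidate for the general finite case is a differential argument: away from a (conjecturally countable) set of orders at which the optimizer $\mP_{\rno}$ jumps, one hopes to select $\mP_{\rno}$ as locally constant in $\rno$, so that Lemma~\ref{lem:powermeanO}-(\ref{powermeanO-b}) yields
\[
\tfrac{d}{d\rno}\mmn{\rno,\cha} \;=\; \dmn{\rno,\mP_{\rno}},
\]
and this derivative is a \emph{nonnegative} measure because $\der{}{\rno}\nmn{\rno,\mP_{\rno}}\geq 0$ by Lemma~\ref{lem:powermeandensityO}-(\ref{powermeandensityO-d}). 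Integrating this derivative along $[\rnf,\rnt]$ and separately accounting for the jump contributions would then deliver $\mmn{\rnt,\cha}\geq\mmn{\rnf,\cha}$.

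The principal obstacle is the jump contributions. At an exceptional order $\rno_{0}$ where the set of optimizers changes between its left- and right-limits, one only has access to the \emph{scalar} jump $\lon{\mmn{\rno_{0}+,\cha}}\geq\lon{\mmn{\rno_{0}-,\cha}}$, which follows from Lemma~\ref{lem:capacityO}-(\ref{capacityO-ilsc},\ref{capacityO-zo}); upgrading it to the measure-level inequality $\mmn{\rno_{0}+,\cha}\geq\mmn{\rno_{0}-,\cha}$ requires a structural ingredient that is not present in the lemmas of \S\ref{sec:preliminary}--\S\ref{sec:center}. The naive telescoping $\mmn{\rnf,\mP_{\rnf}}\leq\mmn{\rnt,\mP_{\rnf}}\leq\mmn{\rnt,\mP_{\rnt}}$ fails precisely at the second inequality: $\mP_{\rnt}$ is chosen to maximize $\lon{\mmn{\rnt,\mP}}$, not the $\dout$-pointwise density or the integral against a generic nonnegative test function, so it need not dominate $\mmn{\rnt,\mP_{\rnf}}$ as a measure. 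I expect that closing the gap would demand either a joint characterization of $\mmn{\rno,\cha}$ analogous to the Sibson identity but for the set-valued mean, or a sharpening of the van Erven--\harremoes bound (Lemma~\ref{lem:EHB}) strong enough to compare $\der{\mmn{\rnf,\cha}}{\rfm}$ to $\der{\mmn{\rnt,\cha}}{\rfm}$ pointwise; extracting such a tool is precisely where the statement resists the techniques of the paper, which is why it is posed only as a conjecture.
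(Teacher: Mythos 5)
The statement in question is presented as a \emph{conjecture} in the paper; no proof is given. The paper's only accompanying justification is the observation that the inequality holds in two special cases --- when the \renyi center is the same probability measure for all orders, and, more generally, when a single prior or a single optimizing sequence witnesses $\RC{\rno}{\cha}$ for every order --- plus numerical confirmation for the Poisson families of \S\ref{sec:examples-poisson}. Your analysis is therefore aligned with the paper's own understanding of where matters stand. The reduction to finite $\cha$ via Lemma~\ref{lem:capacityO}-(\ref{capacityO-countable},\ref{capacityO-dini}) is sound; the transfer to the limit is even cleaner than you suggest, because Theorem~\ref{thm:minimax} already furnishes total-variation convergence $\qmn{\rno,\cha_{\ind}}\to\qmn{\rno,\cha}$ for finite subsets with $\RC{\rno}{\cha_{\ind}}\uparrow\RC{\rno}{\cha}$, and a pointwise-in-$\oev$ inequality between finite measures passes to the limit under that convergence; Lemma~\ref{lem:divergence:lsc} is not needed. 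Your ``common optimizer'' special case, via Lemma~\ref{lem:powermeandensityO}-(\ref{powermeandensityO-d}), coincides with the one the paper records.

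Your diagnosis of the obstruction is also correct. The lemmas of \S\ref{sec:preliminary}--\S\ref{sec:center} deliver scalar monotonicity --- $\lon{\mmn{\rno,\cha}}=e^{\frac{\rno-1}{\rno}\RC{\rno}{\cha}}$ is nondecreasing in $\rno$ by Lemma~\ref{lem:capacityO}-(\ref{capacityO-ilsc},\ref{capacityO-zo}) --- but nothing in the paper upgrades this to a $\dout$-pointwise comparison of the densities when the optimizing prior changes with the order. Your identification of the failure of the telescoping $\mmn{\rnf,\mP_{\rnf}}\leq\mmn{\rnt,\mP_{\rnf}}\leq\mmn{\rnt,\mP_{\rnt}}$ at the second step is exactly right: $\mP_{\rnt}$ is chosen to maximize the scalar $\lon{\mmn{\rnt,\mP}}$, not to dominate $\mmn{\rnt,\mP}$ in the partial order $\leq$ on measures, and those are distinct objectives. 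In short, your proposal does not prove the conjecture, and it correctly explains why; this matches the paper, which leaves the statement open.
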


For any \(\cha\) using the continuity of the \renyi center in the order, one can prove that 
there exists a \(\rfm\) in \(\pmea{\outA}\) 
such that \(\{\qmn{\rno,\cha}:\RC{\rno}{\cha}<\infty\}\AC\rfm\).
However, the continuity of the \renyi center as a function of the order for the total variation topology
does not imply the continuity of corresponding Radon-Nikodym derivative 
\(\der{\qmn{\rno,\cha}}{\rfm}\) as a function of \(\rno\) for \(\rfm\)-almost 
everywhere.
If Conjecture \ref{con:renyicentermonotonicity} is correct, then  it will imply 
the continuity of Radon-Nikodym derivative \(\der{\qmn{\rno,\cha}}{\rfm}\) as a function of 
\(\rno\) for \(\rfm\)-almost everywhere.

\begin{remark}\label{rem:continuityintvt} 
The continuity in the total variation topology does not 
	imply the continuity of the corresponding Radon-Nikodym derivative:
	Let the output space be the real numbers between \(-1\)  and \(2\), 
	and the Radon-Nikodym derivative of \(\qmn{\rno}\) with respect to the Lebesgue measure \(\lbm\) be
	\begin{align}
	\notag
	\der{\qmn{\rno}}{\lbm}
	&=\IND{
		\sin(\frac{1}{\tin-\rno})\leq
		\dout\leq \sin(\frac{1}{\tin-\rno})+\abs{\rno-\tin}}
	+\IND{0\leq\dout\leq 1}(1-\abs{\rno-\tin})
	\end{align}
	for a \(\tin\in(0,1)\).
	Evidently \(\lim_{\rnt\to\rno}\lon{\qmn{\rno}-\qmn{\rnt}}=0\) for all \(\rno\) in \((0,1)\). 
	But \(\der{\qmn{\rno}}{\lbm }\) is not continuous in \(\rno\) for any \(\dout\in (0,1)\) 
	at \(\tin\).
\end{remark}

\subsection[Unions, Cartesian Products, and More]{The Unions, Cartesian Products, Closures,  and More}\label{sec:miscellaneous}
This subsection is composed of applications of 
Theorem \ref{thm:minimax} and Lemma \ref{lem:EHB}.
Lemma \ref{lem:capacityUnion}, in the following, bounds from below and from 
above the \renyi capacity of a union of sets in terms of the \renyi capacities 
of the sets in the union.
Lemma \ref{lem:capacityProduct} establishes that the \renyi capacity of a 
Cartesian product is equal to the sum of the \renyi capacities of its components.
Lemma \ref{lem:capacityEps} shows that for any positive \(\epsilon\) the order 
\(\rno\) \renyi capacity of the set of \(\mW\)'s in \(\cha\) satisfying 
\(\RD{\rno}{\mW}{\qmn{\rno,\cha}}\geq \RC{\rno}{\cha}-\epsilon\) is equal 
to \(\RC{\rno}{\cha}\).
Lemma \ref{lem:capacityEXT} establishes the invariance of \(\RC{\rno}{\cha}\) 
under the closure and convexification operations on \(\cha\) and characterizes 
the relative compactness of \(\cha\) in terms of its \renyi capacity.
Proofs of these lemmas are presented in Appendix \ref{sec:deferred-proofs}.

\begin{lemma}\label{lem:capacityUnion}
For any \(\rno\in(0,\infty]\) and \(\cha\subset\pmea{\outA}\)
satisfying \(\cha=\cup_{\ind\in\tinS} \cha_{\ind}\) for some
\(\cha_{\ind}\subset\pmea{\outA}\) with finite \(\RC{\rno}{\cha_{\ind}}\)'s, 
\begin{align}
\label{eq:lem:capacityUnion}
\sup\nolimits_{\ind\in\tinS} \RC{\rno}{\cha_{\ind}}\leq 
\RC{\rno}{\cha}
&\leq \ln \sum\nolimits_{\ind\in\tinS} e^{\RC{\rno}{\cha_{\ind}}}.
\end{align}
Furthermore, 
\begin{itemize}
\item \(\RC{\rno}{\cha_{\ind}}=\RC{\rno}{\cha}\) iff 
\(\RRR{\rno}{\cha}{\qmn{\rno,\cha_{\ind}}}\leq \RC{\rno}{\cha_{\ind}}\).
\item If \(\RC{\rno}{\cha_{\ind}}=\RC{\rno}{\cha}\), then
\(\qmn{\rno,\cha}=\qmn{\rno,\cha_{\ind}}\). 
\item \(\RC{\rno}{\cha}=\ln \sum\nolimits_{\ind\in \tinS} e^{\RC{\rno}{\cha_{\ind}}}\) 
and \(\RC{\rno}{\cha}\) is finite
iff \(\tinS\) is finite and
\(\qmn{\rno,\cha_{\ind}}\perp\qmn{\rno,\cha_{\jnd}}\) for all \(\ind \neq \jnd\) in \(\tinS\).
\item If \(\tinS\) is finite and
\(\qmn{\rno,\cha_{\ind}}\perp\qmn{\rno,\cha_{\jnd}}\) for all \(\ind \neq \jnd\) in \(\tinS\),
then
\(\qmn{\rno,\cha}=(\sum\nolimits_{\jnd\in\tinS} e^{\RC{\rno}{\cha_{\jnd}}})^{-1}\sum\nolimits_{\ind\in\tinS} 
e^{\RC{\rno}{\cha_{\ind}}} \qmn{\rno,\cha_{\ind}}\).
\end{itemize}
\end{lemma}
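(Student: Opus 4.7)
My plan is to establish the two-sided bound first, then derive each of the subsequent assertions as consequences of Theorem~\ref{thm:minimax}, Lemma~\ref{lem:EHB}, and Lemma~\ref{lem:divergence-RM}.

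\emph{Bounds.} The lower bound is immediate: $\pdis{\cha_{\ind}}\subset\pdis{\cha}$ and the R\'enyi information depends only on the prior, so $\RMI{\rno}{\mP}{\cha_{\ind}}=\RMI{\rno}{\mP}{\cha}$ for $\mP\in\pdis{\cha_{\ind}}$, and taking suprema gives $\RC{\rno}{\cha_{\ind}}\leq\RC{\rno}{\cha}$. For the upper bound let $Z\DEF\sum_{\ind\in\tinS} e^{\RC{\rno}{\cha_{\ind}}}$; if $Z=\infty$ the bound is vacuous, and otherwise each summand being at least one forces $\tinS$ to be countable, so $p_{\ind}\DEF e^{\RC{\rno}{\cha_{\ind}}}/Z$ is a \pmf and $\mQ\DEF\sum_{\ind}p_{\ind}\qmn{\rno,\cha_{\ind}}$ lies in $\pmea{\outS,\outA}$. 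Since $p_{\ind}\qmn{\rno,\cha_{\ind}}\leq\mQ$, two applications of Lemma~\ref{lem:divergence-RM} followed by \eqref{eq:thm:minimaxradiuscenter} give, for every $\mW\in\cha_{\ind}$,
\begin{align}
\notag
\RD{\rno}{\mW}{\mQ}
\leq\RD{\rno}{\mW}{p_{\ind}\qmn{\rno,\cha_{\ind}}}
=\RD{\rno}{\mW}{\qmn{\rno,\cha_{\ind}}}-\ln p_{\ind}
\leq\RC{\rno}{\cha_{\ind}}-\ln p_{\ind}=\ln Z.
\end{align}
Hence $\RRR{\rno}{\cha}{\mQ}\leq\ln Z$, and \eqref{eq:thm:minimaxradius} of Theorem~\ref{thm:minimax} closes the upper bound.

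\emph{First and second bullets.} If $\RC{\rno}{\cha_{\ind}}=\RC{\rno}{\cha}$, pick $\pmn{\knd}\in\pdis{\cha_{\ind}}$ with $\RMI{\rno}{\pmn{\knd}}{\cha_{\ind}}\to\RC{\rno}{\cha_{\ind}}$. Because $\RMI{\rno}{\pmn{\knd}}{\cha_{\ind}}=\RMI{\rno}{\pmn{\knd}}{\cha}$, the Cauchy assertion at the end of Theorem~\ref{thm:minimax} applied to both $\cha_{\ind}$ and $\cha$ forces $\{\qmn{\rno,\pmn{\knd}}\}$ to converge in total variation to both $\qmn{\rno,\cha_{\ind}}$ and $\qmn{\rno,\cha}$, giving the second bullet $\qmn{\rno,\cha_{\ind}}=\qmn{\rno,\cha}$, hence $\RRR{\rno}{\cha}{\qmn{\rno,\cha_{\ind}}}=\RC{\rno}{\cha}=\RC{\rno}{\cha_{\ind}}$. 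Conversely $\RRR{\rno}{\cha}{\qmn{\rno,\cha_{\ind}}}\leq\RC{\rno}{\cha_{\ind}}$ together with \eqref{eq:thm:minimaxradius} yields $\RC{\rno}{\cha}\leq\RC{\rno}{\cha_{\ind}}$, and equality follows from the lower bound.

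\emph{Equality in the upper bound and the explicit center.} For the reverse direction assume $\tinS$ is finite and the $\qmn{\rno,\cha_{\ind}}$'s are pairwise singular. Then $\mQ$ coincides with $p_{\ind}\qmn{\rno,\cha_{\ind}}$ on $\supp{\qmn{\rno,\cha_{\ind}}}$, so \eqref{eq:def:divergence} yields $\RD{\rno}{\qmn{\rno,\cha_{\ind}}}{\mQ}=-\ln p_{\ind}$. Lemma~\ref{lem:EHB} applied to $\cha_{\ind}$ then gives $\RRR{\rno}{\cha_{\ind}}{\mQ}\geq\RC{\rno}{\cha_{\ind}}-\ln p_{\ind}=\ln Z$, which together with $\RRR{\rno}{\cha_{\ind}}{\mQ}\leq\RRR{\rno}{\cha}{\mQ}\leq\ln Z$ forces $\RRR{\rno}{\cha}{\mQ}=\ln Z$. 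Since $\RC{\rno}{\cha}=\RR{\rno}{\cha}\leq\RRR{\rno}{\cha}{\mQ}$ by Theorem~\ref{thm:minimax}, we obtain $\RC{\rno}{\cha}=\ln Z$, and uniqueness of the center identifies $\qmn{\rno,\cha}=\mQ$, which is the fourth bullet. For the forward direction, finiteness of $\ln Z$ already forces $\tinS$ finite since each exponent is non-negative; I propose to extract mutual singularity by first observing that in this case the upper bound chain must be tight, recovering $\qmn{\rno,\cha}=\mQ$ from uniqueness, and then exploiting the equality conditions of Lemma~\ref{lem:divergence-RM} and Lemma~\ref{lem:EHB} to force each $\qmn{\rno,\cha_{\ind}}$ to be supported where the remaining summands of $\mQ$ vanish.

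\emph{Main obstacle.} The genuinely delicate step is the converse direction of the singularity assertion: the inequalities produced by Lemma~\ref{lem:divergence-RM} and Lemma~\ref{lem:EHB} all run in the same direction and yield only $\RD{\rno}{\qmn{\rno,\cha_{\ind}}}{\qmn{\rno,\cha}}\leq-\ln p_{\ind}$. Forcing this to be an equality, and thence deducing that $\qmn{\rno,\cha_{\ind}}$ is supported off the other $\qmn{\rno,\cha_{\jnd}}$'s, will require a finer analysis, either by comparing $\mQ$ with a competitor built by reallocating mass in the overlap regions and invoking uniqueness of the center, or by a direct Radon-Nikodym comparison between $\mQ$ and $\qmn{\rno,\cha}$ on the supports of the individual centers.
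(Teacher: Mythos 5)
Your lower bound, first two bullets, and the $\leq \ln\sum_{\ind}e^{\RC{\rno}{\cha_{\ind}}}$ upper bound all go through, though you handle some steps differently from the paper: you build the candidate reference measure as the weighted \emph{sum} $\mQ=\sum_{\ind}p_{\ind}\qmn{\rno,\cha_{\ind}}$ whereas the paper takes the pointwise \emph{maximum} $\mV=\bigvee_{\ind}e^{\RC{\rno}{\cha_{\ind}}}\qmn{\rno,\cha_{\ind}}$, and you get the second bullet from the Cauchy-sequence characterization of the centre rather than, as the paper does, from the chain $\RC{\rno}{\cha}\geq\RC{\rno}{\cha_{\ind}}+\tfrac{\rno\wedge1}{2}\lon{\qmn{\rno,\cha_{\ind}}-\qmn{\rno,\cha}}^{2}$. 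Both of these variants are fine as far as they go, but the max-based $\mV$ turns out to matter for the third bullet, as explained below.

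There is, however, a genuine gap in your proof of the \emph{easy} direction of the third bullet, i.e.\ ``$\tinS$ finite and pairwise singular centres $\Rightarrow\RC{\rno}{\cha}=\ln Z$''. Your computations establish $\RRR{\rno}{\cha_{\ind}}{\mQ}\geq\ln Z$ for each $\ind$ (via Lemma~\ref{lem:EHB}) and $\RRR{\rno}{\cha}{\mQ}\leq\ln Z$ (the upper-bound estimate), hence $\RRR{\rno}{\cha}{\mQ}=\ln Z$. But the radius is the infimum over all $\mU\in\pmea{\outS,\outA}$, so Theorem~\ref{thm:minimax} only gives $\RC{\rno}{\cha}\leq\RRR{\rno}{\cha}{\mQ}=\ln Z$, which you already knew. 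Your sentence ``we obtain $\RC{\rno}{\cha}=\ln Z$'' is a non sequitur: you need the matching \emph{lower} bound $\RRR{\rno}{\cha}{\mU}\geq\ln Z$ for an arbitrary $\mU$, not just for $\mU=\mQ$. The paper supplies this by Lebesgue-decomposing an arbitrary $\mU$ as $\mU=\sum_{\ind=0}^{\abs{\tinS}}\mU_{\ind}$ with $\mU_{\ind}\AC\qmn{\rno,\cha_{\ind}}$ (and $\mU_{0}$ singular to all of them), then applying the data processing inequality of Lemma~\ref{lem:divergence-DPI} to the finite partition induced by the singular supports to obtain $\RD{\rno}{\qmn{\rno,\cha_{\ind}}}{\mU}\geq-\ln\lon{\mU_{\ind}}$; Lemma~\ref{lem:EHB} then forces $\RRR{\rno}{\cha}{\mU}\geq\max_{\ind}[\RC{\rno}{\cha_{\ind}}-\ln\lon{\mU_{\ind}}]\geq\ln\sum_{\ind}e^{\RC{\rno}{\cha_{\ind}}}$. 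Without this step the equality in the third bullet, and consequently the fourth bullet's identification of the centre, are not established.

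You correctly flagged the other direction (``equality $\Rightarrow$ singularity and finite $\tinS$'') as the ``main obstacle,'' but the paper's choice of $\mV$ rather than $\mQ$ is precisely what makes it cheap. Because $\lon{\mV}\leq Z$ with equality iff the centres are pairwise singular, the upper-bound chain actually yields the sharper estimate $\RC{\rno}{\cha}\leq\ln\lon{\mV}$; if two centres fail to be singular then $\lon{\mV}<Z$ and the inequality is strict, giving the contrapositive immediately (and infinite $\tinS$ is dispatched trivially since then $Z=\infty$). Your sum-based $\mQ$ does not record this slack, so the plan you sketch at the end (chasing equality conditions in Lemma~\ref{lem:divergence-RM} and Lemma~\ref{lem:EHB}) would need to be carried out in full and is considerably more laborious than necessary.
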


One might think that \(\qmn{\rno,\cha_{\ind}}\perp\qmn{\rno,\cha_{\jnd}}\) iff \(\cha_{\ind}\perp\cha_{\jnd}\).
This, however, is true only for \(\rno\)'s in \([1,\infty]\).
For \(\rno\)'s in \((0,1)\), \(\cha_{\ind}\perp\cha_{\jnd}\) is a sufficient condition for
\(\qmn{\rno,\cha_{\ind}}\perp\qmn{\rno,\cha_{\jnd}}\), but it is not a necessary condition,
see Examples \ref{eg:singular-finite} and \ref{eg:singular-countable}.
Augustin \cite{augustin78} is the first one to point out this
subtlety and to present necessary and sufficient conditions for 
\(\RC{\rno}{\cha}=\ln \sum\nolimits_{\ind\in\tinS} e^{\RC{\rno}{\cha_{\ind}}}\),
to the best of our knowledge.
Bounds given in \eqref{eq:lem:capacityUnion} is well known \cite[p. 535, ex. 5.17]{gallager}. 
We use the van Erven-\harremoes bound in order to characterize the necessary and sufficient conditions for 
\(\sup\nolimits_{\ind\in\tinS}\RC{\rno}{\cha_{\ind}}=\RC{\rno}{\cha}\)
and \(\RC{\rno}{\cha}=\ln \sum\nolimits_{\ind\in\tinS} e^{\RC{\rno}{\cha_{\ind}}}\).\label{bookmarkB} 

Let \(\tinS\) be a finite set. For each \(\tin\in\tinS\), let \((\outS_{\tin},\outA_{\tin})\) be a measurable space
and \(\mW_{\tin}\) be a probability measure on \((\outS_{\tin},\outA_{\tin})\). 
Then there exists a unique product measure \(\bigotimes_{{\tin\in\tinS}}\mW_{\tin}\) on the measurable space 
\((\bigtimes_{{\tin\in\tinS}}\outS_{\tin},\bigotimes_{{\tin\in\tinS}}\outA_{\tin})\)
by \cite[Thm. 8.2.2]{dudley}.\footnote{The existence of a unique product measure is guaranteed 
for any finite collection of \(\sigma\)-finite measures by \cite[Thm. 4.4.4]{dudley} 
and 
for any countable collection of probability measures by \cite[Thm. 8.2.2]{dudley}.}
Let \(\cha_{\tin}\) be a subset of \(\pmea{\outA_{\tin}}\) for each \(\tin\in\tinS\).
Then using the existence of a unique product measure we can map the Cartesian product of the 
sets \(\cha_{\tin}\) uniquely to a subset of \(\pmea{\bigotimes_{{\tin\in\tinS}}\outA_{\tin}}\),
called the product of \(\cha_{\tin}\)'s.
Then the \renyi capacity of the product is equal to the sum of the \renyi capacities of its components
and the \renyi center of the product, whenever it exists, is equal to the product of the \renyi centers 
of its components. Lemma \ref{lem:capacityProduct} asserts these observations formally.

\begin{lemma}\label{lem:capacityProduct}
For any finite index set \(\tinS\), if
\(\outS=\bigtimes_{{\tin\in\tinS}}\outS_{\tin}\), \(\outA=\bigotimes_{{\tin\in\tinS}}\outA_{\tin}\),
and \(\cha=\left\{\mW: \mW=\bigotimes_{{\tin\in\tinS}}\mW_{\tin}:  \mW_{\tin}\in \cha_{\tin}\right\}\)
for some  \(\cha_{\tin}\subset \pmea{\outA_{\tin}}\), then
\begin{align}
\label{eq:lem:capacityProduct}
\RC{\rno}{\cha}
&=\sum\nolimits_{\tin\in\tinS}\RC{\rno}{\cha_{\tin}}
&
&
\forall \rno\in (0,\infty].
\end{align}
Furthermore, if \(\RC{\rno}{\cha}<\infty\), then \(\qmn{\rno,\cha}=\bigotimes_{\tin\in\tinS}\qmn{\rno,\cha_{\tin}}\).
\end{lemma}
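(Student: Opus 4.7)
The plan is to prove \eqref{eq:lem:capacityProduct} by establishing the two inequalities separately, and then to identify $\qmn{\rno,\cha}$ via the uniqueness assertion of Theorem \ref{thm:minimax}. The workhorse of the argument is the observation that the order $\rno$ mean measure factorizes over product priors: for any $\mP_{\tin}\in\pdis{\cha_{\tin}}$, I claim
\begin{align}
\notag
\mmn{\rno,\oprod_{\tin\in\tinS}\mP_{\tin}} = \oprod_{\tin\in\tinS}\mmn{\rno,\mP_{\tin}}\qquad\forall\rno\in[0,\infty].
\end{align}
For $\rno\in\reals{+}$ this is a direct computation from Definition \ref{def:powermean} using a product reference measure $\rfm=\oprod_{\tin}\rfm_{\tin}$ together with the multiplicativity of Radon-Nikodym derivatives on products; the boundary cases $\rno=0$ and $\rno=\infty$ follow either by the same direct calculation (geometric means and pointwise suprema commute with products) or by continuity of $\mmn{\rno,\mP}$ in the order, i.e. Lemma \ref{lem:powermeanO}-(\ref{powermeanO-a}). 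This factorization immediately yields $\lon{\mmn{\rno,\oprod\mP_{\tin}}}=\prod_{\tin}\lon{\mmn{\rno,\mP_{\tin}}}$, and plugging into Definition \ref{def:information} (using Lemma \ref{lem:informationO} to cover $\rno\in\{0,1,\infty\}$ by continuity in the order) gives
\begin{align}
\notag
\RMI{\rno}{\oprod_{\tin\in\tinS}\mP_{\tin}}{\cha} = \sum_{\tin\in\tinS}\RMI{\rno}{\mP_{\tin}}{\cha_{\tin}}.
\end{align}

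For the lower bound, I would restrict the supremum in Definition \ref{def:capacity} to product priors and take the supremum coordinatewise; this yields $\RC{\rno}{\cha}\geq\sum_{\tin}\RC{\rno}{\cha_{\tin}}$. The argument uniformly handles the case where some $\RC{\rno}{\cha_{\tin}}$ is infinite: driving $\RMI{\rno}{\mP_{\tin_{0}}}{\cha_{\tin_{0}}}$ to infinity on a single factor forces $\RC{\rno}{\cha}=\infty$, so both sides of \eqref{eq:lem:capacityProduct} are infinite and the lemma is established in that case.

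For the upper bound we may therefore assume $\RC{\rno}{\cha_{\tin}}<\infty$ for every $\tin$, so that each component center $\qmn{\rno,\cha_{\tin}}$ exists by Theorem \ref{thm:minimax}. Set $\mQ^{\star}\DEF\oprod_{\tin}\qmn{\rno,\cha_{\tin}}\in\pmea{\outS,\outA}$ and test it in the radius characterization \eqref{eq:thm:minimaxradius}. The additivity of the R\'{e}nyi divergence on product measures, which is immediate from Definition \ref{def:divergence} via a product reference, gives for every $\mW=\oprod_{\tin}\mW_{\tin}\in\cha$
\begin{align}
\notag
\RD{\rno}{\mW}{\mQ^{\star}} = \sum_{\tin\in\tinS}\RD{\rno}{\mW_{\tin}}{\qmn{\rno,\cha_{\tin}}}\leq\sum_{\tin\in\tinS}\RC{\rno}{\cha_{\tin}},
\end{align}
where the inequality uses \eqref{eq:thm:minimaxradiuscenter}. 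Taking the supremum over $\mW\in\cha$ (which splits coordinatewise and attains each coordinate's radius) and then the infimum over $\mQ$ yields $\RC{\rno}{\cha}\leq\sum_{\tin}\RC{\rno}{\cha_{\tin}}$, completing the proof of \eqref{eq:lem:capacityProduct}. Moreover, the product $\mQ^{\star}$ attains the value $\RC{\rno}{\cha}$ as a radius, so the uniqueness of the R\'{e}nyi center in Theorem \ref{thm:minimax} forces $\qmn{\rno,\cha}=\mQ^{\star}$.

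The only mildly technical step is the mean measure factorization at the boundary orders $\rno\in\{0,\infty\}$; once that is in place, everything reduces to Theorem \ref{thm:minimax} and the trivial additivity of the R\'{e}nyi divergence on product measures, so I expect no serious obstacle.
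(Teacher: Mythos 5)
Your proposal is correct and follows essentially the same route as the paper: additivity of the R\'{e}nyi information over product priors gives $\RC{\rno}{\cha}\geq\sum_{\tin}\RC{\rno}{\cha_{\tin}}$, the product of component centers tested against \eqref{eq:thm:minimaxradius} (together with additivity of the R\'{e}nyi divergence on products) gives the reverse inequality, and the uniqueness clause of Theorem \ref{thm:minimax} identifies the center. The only cosmetic difference is that you spell out the mean-measure factorization $\mmn{\rno,\oprod\mP_{\tin}}=\oprod\mmn{\rno,\mP_{\tin}}$ as the mechanism behind the information additivity, whereas the paper simply asserts that additivity directly from the definition; the paper also explicitly invokes Tonelli--Fubini for the divergence additivity, which your argument uses implicitly.
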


Quite frequently,
the information transmission problems are analyzed on the product \(\cha\)'s.
Lemma \ref{lem:capacityProduct} is instrumental when that is the case. 
The derivation of the sphere packing bound presented in 
\cite[\S \ref*{B-sec:product-outerbound}]{nakiboglu19B} is a case in point.
The additivity of the \renyi capacity for products was first reported by Gallager
---in a slightly different form and for finite \(\cha\) and \(\outS\) case--- 
in his seminal paper \cite[Thm. 5]{gallager65}, see also
\cite[pp. 149-150, (5.6.59)]{gallager}.
Later, Augustin proved \cite[Lemma 26.7a]{augustin78}, which implies Lemma \ref{lem:capacityProduct};
see \cite[Lemma 3.6]{augustin69} for finite \(\cha\) case.

One curious question is whether or not one can give a class of priors for which the lower bound given
in \eqref{eq:capacityLB} is not too loose. 
Lemma \ref{lem:capacityEps} answers this question in the affirmative.
\begin{lemma}\label{lem:capacityEps}
For any \(\rno \in  (0,\infty]\),
\(\cha \subset \pmea{\outA}\) with finite \(\RC{\rno}{\cha}\),
and \(\epsilon\geq 0\), let \(\cha_{\rno,\epsilon}\) be
\begin{align}
\label{eq:def:kcha}
\cha_{\rno,\epsilon} 
&\DEF\left\{\mW\in\cha:\RD{\rno}{\mW}{\qmn{\rno,\cha}}\geq\RC{\rno}{\cha}-\epsilon\right\}.
\end{align}
Then for any \(\epsilon>0\), we have  \(\RC{\rno}{\cha_{\rno,\epsilon} }=\RC{\rno}{\cha}\)   
and\footnote{For \(\rno=\infty\),  \eqref{eq:lem:capacityEps} is valid for a broader class of \(\mP\)'s in particular
for all \(\mP\)'s such that \((\sum_{\mW \in \cha_{\rno,\epsilon} } \mP(\mW))>0\).}
\begin{align}
\label{eq:lem:capacityEps}
0\leq\RC{\rno}{\cha}-\RMI{\rno}{\mP}{\cha}-\RD{\rno}{\qmn{\rno,\mP}}{\qmn{\rno,\cha}} 
&\leq \epsilon
\end{align}
for all \(\mP\) in \(\pdis{\cha_{\rno,\epsilon} }\).
Furthermore, if \(\cha\) is a finite set, then \(\RC{\rno}{\cha_{\rno,0}}=\RC{\rno}{\cha}\) 
and \eqref{eq:lem:capacityEps} holds for \(\epsilon=0\).  
\end{lemma}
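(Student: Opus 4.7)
My plan is to prove the two main assertions of the lemma in sequence and then handle the finite \(\cha\), \(\epsilon=0\) clause as a corollary. For the inequality \eqref{eq:lem:capacityEps}, I will use the Sibson identity \eqref{eq:sibson} specialized to \(\mQ=\qmn{\rno,\cha}\), which gives \(\RD{\rno}{\mP \mtimes \cha}{\mP \otimes \qmn{\rno,\cha}} = \RMI{\rno}{\mP}{\cha} + \RD{\rno}{\qmn{\rno,\mP}}{\qmn{\rno,\cha}}\), and then show that the left-hand side lies in \([\RC{\rno}{\cha}-\epsilon,\RC{\rno}{\cha}]\) whenever \(\mP \in \pdis{\cha_{\rno,\epsilon}}\). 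For \(\rno\in\reals{+}\setminus\{1\}\), expanding via the elementary identity \(\RD{\rno}{\mP\mtimes\cha}{\mP\otimes\mQ} = \tfrac{1}{\rno-1}\ln \sum_{\mW} \mP(\mW) e^{(\rno-1)\RD{\rno}{\mW}{\mQ}}\) reduces the task to the pointwise bound \(\RC{\rno}{\cha}-\epsilon \leq \RD{\rno}{\mW}{\qmn{\rno,\cha}} \leq \RC{\rno}{\cha}\) for every \(\mW \in \supp{\mP}\): the upper estimate is \eqref{eq:thm:minimaxradiuscenter} of Theorem \ref{thm:minimax}, and the lower estimate is the defining property of \(\cha_{\rno,\epsilon}\). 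The sign of \(\rno-1\) reverses the monotonicity inside the exponential, but this reversal is cancelled upon dividing by \(\rno-1\), so the bracketing interval is the same for \(\rno>1\) and \(\rno<1\). The boundary cases \(\rno=1\) and \(\rno=\infty\) reduce to a linear average and to a maximum, respectively, and are treated by the same two-sided estimate; in the latter case only \(\supp{\mP}\cap\cha_{\infty,\epsilon}\neq\emptyset\) is required, which accounts for the broader class mentioned in the footnote.

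For the equality \(\RC{\rno}{\cha_{\rno,\epsilon}}=\RC{\rno}{\cha}\), the direction \(\leq\) is immediate from \(\cha_{\rno,\epsilon}\subseteq\cha\), and I will establish the converse by contradiction through a perturbation of \(\qmn{\rno,\cha}\). Assume \(\RC{\rno}{\cha_{\rno,\epsilon}}<\RC{\rno}{\cha}\); then by the radius characterization \eqref{eq:thm:minimaxradius} there exist \(\tilde{\mQ}\in\pmea{\outS,\outA}\) and \(\eta>0\) with \(\sup_{\mW\in\cha_{\rno,\epsilon}}\RD{\rno}{\mW}{\tilde{\mQ}}\leq\RC{\rno}{\cha}-\eta\). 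Consider \(\mQ_{\lambda}\DEF(1-\lambda)\qmn{\rno,\cha}+\lambda\tilde{\mQ}\). For \(\mW\in\cha_{\rno,\epsilon}\), joint convexity in the second argument (Lemma \ref{lem:divergence-convexity}) yields \(\RD{\rno}{\mW}{\mQ_{\lambda}}\leq(1-\lambda)\RC{\rno}{\cha}+\lambda(\RC{\rno}{\cha}-\eta)=\RC{\rno}{\cha}-\lambda\eta\). For \(\mW\in\cha\setminus\cha_{\rno,\epsilon}\), the monotonicity and scaling clauses of Lemma \ref{lem:divergence-RM} applied with \((1-\lambda)\qmn{\rno,\cha}\leq\mQ_{\lambda}\) give \(\RD{\rno}{\mW}{\mQ_{\lambda}}\leq\RD{\rno}{\mW}{\qmn{\rno,\cha}}+\ln\tfrac{1}{1-\lambda}<\RC{\rno}{\cha}-\epsilon+\ln\tfrac{1}{1-\lambda}\). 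Picking any \(\lambda\in(0,1-e^{-\epsilon})\), a non-empty interval precisely because \(\epsilon>0\), both bounds are strictly less than \(\RC{\rno}{\cha}\); hence \(\RRR{\rno}{\cha}{\mQ_{\lambda}}<\RC{\rno}{\cha}\), contradicting the identity \(\RR{\rno}{\cha}=\RC{\rno}{\cha}\) of Theorem \ref{thm:minimax}.

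For the final assertion, when \(\cha\) is finite the supremum in \(\sup_{\mW\in\cha}\RD{\rno}{\mW}{\qmn{\rno,\cha}}=\RC{\rno}{\cha}\) is attained and there is a strict gap \(\epsilon'\DEF\min\{\RC{\rno}{\cha}-\RD{\rno}{\mW}{\qmn{\rno,\cha}}:\mW\in\cha\setminus\cha_{\rno,0}\}>0\) (interpreted as \(\infty\) when the indexing set is empty). Consequently \(\cha_{\rno,\epsilon}=\cha_{\rno,0}\) for every \(\epsilon\in(0,\epsilon')\), so the two previous parts apply verbatim with any such \(\epsilon\); letting \(\epsilon\downarrow0\) in the resulting upper bound in \eqref{eq:lem:capacityEps} collapses it to equality and yields the \(\epsilon=0\) version. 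The main technical care lies in the first paragraph, where the sign of \(\rno-1\) must be tracked correctly through the exponential and the logarithm together with the separate handling of \(\rno=1\) and \(\rno=\infty\); the perturbation step in the second paragraph is otherwise a clean convex argument whose only delicate point is that the logarithmic correction from Lemma \ref{lem:divergence-RM} is exactly what the hypothesis \(\epsilon>0\) allows one to absorb.
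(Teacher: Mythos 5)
Your proof is correct and follows essentially the same three-step structure as the paper's: (i) the two-sided bound in \eqref{eq:lem:capacityEps} via the Sibson identity \eqref{eq:sibson}, the definition of \(\cha_{\rno,\epsilon}\), and \eqref{eq:thm:minimaxradiuscenter}; (ii) a perturbation of the center combined with Lemma \ref{lem:divergence-convexity}, Lemma \ref{lem:divergence-RM}, and the minimax identity \(\RC{\rno}{\cha}=\RR{\rno}{\cha}\) to conclude \(\RC{\rno}{\cha_{\rno,\epsilon}}=\RC{\rno}{\cha}\); and (iii) the finite-gap observation for the \(\epsilon=0\) clause when \(\cha\) is finite. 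The only genuine, if minor, structural difference is in step (ii): you assume \(\RC{\rno}{\cha_{\rno,\epsilon}}<\RC{\rno}{\cha}\) directly, take an abstract near-optimal \(\tilde{\mQ}\), and keep the mixing coefficient \(\lambda\) as a free parameter constrained by \(\lambda<1-e^{-\epsilon}\), whereas the paper fixes \(\tilde{\mQ}=\qmn{\rno,\cha_{\rno,\epsilon}}\), fixes the coefficient to \(e^{-\epsilon/2}\), and derives the contradiction through the equality of centers plus Lemma \ref{lem:capacityUnion}. Your route is marginally cleaner in that it bypasses Lemma \ref{lem:capacityUnion} entirely; the paper's route additionally makes the center equality \(\qmn{\rno,\cha_{\rno,\epsilon}}=\qmn{\rno,\cha}\) explicit in the course of the argument, though that equality follows afterward from Lemma \ref{lem:capacityUnion} in either treatment. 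One small nit: you invoke ``joint convexity'' where Lemma \ref{lem:divergence-convexity} is stated only as convexity in the second argument; that is all you actually use, so no harm is done, but the label should be corrected.
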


The main conclusion of Lemma \ref{lem:capacityEps} is the equality 
\(\RC{\rno}{\cha_{\rno,\epsilon}}=\RC{\rno}{\cha}\) for positive \(\epsilon\)'s. 
This is expected for a general \(\cha\) and evident, even for \(\epsilon=0\) case,  
for a finite \(\cha\)  because of the existence of an optimal \(\mP\) in \(\pdis{\cha}\)
for finite \(\cha\)'s.
One might be tempted to assume the validity of the assertions for \(\epsilon=0\) case for 
arbitrary \(\cha\)'s.
This, however, is not true; see Example \ref{eg:erasure} for a \(\cha\) for which \(\RC{\rno}{\cha}>0\) 
and \(\RC{\rno}{\cha_{\rno,0} }=0\).
Thus finiteness of \(\cha\) is not a superficial hypothesis 
for extending the claims to \(\epsilon=0\) case.

In order to apply certain technical tools, we occasionally need a given set 
to be closed, convex, or compact. 
The observations presented in Lemma \ref{lem:capacityEXT}, given in the following,
can be helpful in such situations. 
For example, 
if we can prove a statement about \renyi capacity assuming \(\cha\) to be
convex, then we can assert that statement for non-convex \(\cha\)'s using 
Lemma \ref{lem:capacityEXT}-(\ref{capacityEXT-ch}).
Furthermore, in certain situations, calculating the \renyi capacity might 
be easier for the convex hull or the closure of \(\cha\) when compared 
to \(\cha\) itself, see Example \ref{eg:erasure}.
Lemma \ref{lem:capacityEXT}-(\ref{capacityEXT-ch},\ref{capacityEXT-cl}) is
helpful in such situations. 
Note that
Lemma \ref{lem:capacityEXT}-(\ref{capacityEXT-ch},\ref{capacityEXT-cl})
when considered together with Lemma \ref{lem:capacityUnion} imply 
the equality of the \renyi centers of \(\cha\), \(\conv{\cha}\), and \(\clos{\cha}\)
whenever one of them exists.

\begin{lemma}\label{lem:capacityEXT}
Let \(\cha\)  be a subset of \(\pmea{\outA}\). 
\begin{enumerate}[(a)]
\item\label{capacityEXT-ch}
 \(\RC{\rno}{\conv{\cha}}=\RC{\rno}{\cha}\) for all \(\rno\in(0,\infty]\) where \(\conv{\cha}\) is the convex hull
of \(\cha\) given by \(\conv{\cha}\DEF \{\mmn{1,\mP}:\mP\in \pdis{\cha}\}\).

\item\label{capacityEXT-cl}
\(\RC{\rno}{\clos{\cha}}=\RC{\rno}{\cha}\) for all \(\rno\in(0,\infty]\) where \(\clos{\cha}\) is the closure of 
\(\cha\) in the topology of setwise convergence or a stronger topology on \(\pmea{\outA}\).

\item\label{capacityEXT-compact-S}
If \(\RC{\rnt}{\cha}\!<\!\infty\) for an \(\rnt\!\geq\!1\), then 
\(\{\!\mmn{\rno,\mP}\!:\!\rno\!\in\![0,\rnt],\mP\!\in\!\pdis{\!\cha\!}\!\}\) 
is uniformly absolutely continuous with respect to \(\qmn{\rnt,\cha}\) and relatively compact in both 
the topology of setwise convergence and the weak topology.

\item\label{capacityEXT-compact-N} The following four statements are 
equivalent:\footnote{Augustin proves the equivalence of 
	\(\lim_{\rno \uparrow 1} \tfrac{1-\rno}{\rno}\RC{\rno}{\cha}=0\)
	and \(\exists\mean\in\pmea{\outA}\) such that \(\cha\UAC\mean\),
	using Gallager's inner bound \cite[Thm. 1]{gallager65} and a different 
	characterization of the relative compactness he derives in \cite{augustin78}.
	Our proof is measure theoretic and self-contained.}
\begin{enumerate}[(i)]
\item \(\lim_{\rno \uparrow 1} \tfrac{1-\rno}{\rno}\RC{\rno}{\cha}=0\).
\item \(\exists\mean\in\pmea{\outA}\) such that \(\cha\UAC\mean\).
\item \(\cha\) has compact closure in the topology of setwise convergence.
\item \(\cha\) has compact closure in the weak topology.
\end{enumerate}
\end{enumerate}
\end{lemma}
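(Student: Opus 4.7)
The inclusions $\cha\subset\conv{\cha}$ and $\cha\subset\clos{\cha}$ immediately give $\RC{\rno}{\cha}\leq\RC{\rno}{\conv{\cha}}$ and $\RC{\rno}{\cha}\leq\RC{\rno}{\clos{\cha}}$, so only the reverse inequalities need work. The plan in both cases is to first establish, for every fixed $\mQ\in\pmea{\outS,\outA}$, the pointwise bound $\sup_{\mV\in\cha'}\RD{\rno}{\mV}{\mQ}\leq\sup_{\mW\in\cha}\RD{\rno}{\mW}{\mQ}$ (with $\cha'$ being $\conv{\cha}$ or $\clos{\cha}$) and then to pass to the infimum over $\mQ$ via the radius characterization \eqref{eq:thm:minimaxradius} of Theorem~\ref{thm:minimax}. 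For (a) the pointwise bound is an induction on the length of the convex combination using the joint quasi-convexity of the R\'enyi divergence (Lemma~\ref{lem:divergence-quasiconvexity}) specialized to both second arguments equal to $\mQ$. For (b), each $\mV\in\clos{\cha}$ is a setwise limit of some net in $\cha$, and the lower semicontinuity of $\RD{\rno}{\cdot}{\mQ}$ in its first argument (Lemma~\ref{lem:divergence:lsc}) supplies the required $\RD{\rno}{\mV}{\mQ}\leq\sup_{\mW\in\cha}\RD{\rno}{\mW}{\mQ}$.

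\textbf{Part (c).} Here I would start from the van Erven--\harremoes bound (Lemma~\ref{lem:EHB}) at $\mQ=\qmn{\rnt,\cha}$, which gives $\RD{\rnt}{\mW}{\qmn{\rnt,\cha}}\leq\RC{\rnt}{\cha}$ for every $\mW\in\cha$; in particular $\mW\AC\qmn{\rnt,\cha}$ and, writing $f_\mW\DEF\der{\mW}{\qmn{\rnt,\cha}}$,
\begin{align*}
\int f_\mW^{\rnt}\,\qmn{\rnt,\cha}(\dif{\dout})\leq e^{(\rnt-1)\RC{\rnt}{\cha}}.
\end{align*}
For every $\mP\in\pdis{\cha}$ and $\oev\in\outA$, the definition of $\mmn{\rnt,\mP}$, H\"older's inequality with exponents $(\rnt,\rnt/(\rnt-1))$, and Fubini then combine to give
\begin{align*}
\mmn{\rnt,\mP}(\oev)\leq e^{(\rnt-1)\RC{\rnt}{\cha}/\rnt}\,\qmn{\rnt,\cha}(\oev)^{(\rnt-1)/\rnt}.
\end{align*}
Because $\nmn{\rno,\mP}$ is pointwise nondecreasing in $\rno$ by Lemma~\ref{lem:powermeandensityO}-(\ref{powermeandensityO-d}), the same estimate propagates to $\mmn{\rno,\mP}$ for every $\rno\in[0,\rnt]$, which is precisely the uniform absolute continuity claim. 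Combined with the total-variation bound $\lon{\mmn{\rno,\mP}}\leq e^{(\rnt-1)\RC{\rnt}{\cha}/\rnt}$, the Dunford--Pettis theorem \cite[4.7.25]{bogachev} yields relative compactness in the topology of setwise convergence, which automatically implies the same in the (coarser) weak topology.

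\textbf{Part (d).} The plan is to take (ii) as the hub. The equivalence (ii)$\,\Leftrightarrow\,$(iii) is a direct invocation of Dunford--Pettis for norm-bounded subsets of $\pmea{\outS,\outA}$, and (iii)$\,\Rightarrow\,$(iv) is automatic because the setwise topology is finer than the weak one. For (ii)$\,\Rightarrow\,$(i), note that $\cha\UAC\mean$ is equivalent to the uniform $\mean$-integrability of $\{\der{\mW}{\mean}\}_{\mW\in\cha}$; splitting the integral defining $\lon{\mmn{\rno,\mP}}$ over a level set of the densities then yields $\lon{\mmn{\rno,\mP}}\to 1$ uniformly in $\mP\in\pdis{\cha}$ as $\rno\uparrow 1$, which gives (i) via the identity $\tfrac{1-\rno}{\rno}\RC{\rno}{\cha}=\sup_\mP\ln(1/\lon{\mmn{\rno,\mP}})$ used in the proof of Lemma~\ref{lem:capacityO}-(\ref{capacityO-zo}). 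The main obstacle is the converse (i)$\,\Rightarrow\,$(ii): the hypothesis together with Lemma~\ref{lem:capacityO}-(\ref{capacityO-zofintieness}) guarantees centers $\qmn{\rno_n,\cha}$ along a sequence $\rno_n\uparrow 1$, which form a total-variation Cauchy sequence by Lemma~\ref{lem:centercontinuity}; my plan is to take $\mean$ as a convex combination of these centers and to extract a uniform modulus of absolute continuity from the van Erven--\harremoes bound $\RD{\rno_n}{\mW}{\qmn{\rno_n,\cha}}\leq\RC{\rno_n}{\cha}$ using the vanishing factor $(1-\rno_n)\RC{\rno_n}{\cha}\to 0$ and a reverse-H\"older argument adapted to the sub-unital order $\rno_n<1$. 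Finally, (iv)$\,\Rightarrow\,$(ii) is closed via the fact that a weakly relatively compact family of probability measures is equicountably additive, whereupon UAC with respect to a countable mixture extracted from its weak closure is standard.
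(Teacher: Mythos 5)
Parts~(a) and~(b) are correct and take the same ingredients as the paper (quasi-convexity and lower semicontinuity of the divergence plus the radius characterization of Theorem~\ref{thm:minimax}); the only cosmetic difference is that you run the pointwise bound for arbitrary $\mQ$ and then infimize, whereas the paper fixes $\mQ=\qmn{\rno,\cha}$ after first disposing of the trivial $\RC{\rno}{\cha}=\infty$ case. Both are valid.

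Part~(c) has a genuine gap at $\rnt=1$, which the hypothesis explicitly allows. Your H\"older bound
\[
\mmn{\rnt,\mP}(\oev)\leq e^{\frac{\rnt-1}{\rnt}\RC{\rnt}{\cha}}\,\qmn{\rnt,\cha}(\oev)^{\frac{\rnt-1}{\rnt}}
\]
degenerates to $\mmn{1,\mP}(\oev)\leq 1$ when $\rnt=1$: the exponent $(\rnt-1)/\rnt$ vanishes and you get no modulus of absolute continuity at all. For $\rnt>1$ your estimate is clean and actually stronger than the paper's (you obtain an explicit H\"older modulus rather than just UAC via de la Vall\'ee Poussin), but at $\rnt=1$ a separate argument is unavoidable. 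The paper handles this with the growth function $G(\dinp)=\abp{\dinp\ln\dinp}$, bounding $\int\abp{\der{\mmn{1,\mP}}{\qmn{1,\cha}}\ln\der{\mmn{1,\mP}}{\qmn{1,\cha}}}\qmn{1,\cha}(\dif\dout)$ by $\RC{1}{\cha}+1/e$ and invoking de la Vall\'ee Poussin directly. You would need to add something of this kind.

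In part~(d), two issues. First, the automatic inclusion goes the other way: by the paper's Remark~\ref{rem:weaktopology} the weak topology is \emph{finer} than the topology of setwise convergence, so it is (iv)\,$\Rightarrow$\,(iii) that is immediate, not (iii)\,$\Rightarrow$\,(iv); the nontrivial coincidence of the two classes of compact sets is precisely what \cite[Thm.~4.7.25]{bogachev} gives you, so you must cite it for (iii)\,$\Rightarrow$\,(iv) rather than wave it away. Second, and more importantly, your plan for (i)\,$\Rightarrow$\,(ii) does not yet contain a proof. For orders $\rno_n<1$, finiteness of $\RD{\rno_n}{\mW}{\qmn{\rno_n,\cha}}$ gives \emph{no} absolute continuity information, so a ``reverse-H\"older argument adapted to the sub-unital order'' is not a routine step; this is exactly where all the work is. The paper's route is to apply the data-processing inequality (Lemma~\ref{lem:divergence-DPI}) to the two-element partition $\{\oev,\outS\setminus\oev\}$, giving a binary bound $\fX(\mW(\oev),\qmn{\rnf,\cha}(\oev))\geq(1-\epsilon/2)^{\rnf}$ with $\fX(\dinp,\dsta)=\dinp^{\rnf}\dsta^{1-\rnf}+(1-\dinp)^{\rnf}(1-\dsta)^{1-\rnf}$, then extracting a $\delta(\epsilon)$ from the monotonicity of $\fX$ in its arguments, and finally summing the resulting centers with geometric weights. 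You should either adopt that route or make the ``reverse-H\"older'' quantitative before claiming the implication.
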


Each assertion of Lemma \ref{lem:capacityEXT} 
is proved using Theorem \ref{thm:minimax} together with some
other observations.
The invariance of \(\RC{\rno}{\cha}\) under the closure and 
the convexification operations on \(\cha\), presented in 
Lemma \ref{lem:capacityEXT}-(\ref{capacityEXT-ch},\ref{capacityEXT-cl}),
follow from the lower semicontinuity and 
the quasi-convexity of the \renyi divergence in its first  argument. 
Lemma \ref{lem:capacityEXT}-(\ref{capacityEXT-compact-S}) follows
from the monotonicity of \(\mmn{\rno,\mP}\) in \(\rno\) and 
de la Vall\'{e}e Poussin's characterization of the 
uniform integrability, i.e.  \cite[Thm. 4.5.9]{bogachev}.

Arguably, the most interesting observation of Lemma \ref{lem:capacityEXT} is the 
following:
\(\lim_{\rno \uparrow1} \tfrac{\rno-1}{\rno}\RC{\rno}{\cha}=0\) iff there exists a 
\(\mean\) in \(\pmea{\outA}\) satisfying \(\cha\UAC\mean\). This 
characterization is important because \(\cha\) 
is relatively compact, i.e. has a compact closure, in the topology of setwise convergence 
iff there exists a \(\mean\) in \(\pmea{\outA}\) satisfying \(\cha\UAC\mean\) by 
\cite[Thm. 4.7.25]{bogachev}. 
Since the topology of set wise convergence and the weak 
topology have exactly the same family of sets as their compact sets by \cite[Thm. 4.7.25]{bogachev}, 
the uniform absolute continuity also characterizes the relative compactness in the weak topology.

\begin{remark}\label{rem:weaktopology} 
	The weak topology on \(\smea{\outA}\) is 
	the topology generated by all continuous linear functions from \(\smea{\outA}\) 
	with the total variation topology to \(\reals{}\) with its usual topology. 
	Then the weak topology is weaker than the total variation topology, 
	i.e. the initial topology. 
	On the other hand, the topology of setwise convergence on \(\smea{\outA}\) is 
	the topology generated by the functions 
	\(\{\fX_{\oev}:\oev \in \outA\}\) where \(\fX_{\oev}(\mean)=\int_{\oev} \mean(\dif{\dout})\)
	for \(\oev\) in \(\outA\) and \(\mean\) in \(\smea{\outA}\). 
	Since \(\fX_{\oev}:\smea{\outA}\to\reals{}\) is a continuous linear function 
	for any \(\oev\in \outA\),
	the weak topology is stronger than the topology of setwise convergence.
	Nevertheless, the weak topology and the topology of setwise 
	convergence have exactly the same class of compact sets, 
	\cite[Thm. 4.7.25]{bogachev}.
	
	Our use of the term weak topology is consistent with the convention used in functional analysis, 
	see \cite[pp. 281,291]{bogachev}.
	While discussing the convergence of measures, however, the term weak topology is commonly used 
	to describe another topology. 
	If there is a topology on \(\outS\) and \(\outA\) is the resulting 
	Baire \(\sigma-\)algebra \cite[p. 12]{bogachev} of the subsets of \(\outS\), 
	then one can interpret 
	the space of measures as a space of linear functionals on the space of continuous and bounded 
	functions on \(\outS\). The weak* topology on the space of measures in this setting is often 
	called the weak topology \cite[Def. 8.1.2]{bogachev}. Although it is a very important 
	and useful concept in general, the weak topology in this second sense is not relevant in our 
	discussion because we have not assumed any topological structure on \(\outS\).
	\end{remark}  
\section{Examples}\label{sec:examples}
The order \(\rno\) \renyi entropy of a binary random variable, denoted \(\hX_{\rno}(\delta)\),
allows us to write certain expressions succinctly in some of the examples. 
For any \(\delta\in[0,1]\) it is defined as 
\begin{align}
\label{eq:def:binaryentropy}
\hX_{\rno}(\delta)
&\DEF 
\begin{cases}
\tfrac{1}{1-\rno}\ln (\delta^{\rno}+(1-\delta)^{\rno})
&\rno \neq 1 
\\
\delta\ln \tfrac{1}{\delta}
+(1-\delta)\ln \tfrac{1}{1-\delta}
&\rno=1
\end{cases}.
\end{align}
\subsection[Probabilities on Discrete Output Spaces]{Probabilities on Discrete Output Spaces}\label{sec:examples-discrete}
For \(\rno\geq1\), \(\qmn{\rno,\cha}\perp\qmn{\rno,\chu}\) iff \(\cha\perp\chu\). 
For \(\rno\in(0,1)\), \(\cha\perp\chu\) implies \(\qmn{\rno,\cha}\perp\qmn{\rno,\chu}\) 
but the converse is not true, i.e. 
\(\qmn{\rno,\cha}\perp\qmn{\rno,\chu}\) can hold even when \(\cha\) and \(\chu\)
are nonsingular.
Examples \ref{eg:singular-finite} and \ref{eg:singular-countable} provide 
such sets of probability measures. 
\begin{example}\label{eg:singular-finite}
For a \(\delta\in (0,\tfrac{1}{5})\), let \(\cha\) and \(\chu\) 
be\footnote{When \(\cha\) and \(\outS\) are finite sets and \(\outA\!=\!\sss{\outS}\), it is customary to describe \(\cha\)
	using a matrix. Each row corresponds to an element of \(\cha\), each column corresponds to an element of \(\outS\)
	and the element on the row \(\mW\) and  the column \(\dout\) is equal to \(\mW(\dout)\). 
	With a slight abuse of notation we denote the resulting matrix by \(\cha\), as well.}
\begin{align}
\notag
\cha
&= \left[
\begin{matrix}
1-\delta             & \delta               & 0                    & 0\\
\delta               & 1-\delta             & 0                    & 0\\
\tfrac{1-\delta}{2}  & \tfrac{1-\delta}{2}  & \tfrac{\delta}{2}  & \tfrac{\delta}{2}
\end{matrix} 
\right]
&
\chu
&= \left[
\begin{matrix}
0                      & 0                      & 1-\delta           & \delta               \\
0                      & 0                      & \delta             & 1-\delta              
\end{matrix} 
\right]
\end{align}
The third member of \(\cha\) is not singular with the members of \(\chu\); 
thus \(\cha\) is not singular with \(\chu\). 
We show in the following that \(\qmn{\rno,\cha}\perp\qmn{\rno,\chu}\) for all 
\(\rno \in (0,\fX^{-1}(\tfrac{\delta}{1-\delta})]\) where \(\fX^{-1}:[0,\tfrac{1}{4}]\to[0,1]\) 
is the inverse of the bijective decreasing function \(\fX(\dinp)\DEF(2^{1-\dinp}-1)^{\sfrac{1}{\dinp}}\).

For \(\chu\) and \(\mP=[\begin{matrix} \sfrac{1}{2} & \sfrac{1}{2} \end{matrix}]\) we have
\begin{align}
\notag
\RMI{\rno}{\mP}{\chu}
&= \ln 2-\hX_{\rno}(\delta)
&
&
&
\qmn{\rno,\mP}
&=\left[\begin{matrix} 0 & 0 & \sfrac{1}{2} & \sfrac{1}{2} \end{matrix} \right]
\end{align}
where \(\hX_{\rno}(\delta)\) is  defined in  \eqref{eq:def:binaryentropy}.
On the other hand, both \(\mU\)'s in \(\chu\) satisfy
\(\RD{\rno}{\mU}{\qmn{\rno,\mP}}=\ln 2-\hX_{\rno}(\delta)\) . 
Then \eqref{eq:necessaryandsufficientp} implies that
\(\RC{\rno}{\chu}=\ln 2-\hX_{\rno}(\delta)\) and 
\(\qmn{\rno,\chu}=\left[\begin{matrix} 0 & 0 & \sfrac{1}{2} & \sfrac{1}{2} \end{matrix} \right]\).

For \(\cha\) and \(\widetilde{\mP}=[\begin{matrix} \sfrac{1}{2} & \sfrac{1}{2} & 0 \end{matrix}]\) we have
\begin{align}
\notag
\RMI{\rno}{\widetilde{\mP}}{\cha}
&= \ln 2-\hX_{\rno}(\delta)
&
\qmn{\rno,\widetilde{\mP}}
&=\left[\begin{matrix} \tfrac{1}{2} & \tfrac{1}{2} & 0 & 0 \end{matrix} \right].
\end{align}
The first two \(\mW\)'s in \(\!\cha\!\) satisfy
\(\RD{\rno}{\mW}{\qmn{\rno,\widetilde{\mP}}}\!=\!\ln 2\!\!-\!\hX_{\rno}(\delta)\).
The third one satisfy \(\RD{\rno}{\mW}{\qmn{\rno,\widetilde{\mP}}}\!\leq\!\ln 2\!\!-\!\hX_{\rno}(\delta)\)
if and only if  \(\rno\!\leq\!\fX^{-1}(\frac{\delta}{1-\delta})\).
Consequently, \eqref{eq:necessaryandsufficientp} implies 
\(\RC{\rno}{\cha}\!=\!\ln 2\!-\!\hX_{\rno}(\delta)\) and 
\(\qmn{\rno,\cha}=\left[\begin{matrix}\sfrac{1}{2} & \sfrac{1}{2} &  0 & 0 \end{matrix} \right]\)
for all \(\rno \in (0,\fX^{-1}(\frac{\delta}{1-\delta})]\).
\end{example}

Example \ref{eg:singular-countable} provides sets of 
probability measures that are not even pairwise disjoint but 
they have singular \renyi centers for all orders in \((0,1)\). 
Example \ref{eg:singular-countable} also demonstrates the 
possible absence of an optimal prior for infinite sets of 
probability measures.

\begin{example}\label{eg:singular-countable}
	Let \((\outS,\outA)\) be \((\integers{},\sss{\integers{}})\) and let \(\cha_{\ind}\) be
	\begin{align}
	\notag
	\cha_{\ind}
	&=\{\mW^{\ind,\jnd}: \jnd \in \integers{}\}.
	&
	&\forall \ind \in \integers{} 
	\end{align}
	where \(\mW^{\ind,\jnd}(\dout)=(\IND{\dout=\ind}+\IND{\dout=\jnd})/2\).
	
	For any \(\rno\) in \((0,1)\) and sequence \(\{\pmn{\knd}\}_{\knd\in\integers{+}}\subset \pdis{\cha_{\ind}}\) of 
	uniform distributions with strictly increasing support
	\(\qmn{\rno,\pmn{\knd}}\) converges to \(\IND{\cdot=\ind}\) 
	in the total variation topology and 
	\(\lim\nolimits_{\knd \to \infty} \RMI{\rno}{\pmn{\knd}}{\cha_{\ind}}=\tfrac{\rno \ln 2}{1-\rno}\). 
	Furthermore, if \(\mQ(\cdot)=\IND{\cdot=\ind}\) then
	\(\RD{\rno}{\mW}{\mQ}\leq\tfrac{\rno \ln 2}{1-\rno}\)
	for all \(\mW\in \cha_{\ind}\). 
	Thus \(\RC{\rno}{\cha_{\ind}}=\tfrac{\rno \ln 2}{1-\rno}\) 
	and \(\qmn{\rno,\cha_{\ind}}(\cdot)=\IND{\cdot=\ind}\) 
	for all \(\rno\!\in\!(0,1)\)
	by \eqref{eq:necessaryandsufficientseq}.
	
	Note that \(\cha_{\ind}\)'s are not singular with one another, 
	in fact \(\cha_{\ind}\cap\cha_{\jnd}=\{\mW^{\ind,\jnd}\}\).
	Nonetheless, 
	\(\qmn{\rno,\cha_{\ind}} \perp \qmn{\rno,\cha_{\jnd}}\) for all \(\rno\)
	in \((0,1)\) whenever \(\ind\neq \jnd\)
	and we can use Lemma \ref{lem:capacityUnion} to calculate the
	\renyi capacity of any finite union of \(\cha_{\ind}\)'s. 
	For any finite set of integers \(\cset\), let \(\cha_{\cset} \) be 
	\(\cha_{\cset}=\cup_{\ind\in \cset} \cha_{\ind}\); then 
	\begin{align}
	\notag
	\RC{\rno}{\cha_{\cset}}
	&=\tfrac{\rno \ln 2}{1-\rno}+\ln \abs{\cset}
	&
	&\mbox{and}
	&
	\qmn{\rno,\cha_{\cset}}(\dout)
	&=\abs{\cset}^{-1}\IND{\dout\in\cset}.
	\end{align} 
	Furthermore, for any \(\mP\in \pdis{\cha_{\cset}}\) using \eqref{eq:sibson} 
	and \eqref{eq:lem:information:defA} we get
	\begin{align}
	\notag
	\RD{\rno}{\mP \mtimes \cha_{\cset}}{\mP\otimes\qmn{\rno,\cha_{\cset}}}
	&=\RMI{\rno}{\mP}{\cha_{\cset}}+\RD{\rno}{\qmn{\rno,\mP}}{\qmn{\rno,\cha_{\cset}}}.
	\end{align}
	Recall that \(\RD{\rno}{\mP \mtimes \cha_{\cset}}{\mP\otimes\qmn{\rno,\cha_{\cset}}}\leq\RC{\rno}{\cha_{\cset}}\)
	by Theorem \ref{thm:minimax} 
	and \(\RD{\rno}{\qmn{\rno,\mP}}{\qmn{\rno,\cha_{\cset}}}\geq0\) by
	Lemma \ref{lem:divergence-pinsker}.
	In addition
	\begin{itemize}
		\item \(\RD{\rno}{\mP \mtimes \cha_{\cset}}{\mP\otimes\qmn{\rno,\cha_{\cset}}}<\RC{\rno}{\cha_{\cset}}\)
		for any \(\mP\in\pdis{\cha_{\cset}}\) satisfying \(\mP(\mW)>0\) a \(\mW\) such that \(\supp{\mW}\subset\cset\).
		\item \(\RD{\rno}{\qmn{\rno,\mP}}{\qmn{\rno,\cha_{\cset}}}>0\)  for any \(\mP\in\pdis{\cha_{\cset}}\) 
		satisfying \(\mP(\mW)>0\) a \(\mW\) such that \(\supp{\mW}\nsubseteq\cset\).
	\end{itemize}
	Thus \(\RMI{\rno}{\mP}{\cha_{\cset}}<\RC{\rno}{\cha_{\cset}}\) for any \(\mP\in\pdis{\cha_{\cset}}\)
	and finite \(\cset\).
\end{example}

In Example \ref{eg:singular-finite} the optimal \(\mP\) satisfying \(\RMI{\rno}{\mP}{\cha}=\RC{\rno}{\cha}\)
was unique. However, this is not the case in general as demonstrated by Example \ref{eg:extendedbsc},
given in the following.
\begin{example}\label{eg:extendedbsc}
For a \(\delta\in[0,\sfrac{1}{2}]\), let \(\cha\) be
\begin{align}
\notag
\cha
&= \left[
\begin{matrix}
\delta					&\delta					&\sfrac{1}{2}-\delta	&\sfrac{1}{2}-\delta\\
\sfrac{1}{2}-\delta		&\sfrac{1}{2}-\delta	&\delta					&\delta\\
\delta					&\sfrac{1}{2}-\delta	&\sfrac{1}{2}-\delta	&\delta\\
\sfrac{1}{2}-\delta		&\delta					&\delta					&\sfrac{1}{2}-\delta
\end{matrix} 
\right].
\end{align}
Let \(\pmn{\rnb}\) be 
\([\begin{matrix} \sfrac{\rnb}{2} & \sfrac{\rnb}{2} & \sfrac{(1-\rnb)}{2}& \sfrac{(1-\rnb)}{2} \end{matrix}]\)
for any \(\rnb\in[0,1]\).
Then for all \(\rno\) in \(\reals{+}\) and \(\rnb\) in \([0,1]\) we have
\begin{align}
\notag
\RMI{\rno}{\pmn{\rnb}}{\cha}
&=\ln 2-\hX_{\rno}(2\delta)
&
&
&
\qmn{\rno,\pmn{\rnb}}
&=\left[\begin{matrix} \sfrac{1}{4} & \sfrac{1}{4} & \sfrac{1}{4} & \sfrac{1}{4}\end{matrix} \right].
\end{align} 
Furthermore, \(\RD{\rno}{\mW}{\qmn{\rno,\pmn{\rnb}}}=\RMI{\rno}{\pmn{\rnb}}{\cha}\) 
for all \(\mW\) in \(\cha\). 
Thus \(\RMI{\rno}{\pmn{\rnb}}{\cha}=\RC{\rno}{\cha}\) and \(\qmn{\rno,\cha}=\qmn{\rno,\pmn{\rnb}}\)
for all \(\beta\) in \([0,1]\)  and \(\rno\) in \(\reals{+}\)
by \eqref{eq:necessaryandsufficientp}.
\end{example}

We have demonstrated in Example \ref{eg:singular-countable} that for certain infinite
\(\cha\)'s \(\RMI{\rno}{\mP}{\cha}<\RC{\rno}{\cha}\) for all \(\mP\) in \(\pdis{\cha}\).
Example \ref{eg:erasure}, given in the following, demonstrates that a stronger assertion
``\(\RD{\rno}{\mW}{\qmn{\rno,\cha}}<\RC{\rno}{\cha}\) for all \(\mW\) in \(\cha\)'' 
is true for certain infinite \(\cha\)'s.
Hence, the claims of Lemma \ref{lem:capacityEps}  about \(\cha_{\rno,\epsilon}\) 
cannot be extended to \(\epsilon=0\) case for infinite \(\cha\)'s, 
because for the \(\cha\) given in Example \ref{eg:erasure} 
\(\RC{\rno}{\cha}>0\) and \(\cha_{\rno,0}=\emptyset\).
\begin{example}\label{eg:erasure}
Let us assume \(\gamma\in(0,1)\) and  \(\blx\in\integers{+}\).
Let \(\outS\) be \(\{0,\ldots,\blx\}\),
\(\outA\) be \(\sss{\outS}\), \(\chu\) and \(\cha\) be
\begin{align}
\notag
\chu
&=\{\mW^{\delta,\jnd}: \jnd \in\{1,\ldots,\blx\}, \delta\in[\gamma,1]\},
\\
\notag
\cha
&=\{\mW^{\delta,\jnd}: \jnd \in\{1,\ldots,\blx\}, \delta\in(\gamma,1]\}
\end{align}
where \(\mW^{\delta,\jnd}(\dout)=\IND{\dout=\jnd}(1-\delta)+\IND{\dout=0}\delta\).

Let \(\mP\in\pdis{\chu}\) be \(\mP(\mW^{\delta,\jnd})=\tfrac{1}{\blx}\IND{\delta=\gamma}\).
Then
\begin{align}
\notag
\RMI{\rno}{\mP}{\chu}
&=\begin{cases}
\tfrac{\rno}{\rno-1}\ln \left[\gamma+(1-\gamma)\blx^{\frac{\rno-1}{\rno}}\right]
&\rno\in \reals{+}\setminus\{1\}
\\
(1-\gamma) \ln \blx
&\rno=1
\end{cases},
\\
\notag
\qmn{\rno,\mP}(\dout)
&=\tfrac{\gamma\IND{\dout=0}}{\gamma+(1-\gamma)\blx^{1-\sfrac{1}{\rno}}}
+\sum\nolimits_{\jnd=1}^{\blx}
\tfrac{(1-\gamma)\blx^{-\sfrac{1}{\rno}}\IND{\dout=\jnd}}{\gamma+(1-\gamma)\blx^{1-\sfrac{1}{\rno}}}.
\end{align}
Furthermore, one can confirm by substitution that
\begin{align}
\notag
\RD{\rno}{\!\mW^{\delta,\jnd\!}}{\!\qmn{\rno,\mP}\!}
&\!=\!\begin{cases}
\!\tfrac{1}{1-\rno}\!\ln\!\tfrac{(1+((\sfrac{1}{\gamma})-1)\blx^{1-\sfrac{1}{\rno}})^{1-\rno}}{\delta^{\rno}
	+(1-\delta)^{\rno}((\sfrac{1}{\gamma})-1)^{1-\rno}\blx^{1-\sfrac{1}{\rno}}}
&\rno\neq 1\\
\delta\ln\tfrac{\delta}{\gamma}+ (1-\delta)\ln \tfrac{(1-\delta)\blx}{1-\gamma}
&\rno=1
\end{cases}.
\end{align}
Then \(\RD{\rno}{\mW}{\qmn{\rno,\mP}}\leq \RMI{\rno}{\mP}{\chu}\)
for all \(\mW\in \chu\) and consequently,
\(\RC{\rno}{\chu}=\RMI{\rno}{\mP}{\chu}\) and \(\qmn{\rno,\chu}=\qmn{\rno,\mP}\) by \eqref{eq:necessaryandsufficientp}.

Since \(\chu\) is the closure of \(\cha\) in the topology of setwise convergence,
\(\RC{\rno}{\cha}=\RC{\rno}{\chu}\)
for all \(\rno\in\reals{+}\) by Lemma \ref{lem:capacityEXT}-(\ref{capacityEXT-cl}).
Consequently,  \(\qmn{\rno,\cha}=\qmn{\rno,\chu}\) by Lemma \ref{lem:capacityUnion}
because \(\cha\subset\chu\). 

Then
\(\RD{\rno}{\mW}{\qmn{\rno,\cha}}\!<\!\RC{\rno}{\cha}\) for all \(\mW\) in \(\cha\)
and \(\rno\) in \(\reals{+}\).
Hence
\(\RMI{\rno}{\mP}{\cha}<\RC{\rno}{\cha}\) for all \(\mP\) in \(\pdis{\cha}\) and \(\rno\) in \(\reals{+}\)
by Lemma \ref{lem:information:def}
and
\(\cha_{(\rno,0)}\!=\!\emptyset\) for all \(\rno\) in \(\reals{+}\) by definition.
\end{example}

\subsection[Shift Invariant Families of Probabilities]{Shift Invariant Families of Probabilities}\label{sec:examples-invariant}
The shift invariant sets of probability measures on the unit interval
are relatively easy to analyze.
Nevertheless, when considered as a function of the order the \renyi capacities
of these sets form a diverse collection
and it is relatively easy to construct examples and counterexamples for 
the behavior of \renyi capacity as function of the order using this family.

First we consider the set of modular shifts of a probability 
measure on the unit interval, which  is called 
``channel with additive noise on the unit circle'' by 
Agustin in   \cite{augustin78}.

\begin{example}\label{eg:shiftchannel}
Let \(\outS\) be \([0,1)\),  \(\outA\) be \(\rborel{[0,1)}\), and \(\fX\) be a non-negative 
Lebesgue measurable function  such that  \(\int_{0}^{1} \fX(\dout) \dif{\dout}=1\).
Then \(\Scha{\fX}\) is the set of all probability measures whose Radon-Nikodym derivatives
with respect to the Lebesgue measure \(\lbm\) is a mod one shift of \(\fX\):
\begin{align}
\label{eq:def:shiftchannel}
\Scha{\fX}
&\DEF \left\{\mW: \der{\mW}{\lbm}=\fX \circ \trans{\dinp}~\mbox{for some~}\dinp\in [0,1)\right\}
\end{align}
where \(\trans{\dinp}(\dout)\DEF \dout-\dinp-\lfloor\dout-\dinp\rfloor\).

Let us denote the measure whose Radon-Nikodym derivative is \(\fX\) by \(\wmn{\fX}\).
Note that \(\RD{\rno}{\mW}{\lbm}=\RD{\rno}{\wmn{\fX}}{\!\lbm}\) for any \(\mW\) in \(\Scha{\fX}\)
and \(\rno\in (0,\infty]\). 
Thus \(\sup_{\mW\in \Scha{\fX}} \RD{\rno}{\mW}{\lbm}=\RD{\rno}{\wmn{\fX}}{\!\lbm}\)
for any \(\rno\in (0,\infty]\). 

If \(\RC{\rno}{\Scha{\fX}}\) is finite for an \(\rno\in (0,\infty]\), 
then \(\exists!\qmn{\rno,\Scha{\fX}}\) in \(\pmea{\outA}\) such that 
\begin{align}
\notag
\RD{\rno}{\mW}{\qmn{\rno,\Scha{\fX}}}
&\leq \RC{\rno}{\Scha{\fX}}
&
&\forall \mW \in \Scha{\fX}
\end{align}
by Theorem \ref{thm:minimax}.
On the other hand \(\qmn{\rno,\Scha{\fX}}=\qmn{s}+\qmn{ac}\) 
where \(\qmn{s}\perp\lbm\) and \(\qmn{ac} \AC \lbm\),
by the Lebesgue decomposition theorem \cite[5.5.3]{dudley}. 
Then 
\(\RD{\rno}{\mW}{\qmn{\rno,\Scha{\fX}}}=\RD{\rno}{\mW}{\qmn{ac}}\) for all 
\(\mW\) in \(\Scha{\fX}\) by \eqref{eq:def:divergence}
because \(\mW\AC \lbm\) for all \(\mW\) in \(\Scha{\fX}\). Thus 
\begin{align}
\notag
\RD{\rno}{\mW}{\sfrac{\qmn{ac}}{{\lon{{\qmn{ac}}}}}}
&=\RC{\rno}{\Scha{\fX}}+\ln \lon{{\qmn{ac}}}
&
&\forall \mW \in \Scha{\fX}.
\end{align}
If \(\lon{\qmn{ac}}<1\), then \(\sup_{\mW\in  \Scha{\fX}} 
\RD{\rno}{\mW}{\sfrac{\qmn{ac}}{{\lon{{\qmn{ac}}}}}}<\RC{\rno}{\Scha{\fX}}\).
This, however, is impossible because of Theorem \ref{thm:minimax}. Thus \(\lon{{\qmn{ac}}}=1\),
\(\lon{{\qmn{s}}}=0\) and \(\qmn{\rno,\Scha{\fX}} \AC \lbm\). 

Since \(\qmn{\rno,\Scha{\fX}} \AC \lbm\), the Radon-Nikodym derivative \(\der{\qmn{\rno,\Scha{\fX}}}{\lbm}\)
exists by the Radon-Nikodym theorem \cite[5.5.4]{dudley}. 
Since \(\Scha{\fX}\) is invariant under mod one shifts 
by construction, its \renyi centers 
need to be invariant under mods one shift, as well. 
Furthermore, \(\lbm\) is invariant under mod one shifts. Hence,
\begin{align}
\notag
\der{\qmn{\rno,\Scha{\fX}}}{\lbm}
&=\der{\qmn{\rno,\Scha{\fX}}}{\lbm} \circ \trans{\dinp}
&
& \forall \dinp \in [0,1)
\end{align}
Thus \(\der{\qmn{\rno,\Scha{\fX}}}{\lbm}\) needs to be a constant.
That constant is one because \(\qmn{\rno,\Scha{\fX}}\) is a probability measure. 
Therefore \(\qmn{\rno,\Scha{\fX}}\!=\!\lbm\) and \(\RC{\rno}{\Scha{\fX}}\!=\!\RD{\rno}{\wmn{\fX}}{\!\lbm}\) 
whenever \(\RC{\rno}{\Scha{\fX}}\) is finite.
When it is infinite 
so is \(\RD{\rno}{\wmn{\fX}}{\!\lbm}\) by Theorem \ref{thm:minimax}
because \(\RD{\rno}{\!\mW}{\!\lbm}\) equals \(\RD{\rno}{\!\wmn{\fX}}{\!\lbm}\) 
for all \(\mW\!\) in \(\!\Scha{\fX}\).
Hence, \(\RC{\rno}{\Scha{\fX}}\!=\!\RD{\rno}{\wmn{\fX}}{\!\lbm}\), i.e. 
\begin{align}
\label{eq:shiftchannelcapacity}
\RC{\rno}{\Scha{\fX}}
&=
\begin{cases}
\tfrac{1}{\rno-1}\ln \int \fX^{\rno}(\dout) \dif{\dout} 
&\rno\in \reals{+} \neq 1 
\\
\int \fX(\dout) \ln \fX(\dout) \dif{\dout} 
&\rno=1
\\
\ln  \essup_{\lbm}\fX(\dout) 
&\rno=\infty 
\end{cases}.
\end{align}
\eqref{eq:shiftchannelcapacity} is derived using the Ergodic theorem 
in Appendix \ref{sec:shiftchannel}.
\end{example}

As a result of Lemma \ref{lem:capacityO}, 
\(\RC{\rno}{\cha}\) is either continuous in \(\rno\) on \((0,\infty]\) or 
continuous and bounded on \((0,\rnf]\) and infinite on \((\rnf,\infty]\) for an \(\rnf\in [1,\infty)\).
The following two examples are special cases of Example \ref{eg:shiftchannel} which demonstrate that 
the \renyi capacity can become
infinite for some orders larger than one while being continuous on \((0,\infty]\) and 
the \renyi capacity can have a discontinuity
at any order in \([1,\infty)\).
\begin{example}\label{eg:infinitebutcontinuous} 
\(\fX_{\beta}(\dout)=(1-\beta)\dout^{-\beta}\) and \(\beta\in (0,1)\).
\begin{align}
\notag
\RC{\rno}{\Scha{\fX_{\beta}}}
&=
\begin{cases}
\tfrac{\rno\ln (1-\beta) -\ln(1-\rno\beta)}{\rno-1}
&\rno\in [0,1)\cup(1,\beta^{-1})
\\
\tfrac{\beta}{1-\beta}+\ln(1-\beta)
&\rno=1 
\\
\infty
&\rno\in [\beta^{-1},\infty]
\end{cases}
\end{align}
\(\RC{\rno}{\Scha{\fX_{\beta}}}\) is continuous 
on \((0,\infty]\) and monotone increasing and finite on 
\((0,\beta^{-1})\).   
\end{example}

\begin{example}\label{eg:discontinuity}
The existence of the discontinuity is related to the integrability of \(\fX \ln \fX\) and \(\fX^{\rno}\)
because \(\RC{\rno}{\Scha{\fX}}=\RD{\rno}{\wmn{\fX}}{\!\lbm}\).
\begin{itemize}
\item If \(\fX(\dout)\!=\!2\tfrac{\IND{0<\dout<\sfrac{1}{e}}}{\dout (\ln \frac{1}{\dout})^{3}}\),
 then 
\(\RC{1}{\Scha{\fX}}\!=\!\ln 2\sqrt{e}\) and \(\RC{\rno}{\Scha{\fX}}=\infty\) for all \(\rno\) in \((1,\infty]\). 
\item If \(\fX(\dout)\!=\!\tfrac{\dout^{-\frac{1}{\rnf}}\IND{0<\dout<\sfrac{1}{e}}}{(\ln\frac{1}{\dout})\int_{1-\frac{1}{\rnf}}^{\infty} 
\frac{e^{-\dsta}}{\dsta}\dif{\dsta}}\) 
for a \(\rnf\) in \((1,\infty)\),
 then 
\(\RC{\rnf}{\Scha{\fX}}\!=\!\tfrac{\ln(\rnf-1)}{1-\rnf}
-\tfrac{\rnf}{\rnf-1}\ln \int_{1-\frac{1}{\rnf}}^{\infty} \tfrac{e^{-\dsta}}{\dsta} \dif{\dsta}\)
and \(\RC{\rno}{\Scha{\fX}}\!=\!\infty\) for all \(\rno\) in \((\rnf,\infty]\). 
\end{itemize}
\end{example}

In all of the examples we have considered thus far the \renyi capacity is not only 
continuous but also differentiable in the order. 
This, however, is not the case in general.
\begin{example}\label{eg:shiftinvariant}
Let \(\fXS\) be a family of non-negative Lebesgue measurable functions such that  \(\int \fX \dif{\dout}=1\) for 
all \(\fX\in \fXS\). Then \(\Scha{\fXS}\) is the set of all probability measures whose Radon-Nikodym derivative  is a mod 
one shift of an \(\fX\) in \(\fXS\):
\begin{align}
\label{eq:def:shiftinvariant}
\Scha{\fXS}
&\!\DEF\!\left\{\mW: \der{\mW}{\lbm}\!=\!\fX \circ \trans{\dinp}~\mbox{for some~}\dinp\in [0,1),\mbox{~}\fX\in \fXS\right\}
\end{align}
where \(\trans{\dinp}(\dout)\DEF \dout-\dinp-\lfloor\dout-\dinp\rfloor\).

Note that \(\sup_{\mW \in \Scha{\fXS}}\RD{\rno}{\mW}{\lbm}\!=\!\sup_{\fX\in \fXS} \RD{\rno}{\wmn{\fX}}{\!\lbm}\)
because \(\Scha{\fXS}\!=\!\cup_{\fX\in\fXS}\Scha{\fX}\) 
and \(\RD{\rno}{\mW}{\lbm}\!=\!\RD{\rno}{\wmn{\fX}}{\!\lbm}\) for all \(\mW\!\) in \(\Scha{\fX}\).
Thus \(\RC{\rno}{\Scha{\fXS}}\!\leq\!\sup_{\fX\in \fXS} \RD{\rno}{\wmn{\fX}}{\!\lbm}\) by
Theorem \ref{thm:minimax}.
On the other hand, the reverse inequality follows from 
\eqref{eq:shiftchannelcapacity} and Lemma \ref{lem:capacityUnion}.
Thus,
\(\RC{\rno}{\Scha{\fXS}}\!=\!\sup_{\fX\in \fXS} \RD{\rno}{\wmn{\fX}}{\!\lbm}\), i.e.
\begin{align}
\label{eq:shiftinvariantcapacity}
\RC{\rno}{\Scha{\fXS}}
&=
\begin{cases}
\sup\limits_{\fX\in \fXS}
\tfrac{1}{\rno-1}\ln \int \fX^{\rno}(\dout) \dif{\dout} 
&\rno\in \reals{+} \neq 1 
\\
\sup\limits_{\fX\in \fXS}
\int \fX(\dout) \ln \fX(\dout) \dif{\dout} 
&\rno=1
\\
\sup\limits_{\fX\in \fXS}
\ln  \essup_{\lbm}\fX(\dout) 
&\rno=\infty 
\end{cases}.
\end{align}
If \(\fXS=\{2\dout,\frac{1}{2\sqrt{\dout}}\}\), 
then \(\RC{\rno}{\Scha{\fXS}}\) is not differentiable at \(\rno=\frac{1}{2}\).
\end{example}

\subsection[Families of Poisson Point Processes]{Certain Families of Poisson Point Processes}\label{sec:examples-poisson}
The following examples demonstrate the generality of our framework by determining
the \renyi capacity of various families of Poisson point processes with integrable intensity functions,
on real line.\footnote{The analysis we  present in the following can be applied to the spatial Poisson 
processes defined on appropriately chosen subsets of the Euclidean space without any major modification.
We restrict our analysis to the one dimensional case, because even the one dimensional case has 
a structure that is rich enough to demonstrate the generality of our framework.}  
Some of these families have been considered before 
in the context of channel coding problems, 
such as the ones in \eqref{eq:def:poissonchannel-constrained-A} and  
\eqref{eq:def:poissonchannel-bounded} in the following
(see \cite{burnashevK99}, \cite{davis80}, \cite{kabanov78}, \cite{wyner88-a}, \cite{wyner88-b}),
others have not been considered before,
such as the ones in 
\eqref{eq:def:poissonchannel-mean}, \eqref{eq:def:poissonchannel-constrained-B},
and  \eqref{eq:def:poissonchannel-product}.

The Poisson point processes are, sometimes, formulated and analyzed via the characterization of the interarrival times
without even mentioning the Radon-Nikodym derivatives, see \cite[Ch. 2]{gallagerSP}. For many applications such 
an approach turns out to be sufficient; as a result, the Radon-Nikodym derivatives of Poisson point processes are
not as well-known as one would expect. Considering this fact,  we follow the approach of Burnashev and Kutoyants in 
\cite{burnashevK99} and start our discussion  with a brief refresher on the Radon-Nikodym derivatives of 
the Poisson processes.

For any \(\tlx \in \reals{+}\), let \(\inpS_{\tlx}\) be the set of all nondecreasing, right-continuous, integer valued  
functions on \((0,\tlx]\). 
The sample paths of Poisson point processes are members of \(\inpS_{\tlx}\). 
Furthermore, any Poisson point process with deterministic intensity function \(\fX\) can be represented 
by a unique probability measure on the measurable space \((\outS,\outA)\) for \(\outS=\inpS_{\tlx}\) when \(\outA\) is an 
appropriately chosen 
\(\sigma-\)algebra.\footnote{One choice of \(\outA\) that works is the Borel \(\sigma-\)algebra 
	for the topology generated by the Skorokhod metric \(\mS\) on \(\inpS_{\tlx}\), denoted by 
	\(\borel{\inpS_{\tlx}}{\mS}\). In fact, \(\borel{\inpS_{\tlx}}{\mS}\) is rich enough to express 
the Poisson point processes whose intensity functions are not deterministic but Markovian, i.e. the intensity at any 
\(\tin\in(0,\tlx]\) depends on the previous arrivals. Kabanov's original work \cite{kabanov78}  considers such 
Poisson point processes, as well.} 

For any sample path \(\dout\in\outS\), we denote 
the \(\jnd^{th}\) arrival time  by \(\tau_{\jnd}(\dout)\)
and 
the number of arrivals up to and including time \(\tin\) by \(N_{\tin}(\dout)\).
The probability measure associated with a Poisson process with the intensity function \(\fX\) is denoted by  \(\wmn{\fX}\).
The probability measure of the Poisson point process with constant intensity \(\gamma\) is denoted by \(\rfm_{\gamma}\).
If \(\gamma=1\), we also use \(\rfm\) to denote \(\rfm_{\gamma}\), i.e. \(\rfm=\rfm_{1}\).

For any non-negative integrable function \(\fX\) on \((0,\tlx]\) the
associated probability measures \(\wmn{\fX}\) is absolutely continuous with respect to \(\rfm\) and 
the Radon-Nikodym derivative \(\der{\wmn{\fX}}{\rfm}\) is given 
by,\footnote{\(\left(\prod\nolimits_{\tau_{\jnd}(\dout)\leq\tlx} \fX(\tau_{\jnd}(\dout))\right)\)  
stands for \(1\) for \(\dout\)'s that do not have any arrivals.}
\cite[(2.1)]{burnashevK99}, \cite[VI.6.T12, p187]{bremaud},
\begin{align}
\label{eq:poissonRND}
\der{\wmn{\fX}}{\rfm}(\dout)
&=\left(\prod\nolimits_{\tau_{\jnd}(\dout)\leq\tlx} \fX(\tau_{\jnd}(\dout))\right)
e^{\int_{0}^{\tlx} (1-\fX(\tin))\dif{\tin}}.
\end{align}

For any non-negative measurable function \(\gX\), 
the following expression for the 
expectation\footnote{In \cite{burnashevK99}, Burnashev and Kutoyants express the identities given in 
	\eqref{eq:poissonRND} and \eqref{eq:poissonexpectation} more succinctly and elegantly, as follows: 
	\begin{align}
	\notag
	\der{\wmn{\fX}}{\rfm} (\dout)
	&=e^{\int_{0}^{\tlx} (\ln \fX(\tin)) \dout(\dif{\tin}) +\int_{0}^{\tlx} (1-\fX(\tin))\dif{\tin}},
	\\
	\notag
	\int e^{\int_{0}^{\tlx} (\ln \gX(\tin)) \dout(\dif{\tin})} \wmn{\fX}(\dif{\dout})
	&=e^{\int_{0}^{\tlx} (\gX(\tin)-1)\fX(\tin)\dif{\tin}}.
	\end{align}
	In the expressions \(\int_{0}^{\tlx}(\ln\fX(\tin))\dout(\dif{\tin})\) and \(\int_{0}^{\tlx}(\ln \gX(\tin))\dout(\dif{\tin})\),
	the sample path \(\dout\) is interpreted as a measure that is equal to the sum of Dirac delta functions located at 
	the arrival times of the sample path \(\dout\).}
follows from \eqref{eq:poissonRND}, \cite[(2.2)]{burnashevK99}: 
\begin{align}
\label{eq:poissonexpectation}
\int \left(\prod\nolimits_{\tau_{\jnd}(\dout)\leq \tlx}  \gX(\tau_{\jnd}(\dout))\right) \wmn{\fX}(\dif{\dout})
&=e^{\int_{0}^{\tlx} (\gX(\tin)-1)\fX(\tin) \dif{\tin}}.
\end{align}
An immediate consequence of  \eqref{eq:poissonRND}  and  \eqref{eq:poissonexpectation} is the following expression 
for the \renyi divergence between \(\wmn{\fX}\) and \(\wmn{\gX}\) for integrable intensity functions \(\fX\) and \(\gX\)
and positive real orders:
\begin{align}
\label{eq:poissondivergence}
\RD{\rno}{\wmn{\fX}}{\wmn{\gX}}
&\!=\!\begin{cases}
\int_{0}^{\tlx} \left(\tfrac{\fX^{\rno}\gX^{1-\rno}-\fX}{\rno-1}-\fX+\gX \right) \dif{\tin}
&\rno\neq 1
\\
\int_{0}^{\tlx} \left(\fX \ln \tfrac{\fX}{\gX} -\fX+\gX\right) \dif{\tin}
&\rno=1
\end{cases}.
\end{align}
For positive real orders other than one \eqref{eq:poissondivergence} follows from \eqref{eq:poissonRND} and \eqref{eq:poissonexpectation} by substitution, via 
the definition of the \renyi divergence. 
On the other hand, \(\RD{1}{\wmn{\fX}}{\wmn{\gX}}=\lim_{\rno \uparrow 1}\RD{\rno}{\wmn{\fX}}{\wmn{\gX}}\)
because the \renyi divergence is continuous in order on \([0,1]\) by  Lemma \ref{lem:divergence-order}.
Then the expression for \(\RD{1}{\wmn{\fX}}{\wmn{\gX}}\) follows from the dominated convergence theorem \cite[2.8.1]{bogachev} 
and the expression for \(\RD{\rno}{\wmn{\fX}}{\wmn{\gX}}\) for \(\rno\in (0,1)\) 
because \(\tfrac{\dinp^{\rno}-\dinp}{\rno-1} \uparrow \dinp \ln \dinp\)  as \(\rno \uparrow 1\) for any \(\dinp \geq 0\).

Let us proceed with defining the set of Poisson point processes we will be investigating.
\begin{definition}\label{def:poisson} 
For any \(\tlx\in\reals{+}\) and  intensity levels \(\mA\), \( \costc\), \(\mB\) satisfying 
\(0\leq\mA\leq\costc\leq\mB\leq\infty\), 
let 
\(\Pcha{\tlx,\mA,\mB,\costc}\), \(\Pcha{\tlx,\mA,\mB,\leq\costc}\), \(\Pcha{\tlx,\mA,\mB,\geq\costc}\), 
and \(\Pcha{\tlx,\mA,\mB}\)
be the set of all Poisson point processes  with \([\mA,\mB]\) valued deterministic integrable intensity functions on \((0,\tlx]\) 
with an average equal to \(\costc\),
less than or equal to \(\costc\),
greater than or equal to  \(\costc\),
and 
in \([\mA,\mB]\), respectively:
\begin{align}
\label{eq:def:poissonchannel-mean}
\Pcha{\tlx,\mA,\mB,\costc}
&\!\DEF\!\left\{\wmn{\fX}:\mA\leq \fX\leq\mB\mbox{~and~}\mbox{\(\int_{0}^{\tlx}\)}\fX(\tin) \dif{\tin}=\tlx\costc\right\},
\\
\label{eq:def:poissonchannel-constrained-A}
\Pcha{\tlx,\mA,\mB,\leq\costc}
&\!\DEF\!\left\{\wmn{\fX}:\mA\leq \fX\leq\mB\mbox{~and~}\mbox{\(\int_{0}^{\tlx}\)}\fX(\tin) \dif{\tin}\leq\tlx\costc\right\},
\\
\label{eq:def:poissonchannel-constrained-B}
\Pcha{\tlx,\mA,\mB,\geq\costc}
&\!\DEF\!\left\{\wmn{\fX}:\mA\leq \fX\leq\mB\mbox{~and~}\mbox{\(\int_{0}^{\tlx}\)}\fX(\tin) \dif{\tin}\geq\tlx\costc\right\},
\\
\label{eq:def:poissonchannel-bounded}
\Pcha{\tlx,\mA,\mB}
&\!\DEF\!\left\{\wmn{\fX}:\mA\leq \fX\leq\mB\right\}.
\end{align}
\end{definition}

The convention proposed in Definition \ref{def:poisson} allows us to refer to various families of 
Poisson point processes without confusion. 
However, explicitly stating the dependence on \(\tlx\), \(\mA\), and \(\mB\) is not necessary 
whenever the values of \(\tlx\), \(\mA\), and \(\mB\) are unambiguous.
When this is the case we use 
\(\Pcha{\costc}\) for \(\Pcha{\tlx,\mA,\mB,\costc}\),
\(\Pcha{\leq\costc}\) for \(\Pcha{\tlx,\mA,\mB,\leq\costc}\),
\(\Pcha{\geq\costc}\) for \(\Pcha{\tlx,\mA,\mB,\geq\costc}\),
and
\(\Pcha{}\) for \(\Pcha{\tlx,\mA,\mB}\).

In the following, we first determine the \renyi capacity and center of \(\Pcha{\tlx,\mA,\mB,\costc}\),
and then use these expressions to calculate the \renyi capacity and center of 
families described in Definition \ref{def:poisson} and in \eqref{eq:def:poissonchannel-product}.

\begin{example}\label{eg:poissonchannel-mean}
For any \(\tlx\!\in\!\reals{+}\), \(\mA,\mB\!\in\!\reals{\geq0}\) such that 
\(\mA\leq\mB\),  and \(\costc\in[\mA,\mB]\), 
\begin{align}
\label{eq:poissonchannel-mean-capacity}
\RC{\rno}{\Pcha{\costc}}
&=\begin{cases}
\tfrac{\rno}{\rno-1}(\pint_{\rno,\costc}-\costc)\tlx
&\rno\neq 1
\\
\left(
 \tfrac{\costc-\mA}{\mB-\mA} \mB\ln \tfrac{\mB}{\costc}
+\tfrac{\mB-\costc}{\mB-\mA} \mA\ln \tfrac{\mA}{\costc}
\right)\tlx
&\rno=1
\end{cases},
\\
\label{eq:poissonchannel-mean-center}
\qmn{\rno,\Pcha{\costc}}
&=\rfm_{\pint_{\rno,\costc}},
\\
\label{eq:poissonchannel-mean-center-intensity}
\pint_{\rno,\costc}
&\DEF \left(\tfrac{\costc-\mA}{\mB-\mA} \mB^{\rno}+\tfrac{\mB-\costc}{\mB-\mA} \mA^{\rno}\right)^{\sfrac{1}{\rno}}.
\end{align}
An alternative expression for \(\RC{\rno}{\Pcha{\costc}}\) is the following:
\begin{align}
\label{eq:poissonchannel-mean-capacity-alt}
\RC{\rno}{\Pcha{\costc}}
&=\tfrac{\costc-\mA}{\mB-\mA} \RD{\rno}{\rfm_{\mB}}{\rfm_{\pint_{\rno,\costc}}}
 +\tfrac{\mB-\costc}{\mB-\mA} \RD{\rno}{\rfm_{\mA}}{\rfm_{\pint_{\rno,\costc}}}.
\end{align}

If \(\costc\) is equal to \(\mA\) or \(\mB\),
 then \(\Pcha{\costc}\) has just one element;
consequently \(\RC{\rno}{\Pcha{\costc}}\) is zero and the only element of \(\Pcha{\costc}\) is also the \renyi center. 
For \(\costc\)'s in \((\mA,\mB)\), we first determine the \renyi capacity and center assuming that \(\tfrac{\costc-\mA}{\mB-\mA}\) 
is a rational number by giving a sequence of priors \(\{\pmn{\ind}\}\) and a
probability measure  \(\mQ\) satisfying
\(\lim_{\ind\to\infty}\RMI{\rno}{\pmn{\ind}}{\Pcha{\costc}}=\RRR{\rno}{\Pcha{\costc}}{\mQ}\).
Then we determine the \renyi capacity of \(\Pcha{\costc}\) with irrational \(\tfrac{\costc-\mA}{\mB-\mA}\) using
the continuity of the resulting expression in \(\mB\) and the monotonicity of \(\RC{\rno}{\cha}\) in \(\cha\).

There exists positive integers \(\ell\) and \(\blx\) such that \(\tfrac{\costc-\mA}{\mB-\mA}=\tfrac{\ell}{\blx}\) because
\(\tfrac{\costc-\mA}{\mB-\mA}\) is a rational number and \(\mB>\costc>\mA\). Then there are \(\binom{\blx}{\ell}\)
length \(\blx\) sequences of \(\mA\)'s and \(\mB\)'s with \(\ell\) \(\mB\)'s and \((\blx-\ell)\) \(\mA\)'s. These 
sequences will be the building blocks for \(\fX\)'s with positive \(\pmn{\ind}(\wmn{\fX})\).

For each positive integer \(\ind\) let us divide the interval \((0,\tlx]\) into \(2^{\ind}\blx\) 
half open intervals of the form \((\tfrac{\jnd-1}{2^{\ind}\blx}\tlx,\tfrac{\jnd}{2^{\ind}\blx}\tlx]\) 
for \(\jnd \in \{1,\ldots,2^{\ind}\blx\}\). Now consider \(\fX\)'s such that:
\begin{itemize}
\item \(\fX\) is  \(\{\mA,\mB\}\) valued function that is constant in all intervals of the form  
\((\tfrac{\jnd-1}{2^{\ind}\blx} \tlx,\tfrac{\jnd}{2^{\ind}\blx}\tlx]\) for \(\jnd\in\{1,\ldots,2^{\ind}\blx\}\).
\item \(\ell=\sum_{\knd=0}^{\blx-1}\IND{\fX(\frac{\blx \jnd-\knd}{2^{\ind}\blx}\tlx)=\mB}\) for all \(\jnd\in\{1,\ldots,2^{\ind}\}\).
\end{itemize}
For every such \(\fX\)  corresponding \(\wmn{\fX}\) is in \(\Pcha{\costc}\).
Furthermore, there are \(\binom{\blx}{\ell}^{(2^{\ind})}\) distinct \(\fX\)'s.
The prior \(\pmn{\ind}\) has equal probability mass on all \(\wmn{\fX}\)'s with 
the above described \(\fX\)'s.
Then using   \eqref{eq:poissonRND} we can calculate the Radon-Nikodym derivative of \(\mmn{\rno,\pmn{\ind}}\), 
\begin{align}
\notag
\der{\mmn{\rno,\pmn{\ind}}}{\rfm}(\dout)
&=e^{(1-\costc)\tlx}\left(
\sum\nolimits_{\knd=1}^{\binom{\blx}{\ell}^{(2^{\ind})}} 
\tfrac{\left(\prod\nolimits_{\tau_{\jnd}(\dout) \leq \tlx}  \fX_{\knd}(\tau_{\jnd}(\dout))\right)^{\rno}}{\binom{\blx}{\ell}^{(2^{\ind})}}
\right)^{\sfrac{1}{\rno}}.
\end{align}
For the sample paths, i.e. \(\dout\)'s, that do not have more than one arrival in any of the intervals  of the form 
\((\tfrac{\jnd-1}{2^{\ind}}\tlx,\tfrac{\jnd}{2^{\ind}}\tlx]\), one can simplify the expression for the  
Radon-Nikodym derivative significantly. In particular,
\begin{align}
\notag
\der{\mmn{\rno,\pmn{\ind}}}{\rfm}(\dout)
&=e^{(1-\costc)\tlx}\left(\tfrac{\ell \mB^{\rno}+(\blx-\ell) \mA^{\rno}}{\blx}\right)^{\frac{N_{\tlx}(\dout)-N_{0}(\dout)}{\rno}}
&
&\forall \dout \in \oev_{\ind}
\end{align}
where \(N_{\tin}(\dout)\) is the number of arrivals on \((0,\tin]\) for
the sample path \(\dout\) and  \(\oev_{\ind}\in \outA\) is defined as
\begin{align}
\notag
\oev_{\ind}
&\!\DEF\!\left\{\dout: \abs{N_{\frac{\jnd}{2^{\ind}}\tlx}(\dout)-N_{\frac{\jnd-1}{2^{\ind}}\tlx}(\dout)}\leq 1~\forall\jnd\in\{1,\ldots,2^{\ind}\}\right\}. 
\end{align}
Since \(\oev_{\ind}\subset \oev_{\ind+1}\) the following holds 
\(\forall\dout \in \cup_{\ind\in\integers{+}} \oev_{\ind}\)
\begin{align}
\notag
\lim\nolimits_{\ind \to \infty} \der{\mmn{\rno,\pmn{\ind}}}{\rfm}(\dout)
&=e^{(1-\costc)\tlx}\left(\tfrac{\ell \mB^{\rno}+(\blx-\ell) \mA^{\rno}}{\blx}\right)^{\frac{N_{\tlx}(\dout)-N_{0}(\dout)}{\rno}}.
\end{align}
Using the complete independence of the Poisson processes on disjoint intervals and the probability mass function of the counting 
process, \cite[Thm. 2.2.10]{gallagerSP}, \cite[II.1.(1.9), p. 22]{bremaud}, we can calculate the probability \(\rfm(\oev_{\ind})\):
\begin{align}
\notag
\rfm(\oev_{\ind})
&=(e^{-\frac{\tlx}{2^{\ind}}}+\tfrac{\tlx}{2^{\ind}} e^{-\frac{\tlx}{2^{\ind}}})^{(2^{\ind})}
\\
\notag
&=e^{-\tlx}(1+\tfrac{\tlx}{2^{\ind}})^{(2^{\ind})}.
\end{align}
Then \(\lim_{\ind \to \infty} \rfm(\oev_{\ind})=1\) and consequently \(\rfm(\cup_{\ind\in\integers{+}} \oev_{\ind})=1\). 
Thus convergence on \((\cup_{\ind\in\integers{+}} \oev_{\ind})\) implies \(\rfm-\)a.e. convergence:
\begin{align}
\notag
\der{\mmn{\rno,\pmn{\ind}}}{\rfm}(\dout)
&\xrightarrow{\rfm-a.e}e^{(1-\costc)\tlx}\left(\tfrac{\ell \mB^{\rno}+(\blx-\ell) \mA^{\rno}}{\blx}\right)^{\frac{N_{\tlx}(\dout)-N_{0}(\dout)}{\rno}}.
\end{align}
On the other hand \(\der{\mmn{\rno,\pmn{\ind}}}{\rfm}(\dout)\!\leq\! e^{(1-\costc)\tlx}\tfrac{\mB^{N_{\tlx}(\dout)}}{\mB^{N_{0}(\dout)}}\) because 
\(\fX(\tin)\!\leq\!\mB\).
Furthermore 
\(\int e^{(1-\costc)\tlx}\tfrac{\mB^{N_{\tlx}}}{\mB^{N_{0}}}\rfm(\dif{\dout})\!=\!e^{(\mB-\costc)\tlx}\).
Thus the dominated convergence theorem \cite[2.8.1]{bogachev} implies that
\begin{align}
\notag
\lim\limits_{\ind \to \infty} \lon{\mmn{\rno,\pmn{\ind}}} 
&=e^{(1-\costc)\tlx}\int \left(\tfrac{\ell \mB^{\rno}+(\blx-\ell) \mA^{\rno}}{\blx}\right)^{\frac{N_{\tlx}(\dout)-N_{0}(\dout)}{\rno}} \rfm(\dif{\dout}) 
\\
\notag
&=e^{\left(\left(\frac{\ell \mB^{\rno}+(\blx-\ell) \mA^{\rno}}{\blx}\right)^{\sfrac{1}{\rno}}-\costc\right)\tlx}.
\end{align}
Thus using \eqref{eq:def:information}
and the fact that \(\tfrac{\costc-\mA}{\mB-\mA}=\tfrac{\ell}{\blx}\) we get
\begin{align}
\label{eq:poissonchannel-mean-capacity-A}
\lim\nolimits_{\ind \to \infty} \RMI{\rno}{\pmn{\ind}}{\Pcha{\costc}}
&=\tfrac{\rno}{\rno-1}\left(\pint_{\rno,\costc}-\costc\right)\tlx
&
&\forall \rno\neq 1.
\end{align}
On the other hand for any \(\gamma\in \reals{+}\) and \(\fX\!:\!(0,\tlx]\!\to\![\mA,\mB]\) 
satisfying \(\int_{0}^{\tlx}\fX(\tin)\dif{\tin}=\tlx\costc\), as a result of 
\eqref{eq:poissondivergence}
\begin{align}
\notag
\RD{\rno}{\wmn{\fX}}{\rfm_{\gamma}}\!
&\!=\!\int_{0}^{\tlx} \left[\tfrac{\gamma^{1-\rno}}{\rno-1}\fX^{\rno}(\tin)-\tfrac{\rno}{\rno-1} \fX(\tin)+\gamma\right] \dif{\tin}
\\
\notag
&\!\leq\!\int_{0}^{\tlx}\!\!\tfrac{\gamma^{1-\rno}}{\rno-1}\!\left[\tfrac{\fX(\tin)-\mA}{\mB-\mA}\mB^{\rno}
\!+\! \tfrac{\mB-\fX(\tin)}{\mB-\mA}\mA^{\rno}\right]\dif{\tin}
\!-\!\tfrac{\rno \costc \tlx}{\rno-1}\!+\!\gamma\!\tlx
\\
\notag
&=\left[\tfrac{\gamma^{1-\rno}}{\rno-1} \left(\tfrac{\costc-\mA}{\mB-\mA}\mB^{\rno}+ 
\tfrac{\mB-\costc}{\mB-\mA}\mA^{\rno}\right) -\tfrac{\rno}{\rno-1} \costc+\gamma\right] \tlx 
\\
\label{eq:poissonchannel-mean-capacity-C}
&=\tfrac{\rno}{\rno-1}(\pint_{\rno,\costc}-\costc)\tlx +\RD{\rno}{\rfm_{\pint_{\rno,\costc}}}{\rfm_{\gamma}}
\end{align}
where the inequality follows from the convexity of the function \(\tfrac{\dinp^{\rno}}{\rno-1}\) 
in \(\dinp\) and the Jensen's inequality.

Using 
\eqref{eq:poissonchannel-mean-capacity-A} 
and
\eqref{eq:poissonchannel-mean-capacity-C}
for \(\gamma=\pint_{\rno,\costc}\)
we can conclude that
\(\lim_{\ind\to\infty}\RMI{\rno}{\pmn{\ind}}{\Pcha{\costc}}=
\RRR{\rno}{\Pcha{\costc}}{\rfm_{\pint_{\rno,\costc}}}\).
Then for \(\rno\)'s other than one \eqref{eq:poissonchannel-mean-capacity}  follows 
from \eqref{eq:necessaryandsufficientseq}
for values of \(\costc\) making \(\tfrac{\costc-\mA}{\mB-\mA}\) a rational number.
For values of \(\costc\) making \(\tfrac{\costc-\mA}{\mB-\mA}\) a rational number,
\eqref{eq:poissonchannel-mean-capacity} for \(\rno=1\) case follows from 
the expression for \(\rno\neq1\) case 
via  L'Hospital's rule \cite[Thm. 5.13]{rudin}
because  the \renyi capacity is a continuous function of the order 
on \((0,1]\) by Lemma \ref{lem:capacityO}-(\ref{capacityO-zo}).

We now prove that \eqref{eq:poissonchannel-mean-capacity} holds for values of \(\costc\) for which
\(\tfrac{\costc-\mA}{\mB-\mA}\) is irrational.
First note that 
\(\Pcha{\tlx,\mA,\mB_{1},\costc} \subset \Pcha{\tlx,\mA,\mB_{2},\costc}\)
for any \(\tlx\), \(\mA\), \(\costc\), \(\mB_{1}\), \(\mB_{2}\) satisfying \(\mB_{1}\leq \mB_{2}\), 
by the definition of \(\Pcha{\tlx,\mA,\mB,\costc}\) given in \eqref{eq:def:poissonchannel-mean}. 
Then \(\RC{\rno}{\Pcha{\tlx,\mA,\mB_{1},\costc}}\leq \RC{\rno}{\Pcha{\tlx,\mA,\mB_{2},\costc}}\)
by definition. 
Then \eqref{eq:poissonchannel-mean-capacity} holds for the case when \(\tfrac{\costc-\mA}{\mB-\mA}\) is irrational 
as a result of the continuity of the expression on the right hand side of \eqref{eq:poissonchannel-mean-capacity} as
a function of \(\mB\)  for each \(\rno\in\reals{+}\).

For orders other than one \eqref{eq:poissonchannel-mean-center} follows 
from Theorem \ref{thm:minimax} because 
\(\RRR{\rno}{\Pcha{\costc}}{\rfm_{\pint_{\rno,\costc}}}=\RC{\rno}{\Pcha{\costc}}\)
by \eqref{eq:poissonchannel-mean-capacity} and \eqref{eq:poissonchannel-mean-capacity-C}.
In order extend \eqref{eq:poissonchannel-mean-center}  to \(\rno=1\) case we invoke
the continuity of \renyi center established Lemma \ref{lem:centercontinuity}.
\end{example}

\begin{example}\label{eg:poissonchannel-constrained}
For any \(\tlx\!\in\!\reals{+}\), \(\mA,\mB\!\in\!\reals{\geq0}\) such that 
\(\mA\leq\mB\),  and \(\costc\in[\mA,\mB]\), 
\begin{align}
\label{eq:poissonchannel-constrained-capacity}
\RC{\rno}{\Pcha{\leq\costc}}
&=\RC{\rno}{\Pcha{\costc \wedge \costc_{\rno}}}
&
\RC{\rno}{\Pcha{\geq\costc}}
&=\RC{\rno}{\Pcha{\costc \vee \costc_{\rno}}}
\\
\label{eq:poissonchannel-constrained-center}
\qmn{\rno,\Pcha{\leq\costc}}
&=\qmn{\rno,\Pcha{\costc \wedge \costc_{\rno}}}
&
\qmn{\rno,\Pcha{\geq\costc}}
&=\qmn{\rno,\Pcha{\costc \vee \costc_{\rno}}}
\end{align}
where \(\RC{\rno}{\Pcha{\costc}}\) and \(\qmn{\rno,\Pcha{\costc}}\) are given in  
\eqref{eq:poissonchannel-mean-capacity} and \eqref{eq:poissonchannel-mean-center}
and  \(\costc_{\rno}\) is defined as follows:
\begin{align}
\label{eq:poissonchannel-bounded-optimal}
\costc_{\rno}
&\DEF\begin{cases}
\rno^{\frac{\rno}{1-\rno}}(\tfrac{\mB-\mA}{\mB^{\rno}-\mA^{\rno}})^{\frac{1}{1-\rno}}
+\tfrac{\mA\mB^{\rno}-\mB \mA^{\rno}}{\mB^{\rno}-\mA^{\rno}}
&
\rno\neq 1
\\
e^{-1}\mB^{\frac{\mB}{\mB-\mA}}\mA^{-\frac{\mA}{\mB-\mA}}
&
\rno=1
\end{cases}.
\end{align}
Since \(\Pcha{\leq\costc}\) is the union of \(\Pcha{\gamma}\) for 
\(\gamma\) in \([\mA,\costc]\),
\(\RC{\rno}{\Pcha{\leq\costc}}\) equals \(\RC{\rno}{\Pcha{\costc \wedge \costc_{\rno}}}\)
iff
\(\RRR{\rno}{{\Pcha{\leq\costc}}}{\qmn{\rno,\Pcha{\costc \wedge \costc_{\rno}}}}
\leq \RC{\rno}{\Pcha{\costc \wedge \costc_{\rno}}}\)
by Lemma \ref{lem:capacityUnion}.

On the other hand when considered together with 
the convexity of \(\tfrac{\dinp^{\rno}-\dinp}{\rno-1}\) in \(\dinp\) for \(\rno\neq1\) case and 
the convexity of \(\dinp\ln\dinp\) in  \(\dinp\) for \(\rno=1\) case,
\eqref{eq:poissondivergence} implies 
 \begin{align}
\label{eq:poissonchannel-constrained-A}
\RD{\rno}{\mW}{\rfm_{\mS}}
&\leq \tfrac{\mB-\gamma}{\mB-\mA} \RD{\rno}{\rfm_{\mA}}{\rfm_{\mS}}+\tfrac{\gamma-\mA}{\mB-\mA} \RD{\rno}{\rfm_{\mB}}{\rfm_{\mS}}
\end{align}
for all \(\mW\) in \(\Pcha{\gamma}\). Furthermore, 
the definitions of \(\pint_{\rno,\gamma}\) and \(\costc_{\rno}\)
given in \eqref{eq:poissonchannel-mean-center-intensity} and \eqref{eq:poissonchannel-bounded-optimal} 
imply that
\begin{align}
\label{eq:poissonchannel-constrained-B}
\RD{\rno}{\rfm_{\mA}}{\rfm_{\pint_{\rno,\gamma}}}
&\leq\RD{\rno}{\rfm_{\mB}}{\rfm_{\pint_{\rno,\gamma}}}
&
&\forall \gamma\in[\mA,\costc_{\rno}].
\end{align}
Using \eqref{eq:poissonchannel-constrained-A} and \eqref{eq:poissonchannel-constrained-B}
together with the alternative expression for \(\RC{\rno}{\Pcha{\costc}}\) 
given in \eqref{eq:poissonchannel-mean-capacity-alt} we get
\begin{align}
\notag
\RRR{\rno}{\Pcha{\leq\costc}}{\rfm_{\pint_{\rno,\costc\wedge\costc_{\rno}}}}
&\leq \tfrac{\mB-\costc\wedge\costc_{\rno}}{\mB-\mA} \RD{\rno}{\rfm_{\mA}}{\rfm_{\pint_{\rno,\costc\wedge\costc_{\rno}}}}\\
\notag
&\qquad~\qquad +\tfrac{\costc\wedge\costc_{\rno}-\mA}{\mB-\mA} \RD{\rno}{\rfm_{\mB}}{\rfm_{\pint_{\rno,\costc\wedge\costc_{\rno}}}}
\\
\notag
&=\RC{\rno}{\Pcha{\costc\wedge\costc_{\rno}}}.
\end{align} 
Thus \(\RRR{\rno}{{\Pcha{\leq\costc}}}{\qmn{\rno,\Pcha{\costc \wedge \costc_{\rno}}}}
\leq \RC{\rno}{\Pcha{\costc \wedge \costc_{\rno}}}\)
follows from \eqref{eq:poissonchannel-mean-center}. Hence 
\(\RC{\rno}{\Pcha{\leq\costc}}=\RC{\rno}{\Pcha{\costc \wedge \costc_{\rno}}}\)
and
\(\qmn{\rno,\Pcha{\leq\costc}}=\qmn{\rno,\Pcha{\costc \wedge \costc_{\rno}}}\)
by Lemma \ref{lem:capacityUnion}.

Assertions about \(\Pcha{\geq\costc}\) derived similarly using the following observations:
\(\Pcha{\geq\costc}\)
is the union of \(\Pcha{\gamma}\) for \(\gamma\) in \([\costc,\mB]\) and
\begin{align}
\label{eq:poissonchannel-constrained-C}
\RD{\rno}{\rfm_{\mA}}{\rfm_{\pint_{\rno,\gamma}}}
&\geq\RD{\rno}{\rfm_{\mB}}{\rfm_{\pint_{\rno,\gamma}}}
&
&\forall \gamma\in[\costc_{\rno},\mB].
\end{align}
\end{example}

\begin{example}\label{eg:poissonchannel-bounded}
For any \(\tlx\!\in\!\reals{+}\) and \(\mA,\mB\!\in\!\reals{\geq0}\) such that 
\(\mA\leq\mB\),
\begin{align}
\label{eq:poissonchannel-bounded-capacity-original}
\RC{\rno}{\Pcha{}}
&=\RC{\rno}{\Pcha{\costc_{\rno}}},
\\
\label{eq:poissonchannel-bounded-center-original}
\qmn{\rno,\Pcha{}}
&=\qmn{\rno,\Pcha{\costc_{\rno}}},
\end{align}
where \(\RC{\rno}{\Pcha{\costc}}\), \(\qmn{\rno,\Pcha{\costc}}\), \(\costc_{\rno}\) 
are described in 
\eqref{eq:poissonchannel-mean-capacity},
\eqref{eq:poissonchannel-mean-center},
\eqref{eq:poissonchannel-bounded-optimal}
because \(\Pcha{}=\Pcha{\leq \mB}\).
By substitution we get the following more explicitly expressions:
\begin{align}
\label{eq:poissonchannel-bounded-capacity}
\RC{\rno}{\Pcha{}}
&=\begin{cases}
\left((\tfrac{\rno(\mB-\mA)}{\mB^{\rno}-\mA^{\rno}})^{\frac{1}{1-\rno}}
-\tfrac{\rno}{\rno-1} \tfrac{\mA\mB^{\rno}-\mB \mA^{\rno}}{\mB^{\rno}-\mA^{\rno}}
\right)\tlx
&\rno\neq 1
\\
\left(e^{-1}\mB^{\frac{\mB}{\mB-\mA}}\mA^{-\frac{\mA}{\mB-\mA}}
-\tfrac{\mA \mB}{\mB-\mA}\ln \tfrac{\mB}{\mA}
\right) \tlx
&\rno=1 
\end{cases},
\\
\label{eq:poissonchannel-bounded-center}
\qmn{\rno,\Pcha{}}
&=\rfm_{\pint_{\rno}},
\\
\label{eq:poissonchannel-bounded-center-intensity}
\pint_{\rno}
&\DEF
\begin{cases}
\rno^{\frac{1}{1-\rno}}(\tfrac{\mB-\mA}{\mB^{\rno}-\mA^{\rno}})^{\frac{1}{1-\rno}}
&
\rno\in\reals{+} \setminus\{1\}
\\
e^{-1}\mB^{\frac{\mB}{\mB-\mA}}\mA^{-\frac{\mA}{\mB-\mA}}
&
\rno=1
\end{cases}.
\end{align}
One can also confirm \(\pint_{\rno}=\pint_{\rno,\costc_{\rno}}\) using 
\eqref{eq:poissonchannel-mean-center-intensity} and \eqref{eq:poissonchannel-bounded-optimal}.
The following expression for \(\RC{\rno}{\Pcha{}}\) is equivalent to \eqref{eq:poissonchannel-bounded-capacity}:
\begin{align}
\label{eq:poissonchannel-bounded-capacity-alt}
\RC{\rno}{\Pcha{}}
&=\tfrac{\costc_{\rno}-\mA}{\mB-\mA} \RD{\rno}{\rfm_{\mB}}{\rfm_{\pint_{\rno}}}
 +\tfrac{\mB-\costc_{\rno}}{\mB-\mA} \RD{\rno}{\rfm_{\mA}}{\rfm_{\pint_{\rno}}}.
\end{align}
\end{example}

In the preceding examples, we have assumed the intensity functions 
are bounded above by a constant;
we replace this constant with an integrable function 
in Example \ref{eg:poissonchannel-product} given in the following.
Let us first give a formal definition.
\begin{definition}\label{def:poissonproduct} 
For any \(\tlx\) in \(\reals{+}\), \(\mA\) in \(\reals{\geq0}\),
and  Lebesgue integrable function \(\gX\) on \((0,\tlx]\) satisfying
\(\gX\geq \mA\),
\(\Pcha{\tlx,\mA,\gX(\cdot)}\) is 
the set of all Poisson point processes with 
deterministic intensity functions \(\fX\) 
satisfying \(\mA\leq\fX\leq\gX\):
\begin{align}
\label{eq:def:poissonchannel-product}
\Pcha{\tlx,\mA,\gX(\cdot)}
&\DEF \left\{\wmn{\fX}:\mA\leq\fX(\tin)\leq\gX(\tin)~~\forall \tin \in(0,\tlx] \right\}.
\end{align}
\end{definition}

\begin{example}\label{eg:poissonchannel-product}
For any \(\tlx\in\reals{+}\), \(\mA\in\reals{\geq0}\), and \(\gX\in\Lon{\lbm}\)
satisfying \(\gX(\tin)\geq \mA\) for all \(\tin\) in \((0,\tlx]\) we have
\begin{align}
\label{eq:poissonchannel-product-capacity}
\RC{\rno}{\Pcha{}}
&\!=\!\begin{cases}
\!\int_{0}^{\tlx}\!\left[\!(\tfrac{\rno(\gX-\mA)}{\gX^{\rno}-\mA^{\rno}})^{\frac{1}{1-\rno}}
\!-\!\tfrac{\rno}{\rno-1}\!\tfrac{\mA\gX^{\rno}-\gX \mA^{\rno}}{\gX^{\rno}-\mA^{\rno}}
\!\right]\!\dif{\tin}\!
&\rno\!\neq\!1
\\
\!\int_{0}^{\tlx}\!\left[\!
e^{-1}\gX^{\frac{\gX}{\gX-\mA}}\mA^{-\frac{\mA}{\gX-\mA}}
\!-\!\tfrac{\mA \gX}{\gX-\mA}\ln \tfrac{\gX}{\mA}
\!\right]\!\dif{\tin}\!
&\rno\!=\!1 
\end{cases},
\\
\label{eq:poissonchannel-product-center}
\qmn{\rno,\Pcha{}}
&\!=\!\wmn{\pint_{\rno}},
\\
\label{eq:poissonchannel-product-center-intensity}
\pint_{\rno}(\tin)
&\!\DEF\!
\begin{cases}
\rno^{\frac{1}{1-\rno}}(\tfrac{\gX(\tin)-\mA}{\gX^{\rno}(\tin)-\mA^{\rno}})^{\frac{1}{1-\rno}}
&
\rno\neq 1
\\
e^{-1}[\gX(\tin)]^{\frac{\gX(\tin)}{\gX(\tin)-\mA}}\mA^{-\frac{\mA}{\gX(\tin)-\mA}}
&
\rno=1
\end{cases}.
\end{align}

If \(\gX\) is a simple function, 
then we can apply
\eqref{eq:poissonchannel-bounded-capacity} and \eqref{eq:poissonchannel-bounded-center}
for each possible value of \(\gX\), separately. 
Then \eqref{eq:poissonchannel-product-capacity} and \eqref{eq:poissonchannel-product-center} 
follow from Lemma \ref{lem:capacityProduct} because simple functions can only take finite 
number of distinct values.\footnote{We are not overlooking the issue of contiguity for the 
inverse of the image of \(\gX\) because Example \ref{eg:poissonchannel-bounded} holds
as is for Poisson processes defined on any measurable set of Lebesgue measure \(\tlx\), not just the 
interval \((0,\tlx]\).}
On the other hand, there exists a sequence of simple functions 
\(\{\gX^{(\ind)}\}_{\ind\in\integers{+}}\) satisfying \(\mA\leq \gX^{(\ind)}\) and \(\gX^{(\ind)}\uparrow\gX\)
for any measurable \(\gX\).
Evidently \(\RC{\rno}{\Pcha{\tlx,\mA,\gX^{(\ind)}(\cdot)}}\leq\RC{\rno}{\Pcha{\tlx,\mA,\gX(\cdot)}}\)
because \(\Pcha{\tlx,\mA,\gX^{(\ind)}(\cdot)}\subset \Pcha{\tlx,\mA,\gX(\cdot)}\). 
Furthermore, if \(\fX\) satisfies \(\mA\leq\fX\leq\gX\), then
\begin{align}
\notag
\abs{\rno^{\frac{1}{1-\rno}}(\tfrac{\fX(\tin)-\mA}{\fX^{\rno}(\tin)-\mA^{\rno}})^{\frac{1}{1-\rno}}
-\tfrac{\rno}{\rno-1} \tfrac{\mA\fX^{\rno}(\tin)-\fX(\tin) \mA^{\rno}}{\fX^{\rno}(\tin)-\mA^{\rno}}}  
&\leq \rno^{\frac{1}{1-\rno}} \gX(\tin). 
\end{align}
for all \(\tin\in(0,\tlx]\).

Then the integral on the right hand side of \eqref{eq:poissonchannel-product-capacity}
equals to \(\lim_{\ind \to \infty} \RC{\rno}{\Pcha{\tlx,\mA,\gX^{(\ind)}(\cdot)}}\) by the 
dominated convergence theorem \cite[2.8.1]{bogachev}.
Hence, it is a lower bound on \(\RC{\rno}{\Pcha{\tlx,\mA,\gX(\cdot)}}\).
It is, also, an upper bound on \(\RC{\rno}{\Pcha{\tlx,\mA,\gX(\cdot)}}\)
by \eqref{eq:thm:minimaxradius} 
because \(\RD{\rno}{\mW}{\wmn{\pint_{\rno}}}\) is bounded from above by it for all 
\(\mW\in\Pcha{\tlx,\mA,\gX(\cdot)}\).
Then \eqref{eq:poissonchannel-product-center} follows from the uniqueness of the 
\renyi center.
\end{example}

\section{Discussion}\label{sec:conclusion}
In this paper, we define and analyze the order \(\rno\) \renyi capacity \(\RC{\rno}{\cha}\) 
and the order \(\rno\) \renyi radius \(\RR{\rno}{\cha}\) for an arbitrary 
set of probability measures \(\cha\) on an arbitrary measurable space.
Our most important contributions are proving the van Erven-\harremoes conjecture, i.e. Lemma \ref{lem:EHB}, 
and two uniform equicontinuity results on the \renyi information, i.e.
Lemma \ref{lem:finitecapacity}-(\ref{finitecapacity-uecP},\ref{finitecapacity-uecO}).
We also prove a minimax theorem, i.e. Theorem \ref{thm:minimax},  which has been 
previously reported by Augustin in \cite{augustin78} in a different form 
and for orders between zero and two. 
Theorem \ref{thm:minimax}  establishes not only the equality of \(\RC{\rno}{\cha}\) and \(\RR{\rno}{\cha}\)
for any \(\rno\) and \(\cha\) but also  the existence of a unique order \(\rno\) \renyi center 
whenever \(\RC{\rno}{\cha}\) is finite.  
Our analysis leads to certain immediate consequences for two generalizations: 
\(\CRC{\rno}{\cha}{\cset}\) defined for \(\cset\subset\pdis{\cha}\) 
and \(\RC{\rno}{\Wm}\) defined for transition probability \(\Wm\).
We introduce those generalizations formally 
and discuss the implications of our analysis on them 
in Appendices \ref{sec:constrainedcapacity} and \ref{sec:generaldefinitions}.

Results of our analysis, also, encourage one to consider certain related problems:
\begin{itemize}
\item We do not assume any topological structure on the output space \(\outS\). 
Although this is a strength because of the generality of our results, 
it is also a weakness because of the obliviousness of our analysis towards the 
interactions between \renyi\!\!'s information measures and the topological structure of
the output space.
In almost all of the applications, \(\outA\) is a 
Borel or Baire \(\sigma\)-algebra of the topological space \((\outS,\tau)\); 
usually there is an even more specific structure.
In most of the applications, \(\outA\) is the Borel \(\sigma\)-algebra of
a complete separable metric space \((\outS,\cnst{d})\). 
Thus one can define metrics other than 
the total variation metric on \(\cha\) and \(\pdis{\cha}\) using the metric \(\cnst{d}\) and
analyze the behavior of \renyi\!\!'s information measures on the resulting 
topologies.
Such models have already been considered in the context of the arbitrarily varying 
channels \cite{csiszar92,thanh92} and the typicality 
\cite{jeon14,mitran15,raginsky13}.

\item 
It is easy to confirm that continuity of the order \(\rno\) \renyi capacity as a function of the order \(\rno\)
implies the continuity of the corresponding \(\fX\)-capacity \(\FC{\fX_{\rno}}{\cha}\) as a function of \(\rno\) 
where \(\fX_{\rno}(\dinp)=\frac{\dinp^{\rno}-1}{\rno-1}\).
The existence of similar, but more general, continuity results for richer classes of \(\fX\)-divergences
with appropriate topologies is expected.
What is plausible, but not evident, to us is the existence of a topology on the set of all convex \(\fX\)'s that 
ensures the continuity of the corresponding \(\fX\)-capacities in \(\fX\) for all \(\cha\) on the region that 
\(\fX\)-capacities are finite.
The interaction of topologies on the space of convex functions and corresponding 
\(\fX\)-capacities seems to be a fertile subject of inquiry. 

\item 
We use the definition of the \renyi information proposed by Sibson \cite{sibson69}.
In \eqref{eq:def:augustininformation} we provide the expression for the alternative definition 
of the \renyi information proposed by Augustin \cite{augustin78} and \csiszar \cite{csiszar95}.
We call this quantity the Augustin information.
Theorems \ref{thm:minimax}, \ref{thm:Cminimax}, \ref{thm:Gminimax}, 
and many of the other propositions have their 
analogues for the Augustin information, see \cite{nakiboglu18C,nakiboglu17}.
The Augustin capacity and center are of interest to us because they are 
better suited than the \renyi capacity and center for deriving 
the sphere packing bound for memoryless channels, see  \cite{nakiboglu18D,nakiboglu17}. 
\end{itemize}
We have avoided using information theoretic concepts such as  code, channel, or rate in our discussion 
because we believe \renyi\!\!'s information measures can and should be defined and understood on their own 
as measure theoretic concepts first. 
\renyi\!\!'s information measures, however, do have operational meaning in various information transmission problems. 
We discuss the case of channel coding problem in \cite{nakiboglu19B}.
\numberwithin{equation}{section}
\begin{appendix}
\subsection{The Constrained \renyi Capacity}\label{sec:constrainedcapacity}
\begin{definition}\label{def:constrainedRcapacity}
	For any \(\rno\!\in\![0,\infty]\), \(\cha\!\subset\!\pmea{\outA}\),
	\(\cset\!\subset\!\pdis{\cha}\),
	\emph{the order \(\rno\) \renyi capacity of \(\cha\) for constraint set \(\cset\)} is 
	\begin{align}
	\label{eq:def:constrainedcapacity}
	\CRC{\rno}{\cha}{\cset} 
	&\DEF \sup\nolimits_{\mP \in \cset}  \RMI{\rno}{\mP}{\cha}.
	\end{align}
\end{definition}
Note that \(\CRC{\rno}{\cha}{\pdis{\cha}}\!=\!\RC{\rno}{\cha}\) and
\(\CRC{\rno}{\cha}{\{\mP\}}\!=\!\RMI{\rno}{\mP}{\cha}\) for any \(\cha\) 
and \(\mP\!\in\!\pdis{\cha}\). 
%
Furthermore, the  proof of Theorem \ref{thm:minimax} works as is for any convex \(\cset\) subset
of \(\pdis{\cha}\), not just \(\pdis{\cha}\) itself.
Thus the minimax theorem continues to hold for \(\CRC{\rno}{\cha}{\cset}\);
the alternative expression for \(\CRC{\rno}{\cha}{\cset} \) is, however,
no longer (guaranteed to be) equal to the \renyi radius. 
\begin{theorem}\label{thm:Cminimax}
	For any \(\rno\!\in\!(0,\infty]\), \(\cha\!\subset\!\pmea{\outA}\), 
	and convex \(\cset\!\subset\!\pdis{\cha}\),
	\begin{align}
	\label{eq:thm:Cminimax:capacity}
	\CRC{\rno}{\cha}{\cset}
	&=\sup\nolimits_{\mP \in \cset} \inf\nolimits_{\mQ \in \pmea{\outA}} \RD{\rno}{\mP \mtimes \cha}{\mP \otimes \mQ}
	\\
	\label{eq:Cminimax}
	&=
	\inf\nolimits_{\mQ \in \pmea{\outA}} \sup\nolimits_{\mP \in \cset} \RD{\rno}{\mP \mtimes \cha}{\mP \otimes \mQ}.
	\end{align}
	If \(\CRC{\rno}{\cha}{\cset} <\infty\), then there exists a unique 
	\(\qmn{\rno,\cha,\cset}\) in \(\pmea{\outA}\),
	called the order \(\rno\) \renyi center for constraint set \(\cset\), such that
	\begin{align}
	\label{eq:Cminimaxcenter}
	\CRC{\rno}{\cha}{\cset} 
	&=\sup\nolimits_{\mP \in \cset} \RD{\rno}{\mP \mtimes \cha}{\mP \otimes \qmn{\rno,\cha,\cset}}.
	\end{align}
	Furthermore, for every sequence of priors \(\{\pmn{\ind}\}_{\ind\in\integers{+}}\subset \cset\) satisfying   
	\(\lim_{\ind \to \infty} \RMI{\rno}{\pmn{\ind}}{\cha}=\CRC{\rno}{\cha}{\cset} \),
	the corresponding sequence of order \(\rno\) \renyi means \(\{\qmn{\rno,\pmn{\ind}}\}_{\ind\in\integers{+}}\)  
	is a Cauchy sequence  for the total variation metric on \(\pmea{\outA}\) 
	and \(\qmn{\rno,\cha,\cset}\) is the unique limit point of that Cauchy sequence.
\end{theorem}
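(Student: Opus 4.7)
The plan is to follow the proof of Theorem \ref{thm:minimax} essentially verbatim, exploiting at each step where that proof used $\pdis{\cha}$ the fact that $\cset$ is convex. Note that \eqref{eq:thm:Cminimax:capacity} is immediate from Lemma \ref{lem:information:def}, and the max-min inequality gives ``$\leq$'' in \eqref{eq:Cminimax} unconditionally; so when $\CRC{\rno}{\cha}{\cset}=\infty$ equality in \eqref{eq:Cminimax} is automatic. Thus the real content lies in the case $\CRC{\rno}{\cha}{\cset}<\infty$ together with the assertions on $\qmn{\rno,\cha,\cset}$, and it is there that I would focus.

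The key intermediate tool is a constrained analog of Lemma \ref{lem:capacityFLB}: for any finite $\cha'\subset\pmea{\outS,\outA}$ and any compact convex $\cset'\subset\pdis{\cha'}$, there exists $\widetilde{\mP}\in\cset'$ with $\RMI{\rno}{\widetilde{\mP}}{\cha'}=\CRC{\rno}{\cha'}{\cset'}$, and for $\rno\in(0,\infty]$ a unique $\qmn{\rno,\cha',\cset'}\in\pmea{\outS,\outA}$ satisfying $\RD{\rno}{\qmn{\rno,\mP}}{\qmn{\rno,\cha',\cset'}}\leq \CRC{\rno}{\cha'}{\cset'}-\RMI{\rno}{\mP}{\cha'}$ for all $\mP\in\cset'$. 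I would obtain this by simply rerunning the proof of Lemma \ref{lem:capacityFLB}: the extreme value theorem applied to the continuous map $\mP\mapsto\RMI{\rno}{\mP}{\cha'}$ (continuity from Lemma \ref{lem:finitecapacity}-(\ref{finitecapacity-d})) on the compact set $\cset'$ produces $\widetilde{\mP}$, and the perturbation $\pmn{\ind}=\tfrac{\ind-1}{\ind}\widetilde{\mP}+\tfrac{1}{\ind}\mP$ lies in $\cset'$ by convexity whenever $\mP\in\cset'$, so the remaining derivation of the center inequality transfers line by line.

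For the general convex $\cset$, given a sequence $\{\pmn{\ind}\}\subset\cset$ with $\RMI{\rno}{\pmn{\ind}}{\cha}\to\CRC{\rno}{\cha}{\cset}$, I would set $\cha_{\ind}=\cup_{\jnd\leq\ind}\supp{\pmn{\jnd}}$ and let $\cset_{\ind}\subset\cset$ be the convex hull of $\{\pmn{1},\ldots,\pmn{\ind}\}$, which is compact in the total variation topology (being the continuous image of a finite simplex) and sits inside $\pdis{\cha_{\ind}}$. The constrained analog then supplies $\qmn{\rno,\cha_{\ind},\cset_{\ind}}$, and since $\pmn{\jnd}\in\cset_{\ind}$ for $\jnd\leq\ind$ with $\RMI{\rno}{\pmn{\jnd}}{\cha_{\ind}}=\RMI{\rno}{\pmn{\jnd}}{\cha}$ and $\CRC{\rno}{\cha_{\ind}}{\cset_{\ind}}\leq\CRC{\rno}{\cha}{\cset}$, Pinsker's inequality (Lemma \ref{lem:divergence-pinsker}) and the triangle inequality yield that $\{\qmn{\rno,\pmn{\ind}}\}$ is Cauchy in total variation, hence converges to a unique $\qmn{\rno,\cha,\cset}\in\pmea{\outS,\outA}$ by completeness; an interleaving argument identical to the one in Theorem \ref{thm:minimax} shows the limit does not depend on the choice of approximating sequence.

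To close \eqref{eq:Cminimaxcenter} and hence \eqref{eq:Cminimax}, I would pick any $\mP\in\cset$, arrange $\pmn{1}=\mP$ in the construction above so that $\mP\in\cset_{\ind}$ for every $\ind$, apply the constrained Lemma together with the Sibson identity \eqref{eq:sibson} to get $\RD{\rno}{\mP\mtimes\cha}{\mP\otimes\qmn{\rno,\cha_{\ind},\cset_{\ind}}}\leq\CRC{\rno}{\cha}{\cset}$, and pass to the limit via the lower semicontinuity of the \renyi divergence (Lemma \ref{lem:divergence:lsc}); this bound then reverses the max-min inequality. The only technical point meriting care is ensuring the perturbation in the constrained analog of Lemma \ref{lem:capacityFLB} never leaves $\cset'$, but this is automatic from convexity, so I anticipate no genuine obstacle beyond careful bookkeeping.
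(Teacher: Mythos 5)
Your proposal is correct, and it takes the approach the paper itself intends: the paper does not write out a proof of Theorem~\ref{thm:Cminimax} at all, remarking only that ``the proof of Theorem~\ref{thm:minimax} works as is for any convex $\cset$ subset of $\pdis{\cha}$.'' You have essentially unpacked what ``works as is'' must mean.

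The one place where ``as is'' is a slight overstatement is precisely the point you flag and handle: in Theorem~\ref{thm:minimax}'s proof the optimizing prior for the finite restriction $\cha_\ind$ is found over $\pdis{\cha_\ind}$, a compact set automatically, whereas in the constrained setting one cannot use $\pdis{\cha_\ind}$ (it need not be contained in $\cset$), and $\cset\cap\pdis{\cha_\ind}$ need not be compact if $\cset$ is not closed. Your choice of $\cset_\ind=\conv{\{\pmn{1},\ldots,\pmn{\ind}\}}$, which is compact, convex, contained in $\cset$, contained in $\pdis{\cha_\ind}$, and contains all the $\pmn{\jnd}$ with $\jnd\leq\ind$, is the clean way to fix this and is exactly what the chain $\CRC{\rno}{\cha_\ind}{\cset_\ind}\leq\CRC{\rno}{\cha}{\cset}$ needs. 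The remaining steps---the Pinsker-plus-triangle Cauchy estimate, the interleaving argument for well-definedness of the limit, and the lower-semicontinuity passage to establish \eqref{eq:Cminimaxcenter}---transfer verbatim once you observe that the perturbation $\tfrac{\ind-1}{\ind}\widetilde{\mP}+\tfrac{1}{\ind}\mP$ in the constrained analog of Lemma~\ref{lem:capacityFLB} stays inside $\cset'$ by convexity. One small bookkeeping note: for $\rno\in(0,1)$ the continuity of $\RMI{\rno}{\cdot}{\cha'}$ invoked for the extreme value theorem comes from Lemma~\ref{lem:informationP}-(\ref{informationP-a}) rather than Lemma~\ref{lem:finitecapacity}-(\ref{finitecapacity-d}), which covers $\rno\in[1,\infty]$; the paper's proof of Lemma~\ref{lem:capacityFLB} cites both, and you should too.
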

A similar modification is needed for the van Erven-\harremoes bound, i.e. for Lemma \ref{lem:EHB}, as well.
\begin{lemma}\label{lem:CEHB}  
	For any \(\rno\!\in\!(0,\infty]\), \(\cha\!\subset\!\pmea{\outA}\),
	convex  \(\cset\!\subset\!\pdis{\cha}\) satisfying \(\CRC{\rno}{\cha}{\cset} <\infty\),
	and \(\mQ \in \pmea{\outA}\)
	\begin{align}
	\notag
	\CRC{\rno}{\cha}{\cset} +\RD{\rno}{\qmn{\rno,\cha,\cset}}{\mQ}
	&\leq  \sup\nolimits_{\mP \in \cset} \RD{\rno}{\mP \mtimes \cha}{\mP \otimes\mQ}.
	\end{align}
\end{lemma}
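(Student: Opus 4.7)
The plan is to replay the argument of Lemma \ref{lem:EHB} almost verbatim, with Theorem \ref{thm:Cminimax} taking over the role that Theorem \ref{thm:minimax} plays in the unconstrained case. The convexity of $\cset$ enters only through this theorem; once the existence of $\qmn{\rno,\cha,\cset}$ as the total variation limit of $\{\qmn{\rno,\pmn{\ind}}\}$ is granted, the rest is a short manipulation of Sibson's identity together with the lower semicontinuity of the \renyi divergence.

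First, I would invoke \eqref{eq:sibson} together with \eqref{eq:lem:information:defA} to obtain, for every $\mP\in\cset$,
\begin{align*}
\sup_{\tilde{\mP}\in\cset}  \RD{\rno}{\tilde{\mP} \mtimes \cha}{\tilde{\mP}  \otimes \mQ}
&\geq \RD{\rno}{\mP \mtimes \cha}{\mP  \otimes \mQ}\\
&= \RMI{\rno}{\mP}{\cha}+\RD{\rno}{\qmn{\rno,\mP}}{\mQ}.
\end{align*}
This mirrors \eqref{eq:radiusconjecture-B} and is completely independent of whether $\mP$ ranges over $\pdis{\cha}$ or over the (possibly much smaller) convex subset $\cset$.

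Next, I would choose a sequence $\{\pmn{\ind}\}_{\ind\in\integers{+}}\subset\cset$ with $\RMI{\rno}{\pmn{\ind}}{\cha}\to\CRC{\rno}{\cha}{\cset}$. Since $\CRC{\rno}{\cha}{\cset}<\infty$ and $\cset$ is convex, Theorem \ref{thm:Cminimax} guarantees that $\{\qmn{\rno,\pmn{\ind}}\}$ is Cauchy for the total variation metric with unique limit $\qmn{\rno,\cha,\cset}$. Since convergence in total variation is stronger than convergence in the topology of setwise convergence, Lemma \ref{lem:divergence:lsc} yields
\begin{align*}
\liminf_{\ind\to\infty}\left[\RMI{\rno}{\pmn{\ind}}{\cha}+\RD{\rno}{\qmn{\rno,\pmn{\ind}}}{\mQ}\right]
\geq \CRC{\rno}{\cha}{\cset}+\RD{\rno}{\qmn{\rno,\cha,\cset}}{\mQ}.
\end{align*}
Combining this with the displayed inequality above (applied to each $\pmn{\ind}$ in turn) delivers the claim.

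There is no real obstacle in this derivation; the work has all been done in Theorem \ref{thm:Cminimax}. The only subtlety worth flagging is the necessity of the convexity hypothesis on $\cset$: it is used in the proof of Theorem \ref{thm:Cminimax} so that the convex interpolations appearing in the Cauchy-sequence argument of Theorem \ref{thm:minimax} remain inside $\cset$, which in turn is what licenses the appeal to the uniqueness of $\qmn{\rno,\cha,\cset}$ as a limit point here. Without convexity, the sequence $\{\qmn{\rno,\pmn{\ind}}\}$ need not have a unique limit, and the bound as stated would have to be relaxed to a statement about limit points of such sequences.
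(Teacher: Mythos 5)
Your proof is correct and takes essentially the approach the paper intends: the paper states Lemma \ref{lem:CEHB} without a separate proof, indicating it follows by the same argument as Lemma \ref{lem:EHB} with Theorem \ref{thm:Cminimax} substituted for Theorem \ref{thm:minimax}, which is precisely what you carry out. Your remark locating where convexity of $\cset$ is used (inside Theorem \ref{thm:Cminimax}, to run the Cauchy-sequence argument and obtain a unique constrained center, not in the Sibson-identity manipulation itself) is also accurate.
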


Lemma \ref{lem:centercontinuity} establishing the continuity of the \renyi centers in the order holds for the
constrained \renyi centers. 
We prove it using Theorem \ref{thm:Cminimax} and Lemma \ref{lem:CEHB} instead of Theorem \ref{thm:minimax} and
Lemma \ref{lem:EHB}.

\subsection[Transition Probabilities and \(\RC{\rno}{\Wm}\)]{The \renyi Capacity of Transition Probabilities}\label{sec:generaldefinitions}
We have defined the order \(\rno\) \renyi information \(\RMI{\rno}{\mP}{\cha}\) for any \pmf \(\mP\) on a set of 
probability measures \(\cha\). 
We show in the following ---using the concept of transition probability and the expression for \(\RMI{\rno}{\mP}{\cha}\) 
given in  \eqref{eq:lem:information:defB}--- that for appropriately chosen \(\sigma\)-algebra \(\alg{W}\),
one can extend 
the definition of \(\RMI{\rno}{\mP}{\cha}\) to \(\mP\)'s that are probability measures on \((\cha,\alg{W})\).
Furthermore, we show that if \(\alg{W}\) is countably 
separated,\footnote{A \(\sigma\)-algebra \(\inpA\) on \(\inpS\) is countably separated, \cite[Def. 6.5.1]{bogachev}, 
	if there exists an at most countable collection sets \(\{\oev_{\ind}\}\subset\inpA\) separating the points of \(\inpS\). 
	A collection \(\{\oev_{\ind}\}\) of subsets of \(\inpS\) is said to be separating the points of \(\inpS\), if for
	every pair of distinct points \(\dsta\) and \(\dinp\) in \(\inpS\) there exists an \(\oev_{\ind}\) which includes
	only one of  \(\dsta\) and \(\dinp\). The Borel \(\sigma\)-algebra of any separable metric space is countably separated.
	The Borel \(\sigma\)-algebra of any separable metric space is also countably generated, i.e. it is the minimum 
	\(\sigma\)-algebra of a countable collection of sets.}
then Theorem \ref{thm:minimax} holds for this more general case, see Theorem \ref{thm:Gminimax}. 
\begin{definition}
	Let \((\inpS,\inpA)\) and \((\outS,\outA)\) be measurable  spaces. Then a function \(\Wm:\inpS\times\outA\to[0,1]\)
	is called a transition probability (a stochastic kernel / a Markov kernel) from \((\inpS,\inpA)\) to \((\outS,\outA)\) 
	if it satisfies the following two conditions: 
	\begin{enumerate}[(i)]
		\item For all \(\dinp\in\inpS\), the function \(\Wm(\cdot|\dinp):\outA\to[0,1]\) is a probability measure on \((\outS,\outA)\).
		\item For all \(\oev\in\outA\), the function \(\Wm(\oev|\cdot):\inpS\to[0,1]\) is a \(\inpA\)-measurable function. 
	\end{enumerate}
\end{definition} 

By \cite[Thm. 10.7.2.]{bogachev}, 
for any transition probability \(\Wm\) and probability measure \(\mP\) on \((\inpS,\inpA)\) 
there exists a unique probability measure \(\mP \mtimes \Wm\) on \((\inpS\times\outS,\inpA\otimes\outA)\) 
satisfying
\begin{align}
\notag
\mP \mtimes \Wm(\oev_{\dinp}\times\oev_{\dout})
&=\int_{\oev_{\dinp}} \Wm(\oev_{\dout}|\dinp) \mP(\dif{\dinp})
\end{align}
for all \(\oev_{\dinp}\in \inpA\) and \(\oev_{\dout}\in \outA\).
Now, we can define the order \(\rno\) \renyi information for \(\mP\) on the transition probability \(\Wm\).
\begin{definition}\label{def:Ginformation}
	For any \(\rno\in[0,\infty]\), 
	transition probability \(\Wm\) from \((\inpS,\inpA)\) to \((\outS,\outA)\),
	and \(\mP\in\pmea{\inpA}\),
	\emph{the order \(\rno\) \renyi information for prior \(\mP\)} is defined as
	\begin{align}
	\label{eq:def:Ginformation}
	\RMI{\rno}{\mP}{\Wm}
	&\DEF\inf\nolimits_{\mQ\in \pmea{\outA}}\RD{\rno}{\mP \mtimes \Wm}{\mP\otimes\mQ}.
	\end{align}
\end{definition}

Definitions \ref{def:information} and \ref{def:Ginformation} 
are equivalent because of  Lemma \ref{lem:information:def}. 
Using the definition of \(\RMI{\rno}{\mP}{\Wm}\) we can define the order \(\rno\) \renyi capacity of 
a transition probability \(\Wm\).

\begin{definition}\label{def:Gcapacity}
	For any \(\rno\in[0,\infty]\) and 
	transition probability \(\Wm\) from \((\inpS,\inpA)\) to \((\outS,\outA)\), 
	\emph{the order \(\rno\) \renyi capacity} is
	\begin{align}
	\label{eq:def:Gradius}
	\RC{\rno}{\Wm} 
	&\DEF \sup\nolimits_{\mP \in\pmea{\inpA}}  \RMI{\rno}{\mP}{\Wm}.
	\end{align}
\end{definition}

The analysis of the \renyi capacity for an arbitrary transition probability \(\Wm\) is beyond 
the scope of this paper. 
However, if the \(\sigma\)-algebra \(\inpA\) is countably separated, then we can use 
Theorem \ref{thm:minimax} to show that \(\RC{\rno}{\Wm}=\RC{\rno}{\cha}\) for a
\(\cha\subset\pmea{\outA}\).

\begin{theorem}\label{thm:Gminimax}
	For any  \(\rno\in(0,\infty]\) and transition probability
	\(\Wm\) from \((\inpS,\inpA)\) to \((\outS,\outA)\) for a 
	countably separated \(\sigma\)-algebra \(\inpA\)
	\begin{align}
	\label{eq:thm:Gminimax:capacity}
	\RC{\rno}{\Wm}
	&=\sup\nolimits_{\mP \in \pmea{\inpA}}\inf\nolimits_{\mQ \in \pmea{\outA}} \RD{\rno}{\mP \mtimes \Wm}{\mP \otimes \mQ}
	\\
	\label{eq:thm:Gminimax}	
	&=
	\inf\nolimits_{\mQ \in \pmea{\outA}} \sup\nolimits_{\mP \in \pmea{\inpA}} \RD{\rno}{\mP \mtimes \Wm}{\mP \otimes \mQ}
	\\
	\label{eq:thm:Gminimaxradius}
	&=\inf\nolimits_{\mQ\in\pmea{\outA}}\sup\nolimits_{\mW \in \cha} \RD{\rno}{\mW}{\mQ}
	\end{align}
	where \(\cha\DEF\{\Wm(\cdot|\dinp):\dinp\in\inpS\}\).
	If \(\RC{\rno}{\Wm}<\infty\), then there exists a unique \(\qmn{\rno,\Wm}\) in \(\pmea{\outA}\),
	called the order \(\rno\) \renyi center, such that
	\begin{align}
	\label{eq:Gminimaxcenter}
	\RC{\rno}{\Wm}
	&=\sup\nolimits_{\mP \in \pmea{\inpA}} \RD{\rno}{\mP \mtimes \Wm}{\mP \otimes \qmn{\rno,\Wm}}
	\\
	\label{eq:Gminimaxradiuscenter}
	&=\sup\nolimits_{\mW \in \cha} \RD{\rno}{\mW}{\qmn{\rno,\Wm}}.
	\end{align}
\end{theorem}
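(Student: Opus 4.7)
My plan is to reduce Theorem \ref{thm:Gminimax} to Theorem \ref{thm:minimax} applied to the image set $\cha\DEF\{\Wm(\dinp|\cdot):\dinp\in\inpS\}\subset\pmea{\outS,\outA}$. The countable separation hypothesis on $\inpA$ enters at exactly one place: it forces every singleton $\{\dinp\}$ into $\inpA$ as a countable intersection of members of the separating family or their complements, which makes Dirac measures $\delta_\dinp$ legitimate priors in $\pmea{\inpS,\inpA}$. Once I have the key identity $\RC{\rno}{\Wm}=\RC{\rno}{\cha}$, the remaining assertions will be straightforward translations between $\dinp\in\inpS$ and $\Wm(\dinp|\cdot)\in\cha$.

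First I would establish $\RC{\rno}{\Wm}\geq\RC{\rno}{\cha}$. Given $\mP=\sum_i p_i\delta_{\mW_i}\in\pdis{\cha}$, pick a preimage $\dinp_i\in\Psi^{-1}(\mW_i)$ under the map $\Psi(\dinp)\DEF\Wm(\dinp|\cdot)$ for each $i$ and set $\widetilde{\mP}\DEF\sum_i p_i\delta_{\dinp_i}$. Countable separation ensures $\widetilde{\mP}\in\pmea{\inpS,\inpA}$. The bijection $\Psi\times\mathtt{id}$ from the support of $\widetilde{\mP}\mtimes\Wm$ to that of $\mP\mtimes\cha$ preserves Radon-Nikodym derivatives against $\widetilde{\mP}\otimes\mQ$ versus $\mP\otimes\mQ$, giving $\RD{\rno}{\widetilde{\mP}\mtimes\Wm}{\widetilde{\mP}\otimes\mQ}=\RD{\rno}{\mP\mtimes\cha}{\mP\otimes\mQ}$ for every $\mQ$, and hence $\RMI{\rno}{\widetilde{\mP}}{\Wm}=\RMI{\rno}{\mP}{\cha}$ via Definition \ref{def:Ginformation} and Lemma \ref{lem:information:def}.

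For the reverse inequality $\RC{\rno}{\Wm}\leq\RC{\rno}{\cha}$, only the case $\RC{\rno}{\cha}<\infty$ matters; Theorem \ref{thm:minimax} then supplies the unique center $\qmn{\rno,\cha}$ with $\RD{\rno}{\Wm(\dinp|\cdot)}{\qmn{\rno,\cha}}\leq\RC{\rno}{\cha}$ for every $\dinp\in\inpS$. Taking $\mQ=\qmn{\rno,\cha}$ in \eqref{eq:def:Ginformation} and expressing the Radon-Nikodym derivative of $\widetilde{\mP}\mtimes\Wm$ with respect to $\widetilde{\mP}\otimes\qmn{\rno,\cha}$ as a function of $(\dinp,\dout)$ that depends on $\dout$ only through $\Wm(\dinp|\cdot)$, an application of Fubini produces
\begin{align}
\notag
\RD{\rno}{\widetilde{\mP}\mtimes\Wm}{\widetilde{\mP}\otimes\qmn{\rno,\cha}}
&=\tfrac{1}{\rno-1}\ln\EXS{\widetilde{\mP}}{e^{(\rno-1)\RD{\rno}{\Wm(\dinp|\cdot)}{\qmn{\rno,\cha}}}}
\end{align}
for $\rno\in\reals{+}\setminus\{1\}$; this is bounded above by $\RC{\rno}{\cha}$ irrespective of the sign of $\rno-1$. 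The $\rno=1$ and $\rno=\infty$ cases furnish the same bound via the parallel identities $\KLD{\widetilde{\mP}\mtimes\Wm}{\widetilde{\mP}\otimes\mQ}=\EXS{\widetilde{\mP}}{\KLD{\Wm(\dinp|\cdot)}{\mQ}}$ and $\RD{\infty}{\widetilde{\mP}\mtimes\Wm}{\widetilde{\mP}\otimes\mQ}=\essup_{\widetilde{\mP}}\RD{\infty}{\Wm(\dinp|\cdot)}{\mQ}$. Hence $\RMI{\rno}{\widetilde{\mP}}{\Wm}\leq\RC{\rno}{\cha}$ for every $\widetilde{\mP}\in\pmea{\inpS,\inpA}$.

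The three displayed equalities \eqref{eq:thm:Gminimax:capacity}--\eqref{eq:thm:Gminimaxradius} then fall out of $\RC{\rno}{\Wm}=\RC{\rno}{\cha}$ combined with
\begin{align}
\notag
\sup\nolimits_{\mP\in\pmea{\inpS,\inpA}}\RD{\rno}{\mP\mtimes\Wm}{\mP\otimes\mQ}
&=\sup\nolimits_{\dinp\in\inpS}\RD{\rno}{\Wm(\dinp|\cdot)}{\mQ}
\end{align}
for every $\mQ$: the $\geq$ direction uses $\mP=\delta_\dinp$, and the $\leq$ direction is the same Jensen-type bound computed with $\mQ$ in place of $\qmn{\rno,\cha}$. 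Taking $\inf_\mQ$ and invoking Theorem \ref{thm:minimax} for $\cha$ reproduces the radius expression. The center is then handled by setting $\qmn{\rno,\Wm}\DEF\qmn{\rno,\cha}$: identity \eqref{eq:Gminimaxradiuscenter} restates \eqref{eq:thm:minimaxradiuscenter}, identity \eqref{eq:Gminimaxcenter} follows from the displayed equality at $\mQ=\qmn{\rno,\cha}$, and uniqueness transfers because any competitor $\mQ$ satisfying \eqref{eq:Gminimaxcenter} must satisfy $\sup_{\mW\in\cha}\RD{\rno}{\mW}{\mQ}=\RC{\rno}{\cha}$ and is forced to coincide with $\qmn{\rno,\cha}$ by the uniqueness clause of Theorem \ref{thm:minimax}. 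The only genuine pitfall I foresee is the measurability of $\dinp\mapsto\RD{\rno}{\Wm(\dinp|\cdot)}{\mQ}$ needed to integrate in the Jensen-type bound; this follows from the measurability of $\dinp\mapsto\Wm(\dinp|\oev)$ for each $\oev\in\outA$ together with standard monotone class arguments.
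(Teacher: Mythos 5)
Your proof is correct and follows essentially the same route as the paper's: countable separation places all singletons of $\inpS$ in $\inpA$ so that $\pdis{\inpS}\subset\pmea{\inpS,\inpA}$, the Tonelli--Fubini factorisation of $\RD{\rno}{\widetilde{\mP}\mtimes\Wm}{\widetilde{\mP}\otimes\mQ}$ bounds it above by $\sup_{\mW\in\cha}\RD{\rno}{\mW}{\mQ}$, and Theorem~\ref{thm:minimax} applied to $\cha$ then delivers all the assertions with $\qmn{\rno,\Wm}=\qmn{\rno,\cha}$. The only cosmetic difference is that you write the Fubini step as an exact identity and then check the sign of $\rno-1$, whereas the paper states only the resulting inequality; both turn on the same computation.
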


\begin{proof}[Proof of Theorem \ref{thm:Gminimax}]
	Since \(\inpA\) is countably separated, all singletons are in \(\inpA\) by \cite[Thm. 6.5.7]{bogachev} 
	and \(\pdis{\inpS}\subset\pmea{\inpA}\). 
	Consequently, using max-min inequality we get
	\begin{align}
	\notag
	&\sup\nolimits_{\mP \in \pdis{\cha}}\inf\nolimits_{\mQ \in \pmea{\outA}} \RD{\rno}{\mP \mtimes \cha}{\mP \otimes \mQ}
	\\
	\notag
	&\qquad\leq 
	\sup\nolimits_{\mP\in\pmea{\inpA}}\inf\nolimits_{\mQ\in\pmea{\outA}} \RD{\rno}{\mP \mtimes \Wm}{\mP \otimes \mQ}
	\\
	\label{eq:Gminimax-A}
	&\qquad\leq 
	\inf\nolimits_{\mQ\in\pmea{\outA}}\sup\nolimits_{\mP\in\pmea{\inpA}}\RD{\rno}{\mP \mtimes \Wm}{\mP \otimes \mQ}
	\end{align} 
	On the other hand, for any \(\rno\in(0,\infty]\) as a result of Tonelli-Fubini theorem \cite[4.4.5]{dudley}
	and the definition of the \renyi divergence given in \eqref{eq:def:divergence}
	we have
	\begin{align}
	\notag
	\RD{\rno}{\mP \mtimes \Wm}{\mP \otimes \mQ}
	&\leq \sup\nolimits_{\dinp\in\inpS} \RD{\rno}{\Wm(\cdot|\dinp)}{\mQ} 
	\\
	\label{eq:Gminimax-B}
	&=\sup\nolimits_{\mW\in\cha} \RD{\rno}{\mW}{\mQ}. 
	\end{align}
	Hence,
	\begin{align}
	\notag
	\inf\nolimits_{\mQ \in \pmea{\outA}}
	& \sup\nolimits_{\mP \in \pmea{\inpA}}  \RD{\rno}{\mP \mtimes \Wm}{\mP \otimes \mQ}\\
	\label{eq:Gminimax-C}
	&\qquad\leq \inf\nolimits_{\mQ \in \pmea{\outA}} \sup\nolimits_{\mW\in\cha} \RD{\rno}{\mW}{\mQ}.
	\end{align}
	Theorem \ref{thm:minimax} and the inequalities given in 
	\eqref{eq:Gminimax-A}, \eqref{eq:Gminimax-B}, and \eqref{eq:Gminimax-C}
	imply \(\RC{\rno}{\Wm}\!=\!\RC{\rno}{\cha}\) and 
	Theorem \ref{thm:Gminimax} for \(\qmn{\rno,\Wm}\!=\!\qmn{\rno,\cha}\).
\end{proof}

Theorem \ref{thm:minimax} and  \eqref{eq:Gminimax-B} imply that \(\RC{\rno}{\Wm}\leq\RC{\rno}{\cha}\)
even when \(\inpA\) is not countably separated.

\subsection{Deferred Proofs}\label{sec:deferred-proofs}
The following parametric function allows us to write certain expressions succinctly in the proofs:  
\begin{align}
\label{eq:def:binarydivergence}
\div{}{\dinp}{\dsta}
&\!\DEF\!\dinp \ln \tfrac{\dinp}{\dsta}+(1-\dinp)\ln \tfrac{1-\dinp}{1-\dsta}
&
&\forall \dinp,\dsta\in[0,1].
\end{align}

\begin{proof}[Proof of Lemma \ref{lem:finitecapacity}]~
	\begin{enumerate}
		\item[(\ref{finitecapacity-a})]
		For any \(\rno \in (0,1)\) the definitions of \(\RMI{\rno}{\mP}{\cha}\) and \(\RC{\rno}{\cha}\)
		imply 
		\(\inf\nolimits_{\mP\in \pdis{\cha}}\lon{\mmn{\rno,\mP}}=e^{\frac{\rno}{\rno-1}\RC{\rno}{\cha}}\). 
		
		\item[(\ref{finitecapacity-b})]
		\(\sup\nolimits_{\mP\in \pdis{\cha}}\lon{\mmn{\rno,\mP}}=e^{\frac{\rno}{\rno-1}\RC{\rno}{\cha}}\) for any  \(\rno \in (1,\infty)\)
		and \(\sup\nolimits_{\mP\in \pdis{\cha}}\lon{\mmn{\infty,\mP}}=e^{\RC{\infty}{\cha}}\)
		by the definitions of \(\RMI{\rno}{\mP}{\cha}\) and \(\RC{\rno}{\cha}\).
		
		\item[(\ref{finitecapacity-c})]   
		Let us first prove that if \(\RC{\rno}{\cha}<\infty\), 
		then \(\mmn{\rno,\mP}\) is uniformly continuous in \(\mP\).
		Lemma \ref{lem:powermeanP}-(\ref{powermeanP-e}) and the triangle inequality 
		imply 
		\begin{align}
		\notag
		\lon{\mmn{\rno,\pmn{1}}\!-\!\mmn{\rno,\pmn{2}}}
		&\leq \lon{\pmn{1}\!-\!\pmn{2}}^{\frac{1}{\rno}} 2^{\frac{\rno-1}{\rno}}(\lon{\mmn{\rno,\smn{1}}}\vee \lon{\mmn{\rno,\smn{2}}}).
		\end{align}
		for all \(\rno\) in \([1,\infty)\)  and \(\pmn{1}\), \(\pmn{2}\) in \(\pdis{\cha}\)
		where \(\smn{1}\) and \(\smn{2}\) are members of \(\pdis{\cha}\) determined by 
		the decomposition given in Lemma \ref{lem:powermeanP}-(\ref{powermeanP-c}).
		
		On the other hand \(\lon{\mmn{\rno,\mS}}\leq e^{\frac{\rno-1}{\rno}\RC{\rno}{\cha}}\)
		for any \(\mS\) in \(\pdis{\cha}\) by the proof of part (\ref{finitecapacity-b}). Thus
		\begin{align}
		\notag
		\lon{\mmn{\rno,\pmn{1}}-\mmn{\rno,\pmn{2}}}
		&\leq \lon{\pmn{1}-\pmn{2}}^{\frac{1}{\rno}}  e^{\frac{\rno-1}{\rno} (\RC{\rno}{\cha}+\ln 2)}
		\end{align}
		for all \(\rno\) in \([1,\infty)\)  and \(\pmn{1}\), \(\pmn{2}\) in \(\pdis{\cha}\) 
		Consequently, if \(\RC{\rno}{\cha}\) is finite, then \(\mmn{\rno,\mP}\) is uniformly 
		continuous in \(\mP\).
		
		We are left with proving that  \(\RC{\rno}{\cha}=\infty\) implies the absence of 
		uniformly continuity in \(\mP\) for \(\mmn{\rno,\mP}\). 
		For any \(\mS\) in \(\pdis{\cha}\) let \(\smn{\ind}\) be
		\begin{align}
		\notag
		\smn{\ind}
		&=(1-\tfrac{1}{\ind})\mS+\tfrac{1}{\ind}\pmn{\ind} 
		&
		&\forall \ind \in \integers{+}
		\end{align}
		where  \(\pmn{\ind}\)'s are such that \(\lon{\mmn{\rno,\pmn{\ind}}} \geq \ind\). 
		The existence of such \(\pmn{\ind}\)'s follows from part (\ref{finitecapacity-b}). 
		Then \(\tfrac{\mmn{\rno,\pmn{\ind}}}{\ind^{1/\rno}}\leq \mmn{\rno,\smn{\ind}}\)
		by the definition of mean measure.
		Thus \(\lon{\mmn{\rno,\smn{\ind}}}\geq \ind^{\frac{\rno-1}{\rno}}\)
		and using the triangle inequality we get 
		\begin{align}
		\notag
		\lon{\mmn{\rno,\smn{\ind}}-\mmn{\rno,\mS}} 
		&\geq \ind^{\frac{\rno-1}{\rno}}-\lon{\mmn{\rno,\mS}}.
		\end{align}
		On the other hand, \(\lon{\mS-\smn{\ind}}\leq \sfrac{2}{\ind}\)
		by the triangle inequality, as well.
		Thus \(\lon{\mmn{\rno,\mP}\!-\!\mmn{\rno,\mS}}\) is an unbounded function of \(\mP\) on every neighborhood of \(\mS\),
		i.e.  \(\mmn{\rno,\mP}\) is not continuous at \(\mP=\mS\).
		
		\item[(\ref{finitecapacity-d})]
		If \(\RC{\rno}{\cha}\) is infinite, there is a sequence of
		\(\{\pmn{\ind}\}_{\ind\in\integers{+}}\) such that 
		\(\lim_{\ind\uparrow\infty}\RMI{\rno}{\pmn{\ind}}{\cha}=\infty\).
		Let \(\pmn{\rnb,\ind}=(1-\rnb)\mP+\rnb\pmn{\ind}\),
		for any \(\mP\).
		Then the concavity of the order \(\rno\) \renyi  information 
		in the prior for \(\rno\)'s in \([1,\infty]\), established in 
		Lemma \ref{lem:informationP}-(\ref{informationP-b}),
		and the non-negativity of the \renyi information imply
		\begin{align}
		\notag
		\RMI{\rno}{\pmn{\rnb,\ind}}{\cha}-\RMI{\rno}{\mP}{\cha}
		&\geq \rnb(\RMI{\rno}{\pmn{\ind}}{\cha}-\RMI{\rno}{\mP}{\cha})
		\end{align}
		for all \(\rnb\in(0,1)\) and \(\ind\in\integers{+}\).
		On the other hand \(\lon{\mP-\pmn{\rnb,\ind}}\leq 2\rnb\). 
		Thus \(\RMI{\rno}{\mP}{\cha}\) is not continuous in \(\mP\),
		whenever \(\RC{\rno}{\cha}\) is infinite. 
		The continuity of  \(\RMI{\rno}{\mP}{\cha}\) in
		\(\mP\) for the case when \(\RC{\rno}{\cha}\) is finite follows from
		part (\ref{finitecapacity-uecP}).
		
		\item[(\ref{finitecapacity-uecP})]
		We establish the uniform equicontinuity by proving establishing the following bound 
		\begin{align}
		\notag
		&\hspace{-.7cm}\sup_{\rno\in [0,\rnt]}  \abs{\RMI{\rno}{\pmn{2}}{\cha}\!-\!\RMI{\rno}{\pmn{1}}{\cha}}
		\\
		&\!\leq\! 
		\begin{cases}
		\ln(\!\tfrac{1}{1-\delta}\!\wedge\!\tfrac{e^{\RC{0}{\cha}}}{\delta}\!)
		\!+\!\ln(1\!-\!\delta\!+\!\delta e^{\RC{0}{\cha}})\!\!
		&\rnt\!=\!0
		\\
		\ln\tfrac{1-\delta+\delta e^{\RC{\rnt}{\cha}}}{\left[ (1-\delta)^{\frac{1}{\rnt}}
			+\delta^{\frac{1}{\rnt}} e^{\frac{\rnt-1}{\rnt}\RC{\rnt}{\cha}}
			\right]^{\frac{\rnt}{1-\rnt}}}
		&\rnt\!\in\!\reals{+}\!\!\setminus\!\{\!1\!\} 
		\\
		\hX_{1}(\delta)\!+\!\delta\RC{1}{\cha}\!+\!\ln(1\!-\!\delta\!+\!\delta e^{\RC{1}{\cha}})\!\!
		&
		\rnt\!=\!1
		\end{cases}
		\label{eq:lem:uecP}
		\end{align}
		where \(\delta=\tfrac{\lon{\pmn{1}-\pmn{2}}}{2}\) and \(\hX_{\rno}(\cdot)\) is defined in 
		\eqref{eq:def:binaryentropy}.
		
		As a result of the decomposition given Lemma \ref{lem:powermeanP}-(\ref{powermeanP-c}) 
		we can write \(\pmn{1}\) as \(\pmn{1}=(1-\delta)\smn{\wedge}+\delta\smn{1}\)
		for some \(\smn{\wedge}\) and \(\smn{1}\) in \(\pdis{\cha}\).
		Using \eqref{eq:def:divergence}, \eqref{eq:sibson}, and \eqref{eq:lem:information:defA} 
		we get 
		\begin{align}
		\notag
		\hspace{-.6cm}
		\RMI{1}{\pmn{1}}{\cha}
		&\!=\!(1-\delta)\RMI{1}{\smn{\wedge}}{\cha}+(1-\delta)\RD{1}{\qmn{1,\smn{\wedge}}}{\qmn{1,\pmn{1}}}
		\\
		&~\qquad~+\delta \RMI{1}{\smn{1}}{\cha}     +   \delta \RD{1}{\qmn{1,\smn{1}}}{\qmn{1,\pmn{1}}}.
		\label{eq:informationdecomposition-one}
		\end{align}
		Similarly for positive orders other than one we have,
		\begin{align}
		\notag
		\hspace{-.6cm}
		\RMI{\rno}{\pmn{1}}{\cha}
		\!=\!\tfrac{1}{\rno-1}\!\!\ln&\!
		\left[\!(1\!-\!\delta\!)e^{(\rno-1)\left[\RMI{\rno}{\smn{\wedge}}{\cha}+\RD{\rno}{\qmn{\rno,\smn{\wedge}}}{\qmn{\rno,\pmn{1}}}\right]}
		\right.
		\\
		&~\left. 
		+\delta e^{(\rno-1)\left[\RMI{\rno}{\smn{1}}{\cha}     +\RD{\rno}{\qmn{\rno,\smn{1}}}{\qmn{\rno,\pmn{1}}}\right]}\!\right]\!\!
		\label{eq:informationdecomposition-oto}.
		\end{align}
		\hspace{-.1cm}Since the \renyi divergence is non-negative by Lemma \ref{lem:divergence-pinsker},
		\begin{align}
		\notag 
		\RMI{\rno}{\pmn{1}}{\cha}
		&\!\geq\!
		\begin{cases}
		(1-\delta)\RMI{1}{\smn{\wedge}}{\cha}+\delta \RMI{1}{\smn{1}}{\cha}
		&\!
		\rno\!=\!1
		\\
		\tfrac{\ln\left[(1-\delta) e^{(\rno-1)\RMI{\rno}{\smn{\wedge}}{\cha}} +\delta e^{(\rno-1)\RMI{\rno}{\smn{1}}{\cha}} \right]}{\rno-1}
		&\!
		\rno\!\neq\!1
		\end{cases}
		\\
		\notag
		&\!\geq\!\RMI{\rno}{\smn{\wedge}}{\cha}-\gX(\delta,\rno,\RMI{\rno}{\smn{\wedge}}{\cha}-\RMI{\rno}{\smn{1}}{\cha})
		\end{align}		
		where the function \(\gX(\delta,\rno,\gamma)\) is defined for any \(\delta\in [0,1]\), \(\rno\in\reals{+}\), and \(\gamma \in\reals{}\) as follows
		\begin{align}
		\notag
		\gX(\delta,\rno,\gamma)
		&\!\DEF\!\begin{cases}
		\delta \gamma
		&
		\rno\!=\!1
		\\
		\tfrac{1}{1-\rno}\ln\left[(1-\delta) +\delta e^{(1-\rno) \gamma }  \right]
		&
		\rno\!\neq\!1
		\end{cases}.
		\end{align}
		Given \(\delta\) and \(\gamma\), 
		\(\gX(\delta,\rno,\gamma)\) is 
		nonincreasing\footnote{For any fixed \((\delta,\gamma)\) pair, 
			\(\gX(\delta,\rno,\gamma)\) is a continuous and differentiable function of \(\rno\)
			satisfying \(\pder{}{\rno}\gX(\delta,\rno,\gamma)\leq 0\). In particular
			\begin{align}
			\notag
			\pder{}{\rno}\gX(\delta,\rno,\gamma)
			&=\tfrac{-1}{(1-\rno)^2}\div{}{\tfrac{\delta e^{(1-\rno) \gamma}}{(1-\delta)+\delta e^{(1-\rno) \gamma}}}{\delta}.
			\end{align}} 
		in \(\rno\). Then 
		\begin{align}
		\notag
		\RMI{\rno}{\pmn{1}}{\cha}
		&\geq
		\RMI{\rno}{\smn{\wedge}}{\cha}-\gX(\delta,0,\RMI{\rno}{\smn{\wedge}}{\cha}-\RMI{\rno}{\smn{1}}{\cha}) 
		\end{align}
		for all \(\rno\) in \((0,\rnt]\). 
		Furthermore, given \(\delta\) and \(\rno\), \(\gX(\delta,\rno,\gamma)\) is 
		nondecreasing in \(\gamma\). 
		Then using \(\RMI{\rno}{\smn{1}}{\cha}\geq 0\),
		\(\RMI{\rno}{\smn{\wedge}}{\cha}\leq \RMI{\rnt}{\smn{\wedge}}{\cha}\),
		and \(\RMI{\rnt}{\smn{\wedge}}{\cha}\leq \RC{\rnt}{\cha}\)
		we get
		\begin{align}
		\hspace{-.3cm}
		\label{eq:divergencelowerboundF}
		\RMI{\rno}{\pmn{1}}{\cha}
		&\geq
		\RMI{\rno}{\smn{\wedge}}{\cha}\!-\!\gX(\delta,0,\RC{\rnt}{\cha})
		\end{align}
		for all \(\rno\!\in\!(0,\rnt]\).
		On the other hand, \(\pmn{2}=(1-\delta)\smn{\wedge}+\delta\smn{2}\) 
		by the decomposition 
		given in Lemma \ref{lem:powermeanP}-(\ref{powermeanP-c}).
		Then
		\begin{align}
		\notag
		\!(1\!-\!\delta)^{\frac{1}{\rno}} \mmn{\rno,\smn{\wedge}}\leq \mmn{\rno,\pmn{2}}
		\end{align}
		as a result of the definition of the mean measure.
		Thus
		\begin{align}
		\notag
		e^{\frac{\rno-1}{\rno}(\RMI{\rno}{\smn{\wedge}}{\cha}-\RMI{\rno}{\pmn{2}}{\cha})}\!(1\!-\!\delta)^{\frac{1}{\rno}} \qmn{\rno,\smn{\wedge}}\leq \qmn{\rno,\pmn{2}}
		\end{align}
		by \eqref{eq:def:information} and \eqref{eq:def:mean}. Applying Lemma \ref{lem:divergence-RM} we get
		\begin{align}
		\notag
		\RD{\rno\!}{\qmn{\rno,\smn{\wedge}}}{\qmn{\rno,\pmn{2}}}
		&\leq  \RD{\rno\!}{\qmn{\rno,\smn{\wedge}}}{\!(1\!-\!\delta)^{\frac{1}{\rno}} \mmn{\rno,\smn{\wedge}}}
		\\
		\notag
		&=\tfrac{(1-\rno)(\RMI{\rno}{\smn{\wedge}}{\cha}-\RMI{\rno}{\pmn{2}}{\cha})-\ln(1-\delta)}{\rno}
		\end{align}
		for all \(\rno\)'s in \(\reals{+}\).
		Using the corresponding upper bound on 
		\(\RD{\rno}{\qmn{\rno,\smn{2}}}{\qmn{\rno,\pmn{2}}}\) together 
		with \eqref{eq:informationdecomposition-one} and 
		\eqref{eq:informationdecomposition-oto} we get the following 
		bound for all positive real orders
		\begin{align}
		\notag
		\hspace*{-.2cm}
		\RMI{\rno}{\pmn{2}}{\cha}
		&\!\leq\!
		\begin{cases}
		(1-\delta)\RMI{1}{\smn{\wedge}}{\cha}\!+\!\delta \RMI{1}{\smn{2}}{\cha}
		+\hX_{1}(\delta) 
		&
		\rno=1
		\\
		\tfrac{\rno\ln \left[\!(1-\delta)^{\frac{1}{\rno}} e^{\frac{\rno-1}{\rno}\RMI{\rno}{\smn{\wedge}}{\cha}}  
			\!+\!\delta^{\frac{1}{\rno}} e^{\frac{\rno-1}{\rno}\RMI{\rno}{\smn{2}}{\cha}}  
			\!\right]}{\rno-1}
		&
		\rno\neq 1
		\end{cases}
		\\
		\notag
		&=\RMI{\rno}{\smn{\wedge}}{\cha}+\fX(\delta,\rno,\RMI{\rno}{\smn{2}}{\cha}-\RMI{\rno}{\smn{\wedge}}{\cha})
		\end{align}
		where the function \(\fX(\delta,\rno,\gamma)\) is defined 	
		for any \(\delta\in [0,1]\), \(\rno\in\reals{+}\), and \(\gamma \in\reals{}\) 
		as follows
		\begin{align}
		\notag
		\fX(\delta,\rno,\gamma)
		&\!\DEF\!\begin{cases}
		\delta \gamma +\hX_{1}(\delta)
		&\rno=1
		\\
		\tfrac{\rno}{\rno-1}\ln \left[(1-\delta)^{\frac{1}{\rno}}+\delta^{\frac{1}{\rno}}e^{\frac{\rno-1}{\rno}\gamma}\right]
		&
		\rno\neq 1
		\end{cases}.
		\end{align}
		For any fixed \((\delta,\gamma)\) pair,  \(\fX(\delta,\rno,\gamma)\) is 
		nondecreasing\footnote{For any fixed \((\delta,\gamma)\) pair, \(\fX(\delta,\rno,\gamma)\) 
			is a continuous and differentiable function of \(\rno\) satisfying
			\(\pder{}{\rno}\fX(\delta,\rno,\gamma)\geq 0\). In particular
			\begin{align}
			\notag
			\pder{}{\rno}\fX(\delta,\rno,\gamma)
			&=\tfrac{1}{(1-\rno)^2}\div{}{\tfrac{(1-\delta)^{\sfrac{1}{\rno}}}{(1-\delta)^{\sfrac{1}{\rno}}+\delta^{\sfrac{1}{\rno}}e^{(1-\sfrac{1}{\rno})\gamma}}}{1-\delta}.
			\end{align}}
		in \(\rno\). Then for any \(\rno\) in \((0,\rnt]\) we  have
		\begin{align}
		\notag
		\RMI{\rno}{\pmn{2}}{\cha}
		&\leq
		\RMI{\rno}{\smn{\wedge}}{\cha}+\fX(\delta,\rnt,\RMI{\rno}{\smn{2}}{\cha}-\RMI{\rno}{\smn{\wedge}}{\cha}). 
		\end{align}
		Furthermore, given \(\delta\) and \(\rno\),  \(\fX(\delta,\rno,\gamma)\) is 
		nondecreasing in \(\gamma\). Then using 
		\(\RMI{\rno}{\smn{\wedge}}{\cha}\geq 0\),
		\(\RMI{\rno}{\smn{2}}{\cha}\leq \RMI{\rnt}{\smn{2}}{\cha}\),
		and 
		\(\RMI{\rnt}{\smn{2}}{\cha}\leq \RC{\rnt}{\cha}\)
		we get
		\begin{align}
		\label{eq:divergenceupperboundH}
		\RMI{\rno}{\pmn{2}}{\cha}
		&\leq
		\RMI{\rno}{\smn{\wedge}}{\cha}+\fX(\delta,\rnt,\RC{\rnt}{\cha})
		\end{align}
		for all \(\rno\) in \((0,\rnt]\). 
		Using \eqref{eq:divergencelowerboundF} and \eqref{eq:divergenceupperboundH} 
		together with the definition of 
		the \renyi capacity given in \eqref{eq:def:capacity} we get
		\begin{align}
		\notag
		\RMI{\rno}{\pmn{2}}{\cha}\!-\!\RMI{\rno}{\pmn{1}}{\cha}
		&\!\leq\!\fX(\delta,\rnt,\RC{\rnt}{\cha})\!+\!\gX(\delta,0,\RC{\rnt}{\cha}).
		\end{align}
		A lower bound on \(\RMI{\rno}{\pmn{2}}{\cha}\!-\!\RMI{\rno}{\pmn{1}}{\cha}\) can be obtained using the same arguments with the roles 
		of \(\pmn{1}\) and \(\pmn{2}\) reversed. This establishes \eqref{eq:lem:uecP} 
		for \(\rnt>0\) and \(\rno\in(0,\rnt]\).

		In order to establish \eqref{eq:lem:uecP} for \(\rno=0\), 
		recall the definition of the order zero \renyi information given in \eqref{eq:def:information}.
		\begin{align}
		\notag
		\RMI{0}{\pmn{1}}{\cha}
		&=-\ln\essup_{\mmn{1,\pmn{1}}} 
		\left(
		(1-\delta)\!\sum\nolimits_{\mW:\smn{\wedge}(\mW|\dout)>0}\!\!\!\smn{\wedge}(\mW)
		\right.
		\\
		\notag
		&\qquad~\qquad~\qquad~\qquad\left. 
		+\delta\!\sum\nolimits_{\mW:\smn{1}(\mW|\dout)>0}\!\!\!\smn{1}(\mW) \right)
		\\
		\notag
		&\geq -\ln\left((1-\delta)e^{-\RMI{0}{\smn{\wedge}}{\cha}}+\delta \right)
		\\
		\notag
		&=  \RMI{0}{\smn{\wedge}}{\cha} -\ln\left( 1-\delta+\delta e^{\RMI{0}{\smn{\wedge}}{\cha}} \right).
		\end{align}
		Note that \(\RMI{0}{\smn{\wedge}}{\cha}\leq \RMI{\rnt}{\smn{\wedge}}{\cha}\) 
		by Lemma \ref{lem:informationO}
		and \(\RMI{\rnt}{\smn{\wedge}}{\cha}\leq \RC{\rnt}{\cha}\) by definition. Then
		\begin{align}
		\label{eq:divergencelowerboundzero}
		\RMI{0}{\pmn{1}}{\cha}
		&\geq   \RMI{0}{\smn{\wedge}}{\cha}
		-\ln\left( 1-\delta+\delta e^{\RC{\rnt}{\cha}} \right).
		\end{align}
		On the other hand,
		\begin{align}
		\notag
		\RMI{0}{\pmn{2}}{\cha}
		&=\esinf\limits_{\mmn{1,\pmn{2}}}
		\ln\tfrac{1}{\sum\nolimits_{\mW:\pmn{2}(\mW|\dout)>0}\left((1-\delta)\smn{\wedge}(\mW)+\delta\smn{2}(\mW) \right)} 
		\\
		\notag
		&\leq 
		\left(\esinf\nolimits_{\mmn{1,\smn{\wedge}}}\ln \tfrac{1}{(1-\delta)\sum\nolimits_{\mW:\smn{\wedge}(\mW|\dout)>0} \smn{\wedge}(\mW)}\right)
		\\
		\notag
		&~\qquad~\wedge
		\left(\esinf\nolimits_{\mmn{1,\smn{2}}}\ln \tfrac{1}{\delta\sum\nolimits_{\mW:\smn{2}(\mW|\dout)>0} \smn{2}(\mW)}\right)
		\\
		\notag
		&=
		(\RMI{0}{\smn{\wedge}}{\cha}+\ln\tfrac{1}{1-\delta})
		\wedge
		(\RMI{0}{\smn{2}}{\cha}+\ln\tfrac{1}{\delta})
		\end{align}
		Then \(\RMI{0}{\smn{\wedge}}{\cha}\!\geq\!0\) and \(\RMI{0}{\smn{2}}{\cha}\!\leq\!\RMI{\rnt}{\smn{2}}{\cha}\!\leq\!\RC{\rnt}{\cha}\) imply
		\begin{align}
		\label{eq:divergenceupperboundzero} 
		\RMI{0}{\pmn{2}}{\cha}
		&\leq\RMI{0}{\smn{\wedge}}{\cha}
		+\left(\ln\tfrac{1}{1-\delta} 
		\wedge\ln\tfrac{e^{\RC{\rnt}{\cha}}}{\delta}
		\right).
		\end{align}
		Thus using  \eqref{eq:divergencelowerboundzero} and \eqref{eq:divergenceupperboundzero} we get
		\begin{align}
		\notag
		\hspace{-.2cm}
		\RMI{0}{\pmn{2}}{\cha}\!-\!\RMI{0}{\pmn{1}}{\cha}
		&\leq \ln\left(1-\delta+\delta e^{\RC{\rnt}{\cha}} \right)\\
		\label{eq:informationdecompositionboundzero}
		&~\qquad
		+\ln\left(\tfrac{1}{1-\delta} 
		\wedge\tfrac{e^{\RC{\rnt}{\cha}}}{\delta}
		\right).
		\end{align}
		
		A lower bound on \(\RMI{0}{\pmn{2}}{\cha}-\RMI{0}{\pmn{1}}{\cha}\) can be obtained using the same arguments with the roles 
		of \(\pmn{1}\) and \(\pmn{2}\) reversed.
		Consequently,  \eqref{eq:lem:uecP} holds for \(\rnt=0\), \(\rno=0\) case. 
		In order to  establish \eqref{eq:lem:uecP} for \(\rnt>0\), \(\rno=0\) case,  note that
		\begin{align}
		\notag
		\tfrac{\rnt \ln \left[ (1-\delta)^{\frac{1}{\rnt}}   
			+\delta^{\frac{1}{\rnt}} e^{\frac{\rnt-1}{\rnt}\RC{\rnt}{\cha}}
			\right]}{\rnt-1}
		&\geq 
		\begin{cases}
		\ln\tfrac{1}{1-\delta}
		&
		\tfrac{1-\delta}{\delta e^{-\RC{\rnt}{\cha}}}\geq 1
		\\
		\ln\tfrac{e^{\RC{\rnt}{\cha}}}{\delta}
		&
		\tfrac{1-\delta}{\delta e^{-\RC{\rnt}{\cha}}}\leq 1
		\end{cases}
		\\
		\notag
		&\geq
		\left[
		\ln\tfrac{1}{1-\delta}
		\wedge
		\ln\tfrac{e^{\RC{\rnt}{\cha}}}{\delta }
		\right]
		\end{align}
		Thus \eqref{eq:lem:uecP} holds for \(\rnt>0\), \(\rno=0\) case, as well. 
		
		\item[(\ref{finitecapacity-uecO})]
		In order to establish the uniform equicontinuity 
		we prove the Lipschitz continuity of \(\{\RMI{\rno}{\mP}{\cha}\}_{\mP\in\pdis{\cha}}\)
		in \(\rno\) on compact subsets of \((0,\rnt)\) with a common Lipschitz constant:
		If \(\rno\) and \(\rnf\) in 
		\([\epsilon,\rnt\!-\!\epsilon]\)
		for an \(\epsilon\in (0,\epsilon_{\rnt}]\),
		then 
		\begin{align}
		\label{eq:lem:uecO}		
		\abs{\RMI{\rno}{\mP}{\cha}\!-\!\RMI{\rnf}{\mP}{\cha}} 
		&\leq \tfrac{\gamma_{\rnt}}{\epsilon^2}\abs{\rno-\rnf}
		\end{align}
		for all \(\mP\) in \(\pdis{\cha}\)
		where \(\epsilon_{\rnt}\) and \(\gamma_{\rnt}\) are defined as follows
		\begin{align}
		\notag
		\epsilon_{\rnt}
		&\!\DEF\!\begin{cases}
		\tfrac{\rnt}{2}
		&\rnt\in(0,1] 
		\\
		\tfrac{\rnt-1}{8\rnt}
		&\rnt\in(1,\infty)
		\end{cases},
		\\
		\notag
		\gamma_{\rnt}
		&\!\DEF\!\begin{cases}
		\RC{\rnt}{\cha}
		&\rnt\in(0,1] 
		\\
		\rnt\RC{\rnt}{\cha}	+\tfrac{5e^{2\RC{\rnt}{\cha}}}{2 e^{2}}
		&\rnt\in(1,\infty)
		\end{cases}.
		\end{align} 
		Since \(\lon{\mmn{\rno,\mP}}^{\rno}\) is a log-convex in \(\rno\) 
		by Lemma \ref{lem:powermeanO}-(\ref{powermeanO-d}),
		\begin{align}
		\notag
		\lon{\mmn{\rno,\mP}}^{\rno}
		&\leq  
		\lon{\mmn{\rnb,\mP}}^{\rnb \frac{\rno-\rnf}{\rnb-\rnf}}
		\lon{\mmn{\rnf,\mP}}^{\rnf\frac{\rnb-\rno}{\rnb-\rnf}}.
		\end{align}
		for any \(\rnf\), \(\rno\), \(\rnb\) satisfying \(0<\rnf<\rno<\rnb\) 
		and \(\mP\in\pdis{\cha}\).

		Let us start with \(\rnt\in(0,1]\) and \(\epsilon\in (0,\tfrac{\rnt}{2}]\) case. 
		Then for any \(\rnf\), \(\rno\), \(\rnb\) satisfying
		\(0<\rnf<\epsilon\leq \rno<\rnb\leq\rnt-\epsilon\), 
		\begin{align}
		\notag
		\RMI{\rnb}{\mP}{\cha}-\RMI{\rno}{\mP}{\cha}
		&=\tfrac{1}{1-\rno}\ln \tfrac{\lon{\mmn{\rno,\mP}}^{\rno}}{\lon{\mmn{\rnb,\mP}}^{\frac{\rnb(1-\rno)}{1-\rnb}}}
		\\
		\notag
		&\leq \tfrac{1}{1-\rno}
		\ln \tfrac{\lon{\mmn{\rnb,\mP}}^{\rnb \frac{\rno-\rnf}{\rnb-\rnf}}\lon{\mmn{\rnf,\mP}}^{\rnf\frac{\rnb-\rno}{\rnb-\rnf}}}{\lon{\mmn{\rnb,\mP}}^{\frac{\rnb(1-\rno)}{1-\rnb}}}
		\\
		\notag
		&=\tfrac{(\rnb-\rno)(1-\rnf) }{(1-\rno)(\rnb-\rnf)} 
		(\RMI{\rnb}{\mP}{\cha}-\RMI{\rnf}{\mP}{\cha})
		\\
		\notag
		&\leq \tfrac{\rnb-\rno}{\epsilon(\epsilon-\rnf)} 
		\RMI{\rnb}{\mP}{\cha}.
		\end{align}
		The above bound holds for any \(\rnf\) in \((0,\epsilon)\). Furthermore, 
		the \renyi information is a nondecreasing function of the order by 
		Lemma \ref{lem:informationO}. Then 
		\begin{align}
		\label{eq:finitecapacity-uecO-A}
		0\leq\RMI{\rnb}{\mP}{\cha}-\RMI{\rno}{\mP}{\cha}
		&\leq \tfrac{\RC{\rnt}{\cha}}{\epsilon^{2}}(\rnb-\rno)
		\end{align}
		for any \(\mP\) in \(\pdis{\cha}\) and \(\rnb\),  \(\rno\) satisfying \(\epsilon\leq\rno\leq \rnb\leq\rnt-\epsilon\). 
		
		\eqref{eq:finitecapacity-uecO-A} establishes \eqref{eq:lem:uecO} for 
		\(\rnt\in(0,1]\) and \(\epsilon\in (0,\tfrac{\rnt}{2}]\) case.
		
		We proceed with \(\rnt\in (1,\infty)\) and \(\epsilon\in (0,\tfrac{\rnt-1}{8\rnt}]\) case.
		For any \(\rnf\), \(\rno\), \(\rnb\) such that \(1+\epsilon\leq \rnf<\rno\leq\rnb-\epsilon\) 
		and \(\mP\) in \(\pdis{\cha}\) we have
		\begin{align}
		\notag
		\RMI{\rno}{\mP}{\cha}-\RMI{\rnf}{\mP}{\cha}
		&=\tfrac{1}{\rno-1}\ln \tfrac{\lon{\mmn{\rno,\mP}}^{\rno}}{\lon{\mmn{\rnf,\mP}}^{\frac{\rnf(\rno-1)}{\rnf-1}}}
		\\
		\notag
		&\leq \tfrac{1}{\rno-1}
		\ln \tfrac{\lon{\mmn{\rnb,\mP}}^{\rnb \frac{\rno-\rnf}{\rnb-\rnf}}\lon{\mmn{\rnf,\mP}}^{\rnf\frac{\rnb-\rno}{\rnb-\rnf}}}{{\lon{\mmn{\rnf,\mP}}^{\frac{\rnf(\rno-1)}{\rnf-1}}}}
		\\
		\notag
		&=\tfrac{(\rno-\rnf)(\rnb-1)}{(\rno-1)(\rnb-\rnf)} 
		(\RMI{\rnb}{\mP}{\cha}-\RMI{\rnf}{\mP}{\cha})
		\\
		\label{eq:finitecapacity-uecO-B}
		&\leq \tfrac{(\rno-\rnf)}{\epsilon^{2}} 
		\rnb\RC{\rnb}{\cha}.
		\end{align}
		If \(0<\rno-\rnf<\epsilon\), then at least one of the three closed intervals \([\epsilon,1-\epsilon]\), \([\tfrac{1}{2},\tfrac{5\rnt-1}{4\rnt}]\), \([1+\epsilon,\rnt-\epsilon]\)
		includes both \(\rno\) and \(\rnf\). 
		When  \(\rno\) and \(\rnf\) are in \([\epsilon,1-\epsilon]\) we use \eqref{eq:finitecapacity-uecO-A}.
		When  \(\rno\) and \(\rnf\) are in \([1+\epsilon,\rnt-\epsilon]\) we use \eqref{eq:finitecapacity-uecO-B}.
		Derivation of the bound for the second interval takes some effort. 
		Let us first finish the proof of \eqref{eq:lem:uecO} 
		assuming that the bound given in   \eqref{eq:finitecapacity-uecO-C} holds for the second interval. 
		Then for any \(\rnf\), \(\rno\) such that \(\epsilon\leq \rnf \leq \rno\leq (\rnf+\epsilon)\wedge (\rnt-\epsilon)\) we have
		\begin{align}
		\notag
		\hspace{-.4cm}
		0&\!\leq\!\tfrac{\RMI{\rno}{\mP}{\cha}-\RMI{\rnf}{\mP}{\cha}}{\rno-\rnf}
		\\
		\notag
		&\!\leq\!
		\begin{cases}
		\!\tfrac{\RC{\rnt}{\cha}}{\epsilon^{2}} 
		&\epsilon\!\leq\!\rnf\!\leq\!\rno\!\leq\!(\rnf\!+\!\epsilon)\!\wedge\!(1\!-\!\epsilon)
		\\
		\!4\RC{\rnt}{\cha}\!+\!\tfrac{160\rnt^{2}e^{2\RC{\rnt}{\cha}}}{e^{2}(\rnt-1)^{2}}  
		&\tfrac{1}{2}\!\leq\!\rnf\!\leq\!\rno\!\leq\!(\rnf\!+\!\epsilon)\!\wedge\!\tfrac{5\rnt-1}{4\rnt}
		\\
		\!\tfrac{\rnt \RC{\rnt}{\cha}}{\epsilon^{2}} 
		&1\!+\!\epsilon\!\leq\!\rnf\!\leq\!\rno\!\leq\!(\rnf\!+\!\epsilon)\!\wedge\!(\rnt\!-\!\epsilon)
		\end{cases}
		\end{align}
		Thus for any \(\rno\) and \(\rnf\)  satisfying 
		\(\epsilon\!\leq\!\rnf\!\leq\!\rno\!\leq(\rnf\!+\!\epsilon)\!\wedge\! (\rnt\!-\!\epsilon)\)
		and \(\mP\) in \(\pdis{\cha}\) we have
		\begin{align}
		\notag
		\RMI{\rno}{\mP}{\cha}-\RMI{\rnf}{\mP}{\cha}
		&\leq \tfrac{\rno-\rnf}{\epsilon^2} \left[
		\rnt \RC{\rnt}{\cha}+\tfrac{5e^{2\RC{\rnt}{\cha}}}{2 e^{2}}
		\right].
		\end{align}
		Note that the preceding bound is linear with a uniform constant, 
		thus the hypothesis \(\rnf\leq\rno\leq\rnf+\epsilon\) can be removed
		without loss of generality.
		Thus \eqref{eq:lem:uecO} holds for \(\rnt\in (1,\infty)\) case 
		for any \(\epsilon\in(0,\epsilon_{\rnt}]\), as well.
		
		We are left with establishing the bound given in \eqref{eq:finitecapacity-uecO-C}. 
		For orders other than one,  (\ref{def:information}) and \eqref{eq:lem:informationOder} imply that
		\begin{align}
		\notag
		\der{}{\rno} \RMI{\rno}{\mP}{\cha}
		&=\tfrac{1}{\rno-1}\left[
		\rno\tfrac{\lon{\dmn{\rno,\mP}}}{\lon{\mmn{\rno,\mP}}}-\tfrac{\RMI{\rno}{\mP}{\cha}}{\rno} \right].
		\end{align}
		The expression in the brackets is differentiable in \(\rno\) on \(\reals{+} \) 
		because \(\lon{\mmn{\rno,\mP}}\) is positive and
		\(\lon{\mmn{\rno,\mP}}\), \(\lon{\dmn{\rno,\mP}}\), and \(\RMI{\rno}{\mP}{\cha}\) 
		are differentiable by 
		Lemmas \ref{lem:powermeanequivalence}-(\ref{powermeanequivalence-a}),
		\ref{lem:powermeanO}-(\ref{powermeanO-b},\ref{powermeanO-c}), 
		and \ref{lem:informationO}. 
		Furthermore, the expression in the brackets is equal to zero at \(\rno=1\).
		Then as a result of the mean value theorem \cite[5.10]{rudin}  for each 
		\(\rno\in[\sfrac{1}{2},1)\) there exists a \(\rnf\in(\rno,1)\) and such that 
		\begin{align}
		\label{eq:finitecapacity-uecO-C-1}
		\der{}{\rno} \RMI{\rno}{\mP}{\cha}
		&=\left.\der{}{\rno}\left[
		\tfrac{\rno\lon{\dmn{\rno,\mP}}}{\lon{\mmn{\rno,\mP}}}-\tfrac{\RMI{\rno}{\mP}{\cha}}{\rno} \right]
		\right\vert_{\rno=\rnf}.
		\end{align}
		Using the expressions for derivatives given in 
		Lemmas \ref{lem:powermeandensityO}-(\ref{powermeandensityO-b})
		and \ref{lem:powermeanO}-(\ref{powermeanO-b},\ref{powermeanO-c}) 
		we get
		\begin{align}
		\notag
		\hspace{-.1cm}
		\der{}{\rno} \tfrac{\rno\lon{\dmn{\rno,\mP}}}{\lon{\mmn{\rno,\mP}}}
		&=\tfrac{\rno\ddmn{\rno,\mP}(\outS)}{\lon{\mmn{\rno,\mP}}}
		+\tfrac{\lon{\dmn{\rno,\mP}}}{\lon{\mmn{\rno,\mP}}}
		-\tfrac{\rno\lon{\dmn{\rno,\mP}}^{2}}{\lon{\mmn{\rno,\mP}}^{2}}
		\\
		\notag
		&=\EXS{\qmn{\rno,\mP}}{\sum\nolimits_{\mW} \tfrac{\tpn{\rno}(\mW|\dout)}{\rno^{2}}\left(\ln\tfrac{\tpn{\rno}(\mW|\dout)}{\mP(\mW)}\right)^2}
		\\
		\notag
		&\quad+
		\EXS{\qmn{\rno,\mP}}{\tfrac{\rno(1-\rno)(\dnmn{\rno,\mP})^2}{(\nmn{\rno,\mP})^2}}
		\!-\!\tfrac{\lon{\dmn{\rno,\mP}}}{\lon{\mmn{\rno,\mP}}}
		\!-\!\tfrac{\rno\lon{\dmn{\rno,\mP}}^{2}}{\lon{\mmn{\rno,\mP}}^{2}}
		\\
		\notag
		&\leq \EXS{\qmn{\rno,\mP}}{\!\tfrac{\dnmn{\rno,\mP}^{2}}{4\nmn{\rno,\mP}^{2}}\!+\!\sum\nolimits_{\mW} \!\tfrac{\tpn{\rno}(\mW|\dout)}{\rno^{2}}\left(\ln\tfrac{\tpn{\rno}(\mW|\dout)}{\mP(\mW)}\right)^2}
		\end{align} 		
		Then using  
		\(\sum\nolimits_{\mW} \tpn{\rno}(\mW|\dout)\tfrac{1}{\rno}\ln\tfrac{\tpn{\rno}(\mW|\dout)}{\mP(\mW)}
		=\tfrac{\rno \dnmn{\rno,\mP}}{\nmn{\rno,\mP}}\), which follows from Lemma \ref{lem:powermeandensityO}-(\ref{powermeandensityO-b}), we get
		\begin{align}
		\label{eq:finitecapacity-uecO-C-2}
		\der{}{\rno} \tfrac{\rno\lon{\dmn{\rno,\mP}}}{\lon{\mmn{\rno,\mP}}}
		&\!\leq\!\tfrac{4\rno^{2}+1}{4\rno^{2}}\!\EXS{\qmn{\rno,\mP}}{\sum\limits_{\mW} \!\tfrac{\tpn{\rno}(\mW|\dout)}{\rno^{2}}\left(\ln\tfrac{\tpn{\rno}(\mW|\dout)}{\mP(\mW)}\right)^2}
		\end{align}
		Since \(\RMI{\rno}{\mP}{\cha}\) is differentiable and nondecreasing in \(\rno\)
		\begin{align}
		\notag
		\der{}{\rno}\tfrac{\RMI{\rno}{\mP}{\cha}}{\rno}
		&=-\tfrac{\RMI{\rno}{\mP}{\cha}}{\rno^2}+\tfrac{1}{\rno}\der{}{\rno}\RMI{\rno}{\mP}{\cha}
		\\
		\label{eq:finitecapacity-uecO-C-3}
		&\geq -\tfrac{\RMI{\rno}{\mP}{\cha}}{\rno^2}
		\end{align}
		Using \eqref{eq:finitecapacity-uecO-C-1}, \eqref{eq:finitecapacity-uecO-C-2}, and \eqref{eq:finitecapacity-uecO-C-3} we can conclude that
		there exists a \(\rnf\in(\rno,1)\) such that
		\begin{align}
		\notag
		\hspace{-.2cm}
		\der{}{\rno} \RMI{\rno}{\mP}{\cha}
		&\!\leq\!2\EXS{\qmn{\rnf,\mP}}{
			\sum\limits_{\mW} \tfrac{\tpn{\rnf}(\mW|\dout)}{\rnf^{2}}
			\left[\!\ln\tfrac{\tpn{\rnf}(\mW|\dout)}{\mP(\mW)}\!\right]^2}
		\!+\!4\RMI{\rnf}{\mP}{\cha}
		\end{align}
		Similarly for all \(\rno\in (1,\infty)\) there exists a \(\rnf\in (1,\rno)\) satisfying the same identity. 
		Furthermore, one can confirm by substitution for the expression given in  \eqref{eq:information:der-A} that 
		\begin{align}
		\notag
		\left. \der{}{\rno} \RMI{\rno}{\mP}{\cha}\right\vert_{\rno=1}
		&\leq\tfrac{1}{2}
		\EXS{\qmn{1,\mP}}{\sum\limits_{\mW}\tpn{1}(\mW|\dout)\left(\ln\tfrac{\tpn{1}(\mW|\dout)}{\mP(\mW)}\right)^2}.
		\end{align}
		Thus there exist an \(\rnf\in (\tfrac{1}{2},\tfrac{5\rnt -1}{4\rnt})\) such that
		\begin{align}
		\notag
		&\hspace{-.3cm}
		\sup\nolimits_{\rno\in [\frac{1}{2},\frac{5\rnt -1}{4\rnt}]} \der{}{\rno} \RMI{\rno}{\mP}{\cha}
		\\
		\label{eq:finitecapacity-uecO-C-4}
		&\!\leq\!4\RMI{\rnf}{\mP}{\cha}\!+\!2 
		\EXS{\qmn{\rnf,\mP}}{\!\sum\limits_{\mW}\!\tfrac{\tpn{\rnf}(\mW|\dout)}{\rnf^{2}}\!
			\left[\!\ln\tfrac{\tpn{\rnf}(\mW|\dout)}{\mP(\mW)}\!\right]^2\!}\!.
		\end{align}
		Note that 
		\(\dinp^{\rnf}\ln^{2} \dinp\leq \tfrac{4}{e^{2}\rnf^{2}}\IND{\dinp\in [0,1)}+\tfrac{4\dinp^{\rnb}}{e^{2}(\rnb-\rnf)^{2}}\IND{\dinp>1}\)	for all \(\rnb>\rnf\). 
		Then using Lemma \ref{lem:powermeandensityO}-(\ref{powermeandensityO-a}) we get 
		the following bound for all \(\rnf\) in \([\tfrac{1}{2},\tfrac{5\rnt-1}{4\rnt}]\)
		\begin{align}
		\notag
		&\hspace{-.2cm}\sum\nolimits_{\mW}\tpn{\rnf}(\mW|\dout)\left( \tfrac{1}{\rnf} \ln\tfrac{\tpn{\rnf}(\mW|\dout)}{\mP(\mW)}\right)^2
		\\
		\notag
		&=\sum\nolimits_{\mW}\mP(\mW) \left(\tfrac{\tpn{1}(\mW|\dout)}{\mP(\mW) \nmn{\rnf,\mP}}\right)^{\rnf} 
		\left(\ln\tfrac{\tpn{1}(\mW|\dout)}{\mP(\mW) \nmn{\rnf,\mP}}\right)^{2}
		\\
		\notag
		&\leq \tfrac{4}{e^{2}\rnf^{2}}+\tfrac{4}{e^{2}}\left(\tfrac{3\rnt-1}{2\rnt}-\rnf\right)^{-2}
		\sum\nolimits_{\mW}\mP(\mW) \left(\tfrac{\tpn{1}(\mW|\dout)}{\mP(\mW) \nmn{\rnf,\mP}}\right)^{\frac{3\rnt-1}{2\rnt}} 
		\\
		\notag
		&= \tfrac{4}{e^{2}\rnf^{2}}+\tfrac{4}{e^{2}}\left(\tfrac{3\rnt-1}{2\rnt}-\rnf\right)^{-2}
		\left(\sfrac{\nmn{\frac{3\rnt-1}{2\rnt},\mP}}{\nmn{\rnf,\mP}}\right)^{\frac{3\rnt-1}{2\rnt}}
		\\
		\label{eq:finitecapacity-uecO-C-5}
		&\leq \tfrac{16}{e^{2}}+\tfrac{64}{e^{2}}\left(\tfrac{\rnt}{\rnt-1}\right)^{2}
		\left(\sfrac{\nmn{\frac{3\rnt-1}{2\rnt},\mP}}{\nmn{\rnf,\mP}}\right)^{\frac{3\rnt-1}{2\rnt}}.
		\end{align}
		On the other hand \((\nmn{\rno,\mP})^{\rno}\) is log-convex in \(\rno\) 
		by Lemma \ref{lem:powermeandensityO}-(\ref{powermeandensityO-c})
		and \(\nmn{\rno,\mP}\) is nondecreasing in \(\rno\) by Lemma \ref{lem:powermeandensityO}-(\ref{powermeandensityO-d}).
		Thus for all  \(\rnf\) in \([\tfrac{1}{2},\tfrac{5\rnt-1}{4\rnt}]\) we have
		\begin{align}
		\notag
		\left(\nmn{\frac{3\rnt-1}{2\rnt},\mP}\right)^{\frac{3\rnt-1}{2\rnt}}
		&\leq (\nmn{\frac{2\rnt\rnf}{2\rnt\rnf-\rnt+1},\mP}) (\nmn{\rnf,\mP})^{\rnf \frac{\rnt-1}{2\rnt \rnf}} 
		\\
		\label{eq:finitecapacity-uecO-C-6}
		&\leq (\nmn{\rnt,\mP}) (\nmn{\rnf,\mP})^{\frac{\rnt-1}{2\rnt}}.
		\end{align}
		Using equations \eqref{eq:finitecapacity-uecO-C-4}, \eqref{eq:finitecapacity-uecO-C-5}, and \eqref{eq:finitecapacity-uecO-C-6}
		we get
		\begin{align}
		\notag
		\hspace{-.2cm}
		\der{}{\rno} \RMI{\rno}{\mP}{\cha}
		&\!\leq\!4\RMI{\rnt}{\mP}{\cha}+\tfrac{32}{e^{2}}+
		\tfrac{128}{e^{2}}\left(\tfrac{\rnt}{\rnt-1}\right)^{2}\tfrac{\lon{\mmn{\rnt,\mP}}}{\lon{\mmn{\sfrac{1}{2},\mP}}}
		\\
		\notag
		&\!=\!4\RMI{\rnt}{\mP}{\cha}\!+\!\tfrac{32}{e^{2}}\!+\!
		\tfrac{128\rnt^{2}}{e^{2}(\rnt-1)^{2}}
		e^{\frac{\rnt-1}{\rnt}\RMI{\rnt}{\mP}{\cha}+\RMI{\frac{1}{2}}{\mP}{\cha}}
		\end{align}
		for all \(\rno\) in \([\tfrac{1}{2},\tfrac{5\rnt -1}{4\rnt}]\).
		Since \(\RMI{\rno}{\mP}{\cha}\) is nondecreasing in \(\rno\) by Lemma \ref{lem:informationO},
		the definition of \renyi capacity implies
		\begin{align}
		\notag
		\der{}{\rno} \RMI{\rno}{\mP}{\cha}
		&\leq 4\RC{\rnt}{\cha}+\tfrac{32}{e^{2}}+
		\tfrac{128}{e^{2}}\left(\tfrac{\rnt}{\rnt-1}\right)^{2} e^{2\RC{\rno}{\cha}}
		\end{align}
		for all \(\rno\) in \([\tfrac{1}{2},\tfrac{5\rnt -1}{4\rnt}]\) and
		\(\mP\) in \(\pdis{\cha}\). Hence, 
		\begin{align}
		\label{eq:finitecapacity-uecO-C}
		\hspace{-.4cm}
		\RMI{\rno}{\mP}{\cha}\!-\!\RMI{\rnf}{\mP}{\cha}
		&\!\leq\!(\rno\!-\!\rnf)
		\left[\!4\RC{\rnt}{\cha}\!+\!\tfrac{160 \rnt^2 e^{2\RC{\rnt}{\cha}}}{e^{2} (\rnt-1)^{2}}\!\right]
		\end{align}
		for all \(\rnf\), \(\rno\) in \([\tfrac{1}{2},\tfrac{5\rnt -1}{4\rnt}]\)
		satisfying	\(\rnf\leq\rno\) and \(\mP\) in \(\pdis{\cha}\).
		\item[(\ref{finitecapacity-fc})]
		For any \(\mP\in\pdis{\cha}\), \(\RMI{\rno}{\mP}{\cha}\) is nondecreasing and continuous in 
		\(\rno\) on \([0,\infty]\) by Lemma \ref{lem:informationO}. 
		Then \(\RMI{\rno}{\mP}{\cha}\) is a quasi-convex continuous function of \(\rno\)  
		satisfying 
		\(\RMI{\rno}{\mP}{\cha}=\inf\nolimits_{\rnt\in(\rno,\infty)} \RMI{\rnt}{\mP}{\cha}\)
		for any \(\mP\) in \(\pdis{\cha}\).
		Using the definition of \(\RC{\rno}{\cha}\) we get
		\begin{align}
		\label{eq:finitecapacity-fc-a}
		\hspace{-.5cm}
		\RC{\rno}{\cha}
		&\!=\!\sup\nolimits_{\mP\in \pdis{\cha}}\inf\nolimits_{\rnt\in(\rno,\infty)}\!\RMI{\rnt}{\mP}{\cha}
		&
		&\forall \rno\!\in\!\reals{\geq0} .
		\end{align}
		Since \(\RMI{\rno}{\mP}{\cha}\leq\ln\abs{\cha}\) by Lemma \ref{lem:informationO},
		if \(\cha\) is finite, then \(\RC{\rno}{\cha}\) is finite for all \(\rno\in\reals{\geq0} \) and \(\RMI{\rno}{\mP}{\cha}\) is continuous in \(\mP\) on \(\pdis{\cha}\) 
		for all \(\rno\in\reals{+} \) by part (\ref{finitecapacity-uecP}). 
		Furthermore \(\RMI{\rno}{\mP}{\cha}\) is quasi-concave in \(\mP\) for all \(\rno\in\reals{+} \) by Lemma \ref{lem:informationP}. 
		Then we can change the order of the supremum and the infimum in \eqref{eq:finitecapacity-fc-a}
		using Sion's minimax theorem, \cite[Cor. 3.3]{sion58}, \cite{komiya88}
		because \(\pdis{\cha}\) is compact.
		\begin{align}
		\notag
		\RC{\rno}{\cha}
		&=\inf\nolimits_{\rnt\in(\rno,\infty)}\sup\nolimits_{\mP\in \pdis{\cha}} \RMI{\rnt}{\mP}{\cha}
		\\
		\notag
		&=\inf\nolimits_{\rnt\in(\rno,\infty)}\RC{\rnt}{\cha}
		&
		&\forall \rno \in \reals{\geq0} .
		\end{align}
		Then \(\RC{\rno}{\cha}\) is continuous from the right.
		On the other hand \(\RC{\rno}{\cha}\) is continuous from the left
		because it is nondecreasing and lower semicontinuous on
		\([0,\infty]\) by  Lemma \ref{lem:capacityO}-(\ref{capacityO-ilsc}).
		\vspace{-.5cm} 
	\end{enumerate}
\end{proof}

\begin{proof}[Proof of Lemma \ref{lem:capacityUnion}]
	\underline{\(\sup\nolimits_{\ind\in\tinS} \RC{\rno}{\cha_{\ind}}\leq \RC{\rno}{\cha}\):}
	\vspace{-.2cm}
	\begin{align}
	\notag
	\RC{\rno}{\cha}
	&\mathop{\geq}^{(a)} \RRR{\rno}{\cha_{\ind}}{\qmn{\rno,\cha}}
	\\
	\notag
	&\mathop{\geq}^{(b)} \RC{\rno}{\cha_{\ind}}+\RD{\rno}{\qmn{\rno,\cha_{\ind}}}{\qmn{\rno,\cha}} 
	\\
	\notag
	&\mathop{\geq}^{(c)} \RC{\rno}{\cha_{\ind}}+\tfrac{\rno\wedge 1}{2}\lon{\qmn{\rno,\cha_{\ind}}-\qmn{\rno,\cha}}^{2}
	\end{align}
	where \((a)\) follows from 
	\eqref{eq:def:relativeradius}, Theorem \ref{thm:minimax}, and \(\cha_{\ind}\subset\cha \),
	\((b)\) follows from Lemma \ref{lem:EHB},
	\((c)\) follows from Lemma \ref{lem:divergence-pinsker}.
	Consequently, \(\RC{\rno}{\cha}\geq\sup\nolimits_{\ind\in\tinS} \RC{\rno}{\cha_{\ind}}\) 
	and if \(\RC{\rno}{\cha_{\ind}}= \RC{\rno}{\cha}\), then \(\qmn{\rno,\cha}=\qmn{\rno,\cha_{\ind}}\).
	\begin{itemize}
		\item If \(\RC{\rno}{\cha_{\ind}}= \RC{\rno}{\cha}\) and \(\qmn{\rno,\cha}=\qmn{\rno,\cha_{\ind}}\),
		then \(\RRR{\rno}{\cha}{\qmn{\rno,\cha_{\ind}}}\leq \RC{\rno}{\cha_{\ind}}\)
		by Theorem \ref{thm:minimax}.
		\item If \(\RRR{\rno}{\cha}{\qmn{\rno,\cha_{\ind}}}\leq \RC{\rno}{\cha_{\ind}}\), then 
		\(\RC{\rno}{\cha}\leq \RC{\rno}{\cha_{\ind}}\) because
		\(\RC{\rno}{\cha}=\RR{\rno}{\cha}\) by Theorem \ref{thm:minimax}
		and 
		\(\RR{\rno}{\cha}\leq \RRR{\rno}{\cha}{\qmn{\rno,\cha_{\ind}}}\) by definition.
		Then \(\RRR{\rno}{\cha}{\qmn{\rno,\cha_{\ind}}}\leq \RC{\rno}{\cha_{\ind}}\)
		implies \(\RC{\rno}{\cha}=\RC{\rno}{\cha_{\ind}}\) because
		\(\RC{\rno}{\cha}\geq \RC{\rno}{\cha_{\ind}}\) by definition.
	\end{itemize}
	
	\underline{\(\RC{\rno}{\cha}\leq \ln \sum\nolimits_{\ind\in\tinS} e^{\RC{\rno}{\cha_{\ind}}}\):}
	If \(\tinS\) is  infinite, then the inequality holds trivially because the right 
	hand side is infinite. Thus, we will establish the inequality
	assuming \(\tinS\) is finite. 
	Let \(\mV\) be \(\mV\DEF\bigvee_{\ind\in\tinS} e^{\RC{\rno}{\cha_{\ind}}}\qmn{\rno,\cha_{\ind}}\).
	Then
	\begin{align}
	\notag
	\RRR{\rno}{\cha}{\sfrac{\mV}{\lon{\mV}}}
	&\mathop{=}^{(a)}\max\nolimits_{\ind \in\tinS} \RRR{\rno}{\cha_{\ind}}{\sfrac{\mV}{\lon{\mV}}}
	\\
	\notag
	&\mathop{\leq}^{(b)} \max\nolimits_{\ind \in\tinS} \RRR{\rno}{\cha_{\ind}}{\qmn{\rno,\cha_{\ind}}}-\ln e^{\RC{\rno}{\cha_{\ind}}} +\ln \lon{\mV} 
	\\
	\label{eq:capacityUnion-a}
	&\mathop{=}^{(c)}\ln \lon{\mV}
	\end{align}
	where \((a)\) follows from \eqref{eq:def:relativeradius} and \eqref{eq:def:radius},
	\((b)\) follows from Lemma \ref{lem:divergence-RM} because
	\(e^{\RC{\rno}{\cha_{\ind}}}\qmn{\rno,\cha_{\ind}}\!\leq\!\mV\), 
	and \((c)\) follows from Theorem \ref{thm:minimax}. 
	On the other hand, \(\lon{\mV}\!\leq\!\sum\nolimits_{\ind\in\tinS} e^{\RC{\rno}{\cha_{\ind}}}\) by 
	the definition of \(\mV\).
	Then \(\RC{\rno}{\cha}\leq \ln \sum\nolimits_{\ind\in\tinS} e^{\RC{\rno}{\cha_{\ind}}}\)
	by Theorem \ref{thm:minimax}.
	\begin{itemize}
		\item If \(\tinS\) is infinite, then 
		\(\sum\nolimits_{\ind\in\tinS} e^{\RC{\rno}{\cha_{\ind}}}\) is infinite.
		If \(\tinS\) is finite but \(\qmn{\rno,\cha_{\ind}}\) and \(\qmn{\rno,\cha_{\jnd}}\) are not singular 
		for some distinct \(\ind\) and \(\jnd\), then 
		\(\RC{\rno}{\cha}\!<\!\ln \sum\nolimits_{\ind\in\tinS} e^{\RC{\rno}{\cha_{\ind}}}\)
		by  \eqref{eq:capacityUnion-a} and Theorem \ref{thm:minimax} because
		\(\lon{\mV}\!<\!\sum\nolimits_{\ind\in\tinS} e^{\RC{\rno}{\cha_{\ind}}}\).
		Consequently, if \(\RC{\rno}{\cha}=\ln \sum\nolimits_{\ind\in\tinS} e^{\RC{\rno}{\cha_{\ind}}}<\infty\), 
		then \(\tinS\) is finite and 
		\(\qmn{\rno,\cha_{\ind}}\perp\qmn{\rno,\cha_{\jnd}}\) for all \(\ind \neq \jnd\)
		in \(\tinS\).

		\item If \(\tinS\) is finite and \(\qmn{\rno,\cha_{\ind}}\perp\qmn{\rno,\cha_{\jnd}}\) 
		for all \(\ind \neq \jnd\), 
		then any \(\mU\in \pmea{\outA}\) can be written as 
		\(\mU=\sum_{\ind=0}^{\abs{\tinS}}\mU_{\ind}\) where \(\mU_{\ind}\) 
		are finite measures  such that \(\mU_{\ind}\AC \qmn{\rno,\cha_{\ind}}\) for \(\ind\in\{1,\ldots,\abs{\tinS}\} \)  and 
		\(\mU_{0}\perp  (\sum_{\ind\in\tinS} \qmn{\rno,\cha_{\ind}})\)
		by the Lebesgue decomposition theorem \cite[5.5.3]{dudley}.
		Then using Lemmas \ref{lem:divergence-RM} and \ref{lem:divergence-DPI}, we get
		\begin{align}
		\notag
		\RD{\rno}{\qmn{\rno,\cha_{\ind}}}{\mU}
		&\geq -\ln \lon{\mU_{\ind}}.
		\end{align}
		Thus Lemma \ref{lem:EHB} implies
		\begin{align}
		\notag
		\RRR{\rno}{\cha_{\ind}}{\mU}
		&\geq \RC{\rno}{\cha_{\ind}}-\ln \lon{\mU_{\ind}}
		&
		&\forall \mU\in\pmea{\outA}.
		\end{align}
		Since \(\RRR{\rno}{\cha}{\mU}= \max\nolimits_{\ind\in\tinS}\RRR{\rno}{\cha_{\ind}}{\mU}\)
		for all \(\mU\) in \(\pmea{\outA}\) and 
		\(\RC{\rno}{\cha}=\inf\nolimits_{\mU\in\pmea{\outA}}\RRR{\rno}{\cha}{\mU}\) 
		by Theorem \ref{thm:minimax}, we get
		\begin{align}
		\notag
		\RC{\rno}{\cha}
		&\geq \inf\nolimits_{\mU\in\pmea{\outA}}
		\max\nolimits_{\ind\in\tinS}  \ln \tfrac{e^{\RC{\rno}{\cha_{\ind}}}}{\lon{\mU_{\ind}}} 
		\\
		\notag
		&\geq \inf\nolimits_{\mU\in\pmea{\outA}}\ln \tfrac{\sum\nolimits_{\ind\in\tinS} e^{\RC{\rno}{\cha_{\ind}}}}{\sum_{\ind\in\tinS} \lon{\mU_{\ind}}}
		\\
		\notag
		&\geq \ln \sum\nolimits_{\ind\in\tinS} e^{\RC{\rno}{\cha_{\ind}}}.
		\end{align}
		Then \(\RC{\rno}{\cha}=\ln \sum\nolimits_{\ind\in\tinS} e^{\RC{\rno}{\cha_{\ind}}}\)
		because we have already proved the reverse inequality.
		Furthermore, \(\qmn{\rno,\cha}=\sfrac{\tilde{\mU}}{\lon{\tilde{\mU}}}\) for 
		\(\tilde{\mU}=\sum_{\ind\in\tinS} e^{\RC{\rno}{\cha_{\ind}}}\qmn{\rno,\cha_{\ind}}\)
		by Theorem \ref{thm:minimax}.
		because
		\(\RRR{\rno}{\cha_{\ind}}{\sfrac{\tilde{\mU}}{\lon{\tilde{\mU}}}}=\RC{\rno}{\cha}\).
	\end{itemize}
\end{proof}

\begin{proof}[Proof of Lemma \ref{lem:capacityProduct}]
	By the definition of \(\RMI{\rno}{\mP}{\cha}\) for all \(\mP\) satisfying 
	\(\mP=\bigotimes_{\tin\in \tinS} \mP_{\tin}\) for some \(\mP_{\tin}\in \pdis{\cha_{\tin}}\) we have
	\begin{align}
	\label{eq:capacityProduct:A}
	\RMI{\rno}{\mP}{\cha}
	&=\sum\nolimits_{\tin\in\tinS} \RMI{\rno}{\mP_{\tin}}{\cha_{\tin}}
	&\forall \rno \in [0,\infty].
	\end{align}
	Furthermore,
	\(\{\mP:\mP=\bigotimes_{\tin\in\tinS}\mP_{\tin},\mP_{\tin}\in\pdis{\cha_{\tin}},\forall\tin\in\tinS\}\) is 
	a subset of \(\pdis{\cha}\).
	Then
	\begin{align}
	\notag
	\RC{\rno}{\cha}
	&\geq \sup_{\mP_{1},\mP_{2},\ldots,\mP_{\abs{\tinS}}}\sum\nolimits_{\tin\in\tinS} \RMI{\rno}{\mP_{\tin}}{\cha_{\tin}}
	\\
	\label{eq:capacityProduct:B}
	&=\sum\nolimits_{\tin\in\tinS} \RC{\rno}{\cha_{\tin}} 
	&
	&\forall \rno\in[0,\infty].
	\end{align}
	
	Let us proceed with proving \(\RC{\rno}{\cha}\leq\sum_{\tin\in \tinS}\RC{\rno}{\cha_{\tin}}\). 
	If there exists a \(\tin\in\tinS\) such that \(\RC{\rno}{\cha_{\tin}}=\infty\),
	then the inequality holds trivially. 
	Else, \(\RC{\rno}{\cha_{\tin}}<\infty\) for all \(\tin\in\tinS\) and by Theorem \ref{thm:minimax}  
	there exists a \(\qmn{\rno,\cha_{\tin}}\) for each \(\tin\in\tinS\) such that
	\begin{align}
	\notag
	\RD{\rno}{\mW_{\tin}}{\qmn{\rno,\cha_{\tin}}}
	&\leq \RC{\rno}{\cha_{\tin}}
	&
	&\forall \mW_{\tin}\in \cha_{\tin}.
	\end{align}
	Recall that all \(\mW\)'s in \(\cha\) can be written as \(\mW=\bigotimes_{\tin\in\tinS}\mW_{\tin}\) for some 
	\(\mW_{\tin}\in\cha_{\tin}\) by the hypothesis. 
	Then for \(\mQ\DEF\bigotimes_{\tin\in\tinS}\qmn{\rno,\cha_{\tin}}\) by the definition of 
	the \renyi divergence given \eqref{eq:def:divergence}
	and Tonelli-Fubini theorem \cite[4.4.5]{dudley} we have 
	\begin{align}
	\notag
	\RD{\rno}{\mW}{\mQ}
	&=\sum\nolimits_{\tin \in \tinS}  \RD{\rno}{\mW_{\tin}}{\qmn{\rno,\cha_{\tin}}}
	\\
	\label{eq:producradius:C}
	&\leq\sum\nolimits_{\tin\in\tinS} \RC{\rno}{\cha_{\tin}} 
	&
	&\forall \mW\in\cha.
	\end{align}
	Then \(\RC{\rno}{\cha}\leq\sum\nolimits_{\tin\in\tinS} \RC{\rno}{\cha_{\tin}}\) by 
	\eqref{eq:thm:minimaxradius}
	Thus \eqref{eq:lem:capacityProduct}	holds and 
	\(\qmn{\rno,\cha}=\mQ\) follows from  	\eqref{eq:producradius:C} and Theorem \ref{thm:minimax}
	for the case when \(\RC{\rno}{\cha}<\infty\). 
\end{proof}

\begin{proof}[Proof of Lemma \ref{lem:capacityEps}]
	\eqref{eq:def:divergence} implies that
	\begin{align}
	\notag
	\RD{\rno}{\mP \mtimes \cha}{\mP  \otimes \qmn{\rno,\cha}}
	&\leq \sup\nolimits_{\mW \in \supp{\mP}} \RD{\rno}{\mW}{\qmn{\rno,\cha}},
	\\
	\notag
	\RD{\rno}{\mP \mtimes \cha}{\mP  \otimes \qmn{\rno,\cha}}
	&\geq \inf\nolimits_{\mW \in \supp{\mP}} \RD{\rno}{\mW}{\qmn{\rno,\cha}}.
	\end{align}
	On the other hand,  \eqref{eq:sibson} and \eqref{eq:lem:information:defA} imply that
	\begin{align}
	\notag
	\RD{\rno}{\mP \mtimes \cha}{\mP  \otimes \qmn{\rno,\cha}}
	&=\RMI{\rno}{\mP}{\cha}+
	\RD{\rno}{\qmn{\rno,\mP}}{\qmn{\rno,\cha}}.
	\end{align}
	Then  \eqref{eq:lem:capacityEps} follows from \eqref{eq:def:kcha}
	and Theorem \ref{thm:minimax}.
	
	Note that \(\RC{\rno}{\cha_{\rno,\epsilon}}\) is bounded above by \(\RC{\rno}{\cha}\)
	and hence finite by definition.
	Thus \(\cha_{\rno,\epsilon}\) has a unique order \(\rno\) \renyi center 
	by Theorem \ref{thm:minimax}.
	If \(\qmn{\rno,\cha_{\rno,\epsilon}}\!=\!\qmn{\rno,\cha}\), then 
	\(\RC{\rno}{\cha_{\rno,\epsilon}}\!=\!\RC{\rno}{\cha}\)  by 
	the definition of \(\cha_{\rno,\epsilon} \) and Theorem \ref{thm:minimax}.
	
	We show in the following by contradiction that 
	\(\qmn{\rno,\cha_{\rno,\epsilon}}\)  equals to \(\qmn{\rno,\cha}\).
	Let \(\mQ=(1-e^{-\sfrac{\epsilon}{2}})\qmn{\rno,\cha_{\rno,\epsilon} }+e^{-\sfrac{\epsilon}{2}}\qmn{\rno,\cha}\). 
	Then using Lemma \ref{lem:divergence-RM} and \eqref{eq:def:kcha} we get
	\begin{align}
	\label{eq:capacityEps-A}
	\sup\nolimits_{\mW\in \cha\setminus \cha_{\rno,\epsilon} }
	\RD{\rno}{\mW}{\mQ}
	&\leq \RC{\rno}{\cha}-\sfrac{\epsilon}{2}.
	\end{align}
	The convexity of the \renyi divergence in its second argument, i.e. 
	Lemma \ref{lem:divergence-convexity}, 
	and Theorem \ref{thm:minimax} imply that
	\begin{align}
	\notag
	&\hspace{-.2cm}\sup\nolimits_{\mW\in \cha_{\rno,\epsilon} }
	\RD{\rno}{\mW}{\mQ}
	\\
	\notag
	&\leq \sup\limits_{\mW\in \cha_{\rno,\epsilon} } 
	\left[(1-e^{-\sfrac{\epsilon}{2}})\RD{\rno}{\mW}{\qmn{\rno,\cha_{\rno,\epsilon} }}+e^{-\sfrac{\epsilon}{2}}\RD{\rno}{\mW}{\qmn{\rno,\cha}}\right]
	\\
	\notag
	&\leq 
	(1-e^{-\sfrac{\epsilon}{2}})\RRR{\rno}{\cha_{\rno,\epsilon}}{\qmn{\rno,\cha_{\rno,\epsilon} }}
	+e^{-\sfrac{\epsilon}{2}}  \RRR{\rno}{\cha_{\rno,\epsilon}}{\qmn{\rno,\cha}}
	\\
	\label{eq:capacityEps-B}
	&= (1-e^{-\sfrac{\epsilon}{2}})\RC{\rno}{\cha_{\rno,\epsilon} }+e^{-\sfrac{\epsilon}{2}}\RC{\rno}{\cha}.
	\end{align}
	If \(\qmn{\rno,\cha_{\rno,\epsilon}}\!\neq\!\qmn{\rno,\cha}\), then 
	\(\RC{\rno}{\cha_{\rno,\epsilon}}<\RC{\rno}{\cha}\) by Lemma \ref{lem:capacityUnion}.
	Then 
	\(\RRR{\rno}{\cha}{\mQ}< \RC{\rno}{\cha}\) by \eqref{eq:capacityEps-A} and \eqref{eq:capacityEps-B}. 
	However, this is a contradiction by Theorem \ref{thm:minimax}. Thus
	\(\qmn{\rno,\cha_{\rno,\epsilon}}\!=\!\qmn{\rno,\cha}\) holds.
	
	As a result of the definition of \(\cha_{\rno,\epsilon} \), an element of \(\cha\) is in \(\cha_{\rno,0} \) iff it is in 
	\(\cha_{\rno,\epsilon} \) for all \(\epsilon>0\), i.e.  \(\bigcap_{\epsilon>0} \cha_{\rno,\epsilon} =\cha_{\rno,0} \). 
	Consequently, if \(\cha\) is a finite set, 
	then \(\cha_{\rno,\epsilon} =\cha_{\rno,0} \) for 
	small enough \(\epsilon\). Then \(\RC{\rno}{\cha_{\rno,0} }=\RC{\rno}{\cha_{\rno,\epsilon} }=\RC{\rno}{\cha}\). 
	Furthermore,  \eqref{eq:lem:capacityEps} holds for \(\epsilon=0\) because \eqref{eq:lem:capacityEps} holds for all \(\epsilon>0\).
	
	For arbitrary \(\cha\)'s, 
	identity \(\bigcap_{\epsilon>0}\!\cha_{\rno,\epsilon}\!=\!\cha_{\rno,0} \) does not imply that 
	\(\cha_{\rno,\epsilon} =\cha_{\rno,0} \) for some \(\epsilon>0\).
	\(\cha_{\rno,0} \) can be the empty set or a non-empty set such that \(\RC{\rno}{\cha_{\rno,0} }<\RC{\rno}{\cha}\),
	see Example \ref{eg:erasure}.
\end{proof}

\begin{proof}[Proof of Lemma \ref{lem:capacityEXT}]~
	\begin{enumerate}
		\item[(\ref{capacityEXT-ch})] 
		\(\RC{\rno}{\cha}\!\leq\!\RC{\rno}{\conv{\cha}}\) by definition because
		\(\cha\!\subset\!\conv{\cha}\).
		If \(\RC{\rno}{\cha}\!=\!\infty\), then the reverse inequality
		\( \RC{\rno}{\conv{\cha}}\!\leq\!\RC{\rno}{\cha}\) holds trivially. 
		If \(\RC{\rno}{\cha}<\infty\),
		then \(\exists!\qmn{\rno,\cha}\) satisfying  
		\(\sup\nolimits_{\mW\in\cha} \RD{\rno}{\mW}{\qmn{\rno,\cha}}=\RC{\rno}{\cha}\)
		by Theorem \ref{thm:minimax}.
		Then as a result of the quasi-convexity of the \renyi divergence in its first argument, 
		i.e. Lemma \ref{lem:divergence-quasiconvexity}, we have
		\begin{align}
		\notag 
		\RD{\rno}{\mmn{1,\mP}}{\qmn{\rno,\cha}}
		&\leq \max\nolimits_{\mW\in \supp{\mP}}\RD{\rno}{\mW}{\qmn{\rno,\cha}}
		\\
		\notag
		&\leq \RC{\rno}{\cha}
		\end{align}
		for all \(\mP\in\pdis{\cha}\). 
		Then \(\RC{\rno}{\conv{\cha}}\leq \RC{\rno}{\cha}\) by 
		Theorem \ref{thm:minimax}. 
		
		\item[(\ref{capacityEXT-cl})]
		\(\RC{\rno}{\cha}\leq \RC{\rno}{\clos{\cha}}\) by definition because \(\cha \subset \clos{\cha}\).
		If \(\RC{\rno}{\cha}=\infty\), then the reverse inequality 
		\( \RC{\rno}{\clos{\cha}} \leq \RC{\rno}{\cha}\) holds trivially. 
		If \(\RC{\rno}{\cha}<\infty\), then \(\exists!\qmn{\rno,\cha}\) satisfying 
		\(\sup\nolimits_{\mW\in\cha}\RD{\rno}{\mW}{\qmn{\rno,\cha}}=\RC{\rno}{\cha}\) by Theorem \ref{thm:minimax}.
		Furthermore,
		for all \(\mV\in\pmea{\outA}\) and \(\epsilon>0\)  there exists an open set \(\oset\) containing \(\mV\), i.e. a 
		neighborhood of \(\mV\), such that 
		\begin{align}
		\notag
		\RD{\rno}{\mV}{\qmn{\rno,\cha}}-\epsilon
		&<\RD{\rno}{\mS}{\qmn{\rno,\cha}}
		&
		&\forall \mS\in \oset
		\end{align}
		by the lower semicontinuity, 
		i.e. Lemma \ref{lem:divergence:lsc}.
		If \(\mV\in \clos{\cha}\), then every open set containing \(\mV\) contains a member of \(\cha\).
		Thus \(\RD{\rno}{\mV}{\qmn{\rno,\cha}}-\epsilon<\RC{\rno}{\cha}\) 
		for every \(\mV\) in \(\clos{\cha}\) and positive \(\epsilon\).
		Then\footnote{This observation is nothing but the definition of the continuity:
			A function \(\fX:\inpS\to\staS\) is continuous iff for any  \(\set{A} \subset \inpS\), 
			\(\fX(\clos{\set{A}})\subset\clos{\fX(\set{A})}\)
			by \cite[Thm. 18.1]{munkres}. 
			If we chose \(\inpS\) to be \(\pmea{\outA}\) with the topology 
			of setwise convergence, 
			\(\staS\) to be \((-\infty,\infty]\) with the topology generated by the sets of the form 
			\((\dsta,\infty]\) for \(\dsta\in\reals{}\), and 
			\(\fX\) to be \(\fX(\cdot)=\RD{\rno}{\cdot}{\qmn{\rno,\cha}}\),
			then the lower semicontinuity of the \renyi divergence in its first argument
			is equivalent to the continuity of \(\fX\). 
			On the other hand, \(\fX(\cha) \subset (-\infty,\RC{\rno}{\cha}]\) by Theorem \ref{thm:minimax}
			and  \((-\infty,\RC{\rno}{\cha}]\) is a closed set for the topology we have chosen for \((-\infty,\infty]\).
			Thus
			\(\fX(\clos{\cha})\subset \clos{\fX(\cha)} \subset (-\infty,\RC{\rno}{\cha}]\), i.e. 
			\(\RD{\rno}{\mV}{\qmn{\rno,\cha}}\leq \RC{\rno}{\cha}\) for all \(\mV\in\clos{\cha}\).} 
		\(\RD{\rno}{\mV}{\qmn{\rno,\cha}}\!\leq\!\RC{\rno}{\cha}\) for every \(\mV\in \clos{\cha}\)
		and 
		\(\RC{\rno}{\clos{\cha}}\leq \RC{\rno}{\cha}\) by 
		Theorem \ref{thm:minimax}.
		
		The closure of \(\cha\) for a topology stronger than 
		the topology of setwise convergence is  a subset of the closure of \(\cha\) for 
		the topology of setwise convergence and a superset of \(\cha\). Thus its \renyi capacity is bounded from below and from 
		above  by \(\RC{\rno}{\cha}\).  
		
		\item[(\ref{capacityEXT-compact-S})]
		If \(\RC{\rnt}{\cha}<\infty\), then 
		as a result Theorem \ref{thm:minimax}, Lemma \ref{lem:information:def}, and \eqref{eq:sibson} 
		there exists a unique \(\qmn{\rnt,\cha}\) satisfying 
		\begin{align}
		\notag
		\RD{\rnt}{\qmn{\rnt,\mP}}{\qmn{\rnt,\cha}}
		&\leq \RC{\rnt}{\cha}-\RMI{\rnt}{\mP}{\cha}
		&
		&\forall \mP\in\pdis{\cha}.
		\end{align}
		
		If \(\rnt>1\), then using the definitions of \renyi information and divergence
		given in \eqref{eq:def:information} and \eqref{eq:def:divergence} we get
		\begin{align}
		\notag
		\int (\der{\mmn{\rnt,\mP}}{\rfm})^{\rnt} (\der{\qmn{\rnt,\cha}}{\rfm})^{1-\rnt} \rfm(\dif{\dout})
		&\leq e^{(\rnt-1)\RC{\rnt}{\cha}}
		&
		&\forall \mP\in\pdis{\cha}.
		\end{align}
		Since \(\RC{\rnt}{\cha}\) is finite this implies that \(\mmn{\rnt,\mP}\AC\qmn{\rnt,\cha}\).
		On the other hand \(\mmn{\rno,\mP}\) is nondecreasing in \(\rno\), 
		in the sense that if \(\rno<\rnt\) then \(\mmn{\rno,\mP}\leq \mmn{\rnt,\mP}\),
		by Lemma \ref{lem:powermeanO}-(\ref{powermeanO-a},\ref{powermeanO-b}). 
		Hence,
		\begin{align}
		\notag
		\int (\der{\mmn{\rno,\mP}}{\qmn{\rnt,\cha}})^{\rnt}  \qmn{\rnt,\cha}(\dif{\dout})
		&\leq e^{(\rnt-1)\RC{\rnt}{\cha}}
		\end{align}
		for all \(\mP\) in \(\pdis{\cha}\) and \(\rno\) in \([0,\rnt]\). 
		Then \(\der{\mmn{\rno,\mP}}{\qmn{\rnt,\cha}}\)'s are \(\qmn{\rnt,\cha}\)-integrable 
		and the set \(\{\der{\mmn{\rno,\mP}}{\qmn{\rnt,\cha}}:\mP\in\pdis{\cha},~\rno\in[0,\rnt]\}\)
		satisfies the necessary and sufficient condition for the uniform 
		integrability\footnote{A set of \(\qmn{\rnt,\cha}\)-integrable functions is uniformly integrable 
			iff it has compact closure in the weak topology of \(\Lon{\qmn{\rnt,\cha}}\) by 
			Dunford-Pettis theorem \cite[4.7.18]{bogachev}. 
			Thus \(\{\der{\mmn{\rno,\mP}}{\qmn{\rnt,\cha}}:\mP\in\pdis{\cha},~\rno\in[0,\rnt]\}\) has 
			compact closure in the weak topology of \(\Lon{\qmn{\rnt,\cha}}\). Since we have chosen to 
			work with the space of measures rather than the space of integrable functions we have stated 
			our result in terms of relative compactness in the space of measures rather than integrable 
			functions.} 
		determined by 
		de la Vall\'{e}e Poussin \cite[Thm. 4.5.9]{bogachev}, for the growth function \(G(\dinp)=\dinp^{\rnt}\).
		But when the reference measure is finite, the uniform integrability is equivalent to 
		the uniform absolute continuity of the integrals and boundedness in \(\Lon{\qmn{\rnt,\cha}}\) 
		by \cite[Thm. 4.5.3]{bogachev}, which in our case is nothing but the uniform absolute 
		continuity with respect to \(\qmn{\rnt,\cha}\) and boundedness in total variation norm 
		for the set of all mean measures. 
		Thus \({\{\mmn{\rno,\mP}:\mP\in \pdis{\cha},~\rno\in[0,\rnt]\}}\UAC \qmn{\rnt,\cha}.\)
		
		On the other hand by \cite[Thm. 4.7.25]{bogachev}, a set of measures 
		is uniformly absolutely continuous with respect to a finite measure and bounded in variation norm
		iff it has compact closure in the topology of setwise convergence.
		A set of measures has compact closure in the topology of setwise convergence 
		iff it has compact closure in the weak topology by  \cite[Thm. 4.7.25]{bogachev},
		as well.
		
If \(\rnt=1\), then using \(\dinp \ln \dinp \geq -\sfrac{1}{e}\),  
\(\lon{\mmn{1,\mP}}\!=\!1\),
and the definition of the \renyi divergence given in \eqref{eq:def:divergence} 
we get,
\begin{align}
\notag
\int \GX\left(\der{\mmn{1,\mP}}{\qmn{1,\cha}}\right)
\qmn{1,\cha}(\dif{\dout})
\leq 
\RC{1}{\cha}-\RMI{1}{\mP}{\cha}+\tfrac{1}{e}+1
\end{align}
for all \(\mP\!\in\!\pdis{\cha}\)
where  \(\GX(\dinp)\!=\!\dinp\IND{0\leq\dinp<e}\!+\!\dinp\ln\dinp\!\IND{\dinp\geq e}\).		
		
Since mean measure is a nondecreasing function of the order 
by Lemma \ref{lem:powermeanO}-(\ref{powermeanO-a},\ref{powermeanO-b})
and \(\GX(\dinp)\) is an increasing function of \(\dinp\),  we have
\begin{align}
\notag
\int \GX\left(\der{\mmn{\rno,\mP}}{\qmn{1,\cha}}\right)
\qmn{1,\cha}(\dif{\dout})
&\leq \RC{1}{\cha}+\tfrac{1}{e}+1 
\end{align}
for all \(\mP\) in \(\pdis{\cha}\) and 
\(\rno\) in \([0,1]\).
The rest of the proof for \(\rnt=1\) case is identical to that of \(\rnt>1\) case.
		
		\item[(\ref{capacityEXT-compact-N})] 
		The equivalence of the last three statements to one another is  a version of Dunford-Pettis theorem \cite[4.7.25]{bogachev}. 
		Thus  we will only prove the equivalence of the first two statements.
		
		Let us first prove the direct part: if there exists a \(\mean\) in
		\(\pmea{\outA}\) satisfying \(\cha\UAC\mean\), then  
		\(\lim_{\rno \uparrow 1} \tfrac{1-\rno}{\rno}\RC{\rno}{\cha}=0\).
		Note that \(\RC{\rno}{\cha}\leq \sup_{\mW\in \cha} \RD{\rno}{\mW}{\mQ}\) for all \(\rno\in (0,1)\) and \(\mQ\in\pmea{\outA}\) by 
		Theorem \ref{thm:minimax}. Thus	using \eqref{eq:def:divergence} we get
		\begin{align}
		\notag
		\hspace{-.4cm}
		\limsup_{\rno \uparrow 1}\!\tfrac{1-\rno}{\rno}\RC{\rno}{\cha}
		&\!\leq\!\limsup_{\rno \uparrow 1}\!\sup_{\mW\in \cha}
		\RD{1-\rno}{\mean}{\mW}
		\\
		\label{eq:compact-N-a}
		&\!\leq\!\limsup_{\rno \uparrow 1}\!\sup_{\mW\in \cha}\tfrac{-1}{\rno}\!\ln\!\EXS{\mean}{(\der{\mW}{\mean})^{\rno}}\!.
		\end{align}
		Since \(\cha\UAC\mean\), for all \(\epsilon>0\) there exists a \(\delta>0\) such that 
		if \(\mean(\oev)\leq \delta\) for an \(\oev \in \outA\), 
		then \(\mW(\oev)\leq \epsilon\) for all \(\mW\in\cha\).
		On the other hand \(\mean(\der{\mW}{\mean}>\tfrac{1}{\delta})\leq \delta\) by Markov inequality.
		Hence 
		\begin{align}
		\label{eq:compact-N-b}
		\mW(\der{\mW}{\mean}>\tfrac{1}{\delta})\leq \epsilon.
		\end{align}
		On the other hand using \eqref{eq:def:divergence} we get
		\begin{align}
		\notag
		\EXS{\mean}{(\der{\mW}{\mean})^{\rno}}
		&\!\geq\!\EXS{\mW}{(\der{\mW}{\mean})^{\rno-1} \IND{\der{\mW}{\mean}\in (0,\tfrac{1}{\delta}]}} 
		\\
		\notag
		&\!\geq\!\delta^{1-\rno} (1- \mW(\der{\mW}{\mean}>\tfrac{1}{\delta}))
		&
		&\forall \mW\in \cha.
		\end{align}
		Then as a result of \eqref{eq:compact-N-a} and \eqref{eq:compact-N-b} we have
		\begin{align}
		\notag
		\limsup\nolimits_{\rno \uparrow 1}\!\tfrac{1-\rno}{\rno}\RC{\rno}{\cha}
		&\leq \tfrac{1}{1-\epsilon}
		&
		&\forall \epsilon>0.
		\end{align}
		Then \(\lim\nolimits_{\rno \uparrow 1} \tfrac{1-\rno}{\rno}\RC{\rno}{\cha}=0\)
		because \(\RC{\rno}{\cha}\geq 0\).

		We are left with proving the converse statement: if 
		\(\lim_{\rno \uparrow 1} \tfrac{1-\rno}{\rno}\RC{\rno}{\cha}=0\), 
		then there exists a \(\mean\in\pmea{\outA}\) such that \(\cha\UAC\mean\). 
		We start with proving the following statement 
		about the \renyi centers: For every \(\epsilon>0\) there exists a \((\rnf,\delta)\) pair 
		such that \(\rnf\in(0,1)\), \(\delta\in (0,\epsilon)\), and if 
		\(\qmn{\rnf,\cha}(\oev)\leq \delta\), then \(\mW(\oev)<\epsilon\) for all \(\mW\in\cha\).
		
		For any \(\epsilon\!>\!0\) there exists a \(\rnf\!\in\!(0,1)\) such that 
		\(e^{\frac{\rnf-1}{\rnf}\RC{\rnf}{\cha}}\!>\!1\!-\!\tfrac{\epsilon}{2}\) because 
		\(\lim_{\rno\uparrow1}\!\tfrac{1-\rno}{\rno}\RC{\rno}{\cha}\!=\!0\). On the other hand, 
		\(\RDF{\rno}{\sigma(\{\oev\})}{\mW}{\qmn{\rno,\cha}}\leq \RC{\rno}{\cha}\) for any \(\mW\in\cha\) and 
		\(\oev\in\outA\), as a result of Lemma \ref{lem:divergence-DPI} and
		Theorem \ref{thm:minimax}. Then the above described \(\rnf\) satisfies
		\begin{align}
		\label{eq:compact-N-d}
		\fX(\mW(\oev),\qmn{\rnf,\cha}(\oev))		
		&\geq (1-\tfrac{\epsilon}{2})^{\rnf}
		\end{align}
		for all \(\oev\in\outA\) and \(\mW\in\cha\)
		where the function \(\fX(\dinp,\dsta)\) is defined 
		for all \(\dinp\in[0,1]\) and \(\dsta\in[0,1]\) as
		\begin{align}
		\notag
		\fX(\dinp,\dsta)
		&\DEF\dinp^{\rnf} \dsta^{1-\rnf}+(1-\dinp)^{\rnf}(1-\dsta)^{1-\rnf}.
		\end{align}
		Given \(\epsilon\in (0,0.5)\) and the corresponding \(\rnf\in (0,1)\),
		let \(\delta\) be the unique \(\dsta\) in \((0,\epsilon)\) 
		satisfying \(\fX(\epsilon,\dsta)=(1-\sfrac{\epsilon}{2})^{\rnf}\).
		Such a \(\dsta\) exists 
		because \(\fX(\epsilon,0)=(1-\epsilon)^{\rnf}\), \(\fX(\epsilon,\epsilon)=1\)
		and \(\fX(\epsilon,\dsta)\) is 
		monotone increasing and continuous in \(\dsta\) on  \([0,\epsilon]\). 
		On the other hand \(\fX(\dinp,\dsta)<\fX(\epsilon,\delta)=(1-\sfrac{\epsilon}{2})^{\rnf}\)
		for any \(\dsta\in[0,\delta)\) and \(\dinp\in [\epsilon,1]\)
		because
		\(\fX(\dinp,\dsta)\) is monotone increasing in \(\dsta\) on  \([0,\dinp]\) for any \(\dinp\in(0,1]\)
		and monotone decreasing in \(\dinp\) on  \([\dsta,1]\) for any \(\dsta\in[0,1)\).
		Hence, using  \eqref{eq:compact-N-d} we can conclude that 
		if \(\qmn{\rnf,\cha}(\oev)<\delta\) for a \(\oev\in \outA\), 
		then \(\mW(\oev)<\epsilon\) for all \(\mW\in \cha\).
		In the following we use this property to construct a \(\mean\) such that \(\cha \UAC \mean\).
		
		Let \(\mean\) be \(\sum\nolimits_{\ind\in\integers{+}} \tfrac{\qmn{\rnf_{\ind},\cha}}{2^{\ind}}\)
		where \((\rnf_{\ind},\delta_{\ind})\) is the pair associate with \(\epsilon\!=\!\tfrac{1}{\ind}\).
		Then for any \(\oev\!\in\!\outA\) and \(\ind\!\in\!\integers{+}\), 
		if \(\mean(\oev)\!\leq\!\tfrac{\delta_{\ind}}{2^{\ind}}\),
		then \(\qmn{\rnf_{\ind},\cha}(\oev)\!\leq\!\delta_{\ind}\) 
		and consequently \(\mW(\oev)\leq \sfrac{1}{\ind}\) for all \(\mW\) in \(\cha\). 
		Thus for any \(\epsilon>0\) 
		if \(\mean(\oev)\leq\tfrac{\delta_{\lceil\sfrac{1}{\epsilon}\rceil}}{2^{\lceil\sfrac{1}{\epsilon}\rceil}}\) 
		for an \(\oev\in\outA\), then \(\mW(\oev)<\epsilon\) for all \(\mW\) in \(\cha\). 
	\end{enumerate}
\end{proof}
\end{appendix}
\section*{Acknowledgment}
The author would like to thank Fatma Nakibo\u{g}lu and Mehmet Nakibo\u{g}lu for their hospitality; 
this work simply would not have been possible without it.  
The author would like to thank 
Imre Csisz\'{a}r for pointing out Agustin's work at Austin in 2010 ISIT,
Harikrishna R. Palaiyanur for sending him Augustin's manuscript \cite{augustin78},
Reviewer I for pointing out \cite{hoY09,konigW09,mosonyiH11,mosonyiO17,ohyaPW97,sharmaW13,wildeWY14},
and
G\"{u}ne\c{s} Nakibo\u{g}lu, Robert G. Gallager, Hao-Chung Cheng,
and the reviewers for their suggestions on the manuscript.
\bibliographystyle{plain} 
\bibliography{main}
\onecolumn
\addcontentsline{toc}{subsection}{Proofs Omitted From IT Transactions Submission}
\addtocontents{toc}{\setcounter{tocdepth}{-1}}

\section*{Proofs Omitted From IT Transactions Submission}
In the following, unless specified explicitly to be otherwise all \(\sum_{\mW}\), 
\(\prod_{\mW}\), \(\vee_{\mW}\), \(\max_{\mW}\), \(\min_{\mW}\) 
stand for the corresponding expression with the subscript 
``\(\mW:\mP(\mW)>0\).'' 
\subsection[The Proofs About \(\mmn{\rno,\mP}\)]{Proofs of the Lemmas on the Mean Measure}\label{sec:powermean-proofs}
\begin{proof}[Proof of Lemma \ref{lem:powermeanequivalence}]~
	\begin{enumerate}
		\item[(\ref{lem:powermeanequivalence}-\ref{powermeanequivalence-a})]
		For any  \(\tilde{\mW}\) such that \(\mP(\tilde{\mW})>0\) and \(\rno\in \reals{+}\), the following inequalities hold \(\rfm\)-a.e.
		\begin{align}
		\notag
		(\mP(\tilde{\mW}))^{\sfrac{1}{\rno}} \der{\tilde{\mW}}{\rfm}
		\leq
		\left(\sum\nolimits_{\mW}  \mP(\mW)  \left(\der{\mW}{\rfm}\right)^{\rno} \right)^{\sfrac{1}{\rno}}
		&\leq 
		\bigvee\nolimits_{\mW}\der{\mW}{\rfm}
		\leq
		\sum\nolimits_{\mW} \der{\mW}{\rfm}.
		\end{align} 
		Then for any  \(\tilde{\mW}\) such that \(\mP(\tilde{\mW})>0\), \(\rno \in (0,\infty]\), and \(\oev \in \outA\),
		\begin{align}
		\notag
		(\mP(\tilde{\mW}))^{\frac{1}{\rno}} \tilde{\mW}(\oev)
		\leq
		\mmn{\rno,\mP}(\oev)
		&\leq 
		\sum\nolimits_{\mW} \mW(\oev).
		\end{align} 
		Thus for any \(\rno\in (0,\infty]\),  \(\mmn{\rno,\mP}(\oev)=0\) iff \(\mW(\oev)=0\) for all \(\mW\) such that 
		\(\mP(\mW)>0\).
		Then \(\mmn{1,\mP}\sim \mmn{\rno,\mP}\) for all \(\rno\in (0,\infty]\). 
		
		Note that \(\lon{\mW}=1\) for all \(\mW\) in \(\pmea{\outA}\), then \(\lon{\mmn{1,\mP}}=1\) for all \(\mP\).
		Furthermore, there exists a \(\tilde{\mW}\) such that \(\mP(\tilde{\mW})\geq \tfrac{1}{\abs{\supp{\mP}}}\)
		for all \(\mP\),
		then \(\abs{\supp{\mP}} ^{-\sfrac{1}{\rno}}\leq\lon{\mmn{\rno,\mP}}\leq \abs{\supp{\mP}}\).
		
		\item[(\ref{lem:powermeanequivalence}-\ref{powermeanequivalence-b})]
		As a result of the H\"{o}lder's inequality,
		\begin{align}
		\notag
		\mmn{0,\mP}(\oev)
		&=\int_{\oev} \prod\nolimits_{\mW}\left(\der{\mW}{\rfm}\right)^{\mP(\mW)} \rfm(\dif{\dout})
		\\
		\notag
		&\leq \prod\nolimits_{\mW}
		\left( \int_{\oev} \der{\mW}{\rfm}\rfm(\dif{\dout})\right)^{\mP(\mW)} 
		\\
		\notag
		&= \prod\nolimits_{\mW}\left(\mW(\oev)\right)^{\mP(\mW)}. 
		\end{align}
		Then \(\mmn{0,\mP}(\oev)=0\) whenever \(\mW(\oev)=0\) and \(\mmn{0,\mP} \AC \mW\)
		for all \(\mW\) such that \(\mP(\mW)>0\).
		Since \(\mW(\outS)=1\) for all \(\mW\) in \(\pmea{\outA}\), \(\lon{\mmn{0,\mP}}=\mmn{0,\mP}(\outS)\leq 1\). 
	\end{enumerate}
\end{proof}

\begin{proof}[Proof of Lemma \ref{lem:powermeandensityO}]~
	\begin{enumerate}
		\item[(\ref{lem:powermeandensityO}-\ref{powermeandensityO-a})]
		Let us establish the expressions for \(\nmn{\rno,\mP}(\dout)\) and \(\tpn{\rno}(\mW|\dout)\), first.
		Note that
		\(\tpn{1}(\mW|\dout)=\mP(\mW)\der{\mW}{\mmn{1,\mP}}\) for all \(\mW\) such that 
		\(\mP(\mW)>0\) by the definition of \(\tpn{\rno}(\mW|\dout)\) given in  \eqref{eq:def:posteroir}.
		Then  the expressions for \(\nmn{\rno,\mP}(\dout)\) follows from the definitions of 
		\(\der{\mmn{\rno,\mP}}{\rfm}\) and \(\nmn{\rno,\mP}\)
		given in  \eqref{eq:def:powermean-density} and \eqref{eq:def:powermeandensity}, respectively.
		
		On the other hand, \(\mW\AC \mmn{1,\mP}\) for all \(\mW\)  such that \(\mP(\mW)>0\) by definition and \(\mmn{\rno,\mP}\sim \mmn{1,\mP}\) by Lemma \ref{lem:powermeanequivalence}. 
		Thus,
		\begin{align}
		\notag
		\der{\mW}{\mmn{\rno,\mP}}
		&=\der{\mmn{1,\mP}}{\mmn{\rno,\mP}}\der{\mW}{\mmn{1,\mP}}
		\\
		\notag
		&=\tfrac{1}{\nmn{\rno,\mP}}\tfrac{\tpn{1}(\mW|\dout)}{\mP(\mW)}
		&&\forall \mW:\mP(\mW)>0.
		\end{align}
		Then the expression for \(\tpn{\rno}(\mW|\dout)\) follows from its definition
		given in \eqref{eq:def:posteroir}.
		
		In order to bound \(\nmn{\rno,\mP}\) from below and from above \(\mmn{1,\mP}\)-a.e. we use the expression 
		for \(\nmn{\rno,\mP}\) we have just derived.
		Note that \(\left(\sum\nolimits_{\mW} \tpn{1}(\mW|\dout)^{\rno}\right)^{\sfrac{1}{\rno}}\geq 1\) 
		for \(\rno\in(0,1]\). Then
		\begin{align}
		\notag 
		\left(\sum\nolimits_{\mW} \tpn{1}(\mW|\dout)^{\rno}  \mP(\mW)^{1-\rno}\right)^{\sfrac{1}{\rno}}
		&\geq \left(\sum\nolimits_{\mW} \tpn{1}(\mW|\dout)^{\rno} \delta^{1-\rno} \right)^{\sfrac{1}{\rno}}
		\\
		\notag
		&\geq \delta^{\frac{1-\rno}{\rno}}.
		\end{align}
		On the other hand, as a result of the H\"{o}lder's inequality we have
		\begin{align}
		\notag
		\left(\sum\nolimits_{\mW} \tpn{1}(\mW|\dout)^{\rno}  \mP(\mW)^{1-\rno}\right)^{\sfrac{1}{\rno}}
		&\leq 
		\left(\sum\nolimits_{\mW} \tpn{1}(\mW|\dout)\right)
		\left(\sum\nolimits_{\mW} \mP(\mW)\right)^{\frac{1-\rno}{\rno}}
		\\
		\notag
		&= 1.
		\end{align}
		Thus \(\delta^{\frac{1-\rno}{\rno}}\leq \nmn{\rno,\mP}\leq 1\) for \(\rno\in (0,1]\).
		
		In order to obtain the bound for \(\rno\) in \([1,\infty)\), we use 
		the identity
		\(\left(\sum\nolimits_{\mW} \tpn{1}(\mW|\dout)^{\rno}\right)^{\sfrac{1}{\rno}}\leq 1\), 
		which is valid for all \(\rno\) in \([1,\infty)\),
		together with the reverse H\"{o}lder's inequality.

		\item[(\ref{lem:powermeandensityO}-\ref{powermeandensityO-b})]  
		\(\tfrac{\tpn{1}(\mW|\dout)}{\mP(\mW)}\) is a non-negative real number for all \(\mW\) such that \(\mP(\mW)>0\)
		and \(\tfrac{\tpn{1}(\mW|\dout)}{\mP(\mW)}\) is positive at least for one such  \(\mW\). 
		Then expression for \(\nmn{\rno,\mP}\) given  in part (\ref{powermeandensityO-a}) is a smooth 
		function\footnote{For any positive integer \(K\), non-negative real numbers \(\mA_{\ind}\) and \(\mB_{\ind}\) for 
			\(\ind\) in \(\{1,2,\ldots,K\}\), the function \((\sum_{\ind=1}^{K} \mA_{\ind} \mB_{\ind}^{\rno})^{\sfrac{1}{\rno}}\) 
			is a smooth function of \(\rno\) on \(\reals{+} \), because
			the exponential function and the logarithm are smooth functions
			and
			composition, sum, and product of smooth functions are also smooth.}
		of \(\rno\) on \(\reals{+} \). Identities for the derivatives of \(\nmn{\rno,\mP}\)
		follow from the chain rule and elementary rules of differentiation.
		
		\item[(\ref{lem:powermeandensityO}-\ref{powermeandensityO-c})]  
		As a result of the H\"{o}lder's inequality we have,
		\begin{align}
		\notag
		\left(\nmn{\rno_{\beta},\mP}\right)^{\rno_{\beta}}
		&=\sum\nolimits_{\mW}  \mP(\mW)  
		\left(\tfrac{\tpn{1}(\mW|\dout)}{\mP(\mW)}\right)^{\rno_{1} \beta}
		\left(\tfrac{\tpn{1}(\mW|\dout)}{\mP(\mW)}\right)^{\rno_{0}(1-\beta)}
		\\
		\notag
		&\leq 
		\left(\sum\nolimits_{\mW}  \mP(\mW)  
		\left(\tfrac{\tpn{1}(\mW|\dout)}{\mP(\mW)}\right)^{\rno_{1}}
		\right)^{\beta}
		\left(\sum\nolimits_{\mW}  \mP(\mW)  
		\left(\tfrac{\tpn{1}(\mW|\dout)}{\mP(\mW)}\right)^{\rno_{0}}
		\right)^{(1-\beta)}
		\\
		\notag
		&=(\nmn{\rno_{1},\mP})^{\rno_1\beta}(\nmn{\rno_{0},\mP})^{\rno_0(1-\beta)}
		\end{align}
		Furthermore, the inequality is strict unless there exists a \(\gamma\) such that
		\(\mP(\mW)(\tfrac{\tpn{1}(\mW|\dout)}{\mP(\mW)})^{\rno_1}=\gamma \mP(\mW)(\tfrac{\tpn{1}(\mW|\dout)}{\mP(\mW)})^{\rno_0}\)
		for all \(\mW\) such that \(\tpn{1}(\mW|\dout)>0\). 
		Thus inequality is strict iff there exist \(\mW, \tilde{\mW}\in\supp{\mP}\)  
		such that \(\tfrac{\tpn{1}(\mW|\dout)}{\mP(\mW)}>\tfrac{\tpn{1}(\tilde{\mW}|\dout)}{\mP(\tilde{\mW})}>0\).

		\item[(\ref{lem:powermeandensityO}-\ref{powermeandensityO-d})]  
		The continuity of \(\nmn{\rno,\mP}\) in \(\rno\) on \(\reals{+} \) follows from 
		the smoothness of \(\nmn{\rno,\mP}\) established in part (\ref{powermeandensityO-b}). 
		In order to show the continuity on \([0,\infty]\) we need to establish the continuity at zero and at infinity. 
		Note that \(\dinp^{\rno}\) is a smooth function of \(\rno\) for any \(\dinp\in\reals{+} \) 
		and weighted sums of smooth functions are also smooth.
		Thus \((\nmn{\rno,\mP})^{\rno}(\dout)\) is a smooth function of \(\rno\) 
		and we can use L'Hospital's rule \cite[Thm. 5.13]{rudin} 
		for calculating the limits of \(\nmn{\rno,\mP}(\dout)\) at zero and infinity:  
		\begin{align}
		\label{eq:nmn-limzero}
		\lim_{\rno \to 0} \left(\sum\nolimits_{\mW} \mP(\mW) \left(\tfrac{\tpn{1}(\mW|\dout)}{\mP(\mW)}\right)^{\rno} \right)^{\sfrac{1}{\rno}}
		&=\prod\nolimits_{\mW}    \left(\tfrac{\tpn{1}(\mW|\dout)}{\mP(\mW)}\right)^{\mP(\mW)}
		\\
		\label{eq:nmn-liminfty}
		\lim_{\rno \to \infty} \left(\sum\nolimits_{\mW} \mP(\mW) \left(\tfrac{\tpn{1}(\mW|\dout)}{\mP(\mW)}\right)^{\rno} \right)^{\sfrac{1}{\rno}}
		&=\max\nolimits_{\mW}  \tfrac{\tpn{1}(\mW|\dout)}{\mP(\mW)}
		\end{align}
		Thus \(\lim\nolimits_{\rno \to 0} \nmn{\rno,\mP}(\dout)=\nmn{0,\mP}(\dout)\)
		and \(\lim\nolimits_{\rno \to \infty} \nmn{\infty,\mP}(\dout)=\nmn{\infty,\mP}(\dout)\)
		hold \(\mmn{1,\mP}\) almost everywhere.
		Thus \(\nmn{\rno,\mP}(\dout)\) is continuous on \([0,\infty]\). 
		
		On the other hand, using the Jensen's inequality and the convexity of the function \(\ln \sfrac{1}{\dinp}\) we get,
		\begin{align}
		\notag
		\dnmn{\rno,\mP} 
		&\geq - \tfrac{\nmn{\rno,\mP}}{\rno^2} \ln \sum\nolimits_{\mW:\tpn{1}(\mW|\dout)>0}\mP(\mW) 
		\\
		\label{eq:dnmn-positivity}
		&\geq  0.
		\end{align}
		Since the function \(\ln\sfrac{1}{\dinp}\) is strictly convex, the first inequality is strict 
		and \(\dnmn{\rno,\mP}(\dout)\) is positive unless \(\mP(\mW)=\tpn{1}(\mW|\dout)\) 
		for all \(\mW\) such that \(\mP(\mW)>0\). 
		Thus \(\nmn{\rno,\mP}(\dout)\) is monotone increasing in \(\rno\) unless \(\mP(\mW)=\tpn{1}(\mW|\dout)\) 
		for all \(\mW\) such that \(\mP(\mW)>0\). 
		Boundedness is already established in part (\ref{powermeandensityO-a}).
	\end{enumerate}
\end{proof}

\begin{proof}[Proof of Lemma \ref{lem:powermeanO}]~
	\begin{enumerate}
		\item[(\ref{lem:powermeanO}-\ref{powermeanO-a})] 
		For all \(\dout\in\outS\) ---except for a \(\mmn{1,\mP}\)-measure zero set---
		density \(\nmn{\rno,\mP}\) is a non-negative function of \(\rno\) continuous on \([0,\infty]\)
		by Lemma \ref{lem:powermeandensityO}-(\ref{powermeandensityO-d}).
		Thus for any sequence \(\{\rno_{\ind}\}\) such that \(\lim_{\ind\to\infty} \rno_{\ind}= \rno\) 
		we have \(\lim_{\ind\to\infty}\nmn{\rno_{\ind},\mP}=\nmn{\rno,\mP}\) \(\mmn{1,\mP}\)-a.e. 
		Since \(\nmn{\rno_{\ind},\mP}\leq \nmn{\infty,\mP}\) by Lemma \ref{lem:powermeandensityO}-(\ref{powermeandensityO-d})
		and \(\nmn{\infty,\mP}\leq \tfrac{1}{\min_{\mW} \mP(\mW)}\) by Lemma \ref{lem:powermeandensityO}-(\ref{powermeanO-a}),
		we can apply the dominated convergence theorem \cite[2.8.1]{bogachev}.
		Thus \(\{\nmn{\rno_{\ind},\mP}\}\xrightarrow{\Lon{\mmn{1,\mP}}}\nmn{\rno,\mP}\), i.e.  
		\begin{align}
		\notag
		\lim_{\ind \to \infty}
		\int \abs{\nmn{\rno_{\ind},\mP}-\nmn{\rno,\mP}}
		\mmn{1,\mP}(\dif{\dout})
		&=0.
		\end{align}
		Then \(\{\mmn{\rno_{\ind},\mP}\}\) converges  to \(\mmn{\rno,\mP}\) in the total variation topology,
		for any sequence \(\{\rno_{\ind}\}\) such that \(\lim_{\ind\to\infty}\rno_{\ind}=\rno\). 
		Then \(\mmn{\rno,\mP}\) is a continuous function of \(\rno\) from \([0,\infty]\) with  its usual topology to \(\zmea{\outA}\) 
		with the total variation topology because  \([0,\infty]\) with its usual topology is a metrizable space, see \cite[Thm. 21.3]{munkres}.
		
		\item[(\ref{lem:powermeanO}-\ref{powermeanO-b})] 
		For \(\dmn{\rno,\mP}\) defined in  \eqref{eq:def:dpowermean} to be a finite measure, \(\dnmn{\rno,\mP}\) should be a non-negative 
		\(\mmn{1,\mP}\)-integrable function.
		The density \(\dnmn{\rno,\mP}\) is  non-negative by  \eqref{eq:dnmn-positivity}.
		By the expression for \(\dnmn{\rno,\mP}\) given Lemma \ref{lem:powermeandensityO}-(\ref{powermeandensityO-b})
		and 
		the bound for \(\dnmn{\rno,\mP}\) given in  Lemma \ref{lem:powermeandensityO}-(\ref{powermeandensityO-a})
		we have 
		\begin{align}
		\notag
		\dnmn{\rno,\mP}
		&=\tfrac{\nmn{\rno,\mP}}{\rno^2} \sum\nolimits_{\mW} \tpn{\rno}(\mW|\dout) \ln\tfrac{\tpn{\rno}(\mW|\dout)}{\mP(\mW)}
		\\
		\notag
		&\leq \tfrac{\nmn{\rno,\mP}}{\rno^2} \ln \tfrac{1}{\min_{\mW} \mP(\mW)}.
		\\
		\notag
		&=\tfrac{1}{\rno^2}\tfrac{1}{\min_{\mW} \mP(\mW)}\ln \tfrac{1}{\min_{\mW} \mP(\mW)}
		\end{align}
		Thus \(\dnmn{\rno,\mP}\) is bounded and \(\dmn{\rno,\mP}\) is a finite measure,  
		i.e. \(\dmn{\rno,\mP}\in\zmea{\outA}\).
		We can apply the dominated convergence theorem \cite[2.8.1]{bogachev}  
		for \(\dmn{\rno,\mP}\) as we did for \(\mmn{\rno,\mP}\) 
		in part (\ref{powermeanO-a}) in order to establish the continuity of \(\dmn{\rno,\mP}\) as a function of \(\rno\).
		Furthermore, \(\left.\der{}{\rno}\mmn{\rno,\mP}(\oev)\right\vert_{\rno=\rnf}=\dmn{\rnf,\mP}(\oev)\)
		follows from the boundedness of \(\dnmn{\rno,\mP}\)
		and the definitions of \(\dmn{\rno,\mP}\) and \(\dnmn{\rno,\mP}\) 
		by \cite[Cor. 2.8.7.(ii)]{bogachev} for \(X=\oev\).
		One can apply the Tonelli-Fubini theorem \cite[4.4.5]{dudley} to obtain an equivalent result,
		instead of invoking \cite[Cor. 2.8.7.(ii)]{bogachev}.

		\item[(\ref{lem:powermeanO}-\ref{powermeanO-c})] 
		For \(\ddmn{\rno,\mP}\) defined in \eqref{eq:def:ddpowermean} to be a finite signed measure, \(\ddnmn{\rno,\mP}\) should be 
		a \(\mmn{1,\mP}\)-integrable function. 
		By the expression for \(\ddnmn{\rno,\mP}\) given in Lemma \ref{lem:powermeandensityO}-(\ref{powermeandensityO-b})
		we have 
		\begin{align}
		\label{eq:ddnmn-boundedness}
		-\tfrac{2}{\rno^3} \left(\ln \tfrac{1}{\min_{\mW} \mP(\mW)}\right)  \nmn{\rno,\mP}
		\leq 
		\ddnmn{\rno,\mP}
		&\leq \left[\tfrac{1+\rno}{\rno^4} \left(\ln \tfrac{1}{\min_{\mW} \mP(\mW)}\right)^2
		+\tfrac{4}{e^{2}\rno^{3}} \right]\nmn{\rno,\mP}
		\end{align}
		The proof of the continuity is similar to the corresponding proofs in parts 
		(\ref{powermeanO-a}) and (\ref{powermeanO-b}).
		The identity \(\left.\der{}{\rno}\dmn{\rno,\mP}(\oev)\right\vert_{\rno=\rnf}=\ddmn{\rnf,\mP}(\oev)\) 
		follows from  \eqref{eq:ddnmn-boundedness} by applying \cite[Cor. 2.8.7.(ii)]{bogachev} for \(X=\oev\).

		\item[(\ref{lem:powermeanO}-\ref{powermeanO-d})] 
		For any \(\beta \in [0,1]\) and \(\rno_0,\rno_1 \in \reals{+} \) let \(\rno_{\beta}\) 
		be \(\rno_{\beta}=\beta \rno_{1}+(1-\beta)\rno_{0}\). Then as a result of the H\"{o}lder's inequality,
		\begin{align}
		\label{eq:mean-d:Y}
		\int 
		\left(\nmn{\rno_{1},\mP} \right)^{\frac{\beta \rno_1}{\rno_{\beta}}}
		\left(\nmn{\rno_{0},\mP} \right)^{\frac{(1-\beta)\rno_0}{\rno_{\beta}}}
		\mmn{1,\mP}(\dif{\dout})
		&\leq
		(\lon{\mmn{\rno_{1},\mP}})^{\frac{\beta \rno_1}{\rno_{\beta}}}
		(\lon{\mmn{\rno_{0},\mP}})^{\frac{(1-\beta)\rno_0}{\rno_{\beta}}}.
		\end{align}
		On the other hand by Lemma \ref{lem:powermeandensityO}-(\ref{powermeandensityO-c})
		\begin{align}
		\label{eq:mean-d:Z}
		\nmn{\rno_{\beta},\mP}
		&\leq
		(\nmn{\rno_{1},\mP})^{\frac{\beta\rno_{1}}{\rno_{\beta}}}(\nmn{\rno_{0},\mP})^{\frac{(1-\beta)\rno_{0}}{\rno_{\beta}}}.
		\end{align}
		Then the log-convexity of \(\lon{\mmn{\rno,\mP}}^{\rno} \) as a function of \(\rno\) follows from 
		\eqref{eq:mean-d:Y} and \eqref{eq:mean-d:Z}.

		If \(\mmn{1,\mP}(\cup_{\gamma\geq 1} \set{A}(\mP,\gamma))\!<\!1\), then the log-convexity of \(\lon{\mmn{\rno,\mP}}^{\rno}\) 
		is strict because the inequality in \eqref{eq:mean-d:Z} is strict for \(\dout\)'s that are not in 
		\(\cup_{\gamma\geq 1}\!\set{A}(\mP,\gamma)\)  by Lemma \ref{lem:powermeandensityO}-(\ref{powermeandensityO-c}).
		For \(\dout \in \set{A}(\mP,\gamma)\), the inequality in \eqref{eq:mean-d:Z} is an equality and  
		\(\nmn{\rno,\mP}=\gamma^{\frac{\rno-1}{\rno}}\) for all \(\rno\).
		Consequently if \(\mmn{1,\mP}(\cup_{\gamma\geq 1} \set{A}(\mP,\gamma))=1\),
		then the log-convexity of \(\lon{\mmn{\rno,\mP}}^{\rno}\) is strict iff the inequality in \eqref{eq:mean-d:Y} is strict. But  
		if \(\mmn{1,\mP}(\cup_{\gamma\geq 1} \set{A}(\mP,\gamma))=1\), then the H\"{o}lder's inequality in \eqref{eq:mean-d:Y} is strict unless 
		there exists a \(\gamma\geq1\) such that \(\mmn{1,\mP}(\set{A}(\mP,\gamma))=1\).
		
		We proceed with calculating the limit at zero.
		As a result of the expression for \(\nmn{\rno,\mP}\) given in part (\ref{powermeandensityO-a}) 
		we have,
		\begin{align}
		\notag 
		\left(\sum\nolimits_{\mW:\tpn{1}(\mW|\dout)>0} \mP(\mW)\right)^{\frac{\rno-1}{\rno}}\nmn{\rno,\mP}
		&=\left(\sum\nolimits_{\mW} \tpn{0}(\mW|\dout)  \left( \tfrac{\tpn{1}(\mW|\dout)}{\tpn{0}(\mW|\dout)}\right)^{\rno}\right)^{\frac{1}{\rno}}
		&
		&\mbox{where}
		&
		\tpn{0}(\mW|\dout)
		&=\tfrac{\mP(\mW)}{\sum\nolimits_{\tilde{\mW}:\tpn{1}(\tilde{\mW}|\dout)>0} \mP(\tilde{\mW})}.
		\end{align}
		
		Then using L'Hospital's rule \cite[Thm. 5.13]{rudin} for calculating limits and the H\"{o}lder's inequality we get,
		\begin{align}
		\label{eq:mean-d:A}
		\lim\limits_{\rno\to 0}
		\left(\sum\nolimits_{\mW:\tpn{1}(\mW|\dout)>0} \mP(\mW)\right)^{\frac{\rno-1}{\rno}}\nmn{\rno,\mP}
		&=e^{\sum_{\mW} \tpn{0}(\mW|\dout)  \ln \tfrac{\tpn{1}(\mW|\dout)}{\tpn{0}(\mW|\dout)}}
		&
		&\mmn{1,\mP}-a.e.
		\\
		\label{eq:mean-d:B}
		\left(\sum\nolimits_{\mW:\tpn{1}(\mW|\dout)>0} \mP(\mW)\right)^{\frac{\rno-1}{\rno}}\nmn{\rno,\mP}
		&\leq 1
		&
		&\forall \rno\in (0,1),~ \mmn{1,\mP}-a.e.
		\end{align}

		The sum \(\sum\nolimits_{\mW:\tpn{1}(\mW|\dout)>0} \mP(\mW)\) is a simple function of \(\dout\), i.e. its range is a finite set,
		because \(\supp{\mP}\) has a finite number of distinct subsets. Thus the essential supremum is the maximum value of 
		the sum with positive probability. Therefore 
		\begin{align}
		\label{eq:mean-d:C}
		\mmn{1,\mP}\left(\left\{\sum\nolimits_{\mW:\tpn{1}(\mW|\dout)>0} \mP(\mW)=\psi\right\}\right)
		&>0
		&
		&\mbox{where}
		&
		\psi
		&=\essup_{\mmn{1,\mP}}\sum\nolimits_{\mW:\tpn{1}(\mW|\dout)>0}\mP(\mW).
		\end{align}
		Then using  \eqref{eq:mean-d:A} we get
		\begin{align}
		\notag
		\lim_{\rno \to 0} \psi^{\frac{\rno-1}{\rno}}\nmn{\rno,\mP}
		&=\IND{\sum\nolimits_{\mW:\tpn{1}(\mW|\dout)>0}\mP(\mW)=\psi}e^{\sum_{\mW} \tpn{0}(\mW|\dout)  \ln \tfrac{\tpn{1}(\mW|\dout)}{\tpn{0}(\mW|\dout)}}
		&
		&\mmn{1,\mP}\mbox{-a.e.}
		\end{align}
		On the other hand \(\psi^{\frac{\rno-1}{\rno}}\nmn{\rno,\mP}\leq 1\) for all \(\rno\in (0,1)\), \(\mmn{1,\mP}-\)a.e. 
		by \eqref{eq:mean-d:B} and the definition of \(\psi\) given in \eqref{eq:mean-d:C}. 
		Thus we can apply the dominated convergence theorem \cite[2.8.1]{bogachev}:
		\begin{align}
		\label{eq:mean-d:D}
		\lim_{\rno\to 0} \int \abs{\psi^{\frac{\rno-1}{\rno}}\nmn{\rno,\mP}-\IND{\sum\nolimits_{\mW:\tpn{1}(\mW|\dout)>0}\mP(\mW)=\psi}e^{\sum_{\mW} \tpn{0}(\mW|\dout)  \ln \tfrac{\tpn{1}(\mW|\dout)}{\tpn{0}(\mW|\dout)}}}
		\mmn{1,\mP}(\dif{\dout})=0. 
		\end{align}
		Consequently,
		\begin{align}
		\label{eq:mean-d:E}
		\lim_{\rno\to 0} \left(\psi^{(\rno-1)} \lon{\nmn{\rno,\mP}}^{\rno} \right)^{\frac{1}{\rno}}  
		&=\int \IND{\sum\nolimits_{\mW:\tpn{1}(\mW|\dout)>0}\mP(\mW)=\psi}e^{\sum_{\mW} \tpn{0}(\mW|\dout)  \ln \tfrac{\tpn{1}(\mW|\dout)}{\tpn{0}(\mW|\dout)}}\mmn{1,\mP}(\dif{\dout}). 
		\end{align}
		The right hand side of \eqref{eq:mean-d:E} is a real number between \(0\)  and \(1\) by  \eqref{eq:mean-d:C}. Thus we have,
		\begin{align}
		\label{eq:mean-d:F}
		\lim_{\rno\to 0} \psi^{\rno-1} \lon{\nmn{\rno,\mP}}^{\rno} 
		&=1.
		\end{align}
		
		\item[(\ref{lem:powermeanO}-\ref{powermeanO-e})] 
		\(\lon{\mmn{\rno,\mP}}\leq \abs{\supp{\mP}}\) by Lemma \ref{lem:powermeanequivalence}-(\ref{powermeanequivalence-a}).
		The continuity of \(\lon{\mmn{\rno,\mP}}\) in \(\rno\) is implied 
		by the continuity of \(\mmn{\rno,\mP}\) in \(\rno\) for the total variation topology on \(\zmea{\outA}\),
		proved in part (\ref{powermeanO-a}).
		Furthermore, \(\lon{\mmn{\rno,\mP}}=\mmn{\rno,\mP}(\outS)\)
		because \(\mmn{\rno,\mP}\in\zmea{\outA}\) by part (\ref{powermeanO-a}).
		In addition \(\der{}{\rno}\mmn{\rno,\mP}(\outS)\geq 0\) 
		by part (\ref{powermeanO-b}).
		Hence \(\lon{\mmn{\rno,\mP}}\) is  a nondecreasing function of \(\rno\).

		Let \(\oev_{\mP}\) be \(\oev_{\mP}=\{\dout:\tpn{1}(\cdot|\dout)\neq \mP(\cdot)\}\).
		Then 
		\(\forall \dout \in \oev_{\mP}\), 
		\(\tpn{\rno}(\cdot|\dout)\neq \mP(\cdot)\) and  \(\nmn{\rno,\mP}\) is monotone increasing in \(\rno\) on \(\reals{\geq0} \). 
		On the other hand, if there are two or more distinct \(\mW\)'s in \(\supp{\mP}\), 
		then \(\mmn{1,\mP}(\oev_{\mP})>0\).
		Thus \(\lon{\mmn{\rno,\mP}}\) is monotone increasing if there exist 
		\(\mW,\widetilde{\mW}\in \supp{\mP}\) such that  \(\mW\neq \widetilde{\mW}\). Else \(\nmn{\rno,\mP}=1\) thus  
		\(\lon{\mmn{\rno,\mP}}=\int \nmn{\rno,\mP} \mmn{1,\mP}(\dif{\dout})=1\) for all \(\rno\in [0,\infty]\).
	\end{enumerate}
\end{proof}

\begin{proof}[Proof of Lemma \ref{lem:powermeanP}]~
	\begin{enumerate}
		\item[(\ref{lem:powermeanP}-\ref{powermeanP-a})]
		Let us start with \(\rno=0\) case. 
		Since the weighted arithmetic mean of any two non-negative real numbers is greater than their weighted geometric mean,
		for any reference measure \(\rfm\) for \(\mmn{1,\pmn{1}}\) and \(\mmn{1,\pmn{2}}\) we have, 
		\begin{align}
		\notag
		\beta\der{\mmn{0,\pmn{1}}}{\rfm}+(1-\beta)\der{\mmn{0,\pmn{2}}}{\rfm}
		&\geq  \left(\der{\mmn{0,\pmn{1}}}{\rfm}\right)^{\beta} \left(\der{\mmn{0,\pmn{2}}}{\rfm}\right)^{1-\beta}
		\\
		\notag
		&=  \der{\mmn{0,\pmn{\beta}}}{\rfm}. 
		\end{align} 
		
		For any \(\rno \in (0,1]\) the function \(\dinp^{\sfrac{1}{\rno}}\) is convex in \(\dinp\).
		Then for any reference measure \(\rfm\) for \(\mmn{\rno,\pmn{1}}\) and \(\mmn{\rno,\pmn{2}}\) 
		as a result of the Jensen's inequality we have,
		\begin{align}
		\notag
		\beta\der{\mmn{\rno,\pmn{1}}}{\rfm}+(1-\beta)\der{\mmn{\rno,\pmn{2}}}{\rfm}
		&\geq     \left(\sum\nolimits_{\mW} (\beta\pmn{1}(\mW)+(1-\beta)\pmn{2}(\mW))  \left(\der{\mW}{\rfm}\right)^{\rno} \right)^{\sfrac{1}{\rno}}
		\\
		\notag
		&=\der{\mmn{\rno,\pmn{\beta}}}{\rfm}.
		\end{align} 
		\(\lon{\mmn{\rno,\mP}}\) is convex in \(\mP\) because \(\der{\mmn{\rno,\mP}}{\rfm}\) is convex in \(\mP\) 
		and \(\der{\mmn{\rno,\mP}}{\rfm}\) is non-negative.
		
		\item[(\ref{lem:powermeanP}-\ref{powermeanP-b})]
		For \(\rno\in [1,\infty)\) the function \(\dinp^{\sfrac{1}{\rno}}\) is concave in \(\dinp\). 
		Thus the inequalities are reversed. 
		Hence both  the Radon-Nikodym derivative \(\der{\mmn{\rno,\mP}}{\rfm}\) and the norm \(\lon{\mmn{\rno,\mP}}\) are concave in \(\mP\).
		
		For any reference measure \(\rfm\) for \(\mmn{\infty,\pmn{1}}\) and \(\mmn{\infty,\pmn{2}}\) 
		by the definition of \(\der{\mmn{\infty,\mP}}{\rfm}\) given in \eqref{eq:def:powermean-density},
		we have  
		\begin{align}
		\notag
		\beta\der{\mmn{\infty,\pmn{1}}}{\rfm}+(1-\beta)\der{\mmn{\infty,\pmn{2}}}{\rfm}
		&\leq \der{\mmn{\infty,\pmn{\beta}}}{\rfm}.
		\end{align} 
		\(\lon{\mmn{\infty,\mP}}\) is concave in \(\mP\) because \(\der{\mmn{\infty,\mP}}{\rfm}\) is concave in \(\mP\) 
		and \(\der{\mmn{\infty,\mP}}{\rfm}\) is non-negative.
		
		\item[(\ref{lem:powermeanP}-\ref{powermeanP-c})]
		Identities are confirmed using the definitions of \(\smn{\wedge}\), \(\smn{1}\) and \(\smn{2}\) by 
		substitution. On the other hand,
		\begin{align}
		\notag
		\lon{\pmn{1}-\pmn{2}}
		&=\lon{\pmn{1}\vee\pmn{2}}-\lon{\pmn{1}\wedge\pmn{2}}
		\\
		\notag
		&=2-2\lon{\pmn{1}\wedge\pmn{2}}.
		\end{align} 
		Hence \(\smn{\wedge}\in\pdis{\pmea{\outA}}\). 
		Using the fist identity together with 
		\(\smn{\wedge}\in\pdis{\pmea{\outA}}\) and \(\pmn{1}\in\pdis{\pmea{\outA}}\) we get \(\smn{1}\in\pdis{\pmea{\outA}}\). 
		Similarly \(\smn{2}\in\pdis{\pmea{\outA}}\) follows from the second identity, \(\smn{\wedge}\in\pdis{\pmea{\outA}}\) 
		and \(\pmn{1}\in\pdis{\pmea{\outA}}\).   
		
		\item[(\ref{lem:powermeanP}-\ref{powermeanP-d})]
		Let \(\delta\) be \(\delta=\tfrac{\lon{\pmn{2}-\pmn{1}}}{2}\).
		For any reference measure \(\rfm\) for \(\mmn{\rno,\pmn{1}}\) and \(\mmn{\rno,\pmn{2}}\) and \(\rno \in  (0,1]\),
		\begin{align}
		\notag
		\der{\mmn{\rno,\pmn{1}}}{\rfm}- \der{\mmn{\rno,\pmn{2}}}{\rfm}
		&=\left[(1-\delta)(\der{\mmn{\rno,\smn{\wedge}}}{\rfm})^{\rno}+ \delta (\der{\mmn{\rno,\smn{1}}}{\rfm})^{\rno}\right]^{\sfrac{1}{\rno}}
		-\left[(1-\delta)(\der{\mmn{\rno,\smn{\wedge}}}{\rfm})^{\rno}+ \delta (\der{\mmn{\rno,\smn{2}}}{\rfm})^{\rno}\right]^{\sfrac{1}{\rno}}
		\\
		\notag
		&\leq
		\left[(1-\delta)(\der{\mmn{\rno,\smn{\wedge}}}{\rfm})^{\rno}+ \delta (\der{\mmn{\rno,\smn{1}}}{\rfm})^{\rno}\right]^{\sfrac{1}{\rno}}
		-(1-\delta)^{\frac{1}{\rno}}\der{\mmn{\rno,\smn{\wedge}}}{\rfm}
		\\
		\notag
		&\leq \left[(1-\delta)\der{\mmn{\rno,\smn{\wedge}}}{\rfm} + \delta \der{\mmn{\rno,\smn{1}}}{\rfm}\right]
		-(1-\delta)^{\sfrac{1}{\rno}}\der{\mmn{\rno,\smn{\wedge}}}{\rfm}.
		\end{align}
		where the last inequality follows from the Jensen's inequality and the convexity of \(\dinp^{\sfrac{1}{\rno}}\) in \(\dinp\) for \(\rno \in (0,1]\). 
		
		We bound \(\der{\mmn{\rno,\pmn{2}}}{\rfm}- \der{\mmn{\rno,\pmn{1}}}{\rfm}\) in a similarly way. 
		Using these two bounds we can bound \(\lon{\mmn{\rno,\pmn{1}}-\mmn{\rno,\pmn{2}}}\) as follows
		\begin{align}
		\notag
		\lon{\mmn{\rno,\pmn{1}}-\mmn{\rno,\pmn{2}}}
		&=\int\nolimits_{\der{\mmn{\rno,\pmn{1}}}{\rfm}>\der{\mmn{\rno,\pmn{2}}}{\rfm}} 
		(\der{\mmn{\rno,\pmn{1}}}{\rfm}- \der{\mmn{\rno,\pmn{2}}}{\rfm}) \rfm(\dif{\dout})
		+\int\nolimits_{\der{\mmn{\rno,\pmn{2}}}{\rfm}>\der{\mmn{\rno,\pmn{1}}}{\rfm}} 
		(\der{\mmn{\rno,\pmn{2}}}{\rfm}- \der{\mmn{\rno,\pmn{1}}}{\rfm}) \rfm(\dif{\dout})
		\\
		\notag
		&\leq 2\left[1-\delta-(1-\delta)^{\sfrac{1}{\rno}}\right] \lon{\mmn{\rno,\smn{\wedge}}}
		+\delta \lon{\mmn{\rno,\smn{1}}}+ \delta \lon{\mmn{\rno,\smn{2}}}
		\\
		\notag
		&\leq 2\left[1-(1-\delta)^{\sfrac{1}{\rno}}\right]
		\\
		\notag
		&\leq \tfrac{2}{\rno}\delta.
		\end{align}
		
		\item[(\ref{lem:powermeanP}-\ref{powermeanP-e})] 
		One can confirm using the derivative test that for any 
		\(\rno \in  [1,\infty)\),  \(\dinp_0\geq 0\) and  \(\dinp_1\geq \dinp_2\geq 0\) we have
		\begin{align}
		\notag
		\left[(1-\delta)\dinp_{0}^{\rno}+ \delta \dinp_{1}^{\rno}\right]^{\sfrac{1}{\rno}}
		-\left[(1-\delta)\dinp_{0}^{\rno}+ \delta \dinp_{2}^{\rno}\right]^{\sfrac{1}{\rno}}
		&\leq \delta^{\sfrac{1}{\rno}} (\dinp_1-\dinp_2).
		\end{align}
		Then for any reference measure \(\rfm\) for \(\mmn{\rno,\pmn{1}}\) and \(\mmn{\rno,\pmn{2}}\)
		we have
		\begin{align}
		\notag
		\der{\mmn{\rno,\pmn{1}}}{\rfm}- \der{\mmn{\rno,\pmn{2}}}{\rfm}
		&=\left[(1-\delta)(\der{\mmn{\rno,\smn{\wedge}}}{\rfm})^{\rno}+ \delta (\der{\mmn{\rno,\smn{1}}}{\rfm})^{\rno}\right]^{\sfrac{1}{\rno}}
		-\left[(1-\delta)(\der{\mmn{\rno,\smn{\wedge}}}{\rfm})^{\rno}+ \delta (\der{\mmn{\rno,\smn{2}}}{\rfm})^{\rno}\right]^{\sfrac{1}{\rno}}
		\\
		\notag
		&\leq \delta^{\sfrac{1}{\rno}}
		\left[\der{\mmn{\rno,\smn{1}}}{\rfm}-\der{\mmn{\rno,\smn{2}}}{\rfm}\right].
		\end{align}
		We can bound \(\der{\mmn{\rno,\pmn{2}}}{\rfm}- \der{\mmn{\rno,\pmn{1}}}{\rfm}\) in a similarly way. 
		On the other hand \(\der{\mmn{\rno,\pmn{1}}}{\rfm}\geq \der{\mmn{\rno,\pmn{2}}}{\rfm}\) iff \(\der{\mmn{\rno,\smn{1}}}{\rfm}\geq \der{\mmn{\rno,\smn{2}}}{\rfm}\).
		Thus we can bound \(\lon{\mmn{\rno,\pmn{1}}-\mmn{\rno,\pmn{2}}}\) using the bounds on 
		\(\der{\mmn{\rno,\pmn{1}}}{\rfm}- \der{\mmn{\rno,\pmn{2}}}{\rfm}\)
		and  \(\der{\mmn{\rno,\pmn{2}}}{\rfm}- \der{\mmn{\rno,\pmn{1}}}{\rfm}\): 
		\begin{align}
		\notag
		\lon{\mmn{\rno,\pmn{1}}-\mmn{\rno,\pmn{2}}}
		&=
		\int_{\der{\mmn{\rno,\pmn{1}}}{\rfm}>\der{\mmn{\rno,\pmn{2}}}{\rfm}}
		(\der{\mmn{\rno,\pmn{1}}}{\rfm}- \der{\mmn{\rno,\pmn{2}}}{\rfm}) \rfm(\dif{\dout})
		+
		\int_{\der{\mmn{\rno,\pmn{2}}}{\rfm}>\der{\mmn{\rno,\pmn{1}}}{\rfm}}
		(\der{\mmn{\rno,\pmn{2}}}{\rfm}- \der{\mmn{\rno,\pmn{1}}}{\rfm}) \rfm(\dif{\dout})
		\\
		\notag
		&\leq \delta^{\sfrac{1}{\rno}}
		\int_{\der{\mmn{\rno,\smn{1}}}{\rfm}>\der{\mmn{\rno,\smn{2}}}{\rfm}}
		(\der{\mmn{\rno,\smn{1}}}{\rfm}- \der{\mmn{\rno,\smn{2}}}{\rfm}) \rfm(\dif{\dout})
		+\delta^{\sfrac{1}{\rno}} 
		\int_{\der{\mmn{\rno,\smn{2}}}{\rfm}>\der{\mmn{\rno,\smn{1}}}{\rfm}}
		(\der{\mmn{\rno,\smn{2}}}{\rfm}- \der{\mmn{\rno,\smn{1}}}{\rfm}) \rfm(\dif{\dout})
		\\
		\notag
		&=\delta^{\sfrac{1}{\rno}} \lon{\mmn{\rno,\smn{1}}-\mmn{\rno,\smn{2}}}.
		\end{align}
	\end{enumerate}
\end{proof}

\subsection[The Proofs About \(\RMI{\rno}{\mP}{\cha}\)]{Proofs of the Lemmas on the \renyi Information}\label{sec:information-proofs}
\begin{proof}[Proof of Lemma \ref{lem:informationO}]
	\(\RMI{\infty}{\rno}{\cha}\leq \ln \abs{\supp{\mP}}\) because \(\lon{\mmn{\infty,\mP}}\leq \abs{\supp{\mP}}\) 
	by Lemma \ref{lem:powermeanO}-(\ref{powermeanO-e}). 
	
	\(\lim_{\rno \downarrow0}\RMI{\rno}{\mP}{\cha}=\RMI{0}{\mP}{\cha}\)
	follows from  Lemma \ref{lem:powermeanO}-(\ref{powermeanO-d})
	and the definition of \(\RMI{\rno}{\mP}{\cha}\) given in  \eqref{eq:def:information}.
	
	\(\lim_{\rno\uparrow\infty}\RMI{\rno}{\mP}{\cha}=\RMI{\infty}{\mP}{\cha}\)
	follows from the continuity of \(\lon{\mmn{\rno,\mP}}\) as a function of \(\rno\) 
	at infinity, i.e. Lemma \ref{lem:powermeanO}-(\ref{powermeanO-e}), 
	and the definition of 
	\(\RMI{\rno}{\mP}{\cha}\) given in  \eqref{eq:def:information}.

	Both  \(\lon{\mmn{\rno,\mP}}\) and \(\lon{\dmn{\rno,\mP}}\) are continuously differentiable on \(\reals{+} \),
	\(\der{}{\rno}\lon{\mmn{\rno,\mP}}=\lon{\dmn{\rno,\mP}}\) and \(\der{}{\rno}\lon{\dmn{\rno,\mP}}=\ddmn{\rno,\mP}(\outS)\)
	because of  Lemma \ref{lem:powermeanO}-(\ref{powermeanO-a},\ref{powermeanO-b},\ref{powermeanO-c}). Then as a result of its definition 
	given in  \eqref{eq:def:information},  \(\RMI{\rno}{\mP}{\cha}\) is continuously differentiable 
	in \(\rno\) on \((0,1)\) and \((1,\infty)\).
	The expression 
	for the derivative for \(\rno\neq1\) given in  \eqref{eq:lem:informationOder} follows from the chain rule.
	
	In order to extend the continuous differentiability to \(\rno=1\), first we establish that \(\RMI{\rno}{\mP}{\cha}\) is 
	continuous at \(\rno=1\). As a result of L'Hospital's rule \cite[Thm. 5.13]{rudin} and Lemma \ref{lem:powermeanO}-(\ref{powermeanO-b})
	\(\lim\limits_{\rno \to 1} \tfrac{\rno}{\rno-1} \ln \lon{\mmn{\rno,\mP}}=\lon{\dmn{1,\mP}}\).
	On the other hand \(\lon{\dmn{1,\mP}}=\RMI{1}{\mP}{\cha}\) as a result of  
	\eqref{eq:def:dpowermean}, Lemma \ref{lem:powermeandensityO}-(\ref{powermeandensityO-b}) and 
	the definition of \(\RMI{1}{\mP}{\cha}\). Thus \(\RMI{\rno}{\mP}{\cha}\) is continuous at \(\rno=1\).
	Then, 
	\begin{align}
	\notag
	\left.\der{}{\rno}\RMI{\rno}{\mP}{\cha}\right\vert_{\rno=1}
	&=\lim_{\rno \to 1} \tfrac{1}{1-\rno}
	\left[\lon{\dmn{1,\mP}}-\tfrac{\rno}{\rno-1}\ln \lon{\mmn{\rno,\mP}} \right].
	\end{align}
	\(\lon{\mmn{\rno,\mP}}\) and \(\lon{\dmn{\rno,\mP}}\) are continuously differentiable by Lemma \ref{lem:powermeanO}-(\ref{powermeanO-b},\ref{powermeanO-c}). 
	Then using L'Hospital's rule \cite[Thm. 5.13]{rudin} and the identity \(\der{}{\rno}\lon{\dmn{\rno,\mP}}=\ddmn{\rno,\mP}(\outS)\)
	we get 
	\begin{align}
	\notag
	\lim_{\rno \to 1} \tfrac{1}{1-\rno}
	\left[\lon{\dmn{1,\mP}}-\tfrac{\rno}{\rno-1}\ln \lon{\mmn{\rno,\mP}} \right]
	&=\tfrac{1}{2}\left[\ddmn{1,\mP}(\outS)+2\lon{\dmn{1,\mP}}-\lon{\dmn{1,\mP}}^2\right].
	\end{align}
	Hence \(\RMI{\rno}{\mP}{\cha}\) is differentiable at \(\rno=1\) and its derivative at \(\rno=1\) is the one given in 
	\eqref{eq:lem:informationOder}. 
	Finally, in order to show that \(\der{}{\rno}\RMI{\rno}{\mP}{\cha}\) is continuous at \(\rno=1\) 
	we apply L'Hospital's rule \cite[Thm. 5.13]{rudin} to confirm,
	\begin{align}
	\notag
	\lim_{\rno \to 1} 
	\der{}{\rno}\RMI{\rno}{\mP}{\cha}
	&=\lim_{\rno \to 1} 
	\tfrac{\rno(\rno-1)\lon{\dmn{\rno,\mP}}-\lon{\mmn{\rno,\mP}}\ln\lon{\mmn{\rno,\mP}}}{\lon{\mmn{\rno,\mP}}(\rno-1)^2} 
	\\
	\notag
	&=\tfrac{1}{2}\left[\ddmn{1,\mP}(\outS)+2\lon{\dmn{1,\mP}}-\lon{\dmn{1,\mP}}^2\right].
	\end{align}
	
	As a function \(\rno\) on \(\reals{+}\), \(\rno\ln\lon{\mmn{\rno,\mP}}\) is convex by Lemma \ref{lem:powermeanO}-(\ref{powermeanO-d})
	and  differentiable by Lemma \ref{lem:powermeanO}-(\ref{powermeanO-b}). 
	Then \(\rno\ln\lon{\mmn{\rno,\mP}}\) has a tangent at each \(\rno\in\reals{+}\) and it lays above all of its tangents, 
	i.e. for all \(\rno,\rnt\in\reals{+} \) such that \(\rno\neq \rnt\),
	\begin{align}
	\label{eq:tangent}
	\rnt\ln\lon{\mmn{\rnt,\mP}}  
	&\geq 
	\rno\ln\lon{\mmn{\rno,\mP}}+\left(\ln\lon{\mmn{\rno,\mP}}+\tfrac{\rno \lon{\dmn{\rno,\mP}}}{\lon{\mmn{\rno,\mP}}}\right)(\rnt-\rno).
	\end{align}
	Then for all \(\rno,\rnt\in\reals{+} \) such that \(\rno\neq \rnt\) we have
	\begin{align}
	\label{eq:tangent-new}
	\tfrac{\rno}{\rno-\rnt}
	\tfrac{\lon{\dmn{\rno,\mP}}}{\lon{\mmn{\rno,\mP}}}
	+\tfrac{\rnt}{(\rno-\rnt)^2}  \ln\tfrac{\lon{\mmn{\rnt,\mP}}}{\lon{\mmn{\rno,\mP}}}
	&\geq 0.
	\end{align}
	If we apply the above inequality at \(\rnt=1\) we can conclude, using  \eqref{eq:lem:informationOder}, that 
	\(\der{}{\rno} \RMI{\rno}{\mP}{\cha}\geq 0\) for \(\rno\neq1\). For \(\rno=1\) using 
	Lemma \ref{lem:powermeandensityO}-(\ref{powermeandensityO-b}) and  Lemma \ref{lem:powermeanO}-(\ref{powermeanO-c}) we get
	\begin{align}
	\notag
	\left.\der{}{\rno} \RMI{\rno}{\mP}{\cha} \right\vert_{\rno=1}
	&=\tfrac{1}{2} \int\sum\nolimits_{\mW} \tpn{1}(\mW|\dout) 
	\left(\ln \tfrac{\tpn{1}(\mW|\dout)}{\mP(\mW)}-\RMI{1}{\mP}{\cha}\right)^2
	\mmn{1,\mP}(\dif{\dout})
	\\
	\label{eq:informationderatone} 
	&\geq0.
	\end{align}
	Thus \(\der{}{\rno} \RMI{\rno}{\mP}{\cha}\) is non-negative for all \(\rno\in\reals{+} \) and
	\(\RMI{\rno}{\mP}{\cha}\) is a nondecreasing function of \(\rno\).
	Then \(\RMI{\rno}{\mP}{\cha}\) is non-negative as well because
	\(\RMI{0}{\rno}{\cha}\geq -\ln \sum\nolimits_{\mW}\mP(\mW)=0\)
	and \(\lim_{\rno \downarrow 0} \RMI{\rno}{\mP}{\cha}=\RMI{0}{\rno}{\cha}\).
	
	If \(\mmn{1,\mP}(\set{A}(\mP,\gamma))=1\) for a \(\gamma\), then 
	\(\lon{\mmn{\rno,\mP}}=\gamma^{\frac{\rno-1}{\rno}}\) for all  \(\rno\in\reals{+} \)
	and \(\RMI{\rno}{\mP}{\cha}=\ln\gamma\)  for all \(\rno\in [0,\infty]\), 
	because 
	\(\nmn{\rno,\mP}=\gamma^{\frac{\rno-1}{\rno}}\) for all \(\dout\in \set{A}(\mP,\gamma)\). 
	
	If there does not exist a \(\gamma\) such that \(\mmn{1,\mP}(\set{A}(\mP,\gamma))=1\), then 
	the convexity of \(\rno\ln\lon{\mmn{\rno,\mP}}\) is strict by Lemma \ref{lem:powermeanO}-(\ref{powermeanO-d})
	and the variance of the random variable \(\ln\tfrac{\tpn{1}(\mW|\dout)}{\mP(\mW)}\) is positive.
	Thus the inequalities  \eqref{eq:tangent}, \eqref{eq:tangent-new}, and \eqref{eq:informationderatone} 
	are strict and  \(\der{}{\rno}\RMI{\rno}{\mP}{\cha}\) is positive for all \(\rno\in\reals{+}\).
\end{proof}

\begin{proof}[Proof of Lemma \ref{lem:informationP}]~
	\begin{enumerate}
		\item[(\ref{lem:informationP}-\ref{informationP-a})]
		Let us start with the values of \(\rno\) in \((0,1)\). 
		Recall that \(\lon{\cdot}:\fmea{\outA}\to \reals{+} \) and \(\tfrac{\rno}{\rno-1}\ln (\cdot):\reals{+} \to \reals{}\) are continuous functions
		and the composition of two continuous functions is a continuous function, \cite[Thm. 18.2.c]{munkres}.
		Furthermore, the function \(\lon{\mmn{\rno,\mP}}\) is continuous in \(\mP\) on \(\pdis{\pmea{\outA}}\),
		---and hence on \(\pdis{\cha}\)--- for \(\rno\in(0,1)\) by Lemma \ref{lem:powermeanP}-(\ref{powermeanP-d}).
		Thus \(\RMI{\rno}{\mP}{\cha}\) is continuous in \(\mP\) on \(\pdis{\cha}\) for \(\rno \in (0,1)\).
		
		For any \(\pmn{1},\pmn{2} \in \pdis{\cha}\) and \(\beta\in [0,1]\) let 
		\(\pmn{\beta}=\beta \pmn{1}+(1-\beta)\pmn{2}\).
		Recall that \(\lon{\mmn{\rno,\mP}}\) is convex in \(\mP\) for \(\rno\in (0,1)\) by Lemma \ref{lem:powermeanP}-(\ref{powermeanP-a}).
		Then by the definition of \(\RMI{\rno}{\mP}{\cha}\) given in \eqref{eq:def:information} we have 
		\begin{align}
		\notag
		\RMI{\rno}{\pmn{\beta}}{\cha}
		&\geq \tfrac{\rno}{\rno-1} \ln \left( \beta\lon{\mmn{\rno,\pmn{1}}}+(1-\beta)\lon{\mmn{\rno,\pmn{2}}} \right)  
		\\
		\notag
		&\geq \tfrac{\rno}{\rno-1} \ln \left( \lon{\mmn{\rno,\pmn{1}}} \vee \lon{\mmn{\rno,\pmn{2}}} \right)
		\\
		\notag
		&\geq  \RMI{\rno}{\pmn{1}}{\cha} \wedge \RMI{\rno}{\pmn{2}}{\cha}
		\end{align}
		Thus \renyi information is continuous and quasi-concave in \(\mP\) for \(\rno\in (0,1)\).
		
		For \(\rno=0\) case, first note that \(\abs{e^{-\RMI{0}{\pmn{1}}{\cha}}-e^{-\RMI{0}{\pmn{2}}{\cha}}} \leq \lon{\pmn{1}-\pmn{2}}\). 
		Thus \(e^{-\RMI{0}{\mP}{\cha}}\) is continuous in \(\mP\). Since \(-\ln\dinp\) is continuous on \(\reals{+}\), 
		\(\RMI{0}{\mP}{\cha}\) is continuous in \(\mP\). In order to prove that \(\RMI{0}{\mP}{\cha}\) is quasi-concave, note that 
		for any \(\beta\in(0,1)\) and \(\pmn{1},\pmn{2}\in\pdis{\cha}\) we have, 
		\begin{align}
		\notag
		\RMI{0}{\pmn{\beta}}{\cha}
		&= - \ln  \essup_{\mmn{1,\pmn{\beta}}} 
		\left[
		\beta \sum\nolimits_{\mW:\pmn{1}(\mW|\dout)>0} \pmn{1}(\mW)+(1-\beta) \sum\nolimits_{\mW:\pmn{2}(\mW|\dout)>0} \pmn{2}(\mW)
		\right]
		\\
		\notag
		&\geq  - \ln  
		\left[  
		\essup_{\mmn{1,\pmn{1}}} \left(\sum\nolimits_{\mW:\pmn{1}(\mW|\dout)>0}\pmn{1}(\mW)\right)
		\bigvee 
		\essup_{\mmn{1,\pmn{2}}} \left(\sum\nolimits_{\mW:\pmn{2}(\mW|\dout)>0}\pmn{2}(\mW)\right)
		\right]
		\\
		\notag
		&=\RMI{0}{\pmn{1}}{\cha} \wedge \RMI{0}{\pmn{2}}{\cha}.
		\end{align} 
		
		\item[(\ref{lem:informationP}-\ref{informationP-b})]
		For any \(\pmn{1},\pmn{2} \in \pdis{\cha}\) and \(\beta\in [0,1]\) let 
		\(\pmn{\beta}=\beta \pmn{1}+(1-\beta)\pmn{2}\).
		Recall that \(\lon{\mmn{\rno,\mP}}\) is concave in \(\mP\) for \(\rno\in (1,\infty]\) by Lemma \ref{lem:powermeanP}-(\ref{powermeanP-b}).
		Then by the definition of \(\RMI{\rno}{\mP}{\cha}\) we have
		\begin{align}
		\notag
		\RMI{\rno}{\pmn{\beta}}{\cha}
		&\geq \tfrac{\rno}{\rno-1} \ln \left( \beta\lon{\mmn{\rno,\pmn{1}}}+(1-\beta)\lon{\mmn{\rno,\pmn{2}}} \right)  
		\\
		\notag
		&\geq \beta \tfrac{\rno}{\rno-1} \ln \lon{\mmn{\rno,\pmn{1}}}
		+(1-\beta)  \tfrac{\rno}{\rno-1} \ln \lon{\mmn{\rno,\pmn{2}}}
		\\
		\notag
		&= \beta \RMI{\rno}{\pmn{1}}{\cha} +(1-\beta) \RMI{\rno}{\pmn{2}}{\cha}
		\end{align}
		where the second inequality follows from the Jensen's inequality and the concavity of the logarithm function. 
		
		For \(\rno=1\) case, note that as a result of the definition of \(\RMI{\rno}{\mP}{\cha}\) we have
		\begin{align}
		\notag
		\RMI{1}{\pmn{\beta}}{\cha}
		&=\beta \RMI{1}{\pmn{1}}{\cha}+(1-\beta)\RMI{1}{\pmn{2}}{\cha}
		+\int \left[\beta \der{\mmn{1,\pmn{1}}}{\mmn{1,\pmn{\beta}}} \ln \der{\mmn{1,\pmn{1}}}{\mmn{1,\pmn{\beta}}}
		+(1-\beta) \der{\mmn{1,\pmn{2}}}{\mmn{1,\pmn{\beta}}} \ln \der{\mmn{1,\pmn{2}}}{\mmn{1,\pmn{\beta}}}
		\right] \mmn{1,\pmn{\beta}}(\dif{\dout})
		\\
		\notag
		&\geq \beta \RMI{1}{\pmn{1}}{\cha}+(1-\beta)\RMI{1}{\pmn{2}}{\cha}
		\end{align}
		where the inequality follows from 
		\(\beta \der{\mmn{1,\pmn{1}}}{\mmn{1,\pmn{\beta}}}+(1-\beta) \der{\mmn{1,\pmn{2}}}{\mmn{1,\pmn{\beta}}}=1\),
		the convexity of the function \(\dinp \ln \dinp\) and the Jensen's inequality.
	\end{enumerate}
\end{proof}

\subsection[The Proof About \(\qmn{\rno,\mP}\)]{Proof of the Lemma on the \renyi Mean}\label{sec:mean-proofs}
\begin{proof}[Proof of Lemma \ref{lem:information:def}] 
	For \(\rno=0\), as a result of the definition of the order zero \renyi information given in \eqref{eq:def:information}
	and the definition of the order zero \renyi divergence given in \eqref{eq:def:divergence} we have 
	\begin{align}
	\notag
	\RD{0}{\mP \mtimes \cha}{\mP\otimes\mQ}
	&=-\ln \int \sum\nolimits_{\mW} \mP(\mW) \der{\mQ}{\rfm} \IND{\tpn{1}(\mW|\dout) \der{\mmn{1,\mP}}{\rfm}>0}   \rfm(\dif{\dout}) 
	\\
	\notag
	&=\RMI{0}{\mP}{\cha}
	-\ln \int \tfrac{\sum\nolimits_{\mW:\tpn{1}(\mW|\dout)>0} \mP(\mW)}{\essup_{\mmn{1,\mP}} \sum\nolimits_{\mW:\tpn{1}(\mW|\dout)>0} \mP(\mW)} \IND{\der{\mmn{1,\mP}}{\rfm}>0} \der{\mQ}{\rfm} \rfm(\dif{\dout}). 
	\end{align}
	Then the definition of \(\qmn{0,\mP}\) given in  \eqref{eq:def:mean} implies
	\eqref{eq:lem:information:defA} and \eqref{eq:lem:information:defB}. 
	
	For \(\rno\in(0,\infty]\),  \eqref{eq:lem:information:defA} follows from the definitions of 
	the \renyi information, divergence, and mean given in \eqref{eq:def:information}, \eqref{eq:def:divergence}
	\eqref{eq:def:mean} by substitution. Using \eqref{eq:sibson} and \eqref{eq:lem:information:defA} we get,
	\begin{align}
	\label{eq:information:positive}
	\RD{\rno}{\mP \mtimes \cha}{\mP  \otimes \mQ}
	&=\RMI{\rno}{\mP}{\cha}
	+\RD{\rno}{\qmn{\rno,\mP}}{\mQ}
	&
	&\forall \rno\in (0,\infty].
	\end{align}
	On the other hand \(\qmn{\rno,\mP}\) is a probability measure by definition.
	Then  \eqref{eq:lem:information:defB} and uniqueness of \renyi mean as the minimizer follow
	from   \eqref{eq:lem:information:defA}, \eqref{eq:information:positive},
	 and Lemma \ref{lem:divergence-pinsker}. 
	
	The following identity and \eqref{eq:lem:information:defB} imply \eqref{eq:lem:information:defC}.
	\begin{align}
	\notag
	\RD{\rno}{\mP \mtimes \cha}{\mP  \otimes \mQ}
	&=\RD{\rno}{\mmn{\rno,\mP}}{\mQ}
	&
	&\forall \mP\in \pdis{\cha}, \mQ\in\fmea{\outA}, \rno\in(0,\infty]\setminus\{1\}.
	\end{align}
\end{proof}

\subsection[\(\RC{\rno}{\Scha{\fX}}\) via The Ergodic Thm.]{The Ergodic Theorem and the \renyi Capacity}\label{sec:shiftchannel}
For \(\Scha{\fX}\) described in Example \ref{eg:shiftchannel} 
we have \(\RD{\rno}{\mW}{\lbm}=\RD{\rno}{\wmn{\fX}}{\lbm}\) for all \(\mW\in \Scha{\fX}\) where \(\lbm\)
is the Lebesgue measure.
Thus by \eqref{eq:thm:minimaxradius} of Theorem \ref{thm:minimax} we have
\begin{align}
\notag
\RC{\rno}{\Scha{\fX}}
&\leq\RD{\rno}{\wmn{\fX}}{\lbm}.
\end{align}
We prove the reverse inequality, \(\RC{\rno}{\Scha{\fX}}\geq\RD{\rno}{\wmn{\fX}}{\lbm}\), 
using the Birkoff-Khinchin ergodic theorem \cite[8.4.1]{dudley}.
In particular, we show that there exists a sequence of priors \(\{\pmn{\ind}\}_{\ind\in\integers{+}}\)
such that\footnote{Finding a different sequence of priors for each order \(\rno\) in \((0,\infty]\) would have
	been sufficient for establishing \(\RC{\rno}{\Scha{\fX}}\geq\RD{\rno}{\wmn{\fX}}{\lbm}\). 
	The existence of a sequence of priors \(\{\pmn{\ind}\}_{\ind\in\integers{+}}\) such that \(\lim_{\ind \to \infty} \RMI{\rno}{\pmn{\ind}}{\Scha{\fX}}=\RC{\rno}{\Scha{\fX}}\) for all
	orders \(\rno\) in \(\reals{+}\) allows us to assert the convexity of
	\((\rno-1)\RC{\rno}{\Scha{\fX}}\) in \(\rno\) on \(\reals{+}\), rather than just \([1,\infty)\).} 
\(\lim_{\ind \to \infty} \RMI{\rno}{\pmn{\ind}}{\Scha{\fX}}\geq \RD{\rno}{\wmn{\fX}}{\lbm}\) for all 
\(\rno\in(0,\infty]\). 

For any \(\knd\in\integers{}\) and \(\dinp\in\reals{}\) let \(\trans{\dinp}^{\knd}\) be the transformation resulting from \(\knd\) 
successive applications of \(\trans{\dinp}\). As a result of the definition of \(\trans{\dinp}\) given in \eqref{eq:def:shiftchannel}, 
\(\trans{\dinp}^{\knd}=\trans{\knd\dinp}\) for any \(\knd\in\integers{}\) and \(\dinp\in\reals{}\).
For any \(\gX\in\Lon{\lbm}\), \(\dinp\in\reals{}\) and \(\ind\in\integers{+}\) let \(\eav{\gX}{\dinp}{\ind}\) be
\begin{align}
\notag
\eav{\gX}{\dinp}{\ind}(\dout)
&\DEF \tfrac{1}{\ind}\sum\nolimits_{\knd=0}^{\ind-1} \gX \circ \trans{\dinp}^{\knd}(\dout)
\\
\notag
&=\tfrac{1}{\ind}\sum\nolimits_{\knd=0}^{\ind-1} 
\gX(\dout-\knd \dinp-\lfloor\dout-\knd \dinp\rfloor).
\end{align}

For any \(\dinp\in\reals{}\), the transformation \(\trans{\dinp}\) is measure preserving for the measure space 
\((\outS,\outA,\lbm)\);
if in addition \(\dinp\) is irrational, then \(\trans{\dinp}\) is ergodic. 
Hence \(\eav{\gX}{\dinp}{\ind}\) converges to \(\int\gX(\dout)\dif{\dout}\) \(\lbm-\)a.e.
for any \(\gX\in\Lon{\lbm}\) and \(\dinp\in\reals{}\setminus \rationals{}\),
by the Birkoff-Khinchin Ergodic theorem \cite[8.4.1]{dudley}:
\begin{align}
\label{eq:ergodicity}
\eav{\gX}{\dinp}{\ind}
&\xrightarrow{\lbm-a.e.}\int\gX(\dout)\dif{\dout}
&
&\forall \gX\in\Lon{\lbm} \mbox{~and~}
\forall \dinp \in \reals{} \setminus\rationals{}.
\end{align}

Let \(\dinp\) be an irrational number that will be fixed for the rest of the proof. 
For any \(\ind\in\integers{+}\), let \(\pmn{\ind}\) be the prior that has equal 
probability mass on each probability measure corresponding to a \(\fX\circ\trans{\dinp}^{\knd}\) 
for some \(\knd\in\{0,\ldots,(\ind-1)\}\). Then 
\begin{align}
\notag
\der{\mmn{\rno,\pmn{\ind}}}{\lbm}(\dout)
&=\left(\tfrac{1}{\ind}\sum\nolimits_{\knd=0}^{\ind-1}  (\fX^{\rno}\circ\trans{\dinp}^{\knd})(\dout) \right)^{\sfrac{1}{\rno}}
\\
\notag
&=
\left(\sum\nolimits_{\knd=0}^{\ind-1} \tfrac{1}{\ind} 
\fX^{\rno}\left(\dout-\tfrac{\knd\dinp}{\ind}-\lfloor\dout-\tfrac{\knd\dinp}{\ind} \rfloor\right)\right)^{\sfrac{1}{\rno}}
&
&\forall \rno \in \reals{+}.
\end{align}
For \(\rno\in\reals{+}\setminus\{1\}\), we calculate the limit 
\(\lim_{\ind\to\infty} \RMI{\rno}{\pmn{\ind}}{\Scha{\fX}}\) by calculating 
the limit \(\lim_{\ind \to \infty} \lon{\mmn{\rno,\pmn{\ind}}}\). For \(\rno=1\) and  \(\rno=\infty\) the result follows from continuity 
arguments.

\begin{enumerate}[(a)]
	\item {\it \(\rno \in (0,1)\) case:}
	\(\int \fX^{\rno} \dif{\dout}\leq \left( \int \fX(\dout)\dif{\dout} \right)^{\rno}=1\) by the Jensen's inequality.
	Hence \(\fX^{\rno}\in\Lon{\lbm}\) as a result of  \eqref{eq:ergodicity} we have
	\begin{align}
	\notag
	\left(\der{\mmn{\rno,\pmn{\ind}}}{\lbm}\right)^{\rno}
	&\xrightarrow{\lbm-a.e.} \int \fX^{\rno}(\dout) \dif{\dout}
	&
	&\Rightarrow
	&
	\der{\mmn{\rno,\pmn{\ind}}}{\lbm}
	&\xrightarrow{\lbm-a.e.} \left(\int \fX^{\rno}(\dout) \dif{\dout}\right)^{\sfrac{1}{\rno}}.
	\end{align}
	
	For any \(\epsilon>0\) there exists a \(\delta>0\) such that 
	if \(\lbm(\oev)<\delta\) for a \(\oev \in \outA\), 
	then  \(\wmn{\fX}(\oev)<\epsilon\), 
	because \(\wmn{\fX}\AC\lbm\). Since \(\lbm\) is invariant under translations and \(\Scha{\fX}\) is
	the set of all mod one translations of \(\wmn{\fX}\), \(\mmn{1,\mP}(\oev)<\epsilon\) whenever \(\lbm(\oev)<\delta\), as well.
	Then \(\{\mmn{\rno,\mP}:\mP\in \pdis{\Scha{\fX}},\rno \in (0,1]\}\UAC \lbm\)
	and \(\{\der{\mmn{\rno,\pmn{\ind}}}{\lbm}\}_{\ind \in \integers{+}}\) is uniformly \(\lbm-\)integrable
	because \(\mmn{\rno,\mP}(\oev)\) is a nondecreasing function of \(\rno\) for all \(\oev \in \outA\) 
	by Lemma \ref{lem:powermeanO}-(\ref{powermeanO-b}). 
	Since almost everywhere convergence implies convergence in measure by 
	\cite[Thm. 2.2.3]{bogachev}, using Lebesgue-Vitali convergence 
	theorem \cite[4.5.4]{bogachev},  we can conclude that \(\der{\mmn{\rno,\pmn{\ind}}}{\lbm}\) converges to 
	\(\left(\int \fX^{\rno}(\dout) \dif{\dout}\right)^{\sfrac{1}{\rno}}\) in \(\Lon{\lbm}\), as well:
	\(\der{\mmn{\rno,\pmn{\ind}}}{\lbm}
	\xrightarrow{\Lon{\lbm}} \left(\int \fX^{\rno} (\dout) \dif{\dout}\right)^{\sfrac{1}{\rno}}\).
	Then \(\lim\nolimits_{\ind\to\infty} \lon{\mmn{\rno,\pmn{\ind}}}=\left(\int \fX^{\rno}(\dout) \dif{\dout}\right)^{\sfrac{1}{\rno}}\).
	Using the definition of \renyi information given in \eqref{eq:def:information} we get
	\begin{align}
	\notag
	\lim\nolimits_{\ind\to\infty} \RMI{\rno}{\pmn{\ind}}{\Scha{\fX}}
	&=\tfrac{1}{\rno-1}\ln \left(\int \fX^{\rno}(\dout) \dif{\dout}\right)
	\\
	\notag
	&=\RD{\rno}{\wmn{\fX}}{\lbm}
	&
	&\forall \rno \in (0,1).
	\end{align}
	\item{\it \(\rno=1\) case:} The \renyi information is a nondecreasing function of 
	the order by Lemma \ref{lem:informationO}. Then 
	\begin{align}
	\notag
	\liminf\nolimits_{\ind\to\infty} \RMI{1}{\pmn{\ind}}{\Scha{\fX}}
	&\geq \liminf\nolimits_{\ind\to\infty} \RMI{\rno}{\pmn{\ind}}{\Scha{\fX}}
	\\
	\notag
	&=\RD{\rno}{\wmn{\fX}}{\lbm}
	&
	&\forall \rno\in(0,1).
	\end{align}
	Since the \renyi divergence is a nondecreasing and lower semicontinuous function of the order by Lemma \ref{lem:divergence-order},
	we have
	\begin{align}
	\notag
	\liminf\nolimits_{\ind\to\infty} \RMI{1}{\pmn{\ind}}{\Scha{\fX}}
	&\geq \lim\nolimits_{\rno \uparrow 1}\RD{\rno}{\wmn{\fX}}{\lbm}
	\\
	\notag
	&=\RD{1}{\wmn{\fX}}{\lbm}.
	&
	&
	\end{align}
	\item{\it \(\rno\in(1,\infty)\) case:} We analyze the finite \(\int \fX^{\rno}(\dout) \dif{\dout}\) and infinite 
	\(\int \fX^{\rno}(\dout) \dif{\dout}\) cases separately.
	\begin{itemize}
		\item  If \(\int \fX^{\rno}(\dout) \dif{\dout}<\infty\), then \(\fX^{\rno}\in\Lon{\lbm}\) and 
		\(\der{\mmn{\rno,\pmn{\ind}}}{\lbm}\xrightarrow{\lbm} \left(\int \fX^{\rno}(\dout)\dif{\dout}\right)^{\sfrac{1}{\rno}}\)
		by  \eqref{eq:ergodicity} because almost everywhere convergence implies convergence in measure by \cite[Thm. 2.2.3]{bogachev}.
		On the other hand, as a result of the concavity of the function \(\dsta^{\sfrac{1}{\rno}}\) in \(\dsta\) for \(\rno\in (1,\infty)\) 
		and the Jensen's inequality we have
		\begin{align}
		\notag
		\mmn{\rno,\pmn{\ind}}(\oev)\leq \left(\tfrac{1}{\ind}\sum\nolimits_{\knd=0}^{\ind-1}  \int_{\trans{\dinp}^{\knd}\oev}  
		\fX^{\rno}(\dout) \dif{\dout}\right)^{\sfrac{1}{\rno}}.
		\end{align}
		Then the uniform  \(\lbm-\)integrability of \(\der{\mmn{\rno,\pmn{\ind}}}{\lbm}\)
		follows from the translational invariance of \(\lbm\) and the \(\lbm-\)integrability of \(\fX^{\rno} \) following
		an argument similar to the one we have for \(\rno\in (0,1)\) case. Thus using 
		Lebesgue-Vitali convergence theorem \cite[4.5.4]{bogachev}
		and the definition of \renyi information exactly the same way we did for \(\rno\in (0,1)\) case we get
		\begin{align}
		\notag
		\lim\nolimits_{\ind\to\infty} \RMI{\rno}{\pmn{\ind}}{\Scha{\fX}}
		&=\RD{\rno}{\wmn{\fX}}{\lbm}
		&
		&\mbox{if~}
		\int \fX^{\rno}(\dout) \dif{\dout}<\infty.
		\end{align}
		\item 
		If \(\int \fX^{\rno} \dif{\dout}=\infty\), then we repeat the above analysis for \(\fX\wedge \gamma\) for a \(\gamma\in \reals{+} \) instead 
		of \(\fX\). As a result we get,
		\begin{align}
		\notag
		\liminf\nolimits_{\ind\to\infty} \RMI{\rno}{\pmn{\ind}}{\Scha{\fX}}
		&\geq \tfrac{1}{\rno-1}\ln \left(\int (\fX(\dout) \wedge \gamma)^{\rno} \dif{\dout}\right)
		&
		&\forall \gamma \in \reals{+} 
		\end{align}
		Note that as \(\gamma \uparrow \infty\), \(\int (\fX(\dout)\wedge \gamma)^{\rno} \dif{\dout}\uparrow \int \fX^{\rno}(\dout) \dif{\dout}\). Thus
		\begin{align}
		\notag
		\lim\nolimits_{\ind\to\infty} \RMI{\rno}{\pmn{\ind}}{\Scha{\fX}}
		&= \infty
		&
		&\mbox{if~}
		\int \fX^{\rno}(\dout) \dif{\dout}=\infty.
		\end{align}
	\end{itemize}
	\item{\it \(\rno=\infty\) case:} Repeat the analysis for \(\rno=1\) case by replacing \(\rno=1\) and \((0,1)\) by \(\rno=\infty\) and \((1,\infty)\).
\end{enumerate}

We have used the ergodic theorem \cite[8.4.1]{dudley} in order to be able to conduct our analysis for arbitrary measurable functions. 
If we restrict our attention to functions that are bounded and continuous  at all but finite number of points, we can choose \(\pmn{\ind}\) 
to be the priors that have \(\sfrac{1}{2^{\ind}}\) probability mass on each probability measure corresponding to a \(\fX\circ \trans{2^{-\ind}}^{\knd}\) 
for \(\knd\in\{0,1,\ldots,(2^{\ind}-1)\}\).  
Then the identity \((\der{\mmn{\rno,\pmn{\ind}}}{\lbm})^{\rno}\xrightarrow{\lbm-a.e.} \left(\int \fX^{\rno}(\dout)\dif{\dout}\right)\) is a 
result of  Riemann integrability of \(\fX^{\rno}\) rather than the ergodicity. 

We have used the Lebesgue-Vitali convergence theorem \cite[4.5.4]{bogachev} instead of 
the dominated convergence theorem \cite[2.8.1]{bogachev}.
That is a matter of taste; one can prove the same statements using the dominated convergence theorem. 
First, do the analysis for \(\tilde{\fX}=\fX \wedge \gamma\), 
and then take the limit as \(\gamma\) diverges to infinity.
\addtocontents{toc}{\setcounter{tocdepth}{3}}
\end{document}